\documentclass[12pt]{article}

\usepackage{amsmath}
\usepackage{amsthm}
\usepackage{amsfonts}
\usepackage{amssymb}
\usepackage{latexsym} 
\usepackage{epsfig}
\usepackage{graphicx}


\usepackage{calc}  
\usepackage[matrix,tips,graph,curve,web]{xy}



\makeatletter

\setlength\@tempdima   {5.5in}
\addtolength\@tempdima {-\textwidth}
\addtolength\hoffset{-0.5\@tempdima}
\setlength{\textwidth}{5.5in}
\setlength{\textheight}{8.75in}
\addtolength\voffset{-0.625in}

\linespread{1}
\setcounter{secnumdepth}{3} 
\setlength{\parindent}{5pt}\setlength{\parskip}{0pt}


\def\@maketitle{%
  \newpage
  \null
  \let \footnote \thanks
    {\normalfont\sffamily\bfseries\Large\noindent\@title \par}%
    \vskip 1em%
    {\normalfont\sffamily 
        \noindent
        \@author
        \par}
  \par
  \vskip 4em}
\def\@seccntformat#1{\csname the#1\endcsname{.\ }}
\renewcommand\section{\@startsection {section}{1}{\z@}%
                                   {-3.0ex \@plus -1ex \@minus -.2ex}%
                                   {1.5ex \@plus.2ex}%
                                   {\normalfont\large\bfseries}}
\renewcommand\subsection{\@startsection{subsection}{2}{\z@}%
                                     {-2.75ex\@plus -1ex \@minus -.2ex}%
                                     {1.5ex \@plus .2ex}%
                                   {\normalfont\large}}
\def\fnum@figure{\normalfont\footnotesize\figurename~\thefigure}
\setlength\abovecaptionskip{0\p@}
  
\renewcommand\tableofcontents{%
    \section*{\contentsname
        \@mkboth{%
           \MakeUppercase\contentsname}{\MakeUppercase\contentsname}}%
    \@starttoc{toc}%
    }
\renewcommand*\l@part[2]{%
  \ifnum \c@tocdepth >-2\relax
    \addpenalty\@secpenalty
    \addvspace{2.25em \@plus\p@}%
    \begingroup
      \setlength\@tempdima{3em}%
      \parindent \z@ \rightskip \@pnumwidth
      \parfillskip -\@pnumwidth
      {\leavevmode
       \large \bfseries #1\hfil \hb@xt@\@pnumwidth{\hss #2}}\par
       \nobreak
       \if@compatibility
         \global\@nobreaktrue
         \everypar{\global\@nobreakfalse\everypar{}}%
      \fi
    \endgroup
  \fi}
\renewcommand*\l@section[2]{%
  \ifnum \c@tocdepth >\z@
    \addpenalty\@secpenalty
    \addvspace{1.0em \@plus\p@}%
    \setlength\@tempdima{1.5em}%
    \begingroup
      \parindent \z@ \rightskip \@pnumwidth
      \parfillskip -\@pnumwidth
      \leavevmode \sffamily\bfseries
      \advance\leftskip\@tempdima
      \hskip -\leftskip
      #1\nobreak\hfil \nobreak\hb@xt@\@pnumwidth{\hss #2}\par
    \endgroup
  \fi}
\renewcommand*\l@subsection{\sffamily\@dottedtocline{2}{1.5em}{2.3em}}
\renewcommand*\l@subsubsection{\@dottedtocline{3}{3.8em}{3.2em}}
\renewcommand*\l@paragraph{\@dottedtocline{4}{7.0em}{4.1em}}
\renewcommand*\l@subparagraph{\@dottedtocline{5}{10em}{5em}}
\makeatother

\makeatletter 
\@addtoreset{equation}{section}
\makeatother

\renewcommand{\theequation}{\thesection.\arabic{equation}}

\theoremstyle{plain}
\newtheorem{theorem}[equation]{Theorem}
\newtheorem{corollary}[equation]{Corollary}
\newtheorem{lemma}[equation]{Lemma}
\newtheorem{proposition}[equation]{Proposition}
\newtheorem{conjecture}[equation]{Conjecture}

\theoremstyle{definition}
\newtheorem{definition}[equation]{Definition}

\newtheorem{example}[equation]{Example}

  {\begin{list}{}%
    {%
    \settowidth{\labelwidth}{#1}%
    \setlength{\itemindent}{0pt}%
    \setlength{\labelsep}{1em}%
    \setlength{\leftmargin}{\labelwidth+\parindent+\labelsep}%
    \setlength{\itemsep}{0pt}%
    \setlength{\parsep}{.6ex}}}%
  {\end{list}}

  {\begin{list}{}%
    {%
    \settowidth{\labelwidth}{#1}%
    \setlength{\itemindent}{0pt}%
    \setlength{\labelsep}{1em}%
    \setlength{\leftmargin}{\labelwidth+\labelsep}%
    \setlength{\itemsep}{0pt}%
    \setlength{\parsep}{.6ex}}}%
  {\end{list}}

\makeatletter
\newenvironment{subequations*}{
  \begingroup 
  \let\protect\@nx
  \edef\@tempa{\def\@nx\theparentequation{\theequation}}%
  \@xp\endgroup\@tempa
  \setcounter{parentequation}{\value{equation}}%
  \setcounter{equation}{0}%
  \def\theequation{\theparentequation\alph{equation}}%
  \ignorespaces
}{%
  \setcounter{equation}{\value{parentequation}}%
  \global\@ignoretrue
}








%





\newcommand{\sgn}{\rm sgn\,}


 
 

\renewcommand\det{{\rm det\,}}






\def\d/{/\mspace{-6.0mu}/}

\usepackage{helvet}
\usepackage{graphicx,amssymb,amsmath,amsfonts,amsthm,color,url, placeins}
\usepackage{wrapfig}
\usepackage{tabstackengine}

\setlength{\oddsidemargin}{0in}      
\setlength{\evensidemargin}{0in}     
\setlength{\topmargin}{0.35in}           
\setlength{\textwidth}{6.75in} 
\setlength{\textheight}{9.5in}            

\addtolength{\topmargin}{-\headheight}  
\addtolength{\topmargin}{-\headsep}     
\linespread{1.05}

\renewcommand\section{\@startsection{section}{1}{\z@}%
                                   {-3.0ex \@plus -1ex \@minus -.2ex}%
                                   {1.5ex \@plus.2ex}%
                                   {\normalfont\sffamily\large\bfseries}}
\renewcommand\subsection{\@startsection{subsection}{2}{\z@}%
                                     {-2.75ex\@plus -1ex \@minus -.2ex}%
                                     {1.5ex \@plus .2ex}%
                                   {\normalfont\sffamily\large}}
\renewcommand\subsubsection{\@startsection{subsubsection}{3}{\z@}%
                                     {-2.75ex\@plus -1ex \@minus -.2ex}%
                                     {1.5ex \@plus .2ex}%
                                   {\normalfont\sffamily\large}}

\setlength{\parindent}{20pt}\setlength{\parskip}{0pt}
\baselineskip13pt

\newcommand{\od}{\stackrel{\mbox {\tiny {def}}}{=}}

\def\RR{\mathbb{R}}

\def\d{\mathrm{d}}

\def\RR{\mathbb{R}}

\newtheorem{rules}{Rule}

\def\RR{\mathbb{R}}

\def\det{\operatorname{det}}

\def\max{\mathrm{max}}

\def\supp{\operatorname{supp}}

\def\od{\stackrel{\mathrm{def}}{=}}

\def\supp{\operatorname{supp}}

\def\sgn{\operatorname{sgn}}
\def\FP{\operatorname{FP}}

\def\idx{\operatorname{idx}}
\def\Wtil{\widetilde{W}}
\def\eig{\operatorname{eig}}
\def\one{\mathbf{1}}


\usepackage{color}
\definecolor{cherry}{rgb}{0.9,.1,.2}


\begin{document}

\noindent {\Large \bf Fixed points of competitive threshold-linear networks}\\
Carina Curto, Jesse Geneson, Katherine Morrison\\
\medskip 
August 3, 2018
\vspace{.25in}

\noindent {\bf Abstract.}
Threshold-linear networks (TLNs) are models of neural networks that consist of simple, perceptron-like neurons and exhibit nonlinear dynamics that are determined by the network's connectivity.  The fixed points of a TLN, including both stable and unstable equilibria, play a critical role in shaping its emergent dynamics. In this work, we provide two novel characterizations for the set of fixed points of a competitive TLN: the first is in terms of a simple sign condition, while the second relies on the concept of {\it domination}. We apply these results to a special family of TLNs, called combinatorial threshold-linear networks (CTLNs), whose connectivity matrices are defined from directed graphs. This leads us to prove a series of {\it graph rules} that enable one to determine fixed points of a CTLN by analyzing the underlying graph.
Additionally, we study larger networks composed of smaller ``building block'' subnetworks, and prove several theorems relating the fixed points of the full network to those of its components. Our results provide the foundation for a kind of ``graphical calculus'' to infer features of the dynamics from a network's connectivity. 

\tableofcontents

\section{Introduction}
Threshold-linear networks (TLNs) are commonly-used models of recurrent networks consisting of simple, perceptron-like neurons with nonlinear interactions. The dynamics are given by the system of ordinary differential equations:
\begin{equation}\label{eq:network2}
\dfrac{dx_i}{dt} = -x_i + \left[\sum_{j=1}^n W_{ij}x_j+b_i \right]_+, \quad i = 1,\ldots,n,
\end{equation}
where $n$ is the number of neurons, $x_i(t)$ is the activity level (or ``firing rate") of the $i$th neuron, $W_{ij}$ is the connection strength from neuron $j$ to neuron $i$, and $[\cdot]_+ \od \max\{\cdot, 0\}$ is the threshold nonlinearity.  The external inputs $b_i \in \RR$ may be heterogeneous, but are assumed to be constant in time. We refer to a given choice of TLN as $(W,b)$.  {\it Competitive} TLNs have the additional requirement that all interactions are effectively inhibitory, with matrix entries $W_{ij} \leq 0$.

TLNs have previously been studied through the lens of permitted and forbidden sets \cite{Seung-Nature, XieHahnSeung, HahnSeungSlotine, flex-memory, net-encoding, pattern-completion}, though this work was largely restricted to the case of symmetric networks.   TLNs that are both symmetric and competitive generically exhibit multistability, with activity that is guaranteed to converge to a stable fixed point irrespective of initial conditions \cite{HahnSeungSlotine}.  This property motivated the use of these networks as models for associative memory encoding and retrieval, similar to the Hopfield model \cite{Hopfield1}. Earlier results on the mathematical theory of TLNs thus focused primarily on stable fixed points of symmetric networks. The papers \cite{Seung-Nature, XieHahnSeung, HahnSeungSlotine} gave characterizations and applications of {\it permitted sets}, which are subsets of neurons that have the capacity to support a stable fixed point of~\eqref{eq:network2} for some (potentially unknown) external input vector $b$. These authors also found nice properties satisfied by the full collection of permitted sets of a symmetric TLN \cite{HahnSeungSlotine}. The theory of permitted sets was further extended and developed by different authors in \cite{flex-memory, net-encoding}. Finally, in \cite{pattern-completion}, attention was shifted to the study of permitted sets that can actually be realized as fixed point supports for a known, and uniform, external input. Like previous results, this work was largely restricted to {\it stable} fixed points of {\it symmetric} TLNs.

The non-symmetric case, however, is considerably more interesting: asymmetric TLNs exhibit the full repertoire of nonlinear dynamic behavior, including limit cycles, quasiperiodic attractors, and chaos \cite{CTLN-preprint}. Furthermore, recent work has highlighted the importance of 
{\it unstable} fixed points in shaping a network's dynamic attractors \cite{book-chapter, CTLN-paper}.
Nevertheless, a mathematical theory connecting unstable fixed points of~\eqref{eq:network2} to the structure of $(W,b)$ has been lacking.

In this paper, we study the set of all fixed point supports, denoted $\FP(W,b)$, for asymmetric, competitive $W$ and nonnegative $b$.
In particular, we provide two new characterizations of $\FP(W,b)$: the first in terms of a sign condition (Theorem~\ref{thm:sgn-condition}), and the second in terms of {\it domination} (Theorem~\ref{thm:domination}), which for simplicity is restricted to networks with uniform $b$.  We introduce the language of {\it permitted motifs} of $(W,b)$ to refer to subsets of neurons that support a fixed point in their restricted subnetwork (all other subsets are called {\it forbidden motifs}). We find that whether or not a permitted motif supports a fixed point in the full TLN depends critically on the embedding of the subnetwork inside the larger network.

In order to investigate how the fixed points are shaped by qualitative aspects of a network's connectivity structure, we focus our attention on applications and further development of this theory in the special case of \emph{combinatorial threshold-linear networks} (CTLNs), first introduced in \cite{CTLN-preprint}. CTLNs are a family of competitive TLNs with uniform inputs and connectivity matrices $W$ that are defined from simple\footnote{A graph is \emph{simple} if it does not have loops or multiple edges between a pair of nodes.} directed graphs (see Figure~\ref{fig:network-setup-and-3cycle}). Given a graph $G$, and continuous parameters $\varepsilon, \delta$ and $\theta$, the associated CTLN is the network $(W,\theta)$ where $W = W(G,\varepsilon,\delta)$ has entries:
\begin{equation} \label{eq:binary-synapse}
W_{ij} = \left\{\begin{array}{cc} 0 & \text{ if } i = j, \\ -1 + \varepsilon & \text{ if } j \rightarrow i \text{ in } G,\\ -1 -\delta & \text{ if } j \not\rightarrow i \text{ in } G. \end{array}\right. \quad \quad \quad \quad
\end{equation}
Note that $ j \rightarrow i$ indicates the presence of an edge from $j$ to $i$ in the graph $G$, while $j \not\rightarrow i$ indicates the absence of such an edge.  We additionally require that $\theta > 0$,\; $\delta >0,$ and $0 < \varepsilon < \frac{\delta}{\delta+1}$; when these conditions are met, we say that the parameters are within the \emph{legal range}. 

\begin{figure}[!ht]
\begin{center}
\includegraphics[height=1.75in]{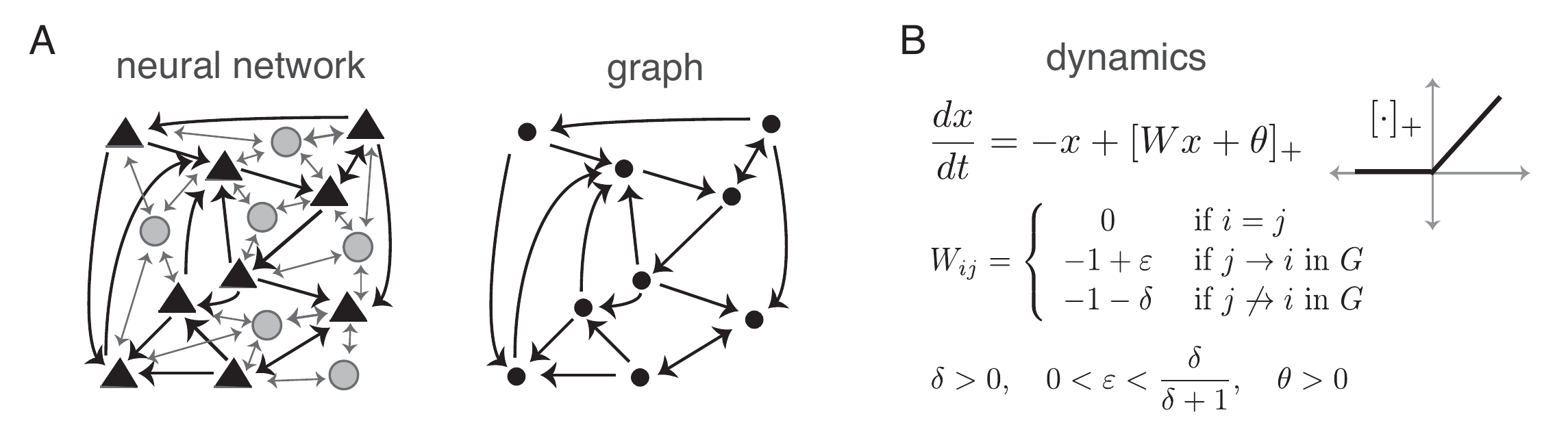}
\caption{(A) A neural network with excitatory pyramidal neurons (triangles) and a background network of inhibitory interneurons (gray circles) that produce a global inhibition. The corresponding graph (right) retains only the excitatory neurons and their connections.  (B) Equations for the CTLN model.}
\label{fig:network-setup-and-3cycle}
\end{center}
\vspace{-.1in}
\end{figure}

Despite the additional constraints, CTLNs display the full range of nonlinear dynamics that are observed in general TLNs \cite{CTLN-preprint, book-chapter}, and thus provide a rich but simplified setting in which to study TLNs. Here, we focus on the theory of fixed point supports $\FP(G)$, and how they can be inferred from the underlying connectivity graph $G$. In a companion paper \cite{CTLN-paper}, we study the relationship between fixed points and dynamic attractors.
 
First, we use our general results about fixed points of TLNs to develop some stronger tools that are specialized to CTLNs. These technical results then allow us to prove a series of {\it graph rules}, connecting $\FP(G)$ directly to the structure of $G$.
In particular, any conclusions about the fixed points of a CTLN that can be obtained from graph rules are automatically independent of the choice of parameters $\varepsilon, \delta,$ and $\theta$, provided these fall within the legal range. Additionally, we study larger networks composed of smaller ``building block'' subnetworks, and prove several theorems relating the fixed points and permitted motifs of the full network to those of its components. Our results demonstrate that CTLNs are surprisingly tractable, and provide the foundation for a ``graphical calculus"  to infer features of a network's dynamics directly from its underlying connectivity. 

The organization of this paper is as follows.  In Section~\ref{sec:fixed-pts-TLNs}, we begin by reviewing some essential background on fixed points of general TLNs, including prior results on their index and stability.  Then in Section~\ref{sec:technical-results} we present a new characterization of $\FP(W,b)$ in terms of sign conditions (Theorem~\ref{thm:sgn-condition}).  In Section~\ref{sec:fixed-pts-CTLNs}, we specialize to CTLNs and show how the sign conditions can allow us to determine fixed point supports directly from the underlying graph $G$. We also introduce additional tools, such as graphical domination and simply-added splits, that are later used to prove various graph rules. Sections~\ref{sec:graph-rules} and~\ref{sec:building-blocks}, on graph rules and their extensions to composite graphs, are in some sense the heart of the paper. In particular, we identify various families of graphs and structures that yield (parameter-independent) permitted and forbidden motifs. Finally, in Section~\ref{sec:domination} we introduce a more general form of domination which is applicable to the broader family of TLNs, and provides
a second characterization of $\FP(W,b)$. This is then used to prove several of our earlier results, including Theorem~\ref{thm:graph-domination} on graphical domination.

\section{Fixed points of TLNs}\label{sec:fixed-pts-TLNs}

\subsection{Some general background}\label{sec:general-background}

We refer to a network with dynamics~\eqref{eq:network2} as a TLN $(W,b)$ on $n$ neurons.
A {\it fixed point} $x^*$ of the network is a point in the state space satisfying 
$\left.\dfrac{dx_i}{dt}\right|_{x = x^*} = 0$ for each neuron $i \in [n]$, where $[n] \od \{1,\ldots,n\}$.
In other words,
\begin{equation}\label{eq:x*}
x_i^* = \left[\sum_{j=1}^n W_{ij}x_j^*+b_i \right]_+,  \quad i = 1,\ldots,n.
\end{equation}
The {\it support} of a fixed point is the subset of active neurons,
$$\supp(x^*) \od \{i \in [n] \mid x_i^*>0\}.$$
We typically refer to supports as subsets $\sigma \subseteq [n]$.  

\begin{definition}
We say that a TLN $(W,b)$ is {\it competitive} if $W_{ij} \leq 0$, $W_{ii} = 0$, and $b_i \geq 0$ for all $i,j \in [n]$.
\end{definition}

Notice from equation~\eqref{eq:x*} that, because of the threshold nonlinearity, we must have $x_i^* \geq 0$ for each $i \in [n]$. For competitive $(W,b)$, it follows that whenever $x_i^* > 0$ we must have $b_i > 0$.
On the other hand, if $b_i = 0$ for all $i \in [n]$, the activity of a competitive TLN will always decay to the fixed point $x = 0$. To rule this out, we will additionally require that $b_i>0$ for at least one neuron, ensuring that all fixed points are nontrivial (see Definition~\ref{def:nondegenerate}). Furthermore, it is straightforward to check that the activity of a competitive TLN is always bounded. In particular, if $x(0) \in \prod_{i=1}^n [0,b_i]$, then $x(t) \in \prod_{i=1}^n [0,b_i]$ for all $t>0$. 

We will often restrict matrices and vectors to a particular subset of neurons $\sigma$.
We use the notation $A_\sigma$ and $b_\sigma$ to denote a matrix $A$ and a vector $b$ that have been truncated to include only entries with indices in $\sigma$. Furthermore, we use the notation $(A_i;b)$ to denote a matrix $A$ whose $i$th column has been replaced by the vector $b$, as in Cramer's rule (see Lemma~\ref{lemma:cramer}).\footnote{The use of the subscript $i$ inside $(A_i;b)$ and $((A_\sigma)_i;b_\sigma)$ has a different meaning than the subscript $\sigma$ in $A_\sigma$, because it refers to replacing the $i$th column by $b$ (or $b_\sigma$), as opposed to restricting the entries of $A$ to the index set $\{i\}$. This is an abuse of notation, but the meaning should always be clear from the context. We will only use the vector replacement meaning inside expressions for Cramer's determinants, such as $\det((I-W_\sigma)_i;b_\sigma)$.} In the case of a restricted matrix, $((A_\sigma)_i;b_\sigma)$ denotes the matrix $A_\sigma$ where the column corresponding to the index $i \in \sigma$ has been replaced by $b_\sigma$ (note that this is not typically the $i$th column of $A_\sigma$).

\begin{definition} \label{def:nondegenerate}
We say that a TLN $(W,b)$ is {\it nondegenerate} if 
\begin{itemize}
\item $b_i > 0$ for at least one $i \in [n]$,
\item $\det(I-W_\sigma) \neq 0$ for each $\sigma \subseteq [n]$, and 
\item for each $\sigma \subseteq [n]$ such that $b_i >0$ for all $i \in \sigma$,  the corresponding Cramer's determinant is nonzero: $\det((I-W_\sigma)_i;b_\sigma) \neq 0$. 
\end{itemize}
\end{definition}
\noindent {\em Unless otherwise specified, we will assume all TLNs are both {\bf competitive} and {\bf nondegenerate}.} Note that almost all networks of the form~\eqref{eq:network2} are nondegenerate, since having a zero determinant is a highly fine-tuned condition.  

If $(W,b)$ is nondegenerate, there can be at most one fixed point per support. To see why, denote
\begin{equation}\label{eq:xsigma}
x^\sigma \od (I-W_\sigma)^{-1} b_\sigma.
\end{equation}
If there exists a fixed point $x^*$ with support $\sigma$, then $x_i^* = x_i^\sigma$ for each $i \in \sigma$ (and is zero otherwise).  It follows from the definition that $\sigma$ is the support of a fixed point if and only if:
\begin{itemize}
\item[(i)] $x_i^\sigma > 0$ for all $i \in \sigma$, and
\item[(ii)]  $\sum_{i\in\sigma} W_{ki}x_i^\sigma+b_k \leq 0$ for all $k \notin \sigma$. 
\end{itemize}
(This is straightforward, but see \cite{pattern-completion} for more details.) 

We denote the set of all fixed point supports of a TLN $(W,b)$ as
$$\FP(W,b) \od \{\sigma \subseteq [n] \mid \sigma \text{ is the support of a fixed point}\}.$$ 
Finding the fixed points of a nondegenerate TLN thus reduces to finding the set of supports $\FP(W,b)$.
Note that because we require $b_i\geq 0,$ and $b_i>0$ for at least one $i$, the empty set (the support of $x = 0$) is never an element of $\FP(W,b)$.  

Since condition (i) above  only depends on $(W_\sigma,b_\sigma)$, a necessary condition 
 for $\sigma \in \FP(W,b)$ is that $\sigma \in \FP(W_\sigma,b_\sigma).$ Such a fixed point survives
 the addition of other nodes $k \notin \sigma$ precisely when condition (ii) is satisfied.
Note that only the {\it survival} of the fixed point depends on the rest of the network; the actual {\it values} of the $x_i^\sigma$ (for $i \in \sigma$) cannot change. For this reason it makes sense to distinguish the subsets $\sigma$ that support a fixed point on the restricted networks $(W_\sigma, b_\sigma)$, irrespective of whether or not these fixed points survive to the full network. In particular, if $\sigma \notin \FP(W_\sigma,b_\sigma)$, then we are guaranteed that $\sigma \notin \FP(W,b)$.

\begin{definition}[permitted and forbidden motifs]\label{def:permitted-forbidden}
 Let $(W,b)$ be a TLN on $n$ neurons. We say that $\sigma \subseteq [n]$ is a {\it permitted motif} of the network if $\sigma \in \FP(W_\sigma,b_\sigma)$. Otherwise, we say that $\sigma$ is a {\it forbidden motif.} 
\end{definition}

\subsection{Index and parity}

For each TLN fixed point, labeled by its support $\sigma \in \FP(W,b)$, we define the {\em index} as
$$\idx(\sigma) \od \sgn \det(I-W_\sigma).$$
Since we assume our TLNs are nondegenerate, $\det(I-W_\sigma) \neq 0$ and thus $\idx(\sigma) \in \{\pm 1\}$.  Moreover, if $\sigma$ is the support of a {\em stable} fixed point, then the eigenvalues of $-I+W_\sigma$ must all have negative real part, and so those of $I-W_\sigma$ all have positive real part. This implies that $\idx(\sigma) = +1$ for all stable fixed points.

The following theorem, given in \cite{CTLN-paper}, indicates that fixed points with index $+1$ and $-1$ are almost perfectly balanced. It also tells us that the parity of the total number of fixed points is always odd.

\begin{theorem}[parity \cite{CTLN-paper}]\label{thm:parity}
Let $(W,b)$ be a TLN. Then,
$$\sum_{\sigma \in \FP(W,b)} \idx(\sigma) = + 1.$$
In particular, the total number of fixed points $|\FP(W,b)|$ is always odd.
\end{theorem}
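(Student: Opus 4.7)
The plan is to apply the Lefschetz fixed point theorem to the self-map
\[
H \colon M \to M, \qquad M \od \prod_{i=1}^n [0,b_i], \qquad H_i(x) \od \Bigl[\sum_{j=1}^n W_{ij}x_j + b_i\Bigr]_+,
\]
rather than reasoning about the vector field $F$ directly. Since $F(x) = H(x) - x$, the fixed points of $H$ are exactly the TLN fixed points. The map $H$ lands in $M$: on $M$, $W_{ij}\le 0$ and $x_j\ge 0$ force $\sum_j W_{ij}x_j \le 0$, so $H_i(x) \in [0,b_i]$. A preliminary reduction eliminates coordinates with $b_i = 0$: for such $i$, $\dot x_i = -x_i$ (the bracket is identically $0$ on $M$), so $i$ never lies in a fixed-point support, and removing it does not change $\FP(W,b)$. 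Henceforth assume $b_i > 0$ for all $i$, so $M \cong D^n$ is contractible and the Lefschetz number of any continuous self-map of $M$ equals $1$.

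Provided the fixed points of $H$ are isolated---an open condition ensured by the nondegeneracy hypothesis, if necessary after an arbitrarily small perturbation of $b$---the Lefschetz formula gives
\[
\sum_{x^* = H(x^*)} \operatorname{ind}_{x^*}(H) \;=\; 1,
\]
where $\operatorname{ind}_{x^*}(H)$ is the Brouwer degree of $y \mapsto y - H(y)$ on a small sphere around $x^*$. To evaluate this index at a fixed point with support $\sigma$, perturb $b$ so that condition (ii) holds strictly at every $k \notin \sigma$; then $H$ is $C^1$ in a neighborhood of $x^*$, with $(DH)_{ij} = W_{ij}$ for $i\in\sigma$ (the bracket equals $x_i^* > 0$) and $(DH)_{kj} = 0$ for $k\notin\sigma$ (the bracket is strictly negative, so $H_k$ vanishes identically near $x^*$). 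Hence
\[
I - DH(x^*) \;=\; \begin{pmatrix} I - W_\sigma & -W_{\sigma,\sigma^c} \\ 0 & I_{\sigma^c} \end{pmatrix}, \qquad \det(I - DH(x^*)) = \det(I - W_\sigma),
\]
which is nonzero by nondegeneracy, so $\operatorname{ind}_{x^*}(H) = \sgn\det(I - W_\sigma) = \idx(\sigma)$. Summing over $x^* = H(x^*)$, i.e.\ over $\sigma \in \FP(W,b)$, yields $\sum_\sigma \idx(\sigma) = 1$; since each summand lies in $\{\pm 1\}$, $|\FP(W,b)|$ must be odd.

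The main obstacle is the non-smoothness of $H$ along the hyperplanes $\{\sum_j W_{ij}x_j + b_i = 0\}$, which could in principle pass through a fixed point and spoil the Jacobian calculation. I would handle this with a density argument: the set of $(W,b)$ for which some fixed point lies exactly on such a hyperplane is cut out by the vanishing of a Cramer-like polynomial in the entries of $W$ and $b$, hence is of positive codimension, so an arbitrarily small perturbation of $b$ removes all such pathologies without changing $\idx(\sigma)$ for any existing $\sigma$ (by continuity of $\det(I-W_\sigma)$ and nondegeneracy). The Lefschetz identity itself applies to continuous self-maps of compact convex sets, so $L(H) = 1$ holds before any perturbation; the perturbation is used only to evaluate each local index via the smooth formula $\sgn\det(I - DH)$.
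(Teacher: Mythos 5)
The paper does not actually prove Theorem~\ref{thm:parity}; it cites it from the companion paper \cite{CTLN-paper}, so there is no internal proof to compare against. Your degree-theoretic argument is essentially correct and is the standard way to establish such a parity/index identity, so the comparison below is about the soundness of your write-up rather than a divergence from the paper.

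Your Jacobian computation is right: nondegeneracy forces $s_k^\sigma\neq 0$, hence the ``off'' condition $\sum_{i\in\sigma}W_{ki}x_i^\sigma+b_k\le 0$ actually holds \emph{strictly} at any $\sigma\in\FP(W,b)$, so $H$ is already $C^1$ on an $\mathbb{R}^n$-neighborhood of each fixed point with $I-DH(x^*)$ block-upper-triangular of determinant $\det(I-W_\sigma)$. In particular the perturbation of $b$ you invoke is never needed --- a good thing, since a small perturbation that changed $\FP(W,b)$ would require a separate continuity argument to transfer the conclusion back. The one genuine wrinkle is the domain you use for the local index: every fixed point with $\sigma\subsetneq[n]$ sits on $\partial M$ (some coordinate is $0$), so ``the Brouwer degree of $y\mapsto y-H(y)$ on a small sphere around $x^*$'' is not a statement internal to $M$, and identifying it with the Lefschetz-theoretic local index of $H\colon M\to M$ at a boundary point requires invoking the commutativity/excision axioms of the fixed-point index for ANRs, which you do not do. A cleaner version that bypasses this entirely: keep $H$ as a map on all of $\mathbb{R}^n$, observe that every zero of $G(x)=x-H(x)$ lies in $M$ (since $H\ge 0$ and $H\le b$ on $\mathbb{R}^n_{\ge0}$), fix a large open ball $B\supset M$, and use the admissible homotopy $G_t(x)=x-tH(x)$, $t\in[0,1]$: if $G_t(x)=0$ then $x=tH(x)\in M\subset\mathrm{int}\,B$, so the homotopy never vanishes on $\partial B$, giving $\deg(G,B,0)=\deg(\mathrm{id},B,0)=1$. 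Then additivity of degree and your local Jacobian computation give $\sum_\sigma\idx(\sigma)=1$ with no boundary or ANR technology at all. With that adjustment your proof is complete and correct.
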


As an immediate corollary, we obtain an upper bound on the number of stable fixed points, which all have index $+1$.  Here we also use the fact that $|\FP(W,b)| \leq 2^n-1$, which is the number of nonempty subsets of $[n]$.

\begin{corollary}
The number of stable fixed points in a TLN on $n$ neurons is at most $2^{n-1}$.
\end{corollary}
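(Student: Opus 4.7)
The plan is to combine the parity theorem with the fact that every stable fixed point has index $+1$ and that the empty support is excluded from $\FP(W,b)$. First, I would introduce notation: let $N_+ \od |\{\sigma \in \FP(W,b) : \idx(\sigma) = +1\}|$ and $N_- \od |\{\sigma \in \FP(W,b) : \idx(\sigma) = -1\}|$. Since stable fixed points are a subset of the index $+1$ fixed points (as noted in the paragraph introducing the index), the number of stable fixed points is bounded above by $N_+$, so it suffices to prove $N_+ \leq 2^{n-1}$.

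Next, I would apply Theorem~\ref{thm:parity}, which gives
\[
N_+ - N_- = \sum_{\sigma \in \FP(W,b)} \idx(\sigma) = +1,
\]
so $N_+ = N_- + 1$. Then the total fixed point count satisfies $|\FP(W,b)| = N_+ + N_- = 2N_+ - 1$. Since every fixed point support is a nonempty subset of $[n]$ (because $b \neq 0$ rules out the empty support, as noted before Definition~\ref{def:permitted-forbidden}), we have the hint inequality $|\FP(W,b)| \leq 2^n - 1$.

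Combining these two facts gives $2N_+ - 1 \leq 2^n - 1$, i.e., $N_+ \leq 2^{n-1}$, which yields the corollary. Honestly, there is no serious obstacle here: the argument is a one-line arithmetic consequence of Theorem~\ref{thm:parity} together with the observation that index $+1$ and index $-1$ fixed points must almost perfectly cancel. The only minor subtlety worth flagging in the write-up is that the empty set is excluded from $\FP(W,b)$ (so the bound is $2^n-1$, not $2^n$), and that strict stability implies $\idx = +1$ but not conversely, so the bound $2^{n-1}$ is on index-$+1$ fixed points and hence \emph{a fortiori} on stable ones.
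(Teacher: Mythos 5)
Your proof is correct and is exactly the argument the paper has in mind: it combines Theorem~\ref{thm:parity} with the bound $|\FP(W,b)| \leq 2^n-1$ and the fact that stable fixed points have index $+1$, which is precisely what the paper's brief remark before the corollary says. Your write-up simply makes the arithmetic explicit ($N_+ = N_- + 1$ and $2N_+ - 1 \leq 2^n - 1$), which is a reasonable expansion rather than a different route.
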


\subsection{Sign conditions}\label{sec:technical-results}
In this section, we provide a new characterization of fixed point supports of TLNs via the signs of particular Cramer's determinants. Recall Cramer's rule:

\begin{lemma}[Cramer's rule] \label{lemma:cramer}
Let $A$ be an $n \times n$ matrix with $\det A \neq 0$, and consider the linear system $Ax=b$.  This has a unique solution, $x = A^{-1}b$.  The entries of $x$ can be expressed as
$$x_i = \dfrac{\det(A_i;b)}{\det A},$$
where $(A_i;b)$ is the matrix obtained from $A$ by replacing the $i$th column with $b$.
\end{lemma}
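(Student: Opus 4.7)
The plan is to give the standard multilinearity argument for Cramer's rule, since this is purely a classical linear algebra statement and no machinery from the earlier sections is needed. Uniqueness of the solution is immediate: because $\det A \neq 0$, the matrix $A$ is invertible, so $x = A^{-1}b$ is the only vector satisfying $Ax = b$. The content of the lemma is the explicit formula for the entries $x_i$.

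To prove the formula, I would denote the columns of $A$ by $A^{(1)},\ldots,A^{(n)}$, so that the equation $Ax=b$ reads
\[
b = \sum_{j=1}^n x_j\, A^{(j)}.
\]
Substituting this expansion into the $i$th column of the matrix $(A_i;b)$ and using multilinearity of the determinant in the columns gives
\[
\det(A_i;b) \;=\; \det\Bigl(A_i;\sum_{j=1}^n x_j A^{(j)}\Bigr) \;=\; \sum_{j=1}^n x_j \det(A_i;A^{(j)}).
\]
For $j \neq i$, the matrix $(A_i;A^{(j)})$ has two identical columns (the $i$th and the $j$th are both $A^{(j)}$), so its determinant vanishes by the alternating property. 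For $j = i$, we recover $(A_i;A^{(i)}) = A$. Hence the sum collapses to $x_i \det A$, and dividing by $\det A \neq 0$ gives $x_i = \det(A_i;b)/\det A$.

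An equivalent route, in case one prefers to avoid invoking multilinearity directly, is to form the matrix $X_i$ obtained from the identity by replacing its $i$th column with $x$. A direct column-by-column computation shows $A X_i = (A_i;b)$, and cofactor expansion of $X_i$ along its $i$th row gives $\det X_i = x_i$; multiplicativity of the determinant then yields the same conclusion. There is no real obstacle here — the only thing to be careful about is bookkeeping of which column is being replaced, so that the determinant $\det(A_i;A^{(j)})$ is correctly identified as zero for $j\neq i$ and as $\det A$ for $j=i$. Since nondegeneracy of the TLN in Definition~\ref{def:nondegenerate} was phrased in terms of such Cramer determinants, this lemma is stated precisely in the form needed by the subsequent sign-condition results.
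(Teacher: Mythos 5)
Your proof is correct; it is the standard multilinearity argument for Cramer's rule, and the alternative route via $AX_i = (A_i;b)$ is also standard and sound. Note, however, that the paper does not prove this lemma at all — it is introduced with ``Recall Cramer's rule'' and stated as a classical fact to fix notation for the Cramer determinants $\det((I-W_\sigma)_i;b_\sigma)$ used throughout; so there is no paper proof to compare against, and your argument simply supplies the omitted classical justification.
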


\noindent Indeed, it follows directly from Cramer's rule that a fixed point of a TLN $(W,b)$ with support $\sigma$ has values
\begin{equation}\label{eq:x_i-s_i}
x_i^\sigma = \dfrac{\det((I-W_{\sigma})_i;b_{\sigma}),}{\det(I-W_\sigma)}, \;\; \text{ for } \; i \in \sigma.
\end{equation}
For any $\sigma \subseteq [n],$ we are thus motivated to define
\begin{equation}\label{eq:s_i}
s_i^\sigma \od \det((I-W_{\sigma\cup\{i\}})_i;b_{\sigma\cup\{i\}}), \;\; \text{for each} \;\; i \in [n].
\end{equation} 
Note that because we only consider TLNs that are nondegenerate (see Definition~\ref{def:nondegenerate}), we can assume all Cramer's determinants are nonzero, and thus $s_i^\sigma \neq 0$ for all $i \in [n]$ and $\sigma \subseteq [n]$.  Moreover, it follows directly from the definition that, for any $i \in [n]$,
\begin{equation}\label{eq:s_k}
s_i^\sigma = s_i^{\sigma \cup \{i\}}. 
\end{equation}
Note that for the empty set we have $s_i^{\emptyset} = s_i^{\{i\}} = b_i$.

It turns out that fixed point supports of (competitive, nondegenerate) TLNs can be fully determined from the signs of the $s_i^\sigma$,  yielding our first characterization of $\FP(W,b)$. Recall that $\sigma$ is a permitted motif of $(W,b)$ if $\sigma \in \FP(W_\sigma,b_\sigma)$.

\begin{theorem}[sign conditions] \label{thm:sgn-condition}
Let $(W,b)$ be a TLN on $n$ neurons.  For any nonempty $\sigma~\subseteq~[n]$,
$$\sigma \text{ is a permitted motif }  \;\; \Leftrightarrow \;\; \sgn s_i^\sigma = \sgn s_j^\sigma
\text{ for all } i,j \in \sigma.$$
When $\sigma$ is permitted, 
$\sgn s_i^\sigma = \sgn\det(I-W_\sigma) = \idx(\sigma)$ for all $i \in \sigma$.
Furthermore,
$$\sigma \in \FP(W,b)  \;\; \Leftrightarrow \;\; \sgn s_i^\sigma = \sgn s_j^\sigma = -\sgn s_k^\sigma
\text{ for all } i,j \in \sigma,\; k \not\in \sigma.$$
 \end{theorem}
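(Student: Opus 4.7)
The strategy is to recast both fixed-point conditions on $\sigma$ as sign conditions on the Cramer quantities $s_i^\sigma$. The workhorse is the identity
\[
\frac{s_i^\sigma}{\det(I-W_\sigma)} \;=\; \begin{cases} x_i^\sigma, & i \in \sigma, \\ \sum_{j \in \sigma} W_{ij}\, x_j^\sigma + b_i, & i \notin \sigma. \end{cases}
\]
The case $i \in \sigma$ is immediate from $s_i^\sigma = s_i^{\sigma \cup \{i\}} = \det((I-W_\sigma)_i;b_\sigma)$ together with Cramer's rule applied to $(I-W_\sigma)\,x^\sigma = b_\sigma$. For $i \notin \sigma$, I would order $\sigma \cup \{i\}$ so that $i$ appears last and expand the $(|\sigma|+1)\times(|\sigma|+1)$ determinant $s_i^\sigma = \det((I-W_{\sigma \cup \{i\}})_i;b_{\sigma \cup \{i\}})$ along the row indexed by $i$. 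The column labeled $i$ (now carrying $b_i$) contributes $b_i\,\det(I-W_\sigma)$; each column labeled $j \in \sigma$ contributes $(-W_{ij})$ times a minor which, after swapping the $b_\sigma$-column back to its natural position $j$, equals $(-1)^{|\sigma|-c_j}\, s_j^\sigma$. The column-swap signs conspire with the cofactor signs $(-1)^{r+c}$ to cancel, producing
\[
s_i^\sigma \;=\; \sum_{j \in \sigma} W_{ij}\, s_j^\sigma \;+\; b_i\, \det(I-W_\sigma),
\]
which is the asserted identity after dividing by $\det(I-W_\sigma)$.

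Granted this identity, the first two claims follow almost formally. Since the $x_i^\sigma$ for $i \in \sigma$ share a common denominator, condition (i)---$x_i^\sigma > 0$ for all $i \in \sigma$---is equivalent to the $s_i^\sigma$ ($i \in \sigma$) sharing the common sign $\sgn\det(I-W_\sigma) = \idx(\sigma)$. For the converse, suppose the $s_i^\sigma$ for $i \in \sigma$ share a common sign; then the $x_i^\sigma$ share a single sign, and what remains is to rule out that this sign is negative. Here competitiveness does the work: if $x_i^\sigma < 0$ for every $i \in \sigma$, then because $I-W_\sigma$ has $1$'s on its diagonal and non-negative entries $-W_{ij}$ off it, the $i$-th coordinate of $(I-W_\sigma)\,x^\sigma$ equals $x_i^\sigma + \sum_{j \in \sigma,\, j \ne i}(-W_{ij})\, x_j^\sigma$, a strictly negative number plus non-positive terms, hence strictly negative. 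But this coordinate equals $b_i \ge 0$, a contradiction. Therefore $x_i^\sigma > 0$ throughout $\sigma$, and $\sigma$ is permitted.

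For the third claim, combine the previous step with condition (ii). The identity rewrites $\sum_{j \in \sigma} W_{kj}\, x_j^\sigma + b_k \le 0$ for $k \notin \sigma$ as $s_k^\sigma / \det(I-W_\sigma) \le 0$; nondegeneracy forces $s_k^\sigma \ne 0$, so the inequality is strict, and combined with the first two parts this reads $\sgn s_k^\sigma = -\sgn\det(I-W_\sigma) = -\sgn s_i^\sigma$ for every $i \in \sigma$. Conjoining (i) and (ii) yields exactly the claimed sign pattern.

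The main obstacle will be the cofactor expansion for $i \notin \sigma$: the $(-1)$-factors arising from swapping the replaced column back into its natural position must be tracked carefully so that they cancel cleanly against the cofactor signs, producing the clean formula for $s_i^\sigma$ above. Once that identity is secured, every other step is a direct translation between signs of Cramer determinants and the fixed-point conditions (i) and (ii), with competitiveness used exactly once to forbid the ``all negative'' branch.
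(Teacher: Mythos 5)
Your argument matches the paper's essentially step for step: the key determinant identity $s_k^\sigma = \sum_{i\in\sigma} W_{ki}\,s_i^\sigma + b_k\det(I-W_\sigma)$ is precisely the paper's preliminary lemma, proved the same way by Laplace expansion along the row indexed by $k$, and both proofs then translate fixed-point conditions (i) and (ii) into sign conditions on the $s_i^\sigma$ by dividing through by $\det(I-W_\sigma)$. Your one small streamlining is in the reverse direction of the first equivalence: you rule out the all-negative branch directly (if every $x_i^\sigma < 0$, the unit diagonal of $I-W_\sigma$ forces every coordinate of $(I-W_\sigma)x^\sigma$ to be strictly negative, contradicting $b_i \geq 0$), whereas the paper first uses the identity to show $b_i > 0$ for all $i \in \sigma$ and only then derives positivity; both are correct and yours is marginally shorter.
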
 

Before we prove Theorem~\ref{thm:sgn-condition}, we need the following lemma, which gives a useful identity for computing $s_k^\sigma$ values.

\begin{lemma} Let $(W,b)$ be a TLN on $n$ neurons, and let $\sigma \subseteq [n]$.  Then
\begin{equation}\label{eq:s_k-identity}
s_k^\sigma = \sum_{i\in\sigma} W_{ki}s^{\sigma}_i + b_k\det(I-W_{\sigma}) \;\; \text{ for any } \; k \in [n].
\end{equation}
\end{lemma}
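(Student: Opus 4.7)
The plan is to prove the identity by splitting on whether $k \in \sigma$ or $k \notin \sigma$, and in each case computing $s_k^\sigma$ by a suitable expansion of its defining determinant. Recall that $s_i^{\sigma \cup \{i\}} = s_i^\sigma$, so in particular for every $i \in \sigma$ we have $s_i^\sigma = \det((I-W_\sigma)_i;\, b_\sigma)$, a determinant of the same size $|\sigma|$ that appears on the right-hand side.

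In the first case, $k \in \sigma$ forces $\sigma \cup \{k\} = \sigma$, so $s_k^\sigma = \det((I-W_\sigma)_k;\, b_\sigma)$ as well. Because $(W,b)$ is nondegenerate, $\det(I-W_\sigma) \neq 0$, and by Cramer's rule the unique solution $y$ of $(I-W_\sigma)\,y = b_\sigma$ satisfies $y_i = s_i^\sigma / \det(I-W_\sigma)$ for each $i \in \sigma$. Reading off the $k$-th row of this linear system gives $y_k - \sum_{i \in \sigma} W_{ki}\, y_i = b_k$, and multiplying through by $\det(I-W_\sigma)$ produces the identity immediately.

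In the second case, $k \notin \sigma$, set $\tau := \sigma \cup \{k\}$ so that $s_k^\sigma = \det((I-W_\tau)_k;\, b_\tau)$ is a $|\tau| \times |\tau|$ determinant. The plan is to Laplace-expand this determinant along the row indexed by $k$. In that row the replaced $k$-th column contributes $b_k$, and each remaining column $j \in \sigma$ contributes $(I-W_\tau)_{kj} = -W_{kj}$ (there is no identity contribution since $k \neq j$). The $(k,k)$ minor is precisely $I-W_\sigma$, yielding the term $b_k \det(I-W_\sigma)$. For each $j \in \sigma$, the $(k,j)$ minor is an $|\sigma| \times |\sigma|$ matrix whose columns are the $\sigma \setminus \{j\}$ columns of $I-W_\sigma$ together with the vector $b_\sigma$ sitting in the slot previously occupied by column $k$; moving $b_\sigma$ back into the $j$-th position by column transpositions turns this minor into $\pm\det((I-W_\sigma)_j;\, b_\sigma) = \pm s_j^\sigma$.

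The only delicate point is the sign bookkeeping in the second case, and here it is cleanest to order $\tau$ so that $k$ appears last. Then the $(k,k)$-cofactor sign is trivially $+1$, and for $j \in \sigma$ the position of $j$ in $\tau$ matches its position in $\sigma$; the transposition count needed to slide $b_\sigma$ from the last column back to the $j$-th column combines with the cofactor sign $(-1)^{|\tau| + \mathrm{pos}(j)}$ and the $-W_{kj}$ entry to leave the clean summand $+W_{kj}\, s_j^\sigma$. Summing the row expansion then yields $s_k^\sigma = \sum_{i \in \sigma} W_{ki}\, s_i^\sigma + b_k \det(I-W_\sigma)$, completing the proof.
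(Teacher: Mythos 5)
Your proof is correct and takes essentially the same route as the paper: the case $k\in\sigma$ is handled via the linear system $(I-W_\sigma)y=b_\sigma$ and Cramer's rule, and the case $k\notin\sigma$ via Laplace expansion of $\det((I-W_{\sigma\cup\{k\}})_k;b_{\sigma\cup\{k\}})$ along the row indexed by $k$. Your sign bookkeeping (ordering $\tau$ with $k$ last, so the cofactor sign $(-1)^{(m+1)+p}$ combines with the $(-1)^{m-p}$ from column transpositions and the $-W_{kp}$ entry to give $+W_{kp}s_p^\sigma$) is exactly the computation the paper carries out, only spelled out a bit more explicitly.
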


\begin{proof} For $k \in \sigma$, it follows from the definition of $x^\sigma$, equation~\eqref{eq:xsigma}, that 
$x^\sigma = W_\sigma x^\sigma+b_\sigma$, and thus
 $$x_k^\sigma = \sum_{i\in\sigma} W_{ki}x^{\sigma}_i + b_k.$$
Using equation~\eqref{eq:x_i-s_i}, this yields $s_k^\sigma = \sum_{i\in\sigma} W_{ki}s^{\sigma}_i + b_k\det(I-W_{\sigma})$, as desired.

Next, we consider $k \notin \sigma$ and compute:
$$s_k^{\sigma} = \det((I-W_{\sigma\cup \{k\}})_k;b_{\sigma\cup \{k\}})
 = \det\left(\begin{array}{c|c} I-W_{\sigma} & b_{\sigma} \\ \hline  -W_{k1} \cdots -W_{k,k-1} & b_k \end{array}\right).$$
Applying the Laplace expansion for the determinant along the $k$th row, we obtain
\begin{eqnarray*}
s_k^{\sigma}  &=& 
\sum_{i \in \sigma} (-1)^{i+k}(-W_{ki})(-1)^{(k-1)-i}\det((I-W_{\sigma})_i;b_{\sigma}) 
+ b_k\det(I-W_{\sigma})\\
&=&  \sum_{i\in\sigma} W_{ki}s^{\sigma}_i + b_k\det(I-W_{\sigma}),
\end{eqnarray*}
which completes the proof.
\end{proof}

We are now ready to prove Theorem~\ref{thm:sgn-condition}.

\begin{proof}[Proof of Theorem~\ref{thm:sgn-condition} (sign conditions)]
Recall that $\sigma \in \FP(W_\sigma, b_\sigma)$ if and only if $x_i^\sigma > 0$ for each $i \in \sigma$ (see equation~\eqref{eq:xsigma} and subsequent remarks). 
By Cramer's rule, we have $x_i^\sigma = \dfrac{s_i^\sigma}{\det(I-W_\sigma)}$ (see equation~\eqref{eq:x_i-s_i}). Now suppose $\sigma \in \FP(W_\sigma, b_\sigma)$. Since $x_i^\sigma>0$ for each $i \in \sigma$, we must have $\sgn s_i^\sigma = \sgn s_j^\sigma = \sgn \det(I-W_\sigma) = \idx(\sigma)$ for all $i, j \in \sigma$.  

For the reverse implication, suppose $\sgn s_i^\sigma = \sgn s_j^\sigma$ for all $i, j \in \sigma$.  This immediately implies that all the $x_i^\sigma$ for $i \in \sigma$ have the same sign, but we must show this sign is positive. First, we show that $b_i>0$ for all $i \in \sigma$ (in competitive TLNs, we always have $b_i \geq 0$). To see this, suppose there exists a $j \in \sigma$ such that $b_j = 0$. Then by equation~\eqref{eq:s_k-identity} we would have 
$$s_j^\sigma = \sum_{i\in\sigma} W_{ji}s^{\sigma}_i + b_j\det(I-W_{\sigma}) = \sum_{i\in\sigma \setminus j} W_{ji}s^{\sigma}_i, $$
using the fact that $W_{jj} = 0$. Since $W_{ji} < 0$ for all $i \neq j$, the above equality contradicts the assumption that all the signs of the $s_i^\sigma$ are equal for $i \in \sigma$. We can thus conclude that $b_i > 0$ for all $i \in \sigma$. This in turn ensures that $x_i^\sigma > 0$ for at least one (and thus all) $i \in \sigma$, because by definition $(I-W_\sigma) x_i^\sigma = b_\sigma$, and all entries of $(I-W_\sigma)$ are nonnegative. Hence $\sigma \in \FP(W_\sigma, b_\sigma)$, completing the proof of the first part of the theorem.

To prove the second part, recall that $\sigma \in \FP(W,b)$ precisely when $\sigma \in \FP(W_\sigma, b_\sigma)$ and $\sigma$ satisfies fixed-point condition (ii): $\sum_{i\in\sigma} W_{ki}x_i^\sigma+b_k \leq 0$ for all $k \notin \sigma$.  Using equation~\eqref{eq:s_k-identity} again,
and dividing by $\det(I-W_\sigma)$, we find that 
\begin{equation}
\dfrac{s_k^\sigma}{\det(I-W_\sigma)} = \sum_{i\in\sigma} W_{ki}x_i^\sigma+b_k.
\end{equation}
Since the network is nondegenerate, $\det(I-W_\sigma) \neq 0$ and $s^\sigma_k \neq 0$, and so condition (ii) is equivalent to the sign condition: $\sgn s_k^\sigma = -\sgn \det(I-W_\sigma)$.  Putting this together with the above sign conditions for $\sigma \in \FP(W_\sigma, b_\sigma)$, we see that $\sigma \in \FP(W,b)$ if and only if 
$\sgn s_i^\sigma = \sgn s_j^\sigma = -\sgn s_k^\sigma
\text{ for all } i,j \in \sigma,\; k \not\in \sigma.$
\end{proof}

In the following example, we show how to use Theorem~\ref{thm:sgn-condition} to find $\FP(W, b)$ for a TLN of size 2 with a uniform external input $b$.  
\begin{example}
Consider a TLN with $W = \left(\begin{array}{cc} 0 & W_{12} \\ W_{21} & 0\end{array}\right)$, for some $W_{12}, W_{21} <0$, and external input $b = \one$.  Recall that the empty set is never a fixed point support of a competitive TLN with a positive input, so we restrict to considering nonempty subsets of $\{1,2\}$.  For $\sigma = \{1\}$, we see that $s_1^\sigma = b_1 = 1$, while 
$$s_2^\sigma = \det((I-W_{\sigma \cup 2})_2; b) = \det \left(\begin{array}{cc} 1 & 1 \\ -W_{21} & 1\end{array}\right) = 1+W_{21}.$$
In particular, $\sgn s_2^\sigma = -\sgn s_1^\sigma$ precisely when $W_{21} <-1$.  Thus, by Theorem~\ref{thm:sgn-condition},
$$\{1\} \in \FP(W,b) \; \; \Leftrightarrow \; \; W_{21} <-1.$$
By a similar argument, we have
$$\{2\} \in \FP(W,b) \; \; \Leftrightarrow \; \; W_{12} <-1.$$

Finally, consider $\sigma = \{1,2\}$.  We obtain $s_1^\sigma = 1+W_{12}$ and $s_2^\sigma = 1+W_{21}$.  By Theorem~\ref{thm:sgn-condition}, $\sigma \in \FP(W,b)$ precisely when $\sgn(1+W_{12}) = \sgn(1+W_{21})$, and so 
$$\{1,2\} \in \FP(W,b) \; \; \Leftrightarrow \; \; \left\{\begin{array}{l} W_{12}>-1 \text{ and }\; W_{21} >-1, \text{ or} \\ W_{12}<-1 \text{ and }\; W_{21} <-1 \end{array}\right.$$
Note that in the case where $W$ is a CTLN, the condition $W_{21} <-1$ corresponds to $1 \not\to 2$ in the  associated graph $G$, while $W_{12} <-1$ indicates $2 \not\to 1$ in $G$. So the last condition reduces to $\{1,2\} \in \FP(G|_\sigma)$ if and only if nodes $1$ and $2$ are either bidirectionally connected or disconnected in $G$. 
\end{example}

Since the values of $s_i^\sigma$ for $i \in \sigma$ depend only on $(W_\sigma,b_\sigma)$, while the values $s_k^\sigma$ for $k \notin \sigma$ depend only on $(W_{\sigma \cup \{k\}}, b_{\sigma \cup \{k\}})$, we immediately have the following useful corollary:

\begin{corollary}\label{cor:on-off-conds} Let $(W,b)$ be a TLN on $n$ neurons, and let $\sigma \subseteq [n]$.
The following are equivalent:

\begin{enumerate}
\item $\sigma \in \FP(W,b)$
\item $\sigma \in \FP(W_\tau, b_\tau)$ for all $\sigma \subseteq \tau \subseteq [n]$.
\item $\sigma \in \FP(W_\sigma,b_\sigma)$ and $\sigma \in \FP(W_{\sigma \cup \{k\}}, b_{\sigma \cup \{k\}})$ for all $k \notin \sigma$
\item $\sigma \in \FP(W_{\sigma \cup \{k\}}, b_{\sigma \cup \{k\}})$ for all $k \notin \sigma$
\end{enumerate}
\end{corollary}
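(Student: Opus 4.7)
The plan is to apply Theorem~\ref{thm:sgn-condition} to rewrite each of the four conditions as a statement purely about the signs of the Cramer-determinant quantities $s_i^\sigma$, and then observe that the four resulting sign-condition formulations coincide. The enabling observation, already highlighted in the sentence just preceding the corollary, is that $s_i^\sigma$ for $i \in \sigma$ depends only on $(W_\sigma,b_\sigma)$, while $s_k^\sigma$ for $k \notin \sigma$ depends only on $(W_{\sigma\cup\{k\}},b_{\sigma\cup\{k\}})$; consequently $\sgn s_i^\sigma$ is an intrinsic quantity that does not change as we vary the ambient superset of $\sigma$.

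It is convenient to introduce two shorthand conditions: (A) $\sgn s_i^\sigma = \sgn s_j^\sigma$ for all $i,j \in \sigma$, and, for each $k \notin \sigma$, (B$_k$) $\sgn s_k^\sigma = -\sgn s_i^\sigma$ for all $i \in \sigma$. By Theorem~\ref{thm:sgn-condition}, condition (A) is equivalent to $\sigma \in \FP(W_\sigma,b_\sigma)$, and, more generally, for any $\sigma \subseteq \tau \subseteq [n]$, the statement $\sigma \in \FP(W_\tau,b_\tau)$ is equivalent to (A) together with (B$_k$) for every $k \in \tau\setminus\sigma$. The locality remark above guarantees that the quantities appearing in these sign conditions are the same no matter which ambient $\tau$ is chosen.

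Once this dictionary is in hand, the equivalences reduce to bookkeeping. Condition (1) amounts to (A) plus (B$_k$) for every $k \notin \sigma$. Condition (2) contains (1) as the case $\tau=[n]$, and conversely the sign relations in (1) imply the sign relations for each intermediate $\tau$. Condition (3) supplies (A) from its first clause and (B$_k$) from each second-clause membership, giving exactly the same list. For (3) $\Leftrightarrow$ (4), the direction (3) $\Rightarrow$ (4) is immediate, and for (4) $\Rightarrow$ (3) we observe that membership $\sigma \in \FP(W_{\sigma \cup \{k\}},b_{\sigma\cup\{k\}})$ already entails (A), so (A) is redundantly supplied by each single $k \notin \sigma$.

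I do not anticipate any serious obstacle: the argument is essentially an application of Theorem~\ref{thm:sgn-condition} to the restricted TLNs, organized around the observation that $s_i^\sigma$ is independent of the ambient network. The only point that requires care is the degenerate case $\sigma = [n]$, where (4) becomes vacuous and cannot by itself force (A); this can be addressed by a one-line remark noting that conditions (1)--(3) still coincide for $\sigma=[n]$ (each reducing to $\sigma \in \FP(W_\sigma,b_\sigma)$), so no additional technical machinery is needed.
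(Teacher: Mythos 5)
Your proposal is correct and uses precisely the argument the paper has in mind: the sentence immediately preceding the corollary in the paper (``Since the values of $s_i^\sigma$ for $i \in \sigma$ depend only on $(W_\sigma,b_\sigma)$, while the values $s_k^\sigma$ for $k \notin \sigma$ depend only on $(W_{\sigma \cup \{k\}}, b_{\sigma \cup \{k\}})$, we immediately have\ldots'') is exactly your locality observation, and the translation of each of conditions (1)--(4) into the sign conditions (A) and (B$_k$) via Theorem~\ref{thm:sgn-condition} is the implicit bookkeeping the authors are invoking. Your expansion is faithful and complete.

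Worth noting: by carrying out the bookkeeping explicitly, you caught a degeneracy that the paper glosses over. When $\sigma = [n]$, condition (4) is vacuously true while conditions (1)--(3) each reduce to $[n] \in \FP(W,b)$, which can fail (e.g., a $2$-node competitive network with $W_{12} < -1 < W_{21} < 0$ has $s_1^{\{1,2\}} < 0 < s_2^{\{1,2\}}$, so $\{1,2\} \notin \FP(W,b)$). Thus (4)~$\Leftrightarrow$~(1) genuinely fails for $\sigma = [n]$, and the corollary as literally stated needs either the restriction $\sigma \subsetneq [n]$ or condition (4) to be amended (e.g., replaced by your (3)). Your closing remark acknowledges the issue but slightly understates it by framing it as a matter of exposition; it is in fact a small correction to the corollary's statement rather than to your proof.
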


Theorem~\ref{thm:sgn-condition} also gives a relationship between the indices of fixed points whose supports differ by only one neuron.

\begin{lemma}[alternation]\label{lemma:alternation}
Let $(W,b)$ be a TLN.  If $\sigma, \sigma \cup \{k\} \in \FP(W,b),$ for $k \notin \sigma,$ are both fixed point supports, then 
$$\idx(\sigma\cup\{k\}) = - \idx(\sigma).$$
\end{lemma}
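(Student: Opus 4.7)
The plan is to derive the alternation identity directly from Theorem~\ref{thm:sgn-condition} together with the simple identity $s_k^\sigma = s_k^{\sigma \cup \{k\}}$ from equation~\eqref{eq:s_k}. The idea is that both $\idx(\sigma)$ and $\idx(\sigma \cup \{k\})$ can be read off as signs of Cramer's determinants $s_\ell^\tau$, and the fact that $\sigma \in \FP(W,b)$ forces these signs to flip when we cross the boundary between $\sigma$ and its complement by adding node $k$.

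First, since $\sigma \in \FP(W,b) \subseteq \FP(W_\sigma,b_\sigma)$, Theorem~\ref{thm:sgn-condition} gives $\idx(\sigma) = \sgn s_i^\sigma$ for any $i \in \sigma$. Moreover, the second part of that theorem tells us that for every $k \notin \sigma$ we must have $\sgn s_k^\sigma = -\sgn s_i^\sigma = -\idx(\sigma)$. This captures the ``outside'' constraint imposed by $\sigma$ being a surviving fixed point of the full network.

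Next, I would apply the same theorem to the larger support $\sigma \cup \{k\}$. Since $\sigma \cup \{k\} \in \FP(W,b)$, in particular it is a permitted motif, and Theorem~\ref{thm:sgn-condition} gives $\idx(\sigma \cup \{k\}) = \sgn s_k^{\sigma \cup \{k\}}$. The key observation is now equation~\eqref{eq:s_k}, which states $s_k^{\sigma \cup \{k\}} = s_k^\sigma$ straight from the definition in~\eqref{eq:s_i} (the Cramer matrix on both sides is literally the same, since $\sigma \cup \{k\} = (\sigma \cup \{k\}) \cup \{k\}$). Combining the three sign equalities then yields
\[
\idx(\sigma \cup \{k\}) \;=\; \sgn s_k^{\sigma \cup \{k\}} \;=\; \sgn s_k^\sigma \;=\; -\idx(\sigma),
\]
which is exactly the claim.

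There is no real obstacle here beyond unpacking the notation: once one recognizes that the $s_k^\sigma$ quantity defined in~\eqref{eq:s_i} is constructed from the set $\sigma \cup \{k\}$ rather than $\sigma$ alone, the identity $s_k^\sigma = s_k^{\sigma\cup\{k\}}$ is tautological, and Theorem~\ref{thm:sgn-condition} does the rest. The proof is essentially a two-line corollary of the sign condition characterization.
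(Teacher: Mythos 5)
Your proof is correct and matches the paper's argument exactly: both derive the result from Theorem~\ref{thm:sgn-condition} applied to $\sigma$ (to get $\idx(\sigma) = -\sgn s_k^\sigma$) and to $\sigma\cup\{k\}$ (to get $\idx(\sigma\cup\{k\}) = \sgn s_k^{\sigma\cup\{k\}}$), then combine via the identity $s_k^{\sigma\cup\{k\}}=s_k^\sigma$.
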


\begin{proof}
If $\sigma, \sigma \cup \{k\} \in \FP(W,b)$, then by Theorem~\ref{thm:sgn-condition} we have
$\idx(\sigma) = \sgn s_i^\sigma = -\sgn s_k^\sigma$ for any $i \in \sigma$.  Recalling that $s_k^{\sigma \cup \{k\}} = s_k^\sigma$, we see that $\idx(\sigma \cup \{k\}) = -\idx(\sigma)$, as desired. 
\end{proof}

\begin{corollary}\label{cor:support-gap} 
If $\sigma \in \FP(W,b)$ is the support of a stable fixed point, then there is no other stable fixed point with support $\sigma$, $\sigma \setminus k $, or $\sigma \cup \{k\}$ for any $k \in [n]$.
\end{corollary}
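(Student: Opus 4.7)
The plan is to derive this corollary as a near-immediate consequence of the alternation lemma (Lemma~\ref{lemma:alternation}) together with the fact, recalled in Section~2.2, that every stable fixed point has index $+1$. Since we have also assumed throughout that $(W,b)$ is nondegenerate, Cramer's rule gives at most one fixed point per support, so the claim ``no other stable fixed point with support $\sigma$'' is automatic and requires no argument.

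First I would handle the case of $\sigma \cup \{k\}$ with $k \notin \sigma$. Suppose for contradiction that $\sigma \cup \{k\} \in \FP(W,b)$ is also the support of a stable fixed point. Then $\idx(\sigma \cup \{k\}) = +1$. But $\sigma \in \FP(W,b)$ by hypothesis, so Lemma~\ref{lemma:alternation} forces $\idx(\sigma \cup \{k\}) = -\idx(\sigma) = -1$, a contradiction.

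Next I would handle the case of $\sigma \setminus \{k\}$ with $k \in \sigma$. The argument is symmetric: apply Lemma~\ref{lemma:alternation} with the roles of $\sigma$ and $\sigma \cup \{k\}$ played by $\sigma \setminus \{k\}$ and $\sigma$, respectively. If $\sigma \setminus \{k\} \in \FP(W,b)$ supports a stable fixed point, then $\idx(\sigma \setminus \{k\}) = +1$, but alternation gives $\idx(\sigma) = -\idx(\sigma \setminus \{k\}) = -1$, contradicting stability of the fixed point on $\sigma$. (If $k \notin \sigma$, then $\sigma \setminus \{k\} = \sigma$ and there is nothing new to prove.)

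There is no real obstacle here; the only subtle point is being careful that Lemma~\ref{lemma:alternation} applies symmetrically regardless of which of the two nested supports is viewed as the ``larger'' one, since the statement is about any pair of fixed point supports differing by a single neuron. The nondegeneracy assumption is what guarantees all indices are in $\{\pm 1\}$ so that the sign flip given by alternation is genuinely incompatible with both supports being stable.
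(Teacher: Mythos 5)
Your proof is correct and is exactly the intended argument: the paper states this corollary without proof immediately after Lemma~\ref{lemma:alternation}, relying on the same two ingredients you use, namely that stable fixed points have $\idx(\sigma)=+1$ (established just before Theorem~\ref{thm:parity}) and that alternation forces supports differing by one neuron to have opposite index. Your handling of both directions ($\sigma\cup\{k\}$ and $\sigma\setminus\{k\}$) and the remark that nondegeneracy gives uniqueness of the fixed point on $\sigma$ itself are all accurate.
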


\section{Fixed points of CTLNs}\label{sec:fixed-pts-CTLNs}

When $(W,b)$ comes from a CTLN with graph $G$ and parameters $\varepsilon, \delta,$ and $\theta$, so that $W = W(G,\varepsilon,\delta)$ and $b = \theta\bf{1}$, we use the notation 
$$\FP(G) = \FP(G,\varepsilon,\delta) \od \FP(W(G,\varepsilon,\delta), \theta\bf{1}).$$
Note that we always suppress $\theta$ from the notation, since it can easily be seen that the value of $\theta$ does not affect the set of fixed point supports, so long as $\theta >0$. (Different $\theta$ values merely rescale the fixed point values.) In addition, we will typically suppress the $\varepsilon$ and $\delta$ dependence as well, using the simpler notation $\FP(G)$ to denote the set of fixed point supports when $\varepsilon$ and $\delta$ are understood to be fixed. 

For some graphs,  $\FP(G) = \FP(G,\varepsilon,\delta)$ does indeed depend on the choice of $\varepsilon$ and $\delta$, but our theoretical results almost always deal with aspects of $\FP(G)$ that are independent of the choice of parameters, provided these lie within the legal range.\footnote{For any result that does depend on $\varepsilon$ and $\delta$, we will indicate this explicitly.}  For example, a graph rule could tell us about certain fixed point supports that can be ruled out, and are thus not contained in $\FP(G)$. Such a conclusion is parameter independent, even if there are other supports in $\FP(G)$ whose presence depends on parameters.

Recall that $\sigma$ is a permitted motif if $\sigma \in \FP(W_\sigma, b_\sigma)$, and is forbidden otherwise.  We use the same language for CTLNs: $\sigma$ is a permitted motif if $\sigma \in \FP(G|_\sigma)$, where $G|_\sigma$ is the induced subgraph obtained by restricting vertices and edges to the vertex set $\sigma$. If $\sigma \notin \FP(G|_\sigma)$, we say that $\sigma$ is a forbidden motif. Note that this may depend on the choice of parameters $\varepsilon,\delta$.

The results in this section provide a technical foundation for the graph rules and building block rules that are presented in Sections~\ref{sec:graph-rules} and~\ref{sec:building-blocks}, respectively.
Before moving on to the main content, we pause briefly to provide some simple bounds on the total activity at fixed points of CTLNs, irrespective of the support. These bounds, like the actual fixed point values, do depend on $\varepsilon, \delta$ and $\theta$, even when the supports $\FP(G)$ do not.

\begin{lemma}\label{lemma:total-bounds}
If $x^*$ is a fixed point of a CTLN on $n$ nodes, with parameters $\varepsilon, \delta$, and $\theta$, then
$$\dfrac{\theta}{1+\delta} < \sum_{i =1}^n x_i^* < \dfrac{\theta}{1-\varepsilon}.$$
\end{lemma}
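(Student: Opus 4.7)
The plan is to sum the fixed-point equation over the support of $x^*$ and then use the uniform two-sided bounds on the entries $W_{ij}$ coming from the CTLN definition~\eqref{eq:binary-synapse} to bound the total activity $s = \sum_{i=1}^n x_i^*$. Let $\sigma = \supp(x^*)$; since $\theta>0$, $\sigma$ is nonempty, and for $i \in \sigma$ the threshold is inactive, so
\begin{equation*}
x_i^* = \sum_{j \in \sigma,\, j \neq i} W_{ij}\, x_j^* + \theta,
\end{equation*}
using $W_{ii}=0$ and $x_j^* = 0$ for $j \notin \sigma$. Summing over $i \in \sigma$ and swapping the order of summation gives
\begin{equation*}
s \;=\; \sum_{j \in \sigma} x_j^* \, c_j \;+\; |\sigma|\,\theta, \qquad \text{where } c_j \od \sum_{i \in \sigma,\, i \neq j} W_{ij}.
\end{equation*}

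Next, I would use the CTLN structure: each $W_{ij}$ with $i\neq j$ lies in $\{-1+\varepsilon,\, -1-\delta\} \subseteq [-1-\delta,\, -1+\varepsilon]$. Writing $k = |\sigma|$, this yields
\begin{equation*}
(k-1)(-1-\delta) \;\leq\; c_j \;\leq\; (k-1)(-1+\varepsilon)
\end{equation*}
for every $j \in \sigma$. Since each $x_j^* \geq 0$, multiplying through and summing gives the sandwich
\begin{equation*}
(k-1)(-1-\delta)\, s \;\leq\; \sum_{j \in \sigma} x_j^*\, c_j \;\leq\; (k-1)(-1+\varepsilon)\, s.
\end{equation*}
Combining with the identity above, the upper bound rearranges to $s\bigl[\varepsilon + k(1-\varepsilon)\bigr] \leq k\theta$, and the lower bound rearranges to $s\bigl[k(1+\delta) - \delta\bigr] \geq k\theta$.

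Finally, I would verify that these two estimates imply the strict bounds claimed. For the upper estimate, $\frac{k}{\varepsilon + k(1-\varepsilon)} < \frac{1}{1-\varepsilon}$ is equivalent to $\varepsilon > 0$, so $s < \frac{\theta}{1-\varepsilon}$. For the lower estimate, $\frac{k}{k(1+\delta) - \delta} > \frac{1}{1+\delta}$ is equivalent to $\delta > 0$, so $s > \frac{\theta}{1+\delta}$. Both strict inequalities hold for any $k \geq 1$ in the legal range; the edge case $k=1$ is trivial since then $c_j$ is an empty sum and $s = \theta$, which sits strictly between $\theta/(1+\delta)$ and $\theta/(1-\varepsilon)$. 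There is no serious obstacle here — the only subtlety is tracking that, even though the bounds on $c_j$ need not be attained, the parameter conditions $\varepsilon, \delta > 0$ are exactly what force the final inequalities to be strict.
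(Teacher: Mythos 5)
Your proof is correct. The route differs mildly from the paper's: you sum the fixed-point equation over all $i \in \sigma$, swap the order of summation, and arrive at $k$-dependent bounds $\frac{k\theta}{k+(k-1)\delta} \le s \le \frac{k\theta}{k-(k-1)\varepsilon}$ (with $k = |\sigma|$), which you then relax to the stated parameter-only bounds by checking that the comparison inequalities reduce to $\delta > 0$ and $\varepsilon > 0$. The paper instead dots a \emph{single} row $R_j$ of $I-W_\sigma$ with $x_\sigma^*$, giving $(1-\varepsilon)\sum_{i\in\sigma} x_i^* + \varepsilon x_j^* \le \theta \le (1+\delta)\sum_{i\in\sigma} x_i^* - \delta x_j^*$, and obtains strictness immediately from $x_j^* > 0$; no comparison of constants is needed and the cardinality $k$ never appears. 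Both arguments hinge on the same observation---that the off-diagonal entries of $W_\sigma$ lie in $\{-1+\varepsilon,-1-\delta\} \subseteq [-1-\delta,-1+\varepsilon]$---so the ingredients are identical, but the paper's one-row trick is shorter. Your intermediate $k$-dependent bounds are actually sharper (they are attained by cliques and independent sets, respectively, by Lemma~\ref{lemma:uniform}), so you've implicitly proved a stronger statement at the cost of a bit more algebra; the tradeoff is reasonable but not needed for the lemma as stated.
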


\begin{proof}
Let $\sigma$ be the support of $x^*$.  Then $(I-W_\sigma)x_\sigma^* = \theta \one_\sigma$, and so $\theta = R_j \cdot x_\sigma^*$, where $R_j$ is the $j$th row vector of $I-W_\sigma$.  Since all entries of $R_j$ and $x_\sigma^*$ are positive, and the off-diagonal entries of $I-W_\sigma$ are all at least $1-\varepsilon$ and at most $1+\delta$, it follows that
$$(1-\varepsilon) \sum_{i \in \sigma} x_i^* + \varepsilon x_j^* \leq \theta \leq (1+\delta)\sum_{i \in \sigma} x_i^* - \delta x_j^*$$
for any $j \in \sigma$.  Since $x_j^* > 0$, we have
$(1-\varepsilon) \sum_{i \in \sigma} x_i^*< \theta < (1+\delta)\sum_{i \in \sigma} x_i^*,$
which implies the desired result.
\end{proof}

\subsection{Sign rules for CTLNs}\label{sec:sgn-condition-CTLN}
Recall from Section~\ref{sec:technical-results}, that Theorem~\ref{thm:sgn-condition} gives sign conditions for when a subset $\sigma$ is a permitted motif and/or a fixed point support in a general TLN $(W, b)$.  
We illustrate this result for the special case of CTLNs in the following example.

\begin{figure}[!h]
\begin{center}
\includegraphics[width=6.5in]{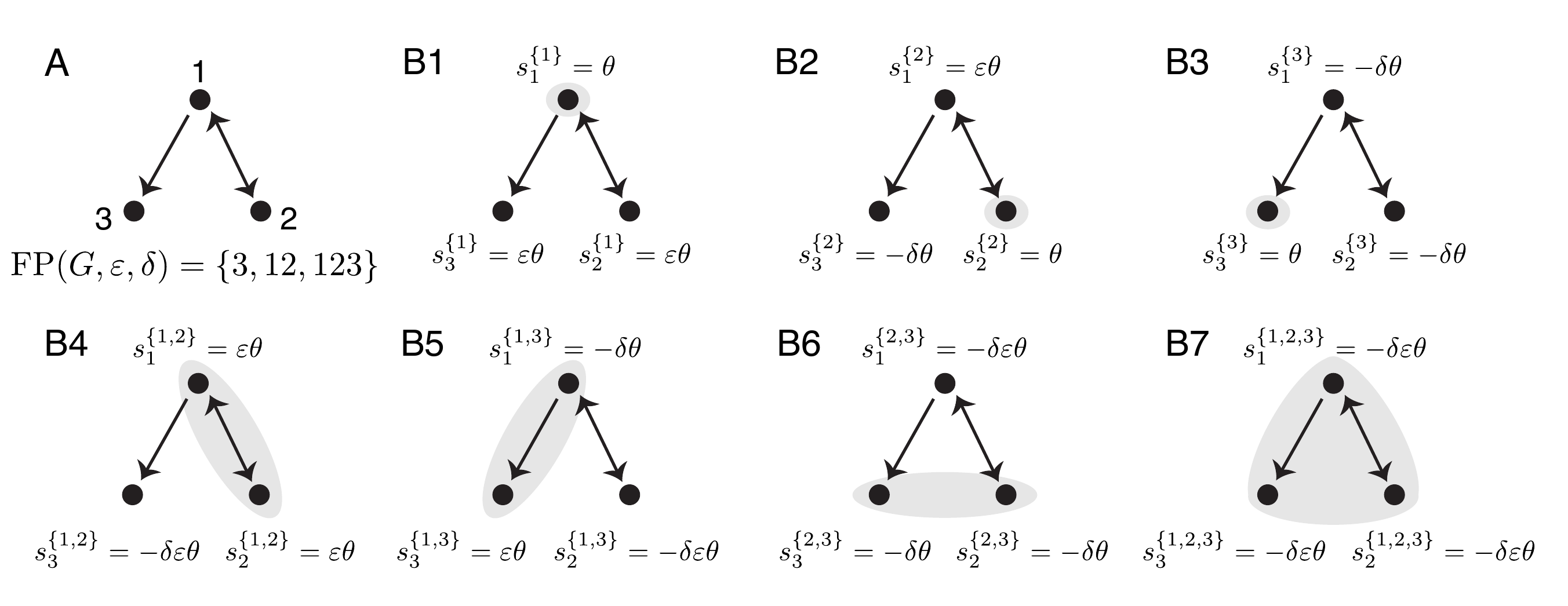}
\vspace{-.15in}
\end{center}
\caption{\textbf{Graph with $s_i^\sigma$ values for all nonempty $\sigma$}.  (A) Graph with $\FP(G, \varepsilon, \delta)$. (B1 - B7) In each panel, the nodes in the gray shaded region comprise $\sigma$, and each node is labeled with its $s_i^\sigma$ value.}
\label{fig:n3-single-example}
\end{figure}

\begin{example}\label{ex:s_i-values}
Let $G$ be the graph in Figure~\ref{fig:n3-single-example}A.  We will use the sign conditions to find the permitted motifs and $\FP(G) = \FP(G, \varepsilon, \delta)$.  Recall from Section~\ref{sec:general-background} that the empty set is never the support of a fixed point, and so we restrict to analyzing the nonempty subsets of $\{1,2,3\}$.  Each panel in Figure~\ref{fig:n3-single-example} shows a choice of $\sigma$ (gray shaded regions) and the values of $s_i^\sigma = \det((I-W_{\sigma \cup i})_i; \theta \one)$ for all $i \in \{1,2,3\}$.  For example, in B1, $s_1^{\{1\}} = \det(\theta) = \theta$ and $s_2^{\{1\}} = \det\left(\begin{array}{cc} 1 & \theta\\ 1-\varepsilon & \theta \end{array}\right) = \varepsilon \theta$.  

Observe that every singleton $\{i\}$ is a permitted motif, as are $\{1,2\}, \{2,3\}$, and $\{1,2,3\}$ since these subsets satisfy $\sgn s_i^\sigma = \sgn s_j^\sigma$ for all $i, j \in \sigma$.  A permitted motif survives as a fixed point of $G$ precisely when the external nodes all have opposite sign for $s_k^\sigma$.  Thus the only singleton fixed point support is $\{3\}$ since it satisfies $\sgn s_i^{\{3\}} = -\sgn s_3^{\{3\}}$ for $i =1,2$ (see B3).  Continuing this analysis, we find that $\FP(G) = \{3, 12, 123\}$.  Furthermore, since the signs of $s_i^\sigma$ are constant across $\varepsilon, \delta >0$ for each $i \in \{1,2,3\}$ and $\sigma \subseteq \{1,2,3\}$, we see that $\FP(G)$ is in fact parameter independent.
\end{example}

Figure~\ref{fig:n3-graphs-s_i} in Appendix Section~\ref{appendixA}, shows all directed graphs of size $n \leq 3$ labeled with the full support $s_i^{[n]}$ values for each node $i \in [n]$.   From this, we see that 6 of the 16 directed graphs of size 3 are permitted motifs, and the remainder are forbidden, independent of the choice of parameters.

\subsection{Simply-added splits}
In this section, we introduce the concept of \emph{simply-added splits}, whereby the vertices of a graph are partitioned in a special way that allows us to easily compute certain $s_i^\sigma$ values.  This kind of structure will play a prominent role in Section~\ref{sec:building-blocks}, where we build larger graphs from component subgraphs.  In particular, simply-added splits are essential to our ability to relate $\FP(G)$ for such a composite graph to the $\FP(G_i)$ of its components.  They also provide an additional tool for proving some of the graph rules in Section~\ref{sec:graph-rules}.

\begin{definition}[simply-added]
Let $G$ be a graph on $n$ nodes, $\tau \subset [n]$ nonempty, and $k \notin \tau$. We say that $k$ is a {\it projector onto $\tau$} if $k \to i$ for all $i \in \tau$. We say that $k$ is a {\it non-projector onto $\tau$} if $k \not\to i$ for all $i \in \tau$. 
For any nonempty $\tau, \omega \subset [n]$ such that $\tau \cap \omega= \emptyset$, we say $\omega$ is {\it simply-added to $\tau$} if for each $k \in \omega$, $k$ is either a projector or a non-projector onto $\tau$ (see Figure~\ref{fig:simply-added}).  In this case, we say that the $(\tau,\omega)$ is a {\it simply-added split} of the subgraph $G|_\sigma$, for $\sigma = \tau \cup \omega$. 
\end{definition}

\begin{figure}[!h]
\begin{center}
\includegraphics[height=1.5in]{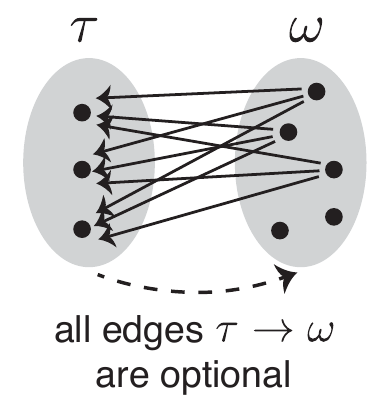}
\vspace{-.15in}
\end{center}
\caption{In this graph, $\omega$ is simply-added to $\tau$ and thus each $k \in \omega$ either sends all possible edges to $\tau$, or no edges. There is no constraint on the edges within $\tau$, within $\omega$, or from $\tau$ to $\omega$.}
\label{fig:simply-added}
\end{figure}

The following theorem shows that the $s_i^\sigma$ values factor nicely when a set of neurons is simply-added to a subgraph. 

\begin{theorem}[simply-added]  \label{thm:simply-added}
Let $G$ be a graph on $n$ nodes, and let $\tau,\omega \subset [n]$ such that $\omega$ is simply-added to $\tau$.  For $\sigma = \tau \; {\cup} \:\omega$, 
$$s_i^\sigma = \dfrac{1}{\theta} s_i^{\omega} s_i^{\tau} = \alpha s_i^{\tau}  \;\; \text{ for each } \; i \in \tau,$$
where $\alpha = \frac{1}{\theta} s_i^{\omega}$ has the same value for every $i \in \tau$.

More generally, for any $\sigma \subseteq \tau\; {\cup} \:\omega$, 
$$s_i^\sigma = \dfrac{1}{\theta} s_i^{\sigma \cap \omega} s_i^{\sigma \cap \tau} = \alpha s_i^{\sigma \cap \tau} \;\; \text{ for each } \; i \in \tau,$$
where $\alpha = \frac{1}{\theta} s_i^{\sigma\cap\omega}$. 
\end{theorem}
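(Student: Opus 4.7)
The plan is to view $I-W_\sigma$ as a block matrix with respect to the partition $\sigma=\tau\cup\omega$, use the simply-added property to perform clean column operations that make the matrix block upper-triangular, and then read off the factored determinant. The crucial observation is that when $\omega$ is simply-added to $\tau$, for each $k \in \omega$ the entries $-W_{ik}$ are constant in $i\in\tau$, taking a common value $c_k\in\{1-\varepsilon,\,1+\delta\}$ determined by whether $k$ is a projector or non-projector onto $\tau$. Writing $I-W_\sigma$ with rows and columns ordered $\tau$ then $\omega$,
\[
I-W_\sigma \;=\; \begin{pmatrix} I-W_\tau & C \\ R & I-W_\omega \end{pmatrix},
\]
the top-right block $C$ has $k$-th column equal to $c_k\one_\tau$.

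Fix $i\in\tau$ and set $M=((I-W_\sigma)_i;\,b_\sigma)$, so the $i$-th column of $M$ equals $\theta\one_\sigma$. For each $k\in\omega$, subtracting $(c_k/\theta)$ times column $i$ from column $k$ annihilates the top $\tau$-block of column $k$, while its bottom $\omega$-block becomes $(I-W_\omega)_{\cdot,k}-c_k\one_\omega$. Elementary column operations preserve the determinant, and the result is block upper-triangular with top-left block $((I-W_\tau)_i;\,b_\tau)$ and bottom-right block $B:=(I-W_\omega)-\one_\omega u^T$, where $u=(c_k)_{k\in\omega}$. Therefore
\[
s_i^\sigma \;=\; \det M \;=\; s_i^\tau\cdot \det B.
\]

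To identify $\det B$, I would run exactly the same column reduction on the matrix defining $s_i^\omega=\det((I-W_{\omega\cup\{i\}})_i;\,b_{\omega\cup\{i\}})$. With rows and columns ordered $\{i\}$ then $\omega$, the $i$-th column (after replacement) is $\theta\one$ and the remaining columns $k\in\omega$ have the same top entry $c_k$ as before, so the identical subtraction produces a block upper-triangular matrix with top-left scalar $\theta$ and the same bottom-right block $B$. Thus $s_i^\omega=\theta\det B$, which substituted into the previous line yields $s_i^\sigma=\tfrac{1}{\theta}\,s_i^\tau\,s_i^\omega$. Because each $c_k$ depends only on the projector/non-projector status of $k$ with respect to $\tau$, the matrix $B$ and hence $\alpha=s_i^\omega/\theta$ is constant in $i$ over $\tau$, as required.

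For the general formula with $\sigma\subseteq\tau\cup\omega$, I would first reduce to the case $i\in\sigma$ using the identity $s_i^\sigma=s_i^{\sigma\cup\{i\}}$ from~\eqref{eq:s_k}. Since the projector/non-projector dichotomy restricts to any subset, $\sigma\cap\omega$ is simply-added to $\sigma\cap\tau$, and the same argument applied verbatim with $(\tau,\omega)$ replaced by $(\sigma\cap\tau,\,\sigma\cap\omega)$ gives $s_i^\sigma=\tfrac{1}{\theta}\,s_i^{\sigma\cap\tau}\,s_i^{\sigma\cap\omega}$. The only conceptual obstacle is recognizing that the very same block $B$ arises both in the reduction of $M$ and in the reduction of the matrix computing $s_i^\omega$—this coincidence is what allows the factorization to close; everything else is block-determinant bookkeeping.
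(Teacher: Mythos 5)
Your proof is correct and follows essentially the same route as the paper's: both proceed by replacing the $i$-th column with $b_\sigma$, using that column to clear the $\tau\times\omega$ block via column operations (permissible because the entries $-W_{jk}$ are constant in $j\in\tau$ for each $k\in\omega$), factoring the determinant across the resulting block-triangular form, and then recognizing that the same Schur-type block $B$ reappears in the reduction of the matrix defining $s_i^\omega$; the general case is handled identically in both by restricting to $\sigma\cap\tau$, $\sigma\cap\omega$ and invoking $s_i^\sigma = s_i^{\sigma\cup\{i\}}$. One trivial slip: with the block ordering $\tau$ then $\omega$, clearing the top-right block makes the matrix block \emph{lower}-triangular rather than upper-triangular, but since $\det$ factors across either form this has no effect on the argument.
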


Recall, using equation~\eqref{eq:s_k}, that $s_i^{\emptyset} = \theta$, since for CTLNs $b_i = \theta$ for each $i \in [n]$. With this convention, the above theorem holds even if $\sigma \cap \tau$ or $\sigma \cap \omega$ is empty.

To prove Theorem~\ref{thm:simply-added} we use the standard formula for the determinant of a $2 \times 2$ block matrix,
\begin{equation}\label{eq:2x2}
\det \begin{bmatrix} A & B \\ C & D\end{bmatrix} = \det(A)\det(D-C A^{-1}B),
\end{equation}
which applies as long as $A$ is invertible.

\begin{proof}[Proof of Theorem~\ref{thm:simply-added}]
Since $\omega$ is simply-added to $\tau$, each $k \in \omega$ is either a projector or non-projector onto $\tau$.  Thus, for each $k \in \omega$, $W_{ik} = -1+c_k$ for all $i \in \tau$, where $c_k=\varepsilon$ if $k$ is a projector or $c_k=-\delta$ if $k$ is non-projector.  

First consider $\sigma = \tau \cup \omega$. For any $i \in \tau$, we compute:
\begin{eqnarray*}
s_i^\sigma &=& \theta\det((I-W_\sigma)_i;\one) = 
\theta\det\left(\begin{array}{c | c | c c c} 
 & 1 & 1-c_{k_1} & \cdots & 1-c_{k_\ell}\\
I-W_{\tau\setminus \{i\}} & \vdots & \vdots & & \vdots\\
 & 1 & 1-c_{k_1} & \cdots & 1-c_{k_\ell}\\
\hline
* & 1 & 1-c_{k_1} & \cdots & 1-c_{k_\ell}\\
 \hline
 * & \one & & I-W_\omega &
\end{array}\right)\\
&=& \theta\det\left(\begin{array}{c | c | c c c} 
 & 1 & 0 & \cdots & 0\\
I-W_{\tau\setminus \{i\}}  & \vdots & \vdots & & \vdots\\
 & 1 & 0 & \cdots & 0\\
 \hline
  * & 1 & 0 & \cdots & 0\\
  \hline
 * & \one & & I-W_\omega - A &
\end{array}\right) 
= \theta \det((I-W_\tau)_i;\one)\det(I-W_\omega-A),
\end{eqnarray*}
where $A$ consists of the modifications to the $I-W_\omega$ block that arise from clearing out the top right block using the column of 1s.

Now observe that $\theta\det((I-W_\tau)_i;\one) = s_i^\tau$, while
\begin{eqnarray*}
\det(I-W_\omega-A) &=& 
\det\left(\begin{array}{c | c c c} 
 1 & 0 & \cdots & 0\\
  \hline
 \one & & I-W_\omega - A &
\end{array}\right) 
= \det\left(\begin{array}{c | c c c} 
 1 & 1-c_{k_1} & \cdots & 1-c_{k_\ell}\\
  \hline
 \one & & I-W_\omega &
\end{array}\right) \\
&=& \det((I-W_{\omega\cup\{i\}})_i;\one) = \frac{1}{\theta}s_i^{\omega\cup\{i\}} =\frac{1}{\theta}s_i^\omega.
\end{eqnarray*}
We therefore have $s_i^\sigma = \frac{1}{\theta}s_i^\tau s_i^\omega$, and see that $s_i^\omega$ has the same value for each $i \in \tau$.

Now consider $\sigma \subseteq \tau \cup \omega$. Note that $\sigma = (\sigma \cap \tau) \cup (\sigma \cap \omega)$, where $\sigma \cap \omega$ is simply-added to $\sigma \cap \tau$ (since $\omega$ is simply-added to $\tau$). To compute $s_i^\sigma$ for $i \in \tau$, we consider two cases. First, suppose $i \in \sigma \cap \tau$. In this case, applying the previous formula to $\sigma = (\sigma \cap \tau) \cup (\sigma \cap \omega)$ immediately yields the desired result. Now consider $i \in \tau \setminus \sigma$. In this case, we can apply the previous formula to $\sigma \cup i = ((\sigma \cap \tau) \cup i) \cup (\sigma \cap \omega)$, and so
$$s_i^\sigma = s_i^{\sigma \cup i} = \dfrac{1}{\theta} s_i^{\sigma \cap \omega} s_i^{(\sigma \cap \tau) \cup i} = \dfrac{1}{\theta} s_i^{\sigma \cap \omega} s_i^{\sigma \cap \tau},$$
where the first and last equalities use equation~\eqref{eq:s_k}. Thus, the desired formula for $s_i^\sigma$ holds for all $i \in \tau$.
\end{proof}

Note that if $\omega = \{k\}$, where $k$ is a projector onto $\tau$, then $s_i^\omega = \varepsilon\theta$ for each $i \in \tau$. If, on the other hand, $k$ is a non-projector onto $\tau$, then $s_i^\omega = -\delta\theta$ for each $i \in \tau$. To see this, observe that $s_i^{\{k\}} = s_i^{\{i,k\}} = (1+W_{ik})\theta$, which evaluates to $\varepsilon\theta$ if $k\to i$, and $-\delta\theta$ for $k \not\to i$. This gives the following useful corollary.

\begin{corollary}\label{cor:simply-added}
Suppose $\sigma = \tau \cup \{k\}$, where $k$ is simply-added onto $\tau$. If $k$ is a projector, then
$s_i^\sigma = \varepsilon s_i^\tau$ for each $i \in \tau$. If $k$ is a non-projector, then $s_i^\sigma = -\delta s_i^\tau$
for each $i \in \tau$.
\end{corollary}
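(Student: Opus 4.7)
The plan is to derive this corollary as an almost immediate consequence of Theorem~\ref{thm:simply-added} applied in the special case $\omega = \{k\}$. Since a single node $k$ is trivially simply-added to $\tau$ whenever it is a projector or non-projector onto $\tau$, the hypothesis of Theorem~\ref{thm:simply-added} is satisfied, and for every $i \in \tau$ we obtain
\begin{equation*}
s_i^\sigma \;=\; \frac{1}{\theta}\, s_i^{\{k\}}\, s_i^\tau.
\end{equation*}
So the entire content of the corollary reduces to evaluating the scalar $\tfrac{1}{\theta} s_i^{\{k\}}$ for $i \in \tau$.

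Next, I would compute $s_i^{\{k\}}$ explicitly. By equation~\eqref{eq:s_k}, $s_i^{\{k\}} = s_i^{\{i,k\}}$, which by definition equals $\det((I - W_{\{i,k\}})_i;\, \theta\mathbf{1})$. This is the determinant of a $2\times 2$ matrix whose column corresponding to $i$ has been replaced by $(\theta,\theta)^T$, and a direct calculation gives $s_i^{\{i,k\}} = \theta(1 + W_{ik})$. Using the CTLN definition of $W$, we have $W_{ik} = -1 + \varepsilon$ if $k \to i$ (projector case) and $W_{ik} = -1 - \delta$ if $k \not\to i$ (non-projector case), yielding $s_i^{\{k\}} = \varepsilon \theta$ or $s_i^{\{k\}} = -\delta \theta$ respectively.

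Finally, substituting these values into the formula from Theorem~\ref{thm:simply-added} gives $s_i^\sigma = \varepsilon\, s_i^\tau$ in the projector case and $s_i^\sigma = -\delta\, s_i^\tau$ in the non-projector case, as claimed. There is no real obstacle here: the corollary is essentially a bookkeeping consequence of the simply-added factorization, with the only small point to verify being the $2\times 2$ determinant evaluation and the observation that $s_i^{\{k\}}$ takes the same value for every $i \in \tau$ (which is automatic since $k$'s projection behavior is uniform on $\tau$ by hypothesis).
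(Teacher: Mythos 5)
Your proposal is correct and follows essentially the same route the paper takes: the paper proves this corollary by the remark immediately preceding it, which observes that $s_i^{\{k\}} = s_i^{\{i,k\}} = (1+W_{ik})\theta$, evaluating to $\varepsilon\theta$ (projector) or $-\delta\theta$ (non-projector), and then specializes Theorem~\ref{thm:simply-added} to $\omega = \{k\}$. Your $2\times 2$ determinant calculation and the use of equation~\eqref{eq:s_k} to reduce $s_i^{\{k\}}$ to $s_i^{\{i,k\}}$ are exactly the steps the paper uses.
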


\begin{example}
Let $G$ be the graph in Figure~\ref{fig:n3-single-example}A.  For $\sigma= \{1,2,3\}$, we see that $\sigma$ has a simply-added split $\sigma = \tau \cup \omega$ where $\tau = \{1,2\}$ and $\omega = \{3\}$ is a non-projector onto $\tau$.  Applying Corollary~\ref{cor:simply-added}, we have $s_i^\sigma = -\delta s_i^{\tau}$ for $i=1,2$.  We see from panel B4 of Figure~\ref{fig:n3-single-example} that $s_i^{\tau} = \varepsilon \theta$, and (B7) shows $s_i^\sigma = -\delta \varepsilon \theta$ for $i=1, 2$.  Furthermore, $\sigma$ actually has a second simply-added split: $\sigma = \tau \cup \omega$, where $\tau= \{2,3\}$ and $\omega=\{1\}$ is a projector onto $\tau$.  In this case, Corollary~\ref{cor:simply-added} guarantees that $s_i^\sigma = \varepsilon s_i^{\tau}$ for $i=2,3$.  Since $s_i^\tau = -\delta\theta$ for $i=2, 3$ (see panel B6 of Figure~\ref{fig:n3-single-example}), we see that this second simply-added split gives a consistent value for $s_2^\sigma$.
\end{example}

\subsection{Graphical domination}\label{sec:graph-domination}

One of the most important tools we will use to prove the graph rules in Section~\ref{sec:graph-rules} is the concept of {\it graphical domination}. This refers to certain ``domination'' relationships between vertices in the graph of a CTLN that can be identified by examining the graph alone (e.g., without computing $s_i^\sigma$ values). The presence of such a relationship within a subgraph $G|_\sigma$ is sufficient to guarantee that $\sigma$ is a forbidden motif. Graphical domination can also be used to determine whether or not a permitted motif survives as a fixed point support in a larger network. These facts are collected in Theorem~\ref{thm:graph-domination}. Because they rely only on the graph structure, any results about a CTLN obtained from graphical domination are automatically parameter independent, within the legal range. 

Although we will make frequent use of Theorem~\ref{thm:graph-domination} in Sections~\ref{sec:graph-rules} and~\ref{sec:building-blocks}, we postpone the proof until Section~\ref{sec:graph-dom-proof}. This is because graphical domination is a special case of a more general notion of domination, which we introduce in Section~\ref{sec:domination}.

\begin{definition}\label{def:graph-domination}
We say that $k$ \emph{graphically dominates} $j$ with respect to $\sigma$ if $\sigma \cap \{j, k\} \neq \emptyset$ and the following three conditions all hold:
\begin{itemize}
\item[(1)] for each $i \in \sigma \setminus \{j, k\}$, if $i \to j$ then $i \to k$,
\item[(2)] if $j \in \sigma$, then $j \to k$, and 
\item[(3)] if $k \in \sigma$, then $k \not\to j$.
\end{itemize}
\end{definition}

\begin{figure}[!h]
\begin{center}
\includegraphics[width=5.75in]{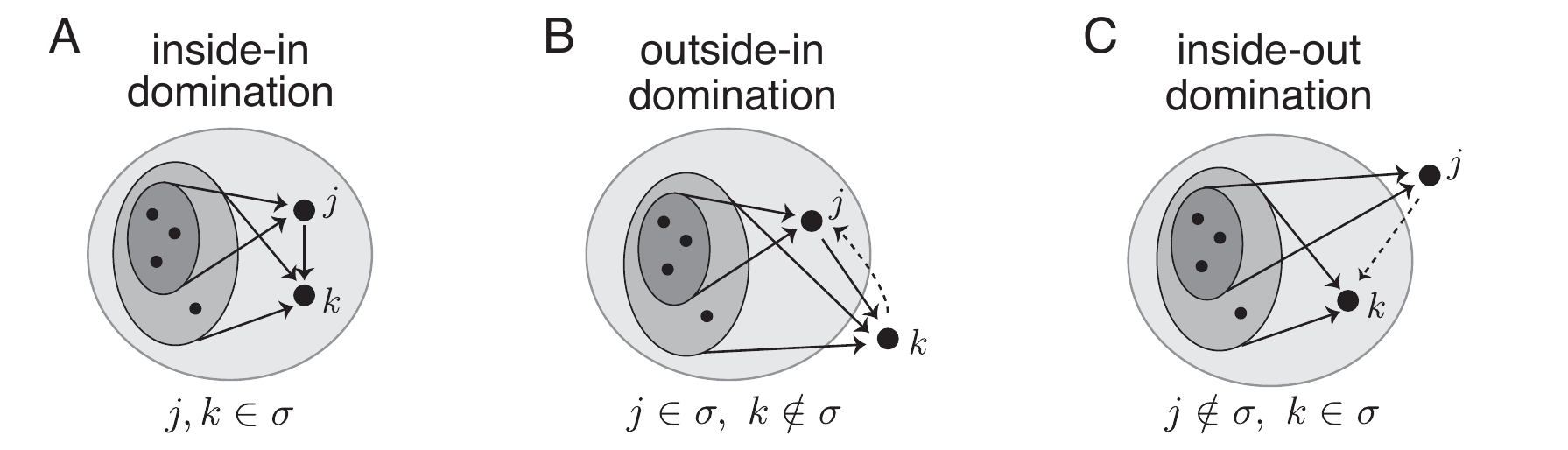}
\vspace{-.15in}
\end{center}
\caption{{\bf The three cases of graphical domination in Theorem~\ref{thm:graph-domination}.} 
In each panel, $k$ graphically dominates $j$ with respect to $\sigma$ (the outermost shaded region). The inner shaded regions illustrate the subsets of nodes that send edges to $j$ and $k$. Note that the vertices sending edges to $j$ are a subset of those sending edges to $k$, but this
containment need not be strict. Dashed arrows indicate optional edges between $j$ and $k$.}
\label{fig:domination}
\end{figure}

\begin{theorem}[graphical domination]\label{thm:graph-domination}
Suppose $k$ graphically dominates $j$ with respect to $\sigma$.  Then the following statements all hold:
\begin{itemize}
\item[(a)] If $j,k \in \sigma$, then $\sigma \notin \FP(G|_\sigma)$, and so $\sigma \notin \FP(G).$ (inside-in domination)
\item[(b)] If $j \in \sigma$ and $k \not\in \sigma$, then $\sigma \notin \FP(G|_{\sigma\cup\{k\}})$, and so $\sigma \notin \FP(G)$. (outside-in domination) 
\item[(c)] If $j \notin \sigma$, $k \in \sigma$, and $\sigma \in \FP(G|_\sigma)$, then $\sigma \in  \FP(G|_{\sigma \cup \{j\}}).$ (inside-out domination)
\end{itemize}
\end{theorem}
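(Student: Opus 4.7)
The plan is to prove all three cases by a uniform row-comparison argument in the linear system $(I-W_\sigma)x^\sigma = \theta\one$ together with the external-node inequality of condition (ii) in Section~\ref{sec:general-background}, bypassing the general domination machinery of Section~\ref{sec:domination}. The essential observation is that every off-diagonal entry $(I-W)_{\ell,i}$ of a CTLN takes one of only two values --- $1-\varepsilon$ when $i\to\ell$ and $1+\delta$ when $i\not\to\ell$ --- while the diagonal is $1$. Unpacking Definition~\ref{def:graph-domination}, I would first establish the entry-wise row inequality
\begin{equation*}
(I-W)_{j,i}\;\geq\;(I-W)_{k,i}\qquad\text{for every }i\in\sigma,
\end{equation*}
with strict inequality at $i=j$ whenever $j\in\sigma$ (the diagonal $1$ exceeds the off-diagonal $-W_{k,j}=1-\varepsilon$, using condition (2) that $j\to k$) and at $i=k$ whenever $k\in\sigma$ (the off-diagonal $-W_{j,k}=1+\delta$ exceeds the diagonal $1$, using condition (3) that $k\not\to j$); condition (1) supplies the comparison at all remaining $i\in\sigma\setminus\{j,k\}$.

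With this inequality in hand, each case becomes a one-line positivity argument. For (a), if $\sigma\in\FP(G|_\sigma)$ then all $x_i^\sigma>0$, and subtracting the $k$-th from the $j$-th row equation of $(I-W_\sigma)x^\sigma = \theta\one$ gives
\begin{equation*}
0 \;=\; \sum_{i\in\sigma}\big[(I-W_\sigma)_{j,i}-(I-W_\sigma)_{k,i}\big]\,x_i^\sigma,
\end{equation*}
whose right-hand side is a nonnegative sum with strictly positive contributions at both $i=j$ and $i=k$ --- a contradiction. For (b), $\sigma\in\FP(G|_{\sigma\cup\{k\}})$ would require $\sum_{i\in\sigma}(-W_{k,i})x_i^\sigma \ge \theta$; combining the $j$-th row equation $\sum_{i\in\sigma}(I-W_\sigma)_{j,i}x_i^\sigma=\theta$ with the row comparison (strict at $i=j\in\sigma$) instead forces $\sum_{i\in\sigma}(-W_{k,i})x_i^\sigma<\theta$, a contradiction. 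For (c), the $k$-th row equation combined with the analogous comparison (strict at $i=k\in\sigma$) yields $\sum_{i\in\sigma}(-W_{j,i})x_i^\sigma>\theta$, which is exactly the external-node inequality certifying that $\sigma\in\FP(G|_{\sigma\cup\{j\}})$.

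The only real obstacle is bookkeeping: the row-comparison inequality and the locations of its strict gaps depend on whether $j$ or $k$ lies in $\sigma$, since the diagonal $1$ in $I-W$ appears only at rows indexed by elements of $\sigma$, and conditions (2) and (3) of Definition~\ref{def:graph-domination} are each conditional on $j\in\sigma$ or $k\in\sigma$ respectively. Once the correct ``test row'' is selected in each case --- the $j$-th interior row in (a) and (b), the $k$-th interior row in (c) --- each argument collapses to the observation that a nonnegative sum with at least one strictly positive summand is strictly positive.
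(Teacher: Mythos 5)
Your proof is correct, and it genuinely differs from the paper's route. The paper proves Theorem~\ref{thm:graph-domination} in two steps: Lemma~\ref{lem:comb-dom} shows that graphical domination implies the inequality $w_k^\sigma > w_j^\sigma$ on the quantities $w_\ell^\sigma = \sum_{i\in\sigma}\widetilde W_{\ell i}\lvert s_i^\sigma\rvert$, and then Lemma~\ref{lem:domination} (built on Lemmas~\ref{lem:equality}, \ref{lem:either-or}, \ref{lem:survival}, which themselves rest on the Cramer identity~\eqref{eq:s_k-identity}) translates that inequality into the three conclusions (a)--(c). You instead work directly with the fixed point vector $x^\sigma$ and the row equations of $(I-W_\sigma)x^\sigma_\sigma=\theta\one$, together with fixed-point condition (ii) from Section~\ref{sec:general-background}. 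Your row inequality $(I-W)_{j,i}\geq(I-W)_{k,i}$ with strict gaps at $i=j$ (via condition (2) and the diagonal $1$) and $i=k$ (via condition (3)) is verified correctly, and each case is then a clean positivity argument. This is essentially the same comparison as the paper's, since for a permitted $\sigma$ the values $\lvert s_i^\sigma\rvert$ are positive multiples of $x_i^\sigma$, so that $w_j^\sigma - w_k^\sigma$ is a negative multiple of $\sum_i\bigl[(I-W)_{j,i}-(I-W)_{k,i}\bigr]x_i^\sigma$; but your version is more elementary and self-contained, requiring none of the general domination apparatus of Section~\ref{sec:domination}. The trade-off is that the paper's framework also yields the general Theorem~\ref{thm:domination}, which is reused to prove Theorem~\ref{thm:uniform-in-degree} and the survival rules for unions, so your shortcut buys simplicity at the cost of generality.

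One small point worth spelling out in case (b): the displayed contradiction $\sum_{i\in\sigma}(-W_{k,i})x_i^\sigma < \theta$ requires knowing $x_i^\sigma>0$ for $i\in\sigma$, i.e.\ that $\sigma\in\FP(G|_\sigma)$. You should note that if $\sigma\notin\FP(G|_\sigma)$, then $\sigma\notin\FP(G|_{\sigma\cup\{k\}})$ trivially, so the contradiction argument is only needed in the remaining case. This is implicit in your setup but should be stated for completeness.
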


Graphical domination is a special case of a more general notion of domination, which applies to any TLN. We will discuss general domination in Section~\ref{sec:domination}, and show in Theorem~\ref{thm:domination} that it gives us a complete characterization of fixed point supports, similar to the sign conditions in Theorem~\ref{thm:sgn-condition}.  In Section~\ref{sec:graph-dom-proof}, we will use general domination to prove Theorem~\ref{thm:graph-domination}.

Part (a) of Theorem~\ref{thm:graph-domination} tells us that if there is any graphical domination {\it inside} of $\sigma$, then $\sigma$ is not a permitted motif, and therefore cannot be a fixed point support in $\FP(G)$. Part (b) tells us that if there is any node $j \in \sigma$ that is dominated by a $k$ {\it outside} of $\sigma$, then $\sigma$ does not survive as a fixed point support of $G$, irrespective of whether or not $\sigma$ is a permitted motif. Finally, part (c) implies that if for each $j \notin \sigma$ there exists a $k$ {\it inside} $\sigma$ that dominates $j$, then $\sigma$ is guaranteed to survive as a fixed point support in $\FP(G)$, provided $\sigma$ is a permitted motif.  

It is important to note that any results obtained via graphical domination hold for all values of $\varepsilon, \delta$ within the legal range.  Thus, if $\sigma$ is forbidden by graphical domination, it is forbidden independent of parameters; similarly, if $\sigma$ survives (or does not survive) due to graphical domination, this fact is also parameter independent. This is not necessarily the case for forbidden motifs that do {\it not} have graphical domination (see Appendix Section~\ref{appendixC} for examples). 

We will see in Section~\ref{sec:building-blocks} that when a motif is forbidden by graphical domination, this feature gets inherited to larger graphs that have this motif as their ``skeleton.''  This follows from the fact that graphical domination interplays nicely with simply-added splits.\footnote{Suppose $\omega$ is simply-added to $\tau$. If $j,k \in \tau$, and $k$ graphically dominates $j$ with respect to $\tau$, then $k$ also graphically dominates $j$ with respect to $\tau \cup \omega$. This is because all edges from $\omega$ to $j$ have corresponding edges from $\omega$ to $k$.} For this reason, it is useful to define a stronger notion of forbidden motif:

\begin{definition}[strongly forbidden] Consider a CTLN on $n$ nodes. We say that $\sigma \subseteq [n]$ is {\em strongly forbidden} if there exist $j,k \in \sigma$ such that $k$ graphically dominates $j$ with respect to $\sigma$. 
\end{definition}

\subsection{Uniform in-degree}
Here we introduce {\it uniform in-degree} graphs, which comprise a large and interesting family of permitted motifs. This family is particularly nice because the survival rules are parameter independent, and can be easily checked directly from the graph (see Theorem~\ref{thm:uniform-in-degree}).

\begin{definition}
We say that $\sigma$ has {\it uniform in-degree} $d$ if every $i \in \sigma$ has in-degree $d_i^{\mathrm{in}} = d$ within $G|_\sigma$.     
\end{definition}

\noindent Note that the subgraph $G|_\sigma$ could have uniform in-degree, but the nodes of $\sigma$ may have different degrees with respect to the full graph $G$.

There exist graphs with uniform in-degree $d$ for any $0 \leq d \leq |\sigma|-1$.  When $d = 0,$ the graph is necessarily an \emph{independent set}, i.e.\ a collection of nodes with no edges between them; when $d = |\sigma|-1$, it is a \emph{clique}, i.e.\  every pair of nodes is connected by a bidirectional edge.  For in-between values of $d$, however, there are several distinct possibilities.  For example, if $|\sigma|=4$ and $d=1$, the subgraph $G|_\sigma$ could be a $4$-cycle, a pair of $2$-cliques, or a $3$-cycle with a single out-going edge to a fourth node (we call this last graph the ``tadpole'').  Notice that the tadpole graph illustrates that a uniform in-degree subgraph need not be cyclically symmetric, and need not have uniform out-degree.  
\medskip

\begin{figure}[!h]
\begin{center}
\includegraphics[width=5in]{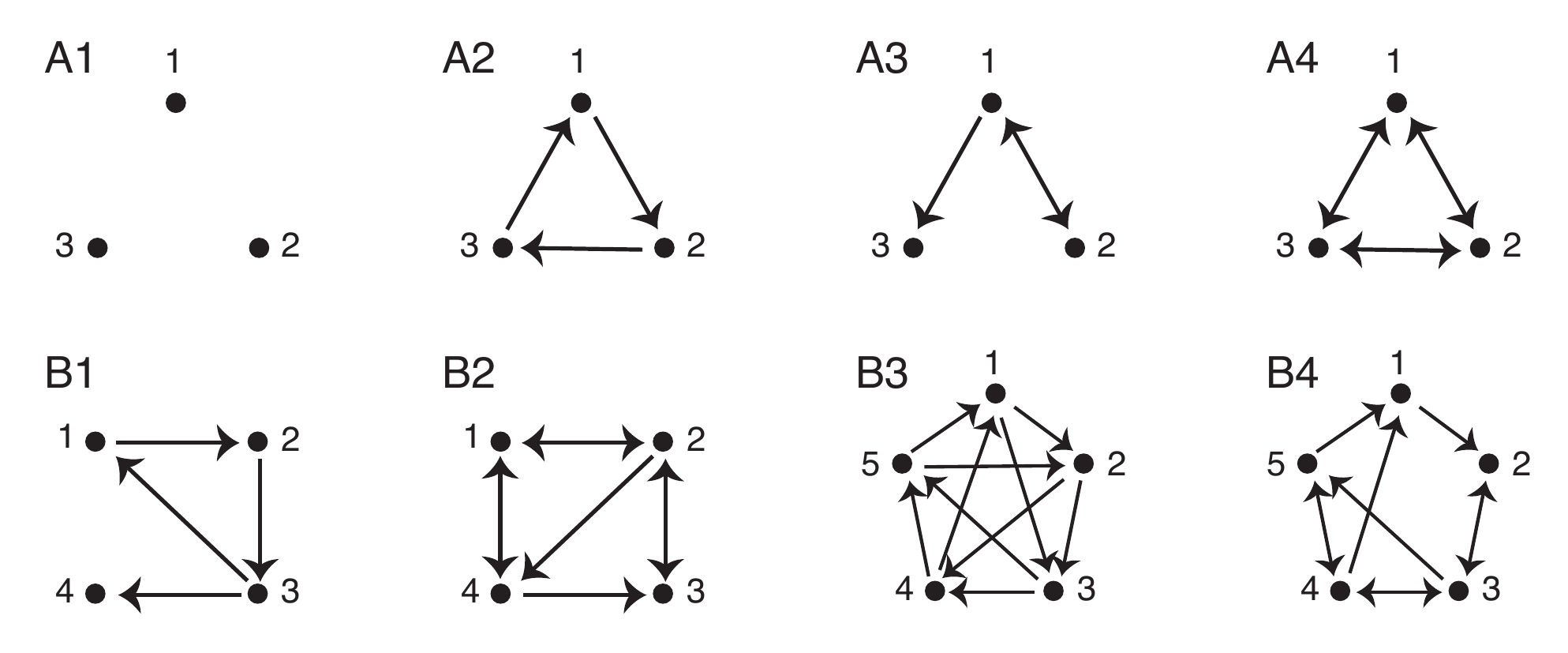}
\end{center}
\vspace{-.075in}
\caption{Examples of uniform in-degree graphs. (A1-A4) All graphs with uniform in-degree on $n=3$ vertices. (B1-B4) Some examples with $n=4$ and $n=5$ vertices.}
\end{figure}

\begin{theorem}[uniform in-degree] \label{thm:uniform-in-degree}
Let $\sigma$ have uniform in-degree $d$. For $k \notin \sigma$, let $d_k \od |\{i \in \sigma \mid i \to k\}|$ be the number of edges $k$ receives from $\sigma$.
 Then 
 $$\sigma \in \FP(G|_{\sigma \cup \{k\}}) \;\; \Leftrightarrow \;\; d_k \leq d.$$ 
Furthermore, if $|\sigma|>1$ and $d <|\sigma|/2$, then the fixed point is unstable.  If $d=|\sigma|-1$, i.e.\ if $\sigma$ is a clique, then the fixed point is stable.
\end{theorem}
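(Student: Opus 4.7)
The plan is to exploit the uniform in-degree symmetry to reduce everything to one-variable algebra in the fixed point values, then use a trace argument for the instability claim.

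First I would show that $\sigma$ is a permitted motif and find the explicit fixed point value. Writing the fixed-point equation $(I-W_\sigma)x^\sigma = \theta\one_\sigma$, observe that by uniform in-degree each row $i$ of $I-W_\sigma$ has exactly $d$ off-diagonal entries equal to $1-\varepsilon$ (coming from the $d$ in-neighbors of $i$ inside $\sigma$) and $|\sigma|-1-d$ entries equal to $1+\delta$. So all rows have the same sum, and the ansatz $x_i^\sigma = x$ (constant) is consistent, yielding $x = \theta/Z$ with $Z = 1 + d(1-\varepsilon) + (|\sigma|-1-d)(1+\delta) > 0$. Nondegeneracy gives uniqueness, so this is the unique solution, it is strictly positive, and therefore $\sigma \in \FP(G|_\sigma)$.

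Next I would derive the survival condition by plugging this uniform $x$ into condition (ii) from Section~\ref{sec:general-background}. The quantity $\sum_{i \in \sigma} W_{ki}$ equals $d_k(-1+\varepsilon) + (|\sigma|-d_k)(-1-\delta) = -|\sigma| + d_k\varepsilon - (|\sigma|-d_k)\delta$, so condition (ii), namely $x\sum_{i\in\sigma}W_{ki} + \theta \le 0$, becomes (after multiplying by $Z/\theta > 0$ and simplifying) the inequality $(d_k - d)\varepsilon + (d_k - d - 1)\delta \le 0$. If $d_k \le d$, both terms are $\le 0$ and the $\delta$-term is strictly negative, so the inequality holds. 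If $d_k \ge d+1$, both terms are $\ge 0$ and the $\varepsilon$-term is strictly positive, so it fails. This gives the stated equivalence $\sigma \in \FP(G|_{\sigma \cup \{k\}}) \iff d_k \le d$.

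For the stability statements I would study the Jacobian $-I + W_\sigma$ (the inactive directions contribute only eigenvalues $-1$, so they do not affect stability). The vector $\one$ is a right eigenvector with eigenvalue $\lambda_1 = -1 - (|\sigma|-1) + d\varepsilon - (|\sigma|-1-d)\delta$, which is negative. Since $\tr(-I+W_\sigma) = -|\sigma|$, the remaining $|\sigma|-1$ eigenvalues have real parts summing to $-|\sigma| - \lambda_1 = (|\sigma|-1-d)\delta - d\varepsilon$. When $d < |\sigma|/2$ one checks by integer case analysis that $|\sigma|-1-d \ge d$, and since the legal range forces $\varepsilon < \delta$, this sum is strictly positive (with the $d=0$, $|\sigma|>1$ case being immediate since the sum reduces to $(|\sigma|-1)\delta$). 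A positive sum of real parts forces at least one eigenvalue with positive real part, giving instability. For the clique case $d = |\sigma|-1$, $W_\sigma = (-1+\varepsilon)(J-I)$ has eigenvalues $(-1+\varepsilon)(|\sigma|-1)$ (on $\one$) and $1-\varepsilon$ (on $\one^\perp$, with multiplicity $|\sigma|-1$), so $-I + W_\sigma$ has all eigenvalues equal to either $-1 - (1-\varepsilon)(|\sigma|-1)$ or $-\varepsilon$, both negative.

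I expect the main subtlety to be the instability step: one must be careful that the trace identity gives a sum of \emph{real parts}, which does exploit the fact that complex eigenvalues of a real matrix come in conjugate pairs, and one must verify the integer inequality $|\sigma|-1-d \ge d$ from $d < |\sigma|/2$ in both parity cases. Everything else is direct algebra based on the uniform-in-degree symmetry.
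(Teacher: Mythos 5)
Your proof is correct, and it takes a genuinely different route on the survival claim. The paper defers this theorem to Section~\ref{sec:proofs-uniform-in-degree} so that it can be derived via the general domination framework: it first establishes the uniform fixed point value (Lemma~\ref{lemma:uniform}), then proves Lemma~\ref{lemma:uniform-domination}, which compares $\sum_{i\in\sigma}\Wtil_{ki}$ to $\sum_{i\in\sigma}\Wtil_{ji}$ to decide whether $k>_\sigma j$ or $j>_\sigma k$, and finally invokes Theorem~\ref{thm:domination}. You instead plug the uniform fixed point directly into fixed-point condition~(ii) from Section~\ref{sec:general-background}, reducing survival to the explicit inequality $(d_k - d)\varepsilon + (d_k - d - 1)\delta \le 0$. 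The underlying algebra is essentially identical (the domination comparison is just condition~(ii) in disguise, as the proof of Lemma~\ref{lem:survival} shows), but your version is more self-contained: it uses only the elementary fixed-point criterion from Section~2, whereas the paper's version imports the Section~6 machinery. What the paper's route buys is that the domination lemma (Lemma~\ref{lemma:uniform-domination}) is reused elsewhere (e.g., in the graph rules and composite-graph arguments); what your route buys is a shorter, more direct proof of this one theorem. The stability arguments are effectively the same: you trace $-I+W_\sigma$ while the paper traces $I-W_\sigma$, and both isolate the Perron eigenvector $\one$ and use the trace identity (which correctly exploits conjugate pairing for complex eigenvalues of a real matrix) to force a destabilizing eigenvalue when $d < |\sigma|/2$; the clique case is handled by the same rank-one-plus-scalar decomposition. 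One small note: your conditions on when the inequality holds are correct, and the strict inequality on the surviving side is consistent with the nondegeneracy assumption, which rules out the boundary case $s_k^\sigma = 0$.
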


\noindent The proof of Theorem~\ref{thm:uniform-in-degree} requires general domination, and is thus postponed to Section~\ref{sec:proofs-uniform-in-degree}. 
Note that this result, like graphical domination, is parameter independent.

Next we will show how the eigenvalues of a uniform in-degree subnetwork $I- W_\tau$ are inherited when a subset $\omega$ is simply-added to $\tau$.  Recall that in the proof of Theorem~\ref{thm:simply-added} (simply-added), the determinant calculation was simplified by the insertion of the all-ones vector, which occurred in the expressions $s_i^\sigma = \theta\det((I-W_\sigma)_i;\one).$  When $\tau$ has uniform in-degree, we can apply a similar trick to compute the eigenvalues of $I-W_\sigma$, where $\sigma = \tau \cup \omega$.

\begin{lemma}\label{lemma:simply-added-evals}
Let $\sigma$ have a simply-added split $\tau \cup \omega$, where $\tau$ is uniform in-degree. Let $R_\tau$ be the row sum of $I-W_\tau$. Note that this is the top (Perron-Frobenius) eigenvalue of $I-W_\tau$. Then 
$$\eig(I-W_\sigma) \supset \eig(I-W_\tau) \setminus R_\tau.$$ 
So all the eigenvalues of $\tau$ get inherited, except possibly the top one $R_\tau$. In particular, if $\tau$ is unstable then $\sigma$ is also unstable.
\end{lemma}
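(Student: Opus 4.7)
My plan is to block-triangularize $I - W_\sigma$ by changing basis in the $\tau$-coordinates along the Perron direction $\one_\tau$. Two observations drive the proof. First, since $\tau$ has uniform in-degree $d$, every row of $I - W_\tau$ sums to $R_\tau = 1 + d(1-\varepsilon) + (|\tau|-1-d)(1+\delta)$, so $(I-W_\tau)\one_\tau = R_\tau \one_\tau$. Second, since $\omega$ is simply-added to $\tau$, for each $k \in \omega$ the entries $(-W_{ik})_{i\in\tau}$ are constant in $i$, so the off-diagonal block $B := -W_{\tau,\omega}$ has the form $B = \one_\tau c^T$ for some $c \in \RR^{|\omega|}$.

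Next I pick any invertible $|\tau|\times|\tau|$ matrix $T$ with first column $\one_\tau$ and write $T^{-1} = \bigl(\begin{smallmatrix} u^T \\ U \end{smallmatrix}\bigr)$. The identities $u^T\one_\tau = 1$ and $U\one_\tau = 0$ (which come from $T^{-1}T=I$) give
\[
T^{-1}(I-W_\tau)T = \begin{pmatrix} R_\tau & \ast \\ 0 & A' \end{pmatrix}, \qquad T^{-1} B = \begin{pmatrix} c^T \\ 0 \end{pmatrix},
\]
where $A'$ is a matrix of size $|\tau|-1$ satisfying $\det(\mu I - A') = \det(\mu I - (I-W_\tau))/(\mu - R_\tau)$. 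Conjugating $I - W_\sigma$ by $\tilde T := \diag(T, I_\omega)$ therefore yields a matrix whose middle row block (the $|\tau|-1$ ``$A'$-rows'') reads $[\,0\ |\ A'\ |\ 0\,]$; permuting this block to the upper-left corner gives a block lower-triangular matrix with diagonal blocks $A'$ and $E := \bigl(\begin{smallmatrix} R_\tau & c^T \\ -W_{\omega,\tau}\one_\tau & I-W_\omega \end{smallmatrix}\bigr)$. Hence
\[
\det(\mu I - (I-W_\sigma)) = \det(\mu I - A') \cdot \det(\mu I - E),
\]
and since the roots of $\det(\mu I - A')$ are precisely the eigenvalues of $I - W_\tau$ other than $R_\tau$, every such eigenvalue is an eigenvalue of $I - W_\sigma$, which is the desired containment.

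For the instability claim, the matrix $I - W_\tau$ has strictly positive entries and constant row sum $R_\tau > 0$, so by Perron--Frobenius its spectral radius equals $R_\tau$. If $\tau$ is unstable then $I - W_\tau$ has some eigenvalue with negative real part, which is necessarily different from $R_\tau$ and thus inherited by $I - W_\sigma$; this forces $-I+W_\sigma$ to have an eigenvalue with positive real part, so $\sigma$ is unstable as well.

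I expect the only real obstacle to be verifying that both the $(2,1)$ block of $T^{-1}(I-W_\tau)T$ and the bottom part of $T^{-1}B$ truly vanish --- this is precisely where the uniform-in-degree hypothesis on $\tau$ and the simply-added hypothesis on $\omega$ come in. Once this is established, the factorization of the characteristic polynomial and the inheritance of eigenvalues follow automatically.
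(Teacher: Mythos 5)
Your proof is correct, and while it proves the same block-triangular factorization of the characteristic polynomial as the paper, it does so by a genuinely different route. The paper works directly with the determinant $\det(I - W_\sigma - \lambda I)$: it replaces the last column of the $\tau$-block with the sum of all columns (obtaining $(R_\tau - \lambda)\one_\tau$), then uses that column, scaled by $\alpha_i/(R_\tau - \lambda)$, to clear the top-right block. This produces a block lower-triangular determinant whose second diagonal block is $\frac{1}{R_\tau - \lambda}C$ for some polynomial matrix $C$, and the factorization follows. Your approach instead performs an honest similarity transformation: conjugate $I-W_\sigma$ by $\diag(T, I_\omega)$ where $T$ has first column $\one_\tau$, using $(I-W_\tau)\one_\tau = R_\tau\one_\tau$ and $-W_{\tau,\omega} = \one_\tau c^T$ to see that the conjugated matrix, after a simultaneous row/column permutation, is block lower-triangular with diagonal blocks $A'$ and $E$. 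This buys you two things over the paper's argument: the eigenvalue inheritance is read off directly from the similarity (no need to track a rational expression with a potential $(R_\tau - \lambda)$ in the denominator and argue its poles are removable), and the structure of the complementary spectrum is made explicit as $\eig(E)$. The paper's approach is more economical in setup (no auxiliary $T$ to construct), but your change-of-basis argument is cleaner at the step where the paper divides by $(R_\tau - \lambda)^{|\omega|}$, which is exactly the point that requires care near $\lambda = R_\tau$. Your added Perron--Frobenius observation that $R_\tau$ is the spectral radius of the positive matrix $I - W_\tau$ correctly justifies the ``top eigenvalue'' claim in the lemma statement, and the instability conclusion follows just as you say.
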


\begin{proof}
To calculate $\eig(I-W_\sigma)$, observe that 
$$\det(I-W_\sigma - \lambda I)= \det\left(\begin{array}{c|ccc} 
& \alpha_1 & \ldots & \alpha_{|\omega|}\\
I-W_\tau-\lambda I & \vdots & & \vdots \\
& \alpha_1 & \ldots & \alpha_{|\omega|}\\
\hline
 * & & I-W_\omega - \lambda I
 \end{array} \right)
 $$
 where $\alpha_i = 1-\varepsilon$ if $i$ is a projector onto $\tau$, and $\alpha_i = 1+\delta$ otherwise.  Let 
 $$A= ((I - W_\tau - \lambda I)_{|\tau|} ; (R_\tau - \lambda)\one)$$
 be the matrix obtained from $I - W_\tau - \lambda I$ by replacing the last column with a column of $R_\tau - \lambda$.  Note that $\det(I - W_\tau - \lambda I) = \det(A)$ since $A$ can be obtained from $I - W_\tau - \lambda I$ by adding all the columns of the matrix to its last column.  Performing these same column operations on the full matrix yields
 $$\det(I-W_\sigma - \lambda I)= \det\left(\begin{array}{c|ccc} 
& \alpha_1 & \ldots & \alpha_{|\omega|}\\
A & \vdots & & \vdots \\
& \alpha_1 & \ldots & \alpha_{|\omega|}\\
\hline
 * & & I-W_\omega - \lambda I
 \end{array} \right).
 $$
Using the last column of $A$ (scaled by $\frac{\alpha_i}{R_\tau- \lambda}$), we can clear out the top right of the matrix to obtain 
 $$\det(I-W_\sigma - \lambda I)= \det\left(\begin{array}{c|c} 
 A & 0\\
 \hline
 * & B
  \end{array} \right),
 $$
 where $B = \frac{1}{R_\tau- \lambda} C$ for some matrix $C$ whose entries are all polynomial in $\lambda$.  Note that $\det B = \dfrac{1}{(R_\tau- \lambda)^{|\omega|}} \det C$, and so 
 $$\det(I-W_\sigma - \lambda I) = \det(A) \det(B) = \det(I-W_\tau - \lambda I) \dfrac{\det(C)} {(R_\tau- \lambda)^{|\omega|}} .$$
Thus all the roots of $\det(I-W_\tau - \lambda I)$ must also be roots of $\det(I-W_\sigma - \lambda I)$ except possibly $\lambda = R_\tau$, and so 
 $\eig(I-W_\sigma) \supset \eig(I-W_\tau) \setminus R_\tau$ as desired.
\end{proof}

This result will be useful in Section~\ref{sec:composite-graphs}, in the context of composite graphs (see Proposition~\ref{prop:composite-unstable}).

\section{Graph rules for CTLNs}\label{sec:graph-rules}

In this section, we prove a variety of graph rules characterizing fixed point supports of CTLNs in terms of the underlying graph $G$.  These are truly ``graph rules" in that they depend only on $G,$ and
are thus independent of the choice of parameters $\varepsilon, \delta$ and $\theta$ (provided these fall within the legal range).  We will again use the streamlined notation $\FP(G)$ to denote the set of fixed point supports, as in Section~\ref{sec:fixed-pts-CTLNs}.

A few of our results express the relationship between $\FP(G)$ and $\FP(G')$ for a pair of related graphs $G$ and $G'$ (see, e.g., Rule~\ref{rule:sources}), or have the form $\sigma \cup \{k\} \in \FP(G)$ if and only if $\sigma \in \FP(G)$ (e.g., Rule~\ref{rule:added-sink}). 
Such {\it relationships} are also parameter independent, even when the sets $\FP(G)$ and $\FP(G')$ 
are themselves parameter dependent. In particular, for a different set of parameters, the fixed point supports $\FP(G)$ and $\FP(G')$ may change, but the given relationship between them stays the same. These statements should thus be understood as applying to CTLNs with fixed and matching parameters.

\subsection{Uniform in-degree and parity}
We begin with a very simple rule that follows directly from Theorem~\ref{thm:parity} (parity).

\begin{rules}[parity]\label{rule:parity}
For any graph $G$, the total number of fixed points $|\FP(G)|$ is odd.
\end{rules}

Next, we digest Theorem~\ref{thm:uniform-in-degree} to obtain graph rules for some important special cases of uniform in-degree graphs: independent sets, cliques, and cycles.  First we repackage Theorem~\ref{thm:uniform-in-degree} as a rule to include it here for completeness.

\begin{rules}[uniform in-degree]\label{rule:uniform-in-deg}
Let $\sigma$ have uniform in-degree $d$. For $k \notin \sigma$, let $d_k \od |\{i \in \sigma \mid i \to k\}|$ be the number of edges $k$ receives from $\sigma$. Then 
$$\sigma \in \FP(G) \;\; \Leftrightarrow \;\; d_k \leq d \;\text{ for all }\; k \not\in \sigma.$$
\end{rules}

An \emph{independent set} is a collection of nodes with no edges between them within the restricted subgraph; such a set is uniform in-degree with $d=0$.  Recall a \emph{sink} is a node with no outgoing edges.

\begin{rules}[independent sets]\label{rule:indep-set}
Suppose $\sigma \subseteq [n]$ is an independent set.  Then 
$$\sigma \in \FP(G) \; \; \Leftrightarrow \; \; \sigma \text{ is a union of sinks.}$$
\noindent Furthermore, when $\sigma \in \FP(G)$, the fixed point is stable if and only if $|\sigma| = 1$. Additionally, $\idx(\sigma) = (-1)^{|\sigma|-1}$.
\end{rules}

To see the index formula in Rule~\ref{rule:indep-set}, observe that a singleton has index $1$, and by Lemma~\ref{lemma:alternation} (alternation), the indices must alternate their signs according to the size of the independent set.

A subset $\sigma$ is a \emph{clique} if every pair of nodes has a bidirectional edge between them in $G|_\sigma$, and thus it has uniform in-degree $d = |\sigma|-1$.  We say that a node $k \notin \sigma$ is a \emph{target of $\sigma$} when $i \to k$ for all $i \in \sigma$. We say $\sigma$ is \emph{target-free} if there is no node $k \in [n] \setminus \sigma$ that is a target of $\sigma$.  

\begin{rules}[cliques]\label{rule:clique}
Suppose $\sigma \subseteq [n]$ is a clique.  Then 
$$\sigma \in \FP(G) \; \; \Leftrightarrow \; \; \sigma \text{ is target-free.}$$
\noindent Furthermore, when $\sigma \in \FP(G)$, the fixed point is always stable and has $\idx(\sigma) = +1$.
\end{rules}

Recall that a {\it cycle} is a cyclically symmetric subgraph with uniform in-degree 1. Applying Theorem~\ref{thm:uniform-in-degree} yields the following result.

\begin{rules}[cycles]\label{rule:cycle}
Suppose $\sigma \subseteq [n]$ is a cycle.  Then 
$$\sigma \in \FP(G) \; \; \Leftrightarrow \; \; \text{for all } k \notin \sigma, k \text{ receives at most 1 edge from } \sigma.$$
\noindent Furthermore, when $\sigma \in \FP(G)$, the fixed point is always unstable but $\idx(\sigma) = +1.$
\end{rules}

The index in Rules~\ref{rule:clique} and~\ref{rule:cycle} follows from the fact that the sum of the indices of fixed points is always 1 (Theorem~\ref{thm:parity}), together with the observation that no proper subset of a clique or a cycle can support a fixed point. In the case of cliques, every proper subset is a clique with a target, and thus does not survive. For a cycle,  it is easy to see that any proper subset either contains a domination relationship or is an independent set that does not survive by Rule~\ref{rule:indep-set}.

\subsection{Adding a single node}
Here we consider how fixed point supports are affected by the addition of a single node to a graph; specifically, we can fully characterize the effect on the fixed points when the added node is a source, sink, projector, or target (see Figure~\ref{fig:graph-vocab}). A \emph{source} is a node that has no incoming edges. We say a source is \emph{proper} if it has at least one outgoing edge.  A \emph{sink} is a node that has no outgoing edges. Recall that a node $k$ is a \emph{projector onto $\sigma$} if $k \to i$ for all $i \in \sigma$; note that there may or may not be edges back from $\sigma$ to $k$.  Finally, recall that a node $k$ is a \emph{target of $\sigma$}, if it receives an edge from every node in $\sigma$ (again there may or may not be back edges from $k$ to $\sigma$).

\begin{figure}[!h]
\begin{center}
\includegraphics[width = .9\textwidth]{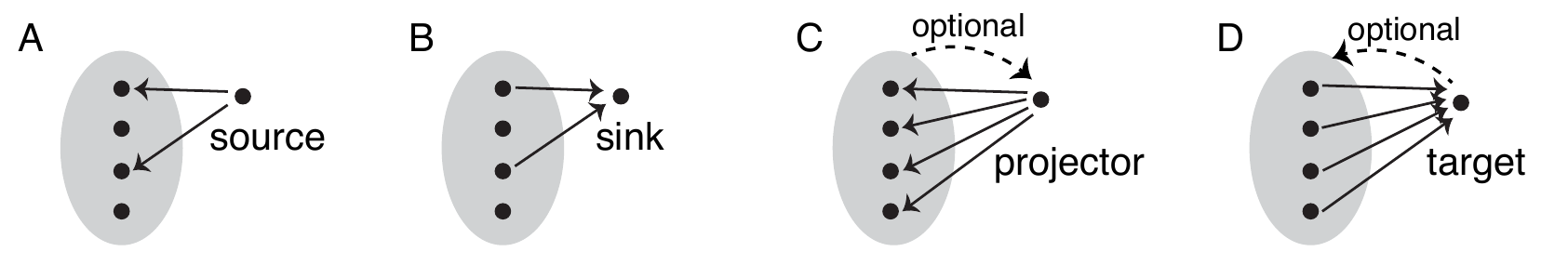}
\caption{(A) A (proper) source in the graph. (B) A sink in the graph. (C) A projector onto the gray region.  (D) A target of the gray region.  Dashed lines indicate that all back edges are optional.}
\vspace{-.1in}
\end{center}
\label{fig:graph-vocab}
\end{figure}

The proofs of the remaining rules will make use of the following technical results from the last three sections:
\begin{itemize}
\item Theorem~\ref{thm:sgn-condition} (sign conditions) characterizing $\sigma \in \FP(G)$ via the signs of the $s_i^\sigma$;
\item Corollary~\ref{cor:on-off-conds} which presents equivalent conditions for $\sigma \in \FP(G)$ in terms of survival of $\sigma$ in intermediate subgraphs $G|_{\tau}$ for $\tau \supseteq \sigma$;
\item Theorem~\ref{thm:simply-added} (simply-added) and Corollary~\ref{cor:simply-added}, which describe how $s_i^\sigma$ factors when $\sigma$ contains a simply-added split; and 
\item Theorem~\ref{thm:graph-domination} (graphical domination) guaranteeing that if $\sigma$ has inside-in or outside-in domination, then $\sigma \notin \FP(G)$, while the presence of inside-out domination of a node $k \notin \sigma$ ensures the survival of a permitted motif $\sigma$ to $\FP(G|_{\sigma \cup k})$.
\end{itemize}

The first two rules show that no fixed point support can contain a node that is a proper source in the restricted subgraph, and that any node that is a proper source in all of $G$ can be removed without any effect on $\FP(G)$. 

\begin{rules}[sources in $\sigma$]\label{rule:source-in-sigma}
Let $G$ be a graph on $n$ nodes and $\sigma \subseteq [n]$.  If there exists a $k \in \sigma$ such that $k$ is a proper source in $G|_{\sigma}$, then $\sigma \notin \FP(G)$.  More generally, if there exists an $\ell \in [n]$ such that $\sigma$ contains a proper source in $G|_{\sigma \cup \{\ell\}}$, then $\sigma \notin \FP(G)$.
\end{rules}
\begin{proof}
Suppose there exists an $\ell \in [n]$ such that $k \in \sigma$ is a proper source in $G|_{\sigma \cup \{\ell\}}$ with $k \to \ell$. Then $\ell$ graphically dominates $k$ since $k$ has no inputs in $G|_{\sigma \cup \{\ell\}}$ and $k \to \ell$.  Hence, $\sigma \notin \FP(G|_{\sigma \cup \{\ell\}})$ by Theorem~\ref{thm:graph-domination}, and so $\sigma \notin \FP(G)$ by Corollary~\ref{cor:on-off-conds}.  
\end{proof}

\begin{rules}[sources in $G$]\label{rule:sources}
Let $G$ be a graph on $n$ nodes and $k \in [n]$.  If $k$ is a proper source in $G$, then 
$$\FP(G) = \FP(G|_{[n]\setminus\{k\}}).$$
\end{rules}
\begin{proof}
Since $k$ is a proper source in $G$, there exists $\ell \in [n]$ such that $k \to \ell$.  Thus, for any $\sigma\subseteq [n]$ with $k \in \sigma$, $\sigma$ contains a proper source in $G|_{\sigma \cup \{\ell\}}$, and so $\sigma \notin \FP(G)$ by Rule~\ref{rule:source-in-sigma}.  On the other hand, for $\sigma \subseteq [n] \setminus \{k\}$, $\sigma \in \FP(G)$ implies $\sigma \in \FP(G|_{[n]\setminus\{k\}})$, by Corollary~\ref{cor:on-off-conds}.  Hence, $\FP(G) \subseteq \FP(G|_{[n]\setminus\{k\}}).$
To see the reverse inclusion, let $\sigma \in \FP(G|_{[n]\setminus\{k\}})$.  For any $i \in \sigma$, $i$ inside-out dominates $k$ with respect to $\sigma$, and so by Theorem~\ref{thm:graph-domination}, $\sigma$ survives and thus $\sigma \in \FP(G)$.  Therefore, $\FP(G) \supseteq \FP(G|_{[n]\setminus\{k\}})$.
\end{proof}

As an illustration of Rule~\ref{rule:sources}, consider the graph in Figure~\ref{fig:graph-series}B, which is obtained from the graph in A by adding  a proper source, node 4.  Observe that $\FP(G)$ is identical for these two graphs, since the fixed points of $G$ are preserved upon removal of the proper source.

\begin{figure}[!h]
\begin{center}
\includegraphics[width = .8\textwidth]{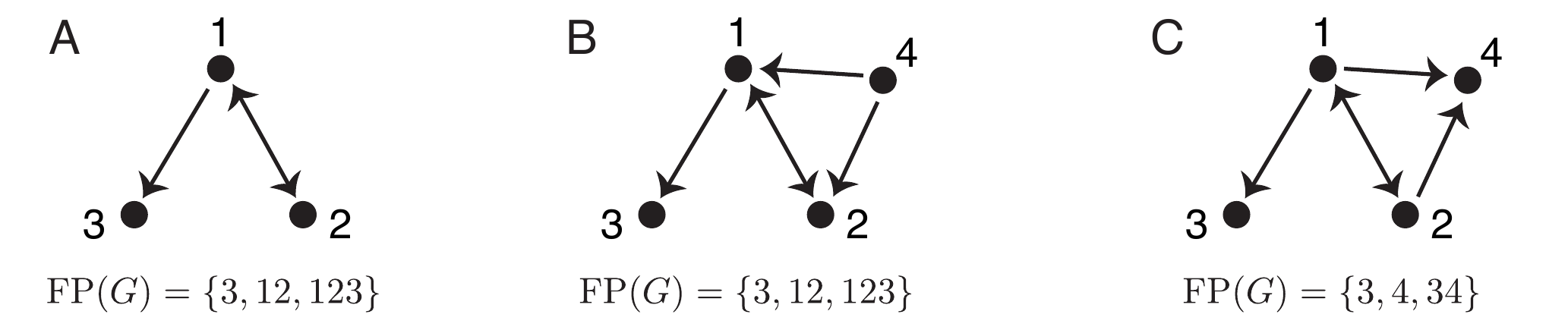}
\caption{Three example graphs with their respective $\FP(G)$. }
\vspace{-.1in}
\label{fig:graph-series}
\end{center}
\end{figure}

 Rule~\ref{rule:added-sink} shows that the addition of a sink $k$ creates a new fixed point support $\sigma \cup \{k\}$ precisely when $\sigma$ is a surviving fixed point of the original graph.

\begin{rules}[sinks]\label{rule:added-sink}
Let $G$ be a graph on $n$ nodes and $k \in [n]$. If $k$ is a sink in $G$, then for any nonempty $\sigma \subseteq [n] \setminus \{k\}$,
$$  \sigma \cup \{k\} \in \FP(G) \;\; \Leftrightarrow \;\;\sigma \in \FP(G),$$
and $\idx(\sigma \cup \{k\}) = - \idx(\sigma).$
\end{rules}
\begin{proof}
Since $k$ is a sink in $G$, $k$ is a non-projector onto $[n] \setminus \{k\}$ and for all $i \in [n] \setminus \{k\}$, we have $s_i^{\sigma \cup \{k\}} = -\delta s_i^{\sigma}$ by Corollary~\ref{cor:simply-added}.  Thus, for all $i, j \in \sigma$, we have $\sgn s_i^{\sigma \cup \{k\}} = \sgn s_j^{\sigma \cup \{k\}}$ if and only if $\sgn s_i^{\sigma} = \sgn s_j^{\sigma}$.  Similarly, for all $\ell \in [n] \setminus (\sigma \cup\{k\})$, $\sgn s_\ell^{\sigma \cup \{k\}} = -\sgn s_i^{\sigma \cup \{k\}}$ if and only if $\sgn s_\ell^{\sigma} = -\sgn s_i^{\sigma}$.  Hence all the signs for all $i \in [n] \setminus \{k\}$ have the proper relationship for $\sigma \cup \{k\}$ to be a fixed point support precisely when $\sigma$ is a fixed point support.  Thus, the only remaining sign to check is that of $s_k^{\sigma \cup \{k\}}= s_k^{\sigma}$.  Observe that $\sigma \cup \{k\} \in \FP(G)$ precisely when, for all $i\in\sigma$, 
$$\sgn s_k^{\sigma \cup \{k\}} = \sgn s_i^{\sigma \cup \{k\}} = \sgn (-\delta s_i^{\sigma}) = - \sgn s_i^{\sigma}, $$
in which case, $\sgn s_k^{\sigma} = - \sgn s_i^{\sigma}$ and so $\sigma \in \FP(G).$  Thus the result holds and $\idx(\sigma \cup \{k\}) = - \idx(\sigma),$ as desired.
\end{proof}

As an example, consider the graph in Figure~\ref{fig:graph-series}A, which consists of a clique $\{1,2\}$ and a sink node 3.  By Rule~\ref{rule:added-sink} (sinks), $\{1,2,3\} \in \FP(G)$ if and only if $\{1,2\} \in \FP(G)$.  Since $\{1,2\}$ is a target-free clique, it survives as a fixed point by Rule~\ref{rule:clique}, and thus $\{1,2,3\} \in \FP(G)$ as well.  In contrast, in Figure~\ref{fig:graph-series}C, the addition of the sink node 4 eliminates the fixed point supports $\{1,2\}$ and $\{1,2,3\}$ by Rule~\ref{rule:uniform-in-deg} (uniform in-degree).  Thus we know that $\{1,2,4\}, \{1,2,3,4\} \notin \FP(G)$ by Rule~\ref{rule:added-sink}.  

\smallskip 

For our next rule, we consider when the added node $k$ is a projector onto $\sigma$.  In contrast to when $k$ is a sink, $\sigma \cup \{k\}$ will be a fixed point support precisely when $\sigma$ does \underline{not} survive the addition of $k$.

\begin{rules}[projectors]\label{rule:added-projector}
Let $G$ be a graph on $n$ nodes and $k \in [n]$.  If $k$ is a projector onto $[n] \setminus \{k\}$, then for any nonempty $\sigma \subseteq [n]\setminus \{k\}$, 
$$  \sigma \cup \{k\} \in \FP(G) \;\; \Leftrightarrow \;\;\sigma \not\in \FP(G) \text{ and } \sigma \in \FP(G|_{[n] \setminus \{k\}}),$$
and $\idx(\sigma \cup \{k\}) = \idx(\sigma).$
\end{rules}
\begin{proof}
Since $k$ is a projector onto $[n] \setminus \{k\}$, we have that $s_i^{\sigma \cup \{k\}} = \varepsilon s_i^{\sigma}$ for all $i \in [n] \setminus \{k\}$ (Corollary~\ref{cor:simply-added}).  By the same logic as in the proof of Rule~\ref{rule:added-sink}, all the signs for $i \in [n] \setminus \{k\}$ have the proper relationship for $\sigma \cup \{k\}$ to be a fixed point support precisely when $\sigma$ is a fixed point support in $G|_{[n] \setminus \{k\}}$. Thus, the only remaining sign to check is that of $s_k^{\sigma \cup \{k\}}= s_k^{\sigma}$. Observe that $\sigma \cup \{k\} \in \FP(G)$ precisely when $\sigma \in \FP(G|_{[n] \setminus \{k\}})$, and for all $i\in\sigma$,  
$$\sgn s_k^{\sigma \cup \{k\}} = \sgn s_i^{\sigma \cup \{k\}} = \sgn s_i^{\sigma}.$$
In this case, $\sgn s_k^{\sigma} = \sgn s_i^{\sigma}$ and so $\sigma \notin \FP(G)$. Finally, $\idx(\sigma \cup \{k\}) = \idx(\sigma).$
\end{proof}

The next rule shows that if $\sigma$ has a target, it can never support a fixed point.

\begin{rules}[targets]\label{rule:target}
Let $G$ be a graph on $n$ nodes and $\sigma \subseteq [n]$.  If there exists $k \in [n] \setminus \sigma$ such that $k$ is a target of $\sigma$, then $\sigma \notin \FP(G)$.  
\end{rules}
\begin{proof}
For every $i \in \sigma$, $k$ outside-in dominates $i$ with respect to $\sigma$, and so $\sigma \notin \FP(G)$.
\end{proof}
As an illustration of Rule~\ref{rule:target}, notice that in Figure~\ref{fig:graph-series}C, node 4 is a target of $\{1,2\}$, and thus $\{1,2\} \notin \FP(G)$.

The last two rules characterize the collection of fixed point supports when a graph contains a node that is either fully bidirectionally connected to all other nodes, or at the other extreme, is isolated and has no connections to other nodes.  

\begin{rules}[bidirectionally-connected nodes]\label{rule:bidirectionally-connected-node}
Let $G$ be a graph on $n$ nodes and $k \in [n]$.
If $k \leftrightarrow i$ for all $i \in [n] \setminus \{k\}$, then
$$\FP(G) = \{\sigma \cup \{k\}~|~ \sigma \in \FP(G|_{[n]\setminus\{k\}})\}.$$
Thus, $|\FP(G)|= |\FP(G|_{[n]\setminus\{k\}})|.$
\end{rules}
\begin{proof}
Observe that $k$ is a target of $[n] \setminus \{k\}$, and so for any $\sigma$ with $k \notin \sigma$, $\sigma \notin \FP(G)$ by Rule~\ref{rule:target}.  Since $k$ is also a projector onto $[n] \setminus \{k\}$, Rule~\ref{rule:added-projector} applies, and so $\sigma \cup \{k\} \in \FP(G)$ if and only if $\sigma \in \FP(G|_{[n]\setminus\{k\}})$ since $\sigma \notin \FP(G)$ for all such $\sigma$.  
\end{proof}

Recall that a node is \emph{isolated} if it has no incoming and no outgoing edges.

\begin{rules}[isolated nodes]\label{rule:isolated-node}
Let $G$ be a graph on $n$ nodes and $k \in [n]$.  If $k$ is an isolated node in $G$, then
$$\FP(G) = \FP(G|_{[n]\setminus\{k\}}) \cup \{\sigma \cup \{k\}~|~ \sigma \in \FP(G|_{[n]\setminus\{k\}})\} \cup \{k\}.$$
Thus, $|\FP(G)|= 2|\FP(G|_{[n]\setminus\{k\}})|+1.$
\end{rules}
\begin{proof}
Let $\sigma \in \FP(G|_{[n]\setminus\{k\}})$.  For any $i \in \sigma$, we have $i$ inside-out dominates $k$ with respect to $\sigma$, and so $\sigma \in \FP(G)$.  Since $k$ is isolated, it is also a sink, and thus by Rule~\ref{rule:added-sink}, $\sigma \cup \{k\} \in \FP(G)$ for all $\sigma \in \FP(G)$.  
Finally, $\{k\} \in \FP(G)$ by Rule~\ref{rule:indep-set} since it is trivially an independent set that is a sink.
\end{proof}

\subsection{Parameter independence in $\FP(G)$}\label{sec:parameter-independence}
All the CTLN graph rules presented thus far, including graphical domination, are parameter independent. Thus, any arguments relying on these results are guaranteed to depend solely on the graph structure.  In light of this, it is natural to ask to what extent $\FP(G)$ is parameter independent. The proof of Theorem~\ref{thm:n4-param-independent}, below, shows that graph rules are sufficient to characterize the full set $\FP(G)$ for all graphs on $n \leq 4$ nodes, and thus $\FP(G)$ is parameter independent for all such graphs.

\begin{theorem}\label{thm:n4-param-independent}
Let $G$ be a graph on $n \leq 4$ nodes.  Then $\FP(G)$ is constant across all values of $\varepsilon$ and $\delta$ in the legal range.  In other words, $\FP(G)$ is parameter independent.
\end{theorem}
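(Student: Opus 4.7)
The plan is to argue by finite case analysis on pairs $(\sigma,\tau)$ with $\sigma \subseteq \tau \subseteq [n]$ and $|\tau| \le 4$. By Corollary~\ref{cor:on-off-conds}, $\sigma \in \FP(G)$ iff $\sigma$ is permitted in $G|_\tau$ for every such $\tau$, so it suffices to show that the predicate ``$\sigma \in \FP(G|_\tau)$'' has a parameter-independent answer. Via Theorem~\ref{thm:sgn-condition} this reduces to showing that the signs of the relevant $s_i^\sigma$ are constant on the legal range of $(\varepsilon,\delta)$.

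First I would dispose of the low-cardinality cases. For $|\sigma|=1$, $s_i^{\{i\}} = \theta > 0$ is trivially parameter independent. For $|\sigma|=2$, the example following Theorem~\ref{thm:sgn-condition} gives $s_i^{\{i,j\}} \in \{\varepsilon\theta,-\delta\theta\}$ depending only on whether $j \to i$, so permittedness reduces to the combinatorial criterion ``$i \leftrightarrow j$ or $i,j$ disconnected.'' For $|\sigma|=3$ I would appeal to Figure~\ref{fig:n3-graphs-s_i} in the appendix, which tabulates $s_i^{[3]}$ for all sixteen digraphs on three vertices and certifies directly that exactly six are permitted, in a parameter-independent fashion. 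In the survival clauses with $|\tau|-|\sigma|\ge 1$, the signs of the additional $s_k^\sigma$ for $k\in\tau\setminus\sigma$ are handled first by graphical domination (Theorem~\ref{thm:graph-domination}(b,c)), and otherwise by the simply-added factorizations of Corollary~\ref{cor:simply-added} when applicable.

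The heart of the proof is the case $|\sigma|=|\tau|=4$, where there is no external node and one must certify that ``$[4]\in\FP(G)$'' is parameter-independent for every four-node $G$. Here the strategy is to peel off graphs by applying the rules proved earlier: those with a proper source (Rule~\ref{rule:sources}), a sink (Rule~\ref{rule:added-sink}), a full projector (Rule~\ref{rule:added-projector}), an isolated vertex (Rule~\ref{rule:isolated-node}), uniform in-degree (Rule~\ref{rule:uniform-in-deg}), an internal graphical-domination relation (Theorem~\ref{thm:graph-domination}(a)), or a nontrivial simply-added split (Corollary~\ref{cor:simply-added}, which factors $s_i^{[4]}$ as $\varepsilon$ or $-\delta$ times an $s_i^\tau$ on the smaller piece). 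Each reduction either collapses the question to an already-settled three-node case or exhibits $s_i^{[4]}$ as a product of parameter-independent factors.

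The main obstacle, and the one I expect to be most tedious, is verifying that every four-node digraph falls under one of the above reductions, or else lies in a small explicit residue that can be handled by direct computation of the four determinants $s_i^{[4]}$ and inspection of their signs on the legal region $\delta > 0$, $0 < \varepsilon < \delta/(\delta+1)$. Organizing the enumeration so that the residue is short---and checking for each residual graph that the two-variable polynomial $s_i^{[4]}(\varepsilon,\delta)$ has fixed sign on the legal range---is where all the real work sits. I expect the simply-added factorization, together with the strongly-forbidden classification, to cut the residue down to a very small and manageable list.
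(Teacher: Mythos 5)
Your overall structure---reduce to sign-constancy of the $s_i^\sigma$ for $|\sigma|\leq 3$ and then attack the full support separately---is sound, and your treatment of the small cases matches the paper's (tabulated $s_i^\sigma$ values for $n\leq 3$, uniform in-degree and graphical domination for the survival clauses). But you are missing the one idea that makes the $|\sigma|=|\tau|=4$ case essentially free, and that idea is Rule~\ref{rule:parity}: the total number of fixed points $|\FP(G)|$ is always odd. Once you know (parameter-independently) exactly which proper subsets of $[4]$ lie in $\FP(G)$, parity pins down whether $[4]$ itself is in $\FP(G)$---it is in iff the number of proper-subset supports is even. So the ``heart of the proof'' you describe, the enumeration and residue-checking over all four-node digraphs, simply does not need to happen. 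That step is exactly where the paper diverges from your plan, and it is the step that keeps the argument short and conceptual rather than a long finite check.

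A secondary, smaller point: your description of how to settle survival of a $|\sigma|\leq 3$ motif inside $G|_\tau$ (``first graphical domination, otherwise simply-added factorizations when applicable'') is somewhat loose. The paper's observation is sharper: of the nine permitted motifs with $|\sigma|\leq 3$, seven are uniform in-degree, so Rule~\ref{rule:uniform-in-deg} gives a fully parameter-independent survival criterion at one stroke, and the remaining two are covered by graphical domination in every possible embedding. Organizing it this way makes it immediate that there is no residue left over to check by hand. Your plan would likely converge to the right answer, but you would spend effort verifying that the residue is empty, whereas the uniform-in-degree classification plus parity shows it directly.
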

\begin{proof}
Figure~\ref{fig:n3-graphs-s_i} in Appendix Section~\ref{appendixA} gives all directed graphs of size $n \leq 3$ together with the $s_i^\sigma$ values, for $\sigma = [n]$.  In particular, it can be seen that the signs of the $s_i^\sigma$ for these small graphs are all parameter independent. Applying Theorem~\ref{thm:sgn-condition} (sign conditions), we see that there are nine permitted motifs of size $|\sigma| \leq 3$.  
Equivalently, all nine permitted motifs can be identified via graph rules, and all the other motifs can be shown to be forbidden via inside-in graphical domination. The survival rules for the permitted motifs are also parameter independent: seven of them are uniform in-degree, and thus their survival is parameter independent by Rule~\ref{rule:uniform-in-deg}; while for the remaining two, either inside-out or outside-in graphical domination applies to every possible embedding of the motifs into larger graphs. It follows that all fixed point supports that are proper subsets of $[n]$ for $n \leq 4$ are parameter independent. Moreover, by Rule~\ref{rule:parity} (parity) we can always determine whether the full support $[n]$ is in $\FP(G)$ from knowledge of which proper subsets are in $\FP(G)$. We conclude that for $n \leq 4$, $\FP(G)$ is parameter independent.
\end{proof}

It turns out that among permitted motifs of size $|\sigma|=4$, there are a few that have parameter-dependent survival rules when embedded in graphs of size $n \geq 5$.  Figure~\ref{fig:param-depend-survival} shows all three permitted motifs of size 4 whose survival is parameter dependent, together with the embeddings where this dependence occurs. (For all other embeddings, the survival rules are parameter independent and can be derived from graphical domination.) 
In fact, the precise dependence on $\varepsilon$ and $\delta$ can be easily computed (see Example~\ref{ex:parameter-dependence}), and is also shown in Figure~\ref{fig:param-depend-survival}.
In Appendix Section~\ref{appendixC}, we give some example graphs on $n=5$ that have parameter-dependent $\FP(G)$ as a result of having one or more of these motifs embedded in a parameter-dependent manner. 

\begin{figure}[!h]
\begin{center}
\includegraphics[width=.8\textwidth]{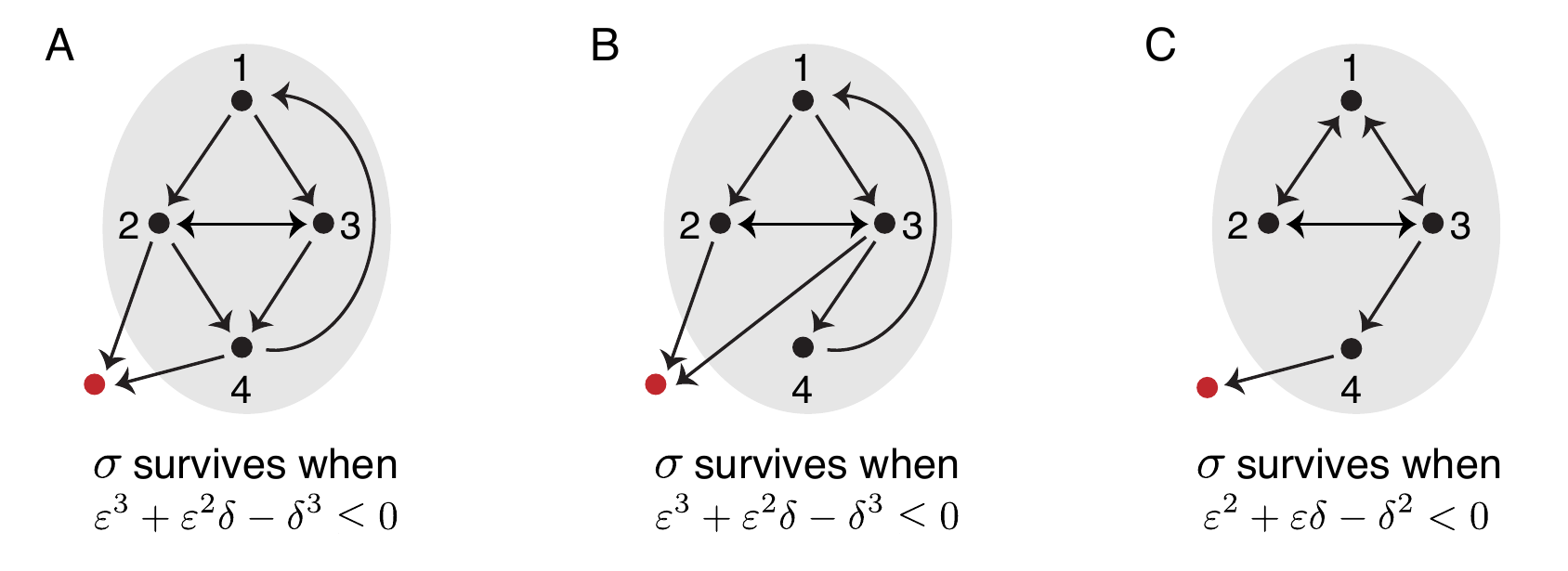}
\vspace{-.15in}
\end{center}
\caption{\textbf{Parameter-dependent survival.} The three permitted motifs with $|\sigma|=4$ that have parameter-dependent survival rules.  In each panel, the graph in the gray shaded region corresponds to $\sigma$ and the red node is an external node impacting the survival of $\sigma$.  Note that all edges from the red node back to $\sigma$ do not impact the survival of $\sigma$, and are thus omitted for clarity. In graph (A), an equivalent embedding has edges from vertices $3$ and $4$ to the red node, by symmetry. In each case, the polynomial inequality gives the precise condition under which $\sigma$ survives.}
\label{fig:param-depend-survival}
\end{figure}

\begin{example}\label{ex:parameter-dependence}
For the three graphs in Figure~\ref{fig:param-depend-survival}, we will use Theorem~\ref{thm:sgn-condition} (sign conditions) to work out the parameter dependence of the survival of these motifs.  In each case, let $\sigma = \{1,2,3,4\}$ and let node 5 be the one in red.

(A) Let $G$ be the graph in Figure~\ref{fig:param-depend-survival}A. Computing $s_i^\sigma = \det((I-W_\sigma)_i; \theta \one)$ for all $i \in \sigma$, we obtain
$$s_1^\sigma = \varepsilon\theta(\varepsilon^2+2\varepsilon\delta + 2\delta^2),\quad s_2^\sigma = s_3^\sigma = \varepsilon\theta(\varepsilon^2+\varepsilon\delta + \delta^2), \quad s_4^\sigma = \varepsilon\theta(2\varepsilon^2+3\varepsilon\delta+2\delta^2).$$ 
Since $\sgn s_i^\sigma = \sgn s_j^\sigma$ for all $i, j \in \sigma$, we see that $\sigma$ is permitted by Theorem~\ref{thm:sgn-condition} (sign conditions).  Additionally, we have
$s_5^\sigma = \varepsilon\theta(\varepsilon^3+\varepsilon^2\delta - \delta^3).$
Since $\sigma$ survives if and only if $\sgn s_5^\sigma = - \sgn s_i^\sigma$ for all $i \in \sigma$, we see that $\sigma$ survives precisely when
$\varepsilon^3+\varepsilon^2\delta - \delta^3 <0.$
\smallskip

(B) Let $G$ be the graph in Figure~\ref{fig:param-depend-survival}B.  Here we have $$s_1^\sigma = \varepsilon\delta^2\theta,\quad s_2^\sigma = s_3^\sigma = \varepsilon\theta(\varepsilon^2+\varepsilon\delta + \delta^2), \quad s_4^\sigma = \varepsilon\theta(\varepsilon^2+2\varepsilon\delta+2\delta^2),$$
and since the signs of the $s_i^\sigma$ all agree, we see that $\sigma$ is a permitted motif.  Then since 
$s_5^\sigma = \varepsilon\theta(\varepsilon^3+\varepsilon^2\delta - \delta^3),$
we again see that $\sigma$ survives precisely when
$\varepsilon^3+\varepsilon^2\delta - \delta^3 <0.$
\smallskip

(C) Let $G$ be the graph in Figure~\ref{fig:param-depend-survival}C.  For this graph, we have
$$s_1^\sigma = s_2^\sigma = s_3^\sigma = -\varepsilon^2\delta\theta, \quad s_4^\sigma = -\varepsilon^2\theta(\varepsilon+2\delta),$$
showing that $\sigma$ is a permitted motif.  Additionally, $s_5^\sigma = -\varepsilon^2\theta(\varepsilon^2+\varepsilon\delta - \delta^2),$
yielding the condition for survival of $\sigma$:
$\varepsilon^2+\varepsilon\delta - \delta^2 <0.$
\end{example}

\medskip

For certain classes of graphs, we can get parameter independence of $\FP(G)$ through at least $n=5$.  A directed graph is called \emph{oriented} if it has no bidirectional edges. (Note that all the graphs in Figure~\ref{fig:param-depend-survival} have the bidirectional edge  $2 \leftrightarrow 3$, and are thus {\it not} oriented.)

\begin{theorem}\label{thm:oriented-param-independent}
Let $G$ be an oriented graph on $n \leq 5$ nodes. Then $\FP(G)$ is parameter independent.
\end{theorem}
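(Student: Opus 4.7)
By Theorem~\ref{thm:n4-param-independent} the case $n \le 4$ is already settled, so the plan is to focus on $n=5$. I would split the argument into two steps: (i) show that every proper subset $\sigma \subsetneq [5]$ has parameter-independent membership in $\FP(G)$, and then (ii) deduce the same for the full support $[5]$ using Rule~\ref{rule:parity}.

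For step (i), any $\sigma \subsetneq [5]$ has $|\sigma|\le 4$, and because $G$ is oriented so is the induced subgraph $G|_\sigma$. Whether $\sigma$ is a permitted motif, i.e.\ $\sigma \in \FP(G|_\sigma)$, is parameter independent by Theorem~\ref{thm:n4-param-independent} applied to $G|_\sigma$. What remains is to check that the survival of each permitted motif is parameter independent, and I would do this by cases on $|\sigma|$. For $|\sigma|\le 3$, the proof of Theorem~\ref{thm:n4-param-independent} already shows that every permitted motif of size $\le 3$ survives in a parameter-independent manner under every embedding into a larger graph: seven of the nine such motifs are uniform in-degree (so Rule~\ref{rule:uniform-in-deg} governs survival parameter-independently), and for the remaining two, inside-out or outside-in graphical domination from Theorem~\ref{thm:graph-domination} handles every embedding. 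For $|\sigma|=4$, Figure~\ref{fig:param-depend-survival} is the exhaustive list of permitted size-$4$ motifs with parameter-dependent survival; crucially, all three contain the bidirectional edge $2\leftrightarrow 3$ and so cannot occur as an induced subgraph of an oriented $G$. Hence every permitted size-$4$ motif inside an oriented $G$ has parameter-independent survival.

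For step (ii), I would invoke Rule~\ref{rule:parity}: $|\FP(G)|$ is odd for every legal choice of $\varepsilon,\delta$. Since step (i) makes the family $\{\sigma\in\FP(G):\sigma\subsetneq[5]\}$ parameter independent, the parity condition uniquely determines the membership of $[5]$, namely $[5]\in\FP(G)$ iff that family has even cardinality. This criterion involves no analytic data from $\varepsilon,\delta$, so $[5]\in\FP(G)$ is parameter independent as well, concluding the proof. The one substantive input that has to be verified carefully, and which I regard as the main (modest) obstacle, is the claim that Figure~\ref{fig:param-depend-survival} is exhaustive and that each of its three motifs contains a bidirectional edge; both are handled by the preceding discussion and the explicit Cramer computations of Example~\ref{ex:parameter-dependence}.
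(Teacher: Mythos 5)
The proposal is correct and takes essentially the same approach as the paper: establish parameter independence for all permitted motifs of size $\leq 4$ that can occur in an oriented graph, then use Rule~\ref{rule:parity} to pin down the full support $[5]$. Your route via Figure~\ref{fig:param-depend-survival} (all three parameter-dependent size-$4$ permitted motifs contain the bidirectional edge $2 \leftrightarrow 3$, hence cannot appear in an oriented graph) is exactly the observation the paper relies on, stated there in the form of the case analysis in Appendix Section~\ref{appendixB}.
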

\begin{proof}
In Appendix Section~\ref{appendixB}, we examine all permitted motifs with $|\sigma|\leq 4$ that can arise as subgraphs of an oriented graph, and show that the corresponding survival rules are all parameter independent.  Combining this with Rule~\ref{rule:parity} (parity), we see that the full $\FP(G)$ is parameter independent for oriented graphs of size $n \leq 5$.
\end{proof}

We have already seen that parameter independence of $\FP(G)$ cannot extend to all $n=5$ graphs (see Appendix Section~\ref{appendixC} for some explicit examples).  It is possible, however, that Theorem~\ref{thm:oriented-param-independent} can be extended to oriented graphs with $n> 5$.

\section{Building block graph rules for CTLNs}\label{sec:building-blocks}

Thus far, in Sections~\ref{sec:fixed-pts-CTLNs} and~\ref{sec:graph-rules}, we have presented a variety of graph rules that enable us to determine when certain subgraphs are permitted or forbidden motifs, as well as their survival rules. In this section, we consider larger networks corresponding to {\it composite graphs} that are built from component subgraphs, and address the question: if we know $\FP(G_i)$ for each component $G_i$, what can we say about $\FP(G)$ for the full network? Our main results are collected in Section~\ref{sec:composite-graphs}. These results can be thought of as graph rules for networks that can be decomposed into smaller ``building blocks'' whose structure is better understood.  In the remaining subsections, we prove these results and provide additional details on the fixed point supports of composite graphs. 

As in Sections~\ref{sec:fixed-pts-CTLNs} and~\ref{sec:graph-rules}, we simplify notation and write $\FP(G)$ instead of $\FP(G,\varepsilon, \delta)$. Note that whenever such expressions occur in the same claim, such as $\FP(G)$ and $\FP(G_i)$, the same choices for $\varepsilon$ and $\delta$ must be assumed in each instance. All the results, however, are parameter independent.

\subsection{Composite graphs}\label{sec:composite-graphs}

\begin{definition}[composite graph] Given a set of graphs $G_1,\ldots,G_N$, and a graph $\widehat{G}$ on $N$ nodes, the {\it composite graph} with {\it components} $G_i$ and {\it skeleton} $\widehat{G}$ is the graph $G$ constructed by taking the union of all component graphs, and adding edges between components according to the following rule: if $u \in G_i$ and $v \in G_j$, then $u \to v$ in $G$ if and only if $i \to j$ in $\widehat{G}$. (See Figure~\ref{fig:general-composite}.)
\end{definition}

\begin{figure}[!ht]
\begin{center}
\includegraphics[width=.9\textwidth]{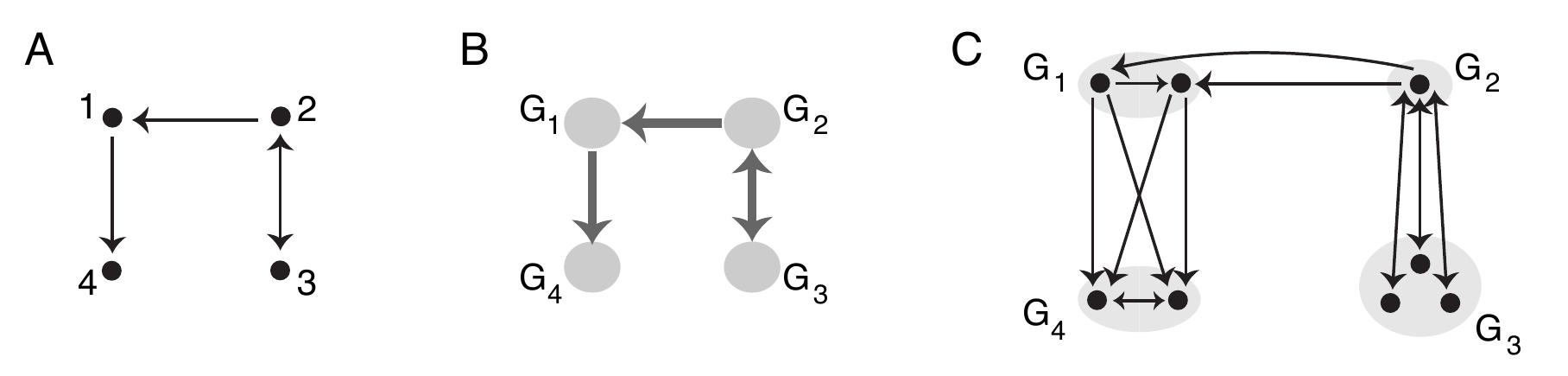}
\caption{(A) A skeleton graph $\widehat G$.  (B) An arbitrary composite graph with skeleton $\widehat G$ from A.  Each node $i$ in the skeleton is replaced with a component graph $G_i$ whose connections to the rest of the graph are prescribed by the connections of node $i$ in $\widehat G$.  (C) An example composite graph with skeleton $\widehat G$ from A.
}
\label{fig:general-composite}
\end{center}
\vspace{-.1in}
\end{figure}

Note that any graph $G$ can trivially be thought of as a composite graph, with each component a single 
vertex. The results in this section are more interesting when at least one component in the graph is larger.

The first result says that if $G|_\sigma$ is a composite graph, for some $\sigma \subseteq [n]$, and any 
component $\sigma_i$ is a forbidden motif, then $\sigma$ is also a forbidden motif. In other words, one bad apple spoils the bunch.
The proof is given in the next section. 

\begin{theorem}\label{thm:one-bad-apple}
Let $G|_\sigma$ be any composite graph with components $G|_{\sigma_1}, \ldots, G|_{\sigma_N}$.  If $\sigma_i$ is a forbidden motif for any $i \in [N]$, then $\sigma$ is also a forbidden motif.
\end{theorem}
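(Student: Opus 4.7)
The plan is to reduce everything to the sign characterization in Theorem~\ref{thm:sgn-condition} via the factorization given by Theorem~\ref{thm:simply-added}. The crucial structural observation is that, by the very definition of a composite graph, for any component $G|_{\sigma_i}$, the remainder $\omega \od \sigma \setminus \sigma_i = \bigcup_{j \ne i}\sigma_j$ is simply-added to $\tau \od \sigma_i$: a vertex $k \in \sigma_j$ projects onto all of $\sigma_i$ (if $j \to i$ in the skeleton $\widehat{G}$) or none of $\sigma_i$ (if $j \not\to i$), with no other possibilities.

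First I would dispose of the trivial case $N=1$, in which $\sigma = \sigma_i$ and there is nothing to show. Assuming $N \geq 2$, I invoke Theorem~\ref{thm:simply-added} with $\tau = \sigma_i$ and $\omega = \sigma \setminus \sigma_i$ to obtain, for every $p \in \sigma_i$,
\begin{equation*}
s_p^{\sigma} \;=\; \tfrac{1}{\theta}\, s_p^{\omega}\, s_p^{\sigma_i} \;=\; \alpha\, s_p^{\sigma_i},
\end{equation*}
where $\alpha = \tfrac{1}{\theta}s_p^{\omega}$ is the \emph{same} scalar for every $p \in \sigma_i$. Nondegeneracy of the TLN guarantees $s_p^{\omega}\neq 0$, so $\alpha\neq 0$.

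Now I apply the forbidden hypothesis. Because $\sigma_i$ is forbidden, Theorem~\ref{thm:sgn-condition} says that the values $\{s_p^{\sigma_i}\}_{p \in \sigma_i}$ do not all share a common sign; pick $p,q \in \sigma_i$ with $\sgn s_p^{\sigma_i} \neq \sgn s_q^{\sigma_i}$. Since $\alpha$ is a common nonzero scalar, the factorization gives $\sgn s_p^{\sigma} = \sgn\alpha \cdot \sgn s_p^{\sigma_i} \neq \sgn\alpha\cdot\sgn s_q^{\sigma_i} = \sgn s_q^{\sigma}$. Thus the $s^{\sigma}$-values on $\sigma$ fail the uniform-sign condition, and Theorem~\ref{thm:sgn-condition} concludes that $\sigma$ is a forbidden motif.

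I do not expect any serious obstacle here; the argument is essentially mechanical once one recognizes the simply-added split that the composite structure provides. The only subtlety worth flagging explicitly is the independence of $\alpha$ from the choice of $p \in \sigma_i$, which is exactly the content that makes Theorem~\ref{thm:simply-added} powerful in this setting, and the appeal to nondegeneracy to ensure $\alpha \neq 0$ so that multiplication by $\alpha$ preserves the sign disagreement inherited from $\sigma_i$.
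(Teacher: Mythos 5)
Your proof is correct and follows essentially the same route as the paper: you exploit the simply-added split $\sigma = \sigma_i \cup (\sigma\setminus\sigma_i)$ supplied by the composite structure, apply Theorem~\ref{thm:simply-added} to factor $s_p^\sigma = \alpha\, s_p^{\sigma_i}$ with $\alpha\ne 0$ uniform over $p\in\sigma_i$, and conclude via Theorem~\ref{thm:sgn-condition}. The only cosmetic difference is that the paper packages the sign-preservation step as a standalone lemma (Lemma~\ref{lemma:composite-signs}), whereas you inline it.
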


Next we show that if $G|_\sigma$ is a composite graph that has an unstable uniform in-degree component, this is sufficient to guarantee that $\sigma$ is also unstable. 

\begin{proposition}\label{prop:composite-unstable}
Let $G|_\sigma$ be any composite graph with components $G|_{\sigma_1}, \ldots, G|_{\sigma_N}$.  If $\sigma_i$ is a uniform in-degree permitted motif that is unstable for any $i \in [N]$, then $\sigma$ is unstable (or rather $I-W_\sigma$ is unstable).
\end{proposition}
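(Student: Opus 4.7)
The proposition should follow almost immediately from Lemma~\ref{lemma:simply-added-evals}, whose concluding sentence says exactly that instability of a uniform in-degree $\tau$ is inherited by any $\sigma$ that has $\tau$ as one side of a simply-added split. So the task reduces to exhibiting the correct simply-added split in a composite graph.

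First I would check that in any composite graph $G|_\sigma$ with components $G|_{\sigma_1},\ldots,G|_{\sigma_N}$, and for any fixed $i \in [N]$, the subset $\omega \od \sigma \setminus \sigma_i$ is simply-added to $\tau \od \sigma_i$. This is forced by the definition of composite graph: $\omega$ is the disjoint union of the other components $\sigma_j$ $(j \neq i)$, and every vertex of such a $\sigma_j$ is a projector onto all of $\sigma_i$ when $j \to i$ in the skeleton $\widehat{G}$, and a non-projector onto $\sigma_i$ when $j \not\to i$. Thus every vertex of $\omega$ is either a projector or a non-projector onto $\tau$, which is exactly the simply-added condition.

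Taking $\tau = \sigma_i$ to be the unstable uniform in-degree component guaranteed by hypothesis, Lemma~\ref{lemma:simply-added-evals} applied to the split $\sigma = \tau \cup \omega$ immediately yields that $I - W_\sigma$ is unstable, completing the proof. Alternatively, one can invoke the lemma's explicit eigenvalue statement $\eig(I - W_\sigma) \supset \eig(I - W_{\sigma_i}) \setminus \{R_{\sigma_i}\}$ and note that the destabilizing eigenvalue of $I - W_{\sigma_i}$ cannot equal the strictly positive Perron--Frobenius row-sum $R_{\sigma_i}$, so it is inherited by $I - W_\sigma$. The only step requiring any thought is the first --- reading off the simply-added split from the composite-graph definition --- but this is essentially automatic, so there is no real obstacle.
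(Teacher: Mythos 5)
Your proof is correct and follows exactly the paper's argument: identify the simply-added split $\sigma = \sigma_i \cup (\sigma \setminus \sigma_i)$ coming from the composite-graph structure, apply Lemma~\ref{lemma:simply-added-evals}, and note that the destabilizing eigenvalue of $I - W_{\sigma_i}$ (having negative real part) cannot be the positive row sum $R_{\sigma_i}$ and so is inherited by $I - W_\sigma$.
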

\begin{proof}
By definition of a composite graph, $\sigma$ has a simply-added split where $\sigma \setminus \sigma_i$ is simply-added to $\sigma_i$.  Since there exists a $\sigma_i$ that is uniform in-degree, we can apply Lemma~\ref{lemma:simply-added-evals} to show that all the eigenvalues of $I-W_{\sigma_i}$ are inherited to $I-W_\sigma$ except for possibly the top eigenvalue (which is the row sum of $I-W_{\sigma_i}$).  Since $\sigma_i$ is unstable, $I-W_{\sigma_i}$ has at least one negative eigenvalue, and this cannot be the top eigenvalue, since the row sum is always positive.  Thus, $I-W_\sigma$ inherits this negative eigenvalue, and so is unstable.
\end{proof}

For the following results, $G$ is always a composite graph on $n$ vertices, with skeleton $\widehat{G}$ and components $G_1,\ldots,G_N$. For any $\sigma \subseteq [n]$, we define $\sigma_i \od \sigma \cap \tau_i$, where $\tau_i$ is the set of vertices in component $G_i$. Note that $[n] = \cup_i \tau_i$, and $G_i = G|_{\tau_i}$.

Recall that a graph is {\it strongly forbidden} if it has inside-in graphical domination.  The next theorem shows that when the skeleton is strongly forbidden, then the full graph is as well.  

\begin{theorem}\label{thm:strongly-forbidden}
Let $G$ be a composite graph with skeleton $\widehat{G}$.  If $\widehat{G}$ is strongly forbidden, then $G$ is strongly forbidden.
\end{theorem}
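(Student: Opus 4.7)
The plan is to lift the graphical domination witnesses from the skeleton $\widehat{G}$ to the composite graph $G$ by simply choosing one vertex from each of the two relevant components. Since $\widehat{G}$ is strongly forbidden, by definition there exist distinct indices $j^*, k^* \in [N]$ such that $k^*$ graphically dominates $j^*$ in $\widehat{G}$ with respect to $[N]$; that is, (i) every $i \in [N]\setminus\{j^*,k^*\}$ with $i \to j^*$ also satisfies $i \to k^*$, (ii) $j^* \to k^*$, and (iii) $k^* \not\to j^*$. (Note $j^* \neq k^*$ because (ii) and (iii) would otherwise contradict simplicity of $\widehat{G}$.)

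Next I would pick any vertices $j \in \tau_{j^*}$ and $k \in \tau_{k^*}$, where $\tau_i$ denotes the vertex set of the component $G_i$. Since $j^* \neq k^*$ the sets $\tau_{j^*}$ and $\tau_{k^*}$ are disjoint, so $j \neq k$. I claim $k$ graphically dominates $j$ in $G$ with respect to the full vertex set $[n]$, which immediately shows $G$ is strongly forbidden.

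To verify this I would go through the three conditions of Definition~\ref{def:graph-domination} using the rule that, for $u \in \tau_a$ and $v \in \tau_b$, $u \to v$ in $G$ if and only if $a \to b$ in $\widehat{G}$. For condition (1), take any $v \in [n]\setminus\{j,k\}$ and let $i \in [N]$ be the index with $v \in \tau_i$. If $i \notin \{j^*, k^*\}$, then $v \to j$ in $G$ forces $i \to j^*$ in $\widehat{G}$, hence $i \to k^*$ by (i), hence $v \to k$ in $G$. If $i = j^*$, then $v \to k$ in $G$ already follows from (ii), so the implication is trivially satisfied. If $i = k^*$, then by (iii) we have $v \not\to j$ in $G$, so the hypothesis of the implication is vacuous. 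Condition (2), $j \to k$ in $G$, is immediate from (ii) and the composite-graph rule; similarly condition (3), $k \not\to j$ in $G$, is immediate from (iii).

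There is no real obstacle here: the argument is essentially bookkeeping. The only point requiring care is the case split in condition (1) when $v$ lies in $\tau_{j^*}$ or $\tau_{k^*}$, where the internal edges of these components could a priori interfere; but the composite structure makes all edges between these components uniform, so (ii) and (iii) handle these cases automatically.
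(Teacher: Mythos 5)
Your proof is correct and uses essentially the same argument as the paper. The paper factors the argument into Lemma~\ref{lemma:composite-domination} (which lifts graphical domination from a general $\widehat\sigma \subseteq [N]$ in the skeleton to the corresponding union of components in $G$) and then derives Theorem~\ref{thm:strongly-forbidden} as an immediate consequence; your proof does the same lifting directly in the special case $\widehat\sigma = [N]$, with the same case split on whether the witness vertex lies in $\tau_{j^*}$, $\tau_{k^*}$, or some other component.
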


A strongly forbidden skeleton is thus sufficient to guarantee that $G$ is strongly forbidden even if all the components $G_i$ are permitted.  Graphical domination within the skeleton is essential to this result: the following example shows that Theorem~\ref{thm:strongly-forbidden} cannot be extended to skeletons that are forbidden, but not strongly forbidden.

\begin{figure}[!ht]
\begin{center}
\includegraphics[width=\textwidth]{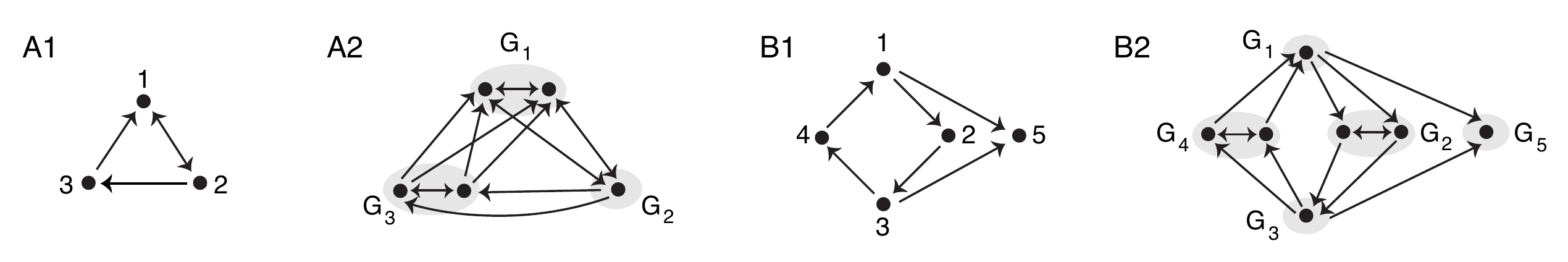}
\caption{(A1) A strongly forbidden motif, where node 1 graphically dominates node 3 with respect to $[3]$.  (A2) A composite graph whose skeleton is the graph from A1.  This graph is a strongly forbidden motif since every node in component $G_1$ graphically dominates every node in $G_3$. (B1) A forbidden motif that has no graphical domination.  (B2) A composite graph whose skeleton is the graph from B1.  This graph is a (uniform in-degree) permitted motif despite having a forbidden skeleton.}
\label{fig:counter-example-composite}
\end{center}
\vspace{-.1in}
\end{figure}

\begin{example}
Consider the graph in panel A1 of Figure~\ref{fig:counter-example-composite}.  This graph is strongly forbidden since node 1 graphically dominates node 3 with respect to the full support $\{1,2,3\}$.  The graph in panel A2 of Figure~\ref{fig:counter-example-composite} is a composite graph obtained from the skeleton in A1 by inserting 2-cliques at two of the nodes.  Theorem~\ref{thm:strongly-forbidden} guarantees that the graph in A2 is strongly forbidden since it has a strongly forbidden skeleton.  In particular, we see that every node in $G_1$ graphically dominates each node in $G_3$.  

In contrast, consider the graph in panel B1 of Figure~\ref{fig:counter-example-composite}, which has no graphical domination and thus is not strongly forbidden.  It turns out that this graph is still a forbidden motif: it has the form $\sigma \cup \{k\}$ where $\sigma=\{1,2,3,4\}$ and $k=5$ is a sink.  Since $\sigma$ has uniform in-degree $d=1$, and $5$ receives two edges from $\sigma$, we see $\sigma \notin \FP(G)$ by Rule~\ref{rule:uniform-in-deg}, and thus $\sigma \cup \{5\}\notin \FP(G)$ by Rule~\ref{rule:added-sink} (sinks).  The composite graph in panel B2 of Figure~\ref{fig:counter-example-composite} has B1 as its skeleton, again with 2-cliques inserted at two of the nodes.  However this graph is \underline{not} forbidden, despite having a forbidden skeleton, since it has uniform in-degree 2 (Rule~\ref{rule:uniform-in-deg}).  This example shows that Theorem~\ref{thm:strongly-forbidden} cannot be generalized from skeletons that are strongly forbidden to those that are merely forbidden.
\end{example}

Next we consider some special classes of composite graphs where we can guarantee that the full support is permitted precisely when all the components are permitted.

\begin{definition}
Let $G$ be a composite graph with components $G_1,\ldots, G_N$ and skeleton $\widehat G$. If $\widehat{G}$ is an independent set, then we say that $G$ is the {\it disjoint union} of the components $G_i$. If $\widehat{G}$ is a clique, then we say $G$ is the {\it clique union} of the $G_i$. Finally, if $\widehat{G}$ is a cycle, then $G$ is a {\it cyclic union} of its components. (See Figure~\ref{fig:special-composite}.)
\end{definition}

\begin{figure}[!ht]
\begin{center}
\includegraphics[width=.85\textwidth]{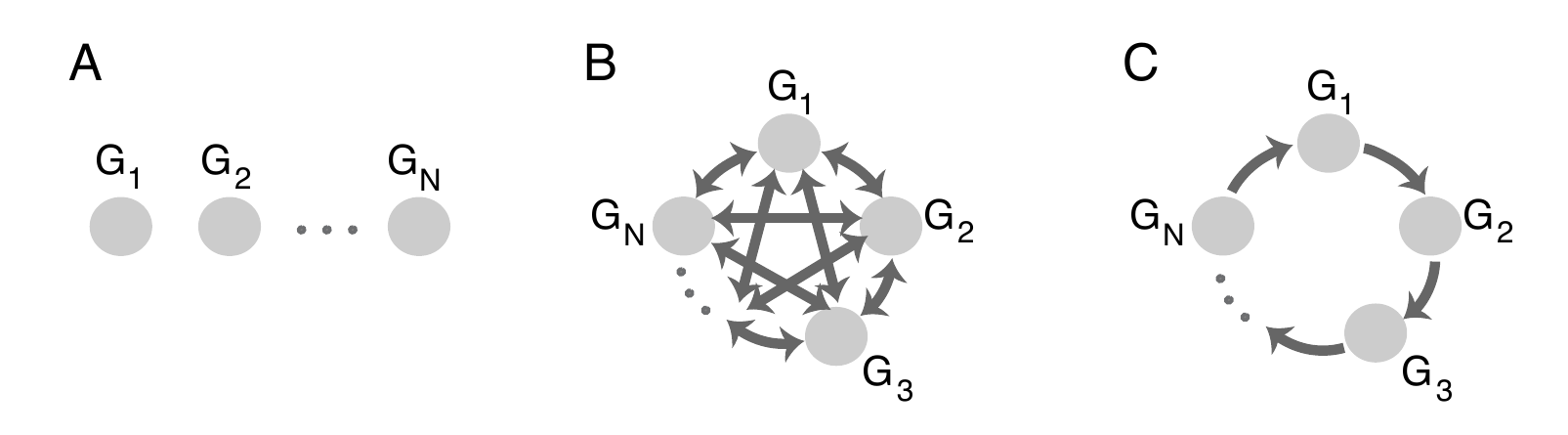}
\caption{(A) A disjoint union.  (B) A clique union.  (C) A cyclic union.
}
\label{fig:special-composite}
\end{center}
\vspace{-.1in}
\end{figure}
\FloatBarrier

\begin{theorem}\label{thm:composite-permitted}
Let $G|_\sigma$ be either a disjoint union, a clique union, or a cyclic union with nonempty components $G|_{\sigma_1}, \ldots, G|_{\sigma_N}$.  Then $\sigma$ is a permitted motif if and only if $\sigma_i$ is a permitted motif for every $i \in [N]$.  Moreover, when $\sigma$ is permitted, the index factors as:
$$\idx(\sigma) = \left\{\begin{array}{rl}  (-1)^{N-1}\displaystyle\prod_{i=1}^N \idx(\sigma_i), & \text{ for disjoint unions}, \vspace{.1in}\\  \displaystyle\prod_{i=1}^N \idx(\sigma_i), & \text{ for clique or cyclic unions.}\end{array}\right.$$
\end{theorem}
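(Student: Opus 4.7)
The plan is to combine the simply-added factorization of Theorem~\ref{thm:simply-added} with induction on $|\sigma|$, handling all three cases simultaneously. In any composite graph, $\sigma\setminus\sigma_k$ is simply-added to $\sigma_k$ for every $k$: a node in $\sigma_j$ with $j\neq k$ is a projector onto $\sigma_k$ or a non-projector, according to whether $j\to k$ in $\widehat{G}$. Theorem~\ref{thm:simply-added} then gives, for each $i\in\sigma_k$,
\[
s_i^\sigma \;=\; \alpha_k\, s_i^{\sigma_k}, \qquad \alpha_k \;=\; \tfrac{1}{\theta}\, s_i^{\sigma\setminus\sigma_k},
\]
with $\alpha_k$ constant on $\sigma_k$. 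By Theorem~\ref{thm:sgn-condition}, $\sigma$ is a permitted motif precisely when each $\sigma_k$ is permitted and the products $\sgn(\alpha_k)\cdot\idx(\sigma_k)$ are all equal. The ``only if'' direction of the theorem is then immediate from Theorem~\ref{thm:one-bad-apple}.

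For the ``if'' direction together with the index formula, I would induct on $|\sigma|$. The base case $|\sigma|=N$ has every component a singleton, so $G|_\sigma$ is literally an independent set, clique, or $N$-cycle; Rules~\ref{rule:indep-set}, \ref{rule:clique}, and~\ref{rule:cycle} then give permittedness with indices $(-1)^{N-1}$, $+1$, and $+1$ respectively, matching the stated formulas since each singleton has $\idx=+1$. In the inductive step, pick a component $\sigma_l$ with $n_l\geq 2$, fix $i\in\sigma_l$, and set $\sigma' := (\sigma\setminus\sigma_l)\cup\{i\}$. The key observation is that $\sigma'$ is again a composite graph of the \emph{same} type with the \emph{same} skeleton $\widehat{G}$, because the singleton $\{i\}$ inherits exactly the external edges that all of $\sigma_l$ had. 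As $|\sigma'|<|\sigma|$ with all components permitted (trivially $\{i\}$), the inductive hypothesis yields $\sigma'$ permitted with the claimed index. Using the identity $s_i^{\sigma\setminus\sigma_l}=s_i^{\sigma'}$ together with $\sgn s_i^{\sigma'}=\idx(\sigma')$, we obtain $\sgn(\alpha_l)=\idx(\sigma')$ and therefore $\sgn s_i^\sigma = \idx(\sigma')\cdot\idx(\sigma_l)$, which simplifies via the IH to exactly $(-1)^{N-1}\prod_k\idx(\sigma_k)$ in the disjoint case and $\prod_k\idx(\sigma_k)$ in the clique and cyclic cases. Iterating over every $l$ with $n_l\geq 2$ yields the expected sign for $s_i^\sigma$ whenever $i$ lies in a non-singleton component.

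The main obstacle is propagating this sign to nodes in \emph{singleton} components $\sigma_m=\{v_m\}$, where the shrinking trick degenerates into the tautology $s_{v_m}^{\sigma\setminus\sigma_m}=s_{v_m}^\sigma$. For the disjoint and clique skeletons this is easy: one applies Theorem~\ref{thm:simply-added} with swapped roles ($\tau=\sigma\setminus\sigma_l$, $\omega=\sigma_l$), which is legitimate because in these two skeleton types any union of components is simply-added to every other, and then invokes the IH on $\sigma\setminus\sigma_l$ (still a composite graph of the same type on $N-1$ components). For the cyclic case with $N\geq 3$ this swap fails, since $\sigma_l$ is not simply-added to $\sigma\setminus\sigma_l$ when $N\geq 3$, and $\sigma\setminus\sigma_l$ has a path skeleton outside the scope of this theorem. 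I would close the cyclic case via the block-matrix identity
\[
\det(I-W_\sigma) \;=\; \Bigl(\prod_k \det(I-W_{\sigma_k})\Bigr)\cdot\det(I_N+AB),
\]
with $A_{kl}=1-\varepsilon$ if $l\to k$ in $\widehat{G}$, else $1+\delta$ ($A_{kk}=0$), and $B=\diag(b_k)$, $b_k=\sum_{i\in\sigma_k}x_i^{\sigma_k}/\theta\in(\tfrac{1}{1+\delta},\tfrac{1}{1-\varepsilon})$ by Lemma~\ref{lemma:total-bounds}. A block decomposition of $(I-W_\sigma)x^\sigma=\theta\one$ shows $x^\sigma|_{\sigma_k}=\beta_k\,x^{\sigma_k}$ with $\beta:=(I_N+AB)^{-1}\one$, so permittedness reduces to $\beta>0$ entry-wise, and $\idx(\sigma)=\prod_k\idx(\sigma_k)\cdot\sgn\det(I_N+AB)$. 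A cofactor expansion peeling off successive cycle edges, together with the bounds on the $b_k$, then delivers $\sgn\det(I_N+AB)=+1$ and $\beta>0$ in the cyclic case, completing the proof.
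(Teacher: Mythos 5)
Your overall plan—factor $s_i^\sigma$ via Theorem~\ref{thm:simply-added}, shrink a non-singleton component to a single vertex to invoke an inductive hypothesis on $|\sigma|$, then patch up the vertices in singleton components—is a genuinely different organization from the paper. For disjoint and clique unions the paper bypasses induction entirely by iterating the simply-added factorization to get $s_j^\sigma = \prod_\ell s_j^{\sigma_\ell}$ for \emph{every} $j\in[n]$ and then reading off all the signs at once; your shrinking induction can be made to work here too (your ``swap the roles of $\tau,\omega$'' step is legitimate since in these two skeletons every component is simply-added to the rest, though you still need the target rule for cliques and inside-out domination for disjoint unions to determine $\sgn s_{v_m}^{\sigma_l}$ before the IH on $\sigma\setminus\sigma_l$ closes the computation).

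The genuine gap is the cyclic case. Your block decomposition is valid: writing $I-W_\sigma = D+C$ with $D$ block-diagonal and noting $D^{-1}C = UAV^T$ with the claimed $A$, the Weinstein--Aronszajn identity gives $\det(I-W_\sigma)=\det(D)\det(I_N+AB)$, and the rank-one structure of the off-diagonal blocks does yield $x^\sigma|_{\sigma_k}=\beta_k\,x^{\sigma_k}$ with $(I_N+AB)\beta=\one$. But the two crucial positivity claims---$\det(I_N+AB)>0$ and $\beta>0$ for every cyclic skeleton and every admissible $b_k\in\bigl(\tfrac{1}{1+\delta},\tfrac{1}{1-\varepsilon}\bigr)$---are asserted, not proved, and ``a cofactor expansion peeling off successive cycle edges'' is not a concrete argument. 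The matrix $I_N+AB$ has all positive entries and is neither diagonally dominant nor an M-matrix, so positivity of the determinant and of $\beta$ is genuinely delicate (one can check the determinant degenerates to zero on the closure of the admissible box). I was unable to see how the stated bounds alone deliver both inequalities for all $N$, and without them the cyclic case of the theorem is not established. The paper sidesteps this entirely: Theorem~\ref{thm:cyclic-union} is proved by induction on $n$ (the number of vertices, for fixed $N$), using Lemma~\ref{lemma:cyclic-union} and Lemma~\ref{lemma:component-survival-iff} to show that any $\sigma\in\FP(G)$ must hit every component, then invoking Rule~\ref{rule:parity} (parity) to force the full support into $\FP(G)$ and to derive the index formula---no explicit positivity of any determinant is ever needed. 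You would either have to supply a real proof of the two positivity claims or replace that step with a parity-style argument as the paper does.
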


Note that the expressions for the indices, $\idx(\sigma)$, are identical or nearly identical between clique unions, cyclic unions, and disjoint unions. A simple rule of thumb is that the index of $\sigma$ in these unions is given by the product of the indices of all components $\sigma_i$ with the index of the skeleton. In the case of clique unions and cyclic unions, the skeletons have a unique (full support) fixed point, whose index is thus $+1$. For disjoint unions, the skeleton is an independent set of size $N$ with index $(-1)^{N-1}$ (see Rule~\ref{rule:indep-set}).

Disjoint unions, clique unions, and cyclic unions are all examples of composite graphs whose skeletons are permitted motifs. It is thus natural ask whether Theorem~\ref{thm:composite-permitted} holds for other composite graphs with permitted skeletons. In fact, we conjecture that this theorem holds more generally.

\begin{conjecture}\label{conjecture}
Let $G$ be a composite graph whose skeleton $\widehat G$ is a permitted motif. Then $G$ is a permitted motif if and only if every component $G_i$ is permitted.
\end{conjecture}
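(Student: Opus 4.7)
The ``only if'' direction is immediate from Theorem~\ref{thm:one-bad-apple}: one forbidden component forces $\sigma$ to be forbidden. For the ``if'' direction, I would argue by strong induction on $|\sigma|$ with the strengthened hypothesis that, whenever $\widehat{G}$ and each $G_i$ is permitted, not only is $G$ permitted but also $\idx(\sigma) = \idx(\widehat{G})\cdot\prod_{i=1}^N \idx(\sigma_i)$. The strengthening is essential: the natural inductive reduction produces a smaller composite graph with the \emph{same} skeleton $\widehat{G}$, and we will need both pieces of information about that smaller graph to close the induction. The base case $|\sigma|=N$ is immediate, since then $G=\widehat{G}$ and $\idx(\{v_k\})=+1$ for each singleton component.

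For the inductive step, pick any component $\sigma_i$ with $|\sigma_i|\geq 2$, which exists whenever $|\sigma|>N$. Because $\omega\od\sigma\setminus\sigma_i$ is simply-added to $\sigma_i$ in a composite graph, Theorem~\ref{thm:simply-added} yields $s_v^\sigma = \tfrac{1}{\theta}T_i\, s_v^{\sigma_i}$ for every $v\in\sigma_i$, where $T_i = s_{v_i}^{\sigma^{(i)}}$ is computed in the reduced composite graph $G^{(i)} \od G|_{\omega\cup\{v_i\}}$ obtained by shrinking $\sigma_i$ to a chosen representative $v_i$. The graph $G^{(i)}$ retains the skeleton $\widehat{G}$ (still permitted), its components $\{v_i\}$ and $\sigma_j$ for $j\neq i$ are all permitted, and $|\sigma^{(i)}|<|\sigma|$, so the inductive hypothesis yields $\sgn T_i = \idx(\sigma^{(i)}) = \idx(\widehat{G})\prod_{j\neq i}\idx(\sigma_j)$. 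Hence $\sgn s_v^\sigma = \idx(\widehat{G})\prod_k \idx(\sigma_k)$ for every $v$ in any non-singleton component, and peeling a different non-singleton component yields the same universal value, so the signs agree across all non-singleton vertices.

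The hard part is handling $v$ in a \emph{singleton} component $\sigma_k=\{v\}$: Theorem~\ref{thm:simply-added} applied with $\tau=\sigma_k$ degenerates into the tautology $s_v^\sigma = s_v^\sigma$ since $G^{(k)}=G$, so no size reduction is available. To bypass this, I would exploit the fact that in a composite graph \emph{both} off-diagonal blocks of $I-W_\sigma$ are rank one, $M_i=\mathbf{1}_{\sigma_i}\mathbf{c}^T$ and $N_i=\mathbf{d}\,\mathbf{1}_{\sigma_i}^T$, so the Schur complement formula combined with the matrix determinant lemma gives the identity
\[
\det(I-W_\sigma) = \det(I-W_{\sigma_i})\,\det(I-W_\omega)\,(1-\beta_i\gamma_\omega),
\]
with $\beta_i = \mathbf{1}_{\sigma_i}^T(I-W_{\sigma_i})^{-1}\mathbf{1}_{\sigma_i} > 0$ (since $G_i$ is permitted) and $\gamma_\omega = \mathbf{c}^T(I-W_\omega)^{-1}\mathbf{d}$, while the analogous identity for $G^{(i)}$ reads $\det(I-W_{\sigma^{(i)}}) = \det(I-W_\omega)(1-\gamma_\omega)$. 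Together with the block calculation $x_v^\sigma = \lambda_i\, x_v^{\sigma_i}$ for $v\in\sigma_i$, where $\lambda_i = (x_{v_i}^{\sigma^{(i)}}/\theta)\cdot(1-\gamma_\omega)/(1-\beta_i\gamma_\omega)$ falls out of the same block computation, the remaining task reduces to establishing $\sgn(1-\beta_i\gamma_\omega)=\sgn(1-\gamma_\omega)$. Once this positivity comparison is in hand, the index factorization $\idx(\sigma)=\idx(\widehat{G})\prod_k\idx(\sigma_k)$ follows, and plugging back into the Cramer identity $s_v^\sigma = x_v^\sigma\det(I-W_\sigma)$ forces the common sign on singleton vertices as well. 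I expect this sign comparison to be the delicate step, and would approach it by combining the positivity of $\beta_i$ with the general domination framework of Section~\ref{sec:domination} (Theorem~\ref{thm:domination}), since graphical domination alone will not suffice to control the non-strongly-forbidden behavior that governs the interaction between $\beta_i$ and $\gamma_\omega$.
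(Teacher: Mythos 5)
The statement you are proving is stated in the paper as a \emph{conjecture}, and the paper does not prove it; it only establishes the special cases in Theorem~\ref{thm:composite-permitted} (disjoint, clique, and cyclic unions) and Lemma~\ref{lemma:N4-skeleton}, whose proofs rely on the skeleton $\widehat G$ having a unique fixed point of full support with all proper subsets ruled out by graphical domination. Your sketch therefore cannot be compared to a paper proof; I evaluate it on its own terms.

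The ``only if'' direction and the base case are fine. The Schur complement identity $\det(I-W_\sigma) = \det(I-W_{\sigma_i})\det(I-W_\omega)(1-\beta_i\gamma_\omega)$, the analogous identity for the reduced graph $\sigma^{(i)}$, and the formula $\lambda_i = (x_{v_i}^{\sigma^{(i)}}/\theta)(1-\gamma_\omega)/(1-\beta_i\gamma_\omega)$ all check out, as does the observation that $\beta_i > 0$ when $\sigma_i$ is permitted. The application of Theorem~\ref{thm:simply-added} to peel a non-singleton component $\sigma_i$ and invoke the inductive hypothesis on the reduced composite graph $\sigma^{(i)}$ is a sensible framework.

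However there are two genuine gaps. First, you correctly isolate $\sgn(1-\beta_i\gamma_\omega)=\sgn(1-\gamma_\omega)$ as the crucial step, but you do not prove it, and it is not automatic: $\beta_i>0$ only guarantees that $\beta_i\gamma_\omega$ and $\gamma_\omega$ share a sign, not that $1-\beta_i\gamma_\omega$ and $1-\gamma_\omega$ do. When $\gamma_\omega>0$, the two can straddle~$1$ depending on whether $\beta_i$ is above or below $1$, and nothing in your argument controls this. Saying you would ``approach it by combining the positivity of $\beta_i$ with the general domination framework'' is a research direction, not a proof; this is exactly why the statement remains a conjecture.

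Second, even granting the sign comparison, the treatment of singleton components does not close. The block calculation yields $x_v^\sigma = \lambda_i\,x_v^{\sigma_i}$ only for $v\in\sigma_i$, the component being peeled. For $v$ in a singleton component $\sigma_k\subseteq\omega$, the relevant quantity is the $\omega$-block $z$ of $x^\sigma$, and the same block calculation gives $z = \theta B^{-1}\one - (\theta-\mu)\beta_i B^{-1}d$ versus $z^{(i)} = \theta B^{-1}\one - (\theta-\mu^{(i)}) B^{-1}d$ for the reduced graph; these are not scalar multiples of one another, so the inductive positivity of $z^{(i)}$ (from $\sigma^{(i)}$ being permitted) does not transfer to $z$. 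Your closing remark that ``plugging back into the Cramer identity $s_v^\sigma = x_v^\sigma\det(I-W_\sigma)$ forces the common sign on singleton vertices'' is not a valid inference: knowing $\idx(\sigma)$ and knowing $\sgn s_v^\sigma$ for $v$ in the non-singleton components says nothing about $\sgn s_v^\sigma$ for $v$ in the singletons. Some additional idea --- perhaps a parity argument in the spirit of the paper's cyclic-union proof, or a simultaneous peel of several components --- is needed to handle the singleton vertices, and the fact that the authors proved Theorem~\ref{thm:composite-permitted} only for skeletons with a very restrictive domination structure suggests that this obstacle is substantive.
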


Note that the forward direction is a direct consequence of Theorem~\ref{thm:one-bad-apple}. It is also worth noting that even if true, this conjecture does not cover all permitted motifs that are composite graphs.
For example, the graph in panel B2 of Figure~\ref{fig:counter-example-composite} is a permitted motif, but its skeleton is forbidden.

The proof of Theorem~\ref{thm:composite-permitted} is given in Section~\ref{sec:composite-permitted-proof}. There we provide separate proofs for the disjoint, clique, and cyclic unions, and also work out the full sets $\FP(G)$ in the case where all of $G$ has one of these special composite structures. Furthermore, these results imply that $\FP(G)$ is parameter independent whenever all the $\FP(G_i)$ are parameter independent (see Corollary~\ref{cor:param-independent}).  In Section~\ref{sec:composite-survival}, we consider survival rules for disjoint unions and clique unions that are embedded in larger graphs. 

But first, in the next section, we present some lemmas which help us to prove the theorems on composite graphs stated above. There we also prove Theorems~\ref{thm:one-bad-apple} and~\ref{thm:strongly-forbidden}.

\subsection{Some lemmas for composite graphs}

\begin{lemma}\label{lemma:composite-signs}
Let $G$ be a composite graph, and consider $\sigma \subseteq [n]$. For any component $G_i$ such that $\sigma_i = \sigma \cap \tau_i \neq \emptyset$,
$$\sgn s_j^{\sigma} = \sgn s_k^{\sigma}\; \Leftrightarrow \; \sgn s_j^{\sigma_i} = \sgn s_k^{\sigma_i}, \;\; \text{for all}\;\; j,k \in \tau_i.$$
\end{lemma}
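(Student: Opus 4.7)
The plan is to reduce the statement to a direct application of Theorem~\ref{thm:simply-added} in its general form. The key structural observation is that in a composite graph, for any component $G_i$ with vertex set $\tau_i$, the complement $[n]\setminus \tau_i$ is simply-added to $\tau_i$. Indeed, if $u\in \tau_j$ for some $j\neq i$, then by the definition of a composite graph, $u\to v$ for $v\in\tau_i$ holds iff $j\to i$ in the skeleton $\widehat{G}$; in particular this does not depend on $v$, so $u$ is either a projector or a non-projector onto $\tau_i$.

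With this observation in hand, I would set $\tau = \tau_i$ and $\omega = [n]\setminus \tau_i$, and note that $\sigma\cap\tau = \sigma_i$ and $\sigma\cap\omega = \sigma\setminus\sigma_i$. The generalized form of Theorem~\ref{thm:simply-added} then yields, for every $j\in \tau_i$,
\[
s_j^{\sigma} \;=\; \alpha\, s_j^{\sigma_i},
\qquad \text{where} \quad \alpha \;=\; \tfrac{1}{\theta}\,s_j^{\sigma\setminus\sigma_i},
\]
with the crucial property that $\alpha$ is independent of the choice of $j\in\tau_i$. By the nondegeneracy assumption on the TLN, $\alpha\neq 0$, so $\sgn\alpha \in\{\pm 1\}$ is a fixed, well-defined factor.

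The conclusion is then immediate: multiplying by a nonzero constant preserves sign comparisons, so for any $j,k\in \tau_i$,
\[
\sgn s_j^\sigma = \sgn s_k^\sigma
\;\iff\;
\sgn\alpha\cdot\sgn s_j^{\sigma_i} = \sgn\alpha\cdot\sgn s_k^{\sigma_i}
\;\iff\;
\sgn s_j^{\sigma_i} = \sgn s_k^{\sigma_i}.
\]
There is no real obstacle here; the content of the lemma is essentially the observation that each component sits inside a simply-added split of the full support, so the factorization from Theorem~\ref{thm:simply-added} applies verbatim. The only small care required is to invoke the generalized form of the simply-added theorem (rather than its special case for $\sigma = \tau\cup\omega$), since we want the conclusion to hold for all $j\in\tau_i$, not merely for $j\in\sigma_i$, and to use nondegeneracy to justify that $\alpha$ has a definite sign.
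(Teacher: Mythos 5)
Your proof is correct and follows essentially the same route as the paper's: observe that $[n]\setminus\tau_i$ (and hence $\sigma\setminus\sigma_i$) is simply-added to $\tau_i$, apply the general form of Theorem~\ref{thm:simply-added} to factor $s_j^\sigma = \alpha\, s_j^{\sigma_i}$ with $\alpha$ independent of $j\in\tau_i$, and conclude since $\alpha\neq 0$ by nondegeneracy. You spell out the nonvanishing of $\alpha$ a bit more explicitly than the paper does, but the argument is the same.
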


\begin{proof}
Observe that $[n] \setminus \tau_i$ is simply-added to $\tau_i$ for each $i \in [N]$, by the nature of the composite graph construction, and thus $\sigma \setminus \sigma_i$ is simply-added to $\tau_i$.  By Theorem~\ref{thm:simply-added}, $s_j^\sigma = \alpha s_j^{\sigma_i}$, where $\alpha = \frac{1}{\theta} s_j^{\sigma \setminus \sigma_i}$  is identical for all $j\in \tau_i$.  
\end{proof}

With this lemma, we can immediately prove Theorem~\ref{thm:one-bad-apple}.

\begin{proof}[Proof of Theorem~\ref{thm:one-bad-apple}] Suppose $G|_\sigma$ is a composite graph with components $G|_{\sigma_1}, \ldots, G|_{\sigma_N}$, and one of the $\sigma_i$ is a forbidden motif. Then, by Theorem~\ref{thm:sgn-condition} (sign conditions), there exists $j,k \in \sigma_i$ such that $\sgn s_j^{\sigma_i} \neq \sgn s_k^{\sigma_i}$, and therefore by Lemma~\ref{lemma:composite-signs} we have
$\sgn s_j^{\sigma} \neq \sgn s_k^{\sigma}$. Thus, $\sigma$ is a forbidden motif.
\end{proof}

\begin{lemma}\label{lemma:composite1}
If $\sigma \in \FP(G)$, then $\sigma_i \in \FP(G_i)$ for each $i \in [N]$ such that $\sigma_i \neq \emptyset$.
\end{lemma}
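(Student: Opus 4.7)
The plan is to reduce the statement to the sign-condition characterization of Theorem~\ref{thm:sgn-condition} and then to transport those sign conditions from the full support $\sigma$ down to a single component via Lemma~\ref{lemma:composite-signs}. The hypothesis $\sigma\in\FP(G)$ tells us exactly that the nonzero quantities $s_j^\sigma$, $j\in[n]$, split into two sign classes: a common sign $\epsilon=\idx(\sigma)$ on $\sigma$, and the opposite sign $-\epsilon$ on $[n]\setminus\sigma$. The goal is to show that for any component with $\sigma_i\neq\emptyset$, the analogous two-sign split holds for the quantities $s_j^{\sigma_i}$ on $\tau_i$, so that Theorem~\ref{thm:sgn-condition} applied in reverse yields $\sigma_i\in\FP(G_i)$.

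First I would invoke Lemma~\ref{lemma:composite-signs}: because $G$ is composite, $[n]\setminus\tau_i$ is simply-added to $\tau_i$, and hence $\sigma\setminus\sigma_i$ is simply-added to $\sigma_i$. Since $\sigma_i\neq\emptyset$, the lemma gives, for every $j,k\in\tau_i$, the equivalence $\sgn s_j^\sigma=\sgn s_k^\sigma \Leftrightarrow \sgn s_j^{\sigma_i}=\sgn s_k^{\sigma_i}$. By nondegeneracy all of these values are nonzero, so the same equivalence holds for strict inequality of signs.

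Next I would partition $\tau_i=\sigma_i\sqcup(\tau_i\setminus\sigma_i)$ and verify the two conditions in Theorem~\ref{thm:sgn-condition} for $\sigma_i$ in $G_i$. For any $j,k\in\sigma_i\subseteq\sigma$, Theorem~\ref{thm:sgn-condition} applied to $\sigma\in\FP(G)$ gives $\sgn s_j^\sigma=\sgn s_k^\sigma$, which transfers to $\sgn s_j^{\sigma_i}=\sgn s_k^{\sigma_i}$; this shows $\sigma_i$ is a permitted motif of $G_i$. For any $k\in\tau_i\setminus\sigma_i$ (so $k\notin\sigma$) and any $j\in\sigma_i$, Theorem~\ref{thm:sgn-condition} gives $\sgn s_k^\sigma=-\sgn s_j^\sigma$, and the lemma transfers this to $\sgn s_k^{\sigma_i}=-\sgn s_j^{\sigma_i}$. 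These are exactly the sign conditions characterizing $\sigma_i\in\FP(G_i)$, completing the proof.

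I do not expect any real obstacle here: once Lemma~\ref{lemma:composite-signs} is in hand, the argument is essentially bookkeeping on sign classes. The only subtlety is the hypothesis $\sigma_i\neq\emptyset$, which is both necessary to invoke the lemma and necessary for the conclusion $\sigma_i\in\FP(G_i)$ to make sense, since the empty set is never a fixed point support.
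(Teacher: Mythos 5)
Your proposal is correct and follows essentially the same route as the paper: invoke Theorem~\ref{thm:sgn-condition} to read $\sigma\in\FP(G)$ as a two-class sign pattern on $s_j^\sigma$ over $\tau_i$, then use Lemma~\ref{lemma:composite-signs} (which rests on the simply-added structure of composite graphs) to transport that pattern to $s_j^{\sigma_i}$, and apply Theorem~\ref{thm:sgn-condition} in the reverse direction inside $G_i$. The paper's proof is a more compact statement of exactly this argument, and your handling of the $\sigma_i\neq\emptyset$ hypothesis and the nonzero-sign/contrapositive detail is consistent with what the paper takes as understood.
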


\begin{proof}
If $\sigma \in \FP(G)$, then for any $j, k \in \sigma_i$ and $\ell \in \tau_i \setminus \sigma_i$, we have $\sgn s_j^\sigma = \sgn s_k^\sigma = - \sgn s_\ell^\sigma$ by Theorem~\ref{thm:sgn-condition} (sign conditions).  By Lemma~\ref{lemma:composite-signs}, we infer that $\sgn s_j^{\sigma_i} = \sgn s_k^{\sigma_i} = - \sgn s_\ell^{\sigma_i}$, and so $\sigma_i$ satisfies the sign conditions within $G_i$.  Thus, $\sigma_i \in \FP(G_i).$
\end{proof}

As an immediate consequence of Lemma~\ref{lemma:composite1}, we see that if any component $\sigma_i$ of $\sigma$ is forbidden (or non-surviving) in $G_i$, then $\sigma$ is forbidden (or non-surviving) in $G$.  The next lemma shows that if $\sigma$ is a permitted motif, then the survival of each of its components in $G$ is sufficient to guarantee the survival of $\sigma$.  

\begin{lemma}\label{lemma:composite2}
If $\sigma \in \FP(G|_\sigma)$ and $\sigma_i \in \FP(G_i)$ for each $i \in [N]$, then $\sigma \in \FP(G)$.
\end{lemma}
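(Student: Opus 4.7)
The plan is to verify the sign-condition characterization of $\FP(G)$ given by Theorem~\ref{thm:sgn-condition}. Since $\sigma \in \FP(G|_\sigma)$ is assumed, the first half of the sign conditions is already in hand: $\sgn s_j^\sigma = \sgn s_k^\sigma$ for all $j,k \in \sigma$. Thus the only thing left to check is the ``outside'' condition: for every $\ell \in [n]\setminus \sigma$ and every $j \in \sigma$, we have $\sgn s_\ell^\sigma = -\sgn s_j^\sigma$.

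First, I would observe that the hypothesis $\sigma_i \in \FP(G_i)$ for each $i \in [N]$ forces $\sigma_i \neq \emptyset$ for all $i$, so any $\ell \notin \sigma$ lies in some $\tau_i$ with $\sigma_i$ nonempty. Fixing such an $\ell \in \tau_i \setminus \sigma_i$, I would apply Theorem~\ref{thm:sgn-condition} \emph{inside} the component $G_i$: because $\sigma_i \in \FP(G_i)$, we get
\[
\sgn s_\ell^{\sigma_i} = -\sgn s_{j'}^{\sigma_i} \quad \text{for all } j' \in \sigma_i.
\]
Note the values $s_\ell^{\sigma_i}$ and $s_{j'}^{\sigma_i}$ are the same whether computed in $G_i$ or in $G$, since they depend only on $W$ restricted to indices in $\tau_i$.

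Next, I would invoke Lemma~\ref{lemma:composite-signs} to transfer sign (in)equalities between $\sigma_i$ and the full $\sigma$: for any $\ell, j' \in \tau_i$ with $\sigma_i \neq \emptyset$,
\[
\sgn s_\ell^{\sigma} = \sgn s_{j'}^{\sigma} \;\;\Longleftrightarrow\;\; \sgn s_\ell^{\sigma_i} = \sgn s_{j'}^{\sigma_i}.
\]
The right-hand side fails by the previous step, so $\sgn s_\ell^\sigma \neq \sgn s_{j'}^\sigma$. By nondegeneracy (Definition~\ref{def:nondegenerate}), all these signs are in $\{\pm 1\}$, so inequality of signs is equivalent to being opposite: $\sgn s_\ell^\sigma = -\sgn s_{j'}^\sigma$ for any $j' \in \sigma_i$. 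Finally, for an arbitrary $j \in \sigma$ (possibly in a different component than $\ell$), the internal sign condition for $\sigma$ gives $\sgn s_j^\sigma = \sgn s_{j'}^\sigma$, and combining yields $\sgn s_\ell^\sigma = -\sgn s_j^\sigma$, as required.

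Together with the internal sign condition from $\sigma \in \FP(G|_\sigma)$, all the hypotheses of Theorem~\ref{thm:sgn-condition} are met, so $\sigma \in \FP(G)$. The only subtle step is the careful bookkeeping when $j$ and $\ell$ lie in different components, which is dispatched by first reducing to a $j'$ in the same component as $\ell$; this is where Lemma~\ref{lemma:composite-signs} is doing the real work.
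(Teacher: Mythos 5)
Your proof is correct and takes essentially the same route as the paper's: both argue via the sign conditions (Theorem~\ref{thm:sgn-condition}), use the hypothesis $\sigma_i \in \FP(G_i)$ to get the sign conditions within each component, transfer them via Lemma~\ref{lemma:composite-signs}, and combine with the internal condition from $\sigma \in \FP(G|_\sigma)$. The only difference is that you spell out more explicitly the cross-component bookkeeping (reducing an arbitrary $j\in\sigma$ to a $j'$ in the same component as $\ell$), which the paper leaves implicit.
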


\begin{proof}
First, note that $\sigma_i \in \FP(G_i)$ implies $\sigma_i \neq \emptyset$ for each $i \in [N]$, and $\sgn s_j^{\sigma_i} = \sgn s_k^{\sigma_i} = -\sgn s_{\ell}^{\sigma_i}$ for all $j,k \in \sigma_i$ and $\ell \in \tau_i \setminus \sigma_i$. On the other hand,
$\sigma \in \FP(G|_\sigma)$ implies $\sgn s_j^\sigma = \sgn s_k^\sigma$ for all $j,k \in \sigma$. Now applying Lemma~\ref{lemma:composite-signs}, we see that $\sgn s_\ell^\sigma = - \sgn s_j^\sigma$ for $\ell \in [n]\setminus \sigma$ and $j \in \sigma$. It follows that the sign conditions are satisfied, and so $\sigma \in \FP(G)$.
\end{proof}

The next lemma tells us that graphical domination in the skeleton graph $\widehat{G}$ gets inherited to the composite graph $G$.

\begin{lemma}\label{lemma:composite-domination}
Let $G$ be a composite graph with skeleton $\widehat{G}$. If $\hat{k}$ graphically dominates $\hat \jmath$ with respect to $\widehat\sigma$ in $\widehat{G}$, then for any $k \in G_{\hat{k}}$ and any $j \in G_{\hat\jmath}$, we have that $k$ graphically dominates~$j$ with respect to $\sigma = \displaystyle \bigcup_{i \in \widehat\sigma} G_i$.
\end{lemma}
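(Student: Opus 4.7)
The plan is to directly verify the three conditions of Definition~\ref{def:graph-domination} for $k$ graphically dominating $j$ with respect to $\sigma$, exploiting the structural fact that in a composite graph, edges between two vertices lying in distinct components are completely determined by the skeleton $\widehat{G}$, while edges within a single component are unconstrained by the composite construction. Thus each condition in $\widehat{G}$ translates down to $G$ once one tracks which component each relevant vertex belongs to.

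As preliminaries, I would first establish that $\hat{k} \neq \hat\jmath$. Since $\widehat{G}$ is a simple graph and $\widehat{\sigma} \cap \{\hat\jmath, \hat{k}\} \neq \emptyset$, having $\hat\jmath = \hat{k}$ would force condition (2) or (3) of graphical domination in $\widehat{G}$ to demand a self-loop, which is impossible. Next, since at least one of $\hat\jmath, \hat{k}$ lies in $\widehat{\sigma}$, the corresponding component $G_{\hat\jmath}$ or $G_{\hat{k}}$ is contained in $\sigma$, so $\sigma \cap \{j, k\} \neq \emptyset$, as required by Definition~\ref{def:graph-domination}.

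Conditions (2) and (3) of the definition are then immediate. If $j \in \sigma$ then $\hat\jmath \in \widehat{\sigma}$, so condition (2) in the skeleton gives $\hat\jmath \to \hat{k}$; since $j$ and $k$ lie in distinct components, the composite rule promotes this to $j \to k$ in $G$. Condition (3) is handled symmetrically using $\hat{k} \not\to \hat\jmath$.

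The main step, and the only place real care is needed, is condition (1). Given $i \in \sigma \setminus \{j,k\}$ with $i \to j$, I would split on the component index $\hat\imath$ with $i \in G_{\hat\imath}$, noting that $\hat\imath \in \widehat{\sigma}$. If $\hat\imath \notin \{\hat\jmath, \hat{k}\}$, then the edge $i \to j$ crosses components and forces $\hat\imath \to \hat\jmath$ in $\widehat{G}$; condition (1) applied in the skeleton then yields $\hat\imath \to \hat{k}$, hence $i \to k$ in $G$. If $\hat\imath = \hat\jmath$, then $i \to k$ is a cross-component edge from $G_{\hat\jmath}$ to $G_{\hat{k}}$, so it suffices to see $\hat\jmath \to \hat{k}$, which is condition (2) of the skeleton applied to $\hat\jmath \in \widehat{\sigma}$. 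The case $\hat\imath = \hat{k}$ is vacuous: the assumed edge $i \to j$ would force $\hat{k} \to \hat\jmath$ in $\widehat{G}$, contradicting condition (3). The mild subtlety worth flagging is that when $i$ shares a component with $j$ or $k$, the edge $i \to j$ may be internal to a component, so condition (1) of the skeleton does not apply directly and one must instead invoke condition (2) or (3) of the skeleton; this is the only real obstacle, and handling it cleanly completes the proof.
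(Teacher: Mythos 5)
Your proof is correct and follows essentially the same strategy as the paper's: verify the three conditions of graphical domination for $j, k$ in $G$ directly, using the fact that cross-component edges are dictated by the skeleton $\widehat{G}$. You are in fact slightly more careful than the published proof — you separately check $\hat\jmath \neq \hat{k}$ and explicitly handle the case $\hat\imath = \hat{k}$ (showing it is vacuous via condition~(3)), whereas the paper lumps that case into its ``different components'' branch and implicitly relies on it never occurring.
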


\begin{proof}
Let $[n]$ denote the vertices of $G$, and $[N]$ the vertices of $\widehat{G}$.  Consider $\widehat \sigma \subseteq [N]$ and suppose there exists $\hat \jmath, \hat k \in [N]$ such that $\hat k$ graphically dominates $\hat \jmath$ with respect to $\widehat\sigma$.  By definition of graphical domination, this means (1) for any $\hat \imath \in \widehat\sigma$, $\hat \imath \to \hat \jmath$ implies $\hat \imath \to \hat k$; (2) if $\hat\jmath \in \widehat \sigma$, then $\hat \jmath \to \hat k$; and (3) if $\hat k \in \widehat \sigma$, then $\hat k \not\to \hat \jmath$.  We will show that in $G$, any $j \in G_{\hat \jmath}$ and $k \in G_{\hat k}$ satisfy these same properties, and so $k$ graphically dominates $j$ with respect to $\sigma$.  

By definition of the composite graph, $\hat \jmath \to \hat k$ and $\hat k \not\to \hat \jmath$ in the skeleton immediately imply $j \to k$ and $k \not\to j$ in $G|_\sigma$, and so satisfying conditions (2) and (3) in the skeleton guarantees they are satisfied in $G$.  Now we turn to condition (1).  Consider $i \in \sigma$ such that $i \to j$ in $G|_\sigma$.  If $i$ and $j$ are both in component $\sigma_{\hat \jmath}$, then $j \in \sigma$ and so $\hat\jmath \in \widehat \sigma$.  Thus condition (2) holds and $\hat \jmath \to \hat k$ in $\widehat G|_{\widehat \sigma}$, which implies that $i \to k$ in $G|_\sigma$.  If they are in different components, i.e.\ $i \in \sigma_{\hat\imath}$ for some $\hat\imath \neq \hat \jmath$, then by definition of a composite graph, $i \to j$ implies $\hat \imath \to \hat \jmath$ in $\widehat G|_{\widehat \sigma}$, and so by (1), we must have $\hat \imath \to \hat k$ and thus $i \to k$ in $G|_\sigma$.  Thus, for any $i \in \sigma$, $i \to j$ implies $i \to k$ and condition (1) is satisfied.  Hence $k$ graphically dominates $j$ with respect to $\sigma$, and so $G|_\sigma$ is strongly forbidden.
\end{proof}

As an immediate consequence, we obtain Theorem~\ref{thm:strongly-forbidden}, since inside-in graphical domination within any graph implies that it is strongly forbidden.

\subsection{Proof of Theorem~\ref{thm:composite-permitted} (disjoint, clique, and cyclic unions)}\label{sec:composite-permitted-proof}

To prove Theorem~\ref{thm:composite-permitted}, we separately consider composite graphs that are disjoint unions, clique unions, and cyclic unions.  We characterize the full $\FP(G)$ of these graphs.  In the next section, we provide partial survival rules when these motifs are embedded in larger graphs. 

As before, in this section we assume $G$ is a composite graph on $n$ vertices, with skeleton $\widehat{G}$ and components $G_1,\ldots,G_N$. For any $\sigma \subseteq [n]$, we denote $\sigma_i = \sigma \cap \tau_i$, where $\tau_i$ is the set of vertices in $G_i$. Note that $[n] = \cup_i \tau_i$, and $G_i = G|_{\tau_i}$.

\begin{theorem}[disjoint union]\label{thm:disjoint-unions}
 Let $G$ be a disjoint union of components $G_1,\ldots,G_N$. For any nonempty $\sigma \subseteq [n]$, 
$$\sigma \in \FP(G) \quad \Leftrightarrow \quad \sigma_i  \in \FP(G_i) \cup \emptyset ~~\text{ for all } i \in [N].$$
Moreover, for any $\sigma \in \FP(G)$, the index is given by
$$\idx(\sigma) = (-1)^{|\widehat \sigma|-1}\prod_{i\in \widehat \sigma} \idx(\sigma_i), \;\; \text{where}\;\;  \widehat \sigma\od \{i \in [N]~|~ \sigma_i \neq \emptyset\}.$$
\end{theorem}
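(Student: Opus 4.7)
The forward direction is immediate from Lemma~\ref{lemma:composite1}: if $\sigma \in \FP(G)$, then every nonempty $\sigma_i$ satisfies $\sigma_i \in \FP(G_i)$. For the backward direction, I would assume $\sigma_i \in \FP(G_i) \cup \{\emptyset\}$ for every $i \in [N]$ with $\sigma$ nonempty, and set $\widehat\sigma = \{i \in [N] : \sigma_i \neq \emptyset\}$. The plan is to verify the sign condition of Theorem~\ref{thm:sgn-condition} directly, by computing $\sgn s_j^\sigma$ for every $j \in [n]$ and showing that it is one constant value on $\sigma$ and the opposite constant value on $[n]\setminus\sigma$.

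The main technical input will be the following observation: if $\sigma' \in \FP(G|_{\sigma'})$ and $j \notin \sigma'$ is disconnected from $\sigma'$ in $G$ (no edges in either direction), then $\sgn s_j^{\sigma'} = -\idx(\sigma')$. To prove this, pick any $i \in \sigma'$; under the disconnection hypothesis, conditions (1)--(3) of Definition~\ref{def:graph-domination} are all vacuously or explicitly satisfied, so $i$ inside-out dominates $j$ with respect to $\sigma'$. Theorem~\ref{thm:graph-domination}(c) then gives $\sigma' \in \FP(G|_{\sigma'\cup\{j\}})$, and the sign conditions yield $\sgn s_j^{\sigma'} = -\sgn s_i^{\sigma'} = -\idx(\sigma')$.

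Because $G$ is a disjoint union, every vertex outside a component $\tau_k$ is a non-projector onto $\tau_k$, so iterating Theorem~\ref{thm:simply-added} (peeling off one component of $\widehat\sigma$ at a time, each time choosing the ambient $\tau$ to contain the component of $j$) will produce the uniform factorization
\[
s_j^\sigma \;=\; \theta^{\,1-|\widehat\sigma|}\prod_{k \in \widehat\sigma} s_j^{\sigma_k}\qquad \text{for every } j \in [n].
\]
I would then read off the sign of each factor. For any $k \in \widehat\sigma$ such that $j \notin \tau_k$, the vertex $j$ is disconnected from $\sigma_k$, and the key observation above gives $\sgn s_j^{\sigma_k} = -\idx(\sigma_k)$. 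If $j \in \tau_i$ with $i \in \widehat\sigma$, the remaining factor $s_j^{\sigma_i}$ has sign $\idx(\sigma_i)$ when $j \in \sigma_i$ and sign $-\idx(\sigma_i)$ when $j \in \tau_i \setminus \sigma_i$, by the sign conditions applied inside $G_i$ (using Corollary~\ref{cor:on-off-conds} to transfer $\sigma_i \in \FP(G_i)$ to $\sigma_i \in \FP(G_i|_{\sigma_i\cup\{j\}})$ in the second case). Assembling these contributions across the three geometric possibilities for $j$ gives
\[
\sgn s_j^\sigma = \begin{cases} (-1)^{|\widehat\sigma|-1}\prod_{k\in\widehat\sigma}\idx(\sigma_k) & \text{if } j \in \sigma,\\ (-1)^{|\widehat\sigma|}\prod_{k\in\widehat\sigma}\idx(\sigma_k) & \text{if } j \notin \sigma,\end{cases}
\]
which are opposite in sign. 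Theorem~\ref{thm:sgn-condition} then delivers $\sigma \in \FP(G)$, and taking $j \in \sigma$ reads off the claimed index formula.

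The main obstacle I anticipate is the bookkeeping needed to justify that the iterated application of Theorem~\ref{thm:simply-added} genuinely yields the \emph{same} product $\theta^{1-|\widehat\sigma|}\prod_k s_j^{\sigma_k}$ regardless of which of the three cases $j$ falls into (inside $\sigma$; outside $\sigma$ but in a component meeting $\sigma$; outside $\sigma$ in a fresh component), and to verify the sign of the last factor $s_j^{\sigma_i}$ in the case $j \in \tau_i \setminus \sigma_i$, where $j$ may have arbitrary edges to and from $\sigma_i$ within $G_i$ and so the disconnected key observation does not directly apply.
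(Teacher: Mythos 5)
Your proposal is correct and follows essentially the same route as the paper: forward via Lemma~\ref{lemma:composite1}, backward by iterating the simply-added factorization of Theorem~\ref{thm:simply-added} to get $s_j^\sigma = \theta^{1-|\widehat\sigma|}\prod_{k\in\widehat\sigma} s_j^{\sigma_k}$, then reading off signs and invoking Theorem~\ref{thm:sgn-condition}. The two ``obstacles'' you flag at the end are not actual gaps: (i) the factorization does hold for every $j \in [n]$, because in a disjoint union each $\sigma_k$ is simply-added to any set disjoint from $\tau_k$, so in the inductive step one may take $\tau = (\text{already-accumulated }\sigma_{k'}) \cup \{j\}$ and $\omega = \sigma_k$ — this puts $j$ on the $\tau$ side as Theorem~\ref{thm:simply-added} requires, in all three of your geometric cases; and (ii) you already handled the sign of $s_j^{\sigma_i}$ for $j \in \tau_i\setminus\sigma_i$ correctly via Corollary~\ref{cor:on-off-conds} within $G_i$ — the paper handles the same point slightly more compactly by first noting that each $\sigma_i$ survives all the way to $\FP(G)$ (inside-out domination for $k \notin \tau_i$ together with the hypothesis $\sigma_i \in \FP(G_i)$ for $k \in \tau_i\setminus\sigma_i$), so that $\sgn s_j^{\sigma_i} = \idx(\sigma_i)$ iff $j \in \sigma_i$ holds uniformly for all $j$. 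That packaging difference aside, your proof is the paper's proof.
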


In particular, Theorem~\ref{thm:disjoint-unions} tells us that
$$\FP(G) = \{\sigma \subseteq [n] \mid \sigma_i \in  \FP(G_i) \cup \emptyset \text{ for each } i \in [N]\},$$
and so,
$$|\FP(G)| = \prod_{i=1}^N (|\FP(G_i)|+1) -1.$$
For example, if $|\FP(G_i)| = 1$ for each $i \in [N]$, as in the case of the independent set (where every component is a single node), then $|\FP(G)| = 2^N-1.$

\begin{proof}
$(\Rightarrow)$ This follows immediately from Lemma~\ref{lemma:composite1}.

\noindent $(\Leftarrow)$  To simplify notation, we fix $\theta = 1$.\footnote{Note that $\FP(G)$ never depends on the value of $\theta$, provided $\theta>0$, so this choice cannot affect the results.} Suppose $\sigma$ has the property that $\sigma_i \in \FP(G_i)$ for each $i \in \widehat{\sigma}$, where $\widehat \sigma\od \{i \in [N]~|~ \sigma_i \neq \emptyset\}$. Consider $j \in \sigma_i$, and observe that for any $k \neq i$, $\sigma_k$ is simply-added to $\sigma_i$. Using Theorem~\ref{thm:simply-added}, we thus obtain
$s_j^{\sigma_i \cup \sigma_k} = s_j^{\sigma_i}s_j^{\sigma_k}.$
Moreover, for any other $\ell \in \widehat{\sigma}\setminus\{i,k\}$, we also have that $\sigma_\ell$ is simply-added to $\sigma_i \cup \sigma_k$, and so
$$s_j^{\sigma_i \cup \sigma_k \cup \sigma_\ell } = s_j^{\sigma_i}s_j^{\sigma_k}s_j^{\sigma_\ell}.$$
Continuing in this fashion yields
$$s_j^\sigma = \prod_{\ell \in \widehat \sigma} s_j^{\sigma_\ell}.$$
Furthermore, the above formula also holds for $j \notin \sigma$. This is because $j \in \tau_k$ for some $k$, and all other (nonempty) $\sigma_i$ are simply-added to $\tau_k$, not just to $\sigma_k$.

Note that for each $\sigma_i \in \FP(G_i)$, $\sigma_i$ survives as a fixed point in $\FP(G)$ by inside-out domination. Therefore, for any $j \in [n]$, $\sgn s_j^{\sigma_i} = \idx(\sigma_i)$ if and only if $j \in \sigma_i$.
For $j \in \sigma$, we have $j \in \sigma_i$ for exactly one $i \in \widehat{\sigma}$, and thus
$$\sgn s_j^\sigma = \sgn s_j^{\sigma_i} \prod_{\ell \in \widehat \sigma \setminus i} \sgn s_j^{\sigma_\ell} = \idx(\sigma_i) \prod_{\ell \in \widehat \sigma \setminus i} (-\idx(\sigma_\ell)) = (-1)^{|\widehat \sigma|-1} \prod_{\ell \in \widehat \sigma} \idx(\sigma_\ell).$$
Since the sign of $s_j^\sigma$ is the same for all $j\in\sigma$, we see $\sigma$ is permitted and the index $\idx(\sigma)$ must match the sign of $s_j^\sigma$ (by Theorem~\ref{thm:sgn-condition}).
On the other hand, for $j \notin \sigma$, 
$$\sgn s_j^\sigma = \prod_{\ell \in \widehat \sigma} \sgn s_j^{\sigma_\ell} =  \prod_{\ell \in \widehat \sigma} (- \idx(\sigma_\ell))= (-1)^{|\widehat \sigma|} \prod_{\ell \in \widehat \sigma} \idx(\sigma_\ell) = -\idx(\sigma).$$
By Theorem~\ref{thm:sgn-condition}, we conclude that $\sigma \in \FP(G)$ with  $\idx(\sigma) = (-1)^{|\widehat \sigma|-1} \prod_{\ell \in \widehat \sigma} \idx(\sigma_\ell)$. 
\end{proof}

The next theorem characterizes the full $\FP(G)$ whenever $G$ is a clique union.

\begin{theorem}[clique union]\label{thm:clique-unions}
 Let $G$ be a graph on $n$ vertices that is a clique union of $N$ subgraphs $G_1, \ldots, G_N$.  For any $\sigma \subseteq [n]$,
$$\sigma \in \FP(G) \quad \Leftrightarrow \quad \sigma_i \in \FP(G_i)~~\text{ for all } i \in [N].$$
Moreover, if $\sigma \in \FP(G)$ then $\idx(\sigma) = \prod_{i=1}^N \idx(\sigma_i)$.
\end{theorem}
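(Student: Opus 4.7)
The plan is to prove the two directions separately. The forward direction will follow from graphical domination together with Lemma~\ref{lemma:composite1}. The reverse direction requires a factorization of $s_j^\sigma$ obtained by iterated simply-added splits, together with a sign calculation that relies on the bounds in Lemma~\ref{lemma:total-bounds}.

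For the forward direction, assume $\sigma \in \FP(G)$. I first show $\sigma_i \neq \emptyset$ for every $i \in [N]$. If some $\sigma_i = \emptyset$, pick $\ell \in \tau_i$ (so $\ell \notin \sigma$) and any $j \in \sigma$, which necessarily lies in some $\sigma_k$ with $k \neq i$. Because $\widehat G$ is a clique, every $m \in \sigma \setminus \{j\}$ lies in some $\tau_p$ with $p \neq i$, so $m \to \ell$ (as $p \to i$ in $\widehat G$), and also $j \to \ell$. The conditions of Definition~\ref{def:graph-domination} are thus all satisfied, so $\ell$ outside-in dominates $j$ with respect to $\sigma$ and Theorem~\ref{thm:graph-domination}(b) gives $\sigma \notin \FP(G)$, a contradiction. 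Lemma~\ref{lemma:composite1} then yields $\sigma_i \in \FP(G_i)$ for every $i$.

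For the reverse direction, assume $\sigma_i \in \FP(G_i)$ for each $i$. The key step is to establish
$$s_j^\sigma \;=\; \theta^{\,1-N}\prod_{\ell=1}^N s_j^{\sigma_\ell} \qquad \text{for every } j \in \sigma$$
by iterated application of Theorem~\ref{thm:simply-added}. Fix $j \in \sigma_i$ (WLOG $i=1$). Every vertex of $\sigma_p$ with $p \neq 1$ is a projector onto $\sigma_1$ (clique skeleton), so $\sigma \setminus \sigma_1$ is simply-added to $\sigma_1$, and Theorem~\ref{thm:simply-added} yields $s_j^\sigma = \theta^{-1} s_j^{\sigma_1} s_j^{\sigma \setminus \sigma_1}$. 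To peel off $\sigma_2$, I would rewrite $s_j^{\sigma\setminus\sigma_1}$ as $s_j^{(\sigma\setminus\sigma_1)\cup\{j\}}$ using~\eqref{eq:s_k} and apply Theorem~\ref{thm:simply-added} again with $\tau = \sigma_2 \cup \{j\}$ and $\omega = \sigma_3 \cup \cdots \cup \sigma_N$, which is a valid simply-added split by the clique structure. Iterating in this fashion produces the displayed formula after $N-1$ steps.

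Next comes the sign calculation. For $j \in \sigma_i$, Theorem~\ref{thm:sgn-condition} gives $\sgn s_j^{\sigma_i} = \idx(\sigma_i)$ since $\sigma_i$ is permitted in $G_i$. For $\ell \neq i$, $j$ is a projector onto $\sigma_\ell$, so $W_{jm} = -1+\varepsilon$ for every $m \in \sigma_\ell$ and~\eqref{eq:s_k-identity} rearranges to
$$s_j^{\sigma_\ell} \;=\; \det(I-W_{\sigma_\ell})\!\left[\theta - (1-\varepsilon)\sum_{m \in \sigma_\ell} x_m^{\sigma_\ell}\right],$$
where the bracket is positive by Lemma~\ref{lemma:total-bounds}, so $\sgn s_j^{\sigma_\ell} = \idx(\sigma_\ell)$. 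Combining, $\sgn s_j^\sigma = \prod_\ell \idx(\sigma_\ell)$ is constant across $j \in \sigma$, so Theorem~\ref{thm:sgn-condition} yields that $\sigma$ is permitted in $G|_\sigma$ with $\idx(\sigma) = \prod_\ell \idx(\sigma_\ell)$, and Lemma~\ref{lemma:composite2} promotes this to $\sigma \in \FP(G)$. The main obstacle will be exactly this sign calculation for $j \notin \sigma_\ell$: the sign conditions alone do not decide it, and the argument depends critically on the sharp upper bound in Lemma~\ref{lemma:total-bounds} combined with the fact that the clique-union structure forces $j$ to project to every vertex of $\sigma_\ell$.
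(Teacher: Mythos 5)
Your proof is correct and follows the paper's overall structure: the forward direction via domination (the paper invokes Rule~\ref{rule:target}, which you re-derive directly from outside-in graphical domination), and the reverse direction via the factorization $s_j^\sigma = \theta^{1-N}\prod_\ell s_j^{\sigma_\ell}$ followed by a sign analysis and an appeal to the composite-graph lemmas. The one genuinely different step is how you determine $\sgn s_j^{\sigma_\ell}$ for $j \notin \tau_\ell$. The paper argues graph-theoretically: since the skeleton is a clique, $j$ is a target of $\sigma_\ell$, so $\sigma_\ell \notin \FP(G|_{\sigma_\ell \cup j})$ by Rule~\ref{rule:target}, and Theorem~\ref{thm:sgn-condition} then forces $\sgn s_j^{\sigma_\ell} = \idx(\sigma_\ell)$. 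You instead compute $s_j^{\sigma_\ell}$ explicitly from~\eqref{eq:s_k-identity}, using that $j$ projects onto $\sigma_\ell$, to get $s_j^{\sigma_\ell} = \det(I-W_{\sigma_\ell})\bigl[\theta - (1-\varepsilon)\sum_m x_m^{\sigma_\ell}\bigr]$, and then invoke the sharp upper bound of Lemma~\ref{lemma:total-bounds} to see the bracket is positive. Both work; the paper's route is shorter and stays within the graph-rule machinery, while yours is an analytic re-derivation of that special case of the target rule. (Note the "obstacle'' you flag is already resolved in the paper without Lemma~\ref{lemma:total-bounds}, via Rule~\ref{rule:target}.) Two further minor comparisons: your factorization peels components off $\sigma$ using the trick $s_j^{\sigma\setminus\sigma_1} = s_j^{(\sigma\setminus\sigma_1)\cup\{j\}}$, whereas the paper builds $\sigma$ up one component at a time — yours requires justifying the auxiliary split $(\sigma_2\cup\{j\},\ \sigma_3\cup\cdots\cup\sigma_N)$, which does hold for cliques but is an extra check; and you conclude with Lemma~\ref{lemma:composite2}, while the paper verifies $\sgn s_j^\sigma = -\idx(\sigma)$ for $j\notin\sigma$ directly, which is essentially the same content.
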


Note that the theorem implies that the total number of fixed points satisfies:
$$|\FP(G)| = \prod_{i=1}^N |\FP(G_i)|.$$
In particular, if each component $G_i$ has a unique fixed point, then $G$ has a unique fixed point.

\begin{proof} $(\Rightarrow)$ Suppose $\sigma \in \FP(G)$. Lemma~\ref{lemma:composite1} guarantees that for any $i \in [N]$, $\sigma_i \in \FP(G_i)$ or $\sigma_i = \emptyset$.  If we assume there exists an $i \in [N]$ such that $\sigma_i = \emptyset$, then any $k \in \tau_i$ is a target of $\sigma$, and thus by Rule~\ref{rule:target} we cannot have $\sigma \in \FP(G),$ a contradiction. We thus conclude that $\sigma_i \neq \emptyset$, and so $\sigma_i \in \FP(G_i),$ for all $i \in [N]$.

$(\Leftarrow)$ To simplify notation, we fix $\theta = 1$.   Suppose $\sigma_i \in \FP(G_i)$ for all $i \in [N]$. In particular, each $\sigma_i$ is nonempty.
As with disjoint unions, clique unions have the property that each $\sigma_\ell$ is simply-added to any union of the other components. We thus have the same formula as in the proof of Theorem~\ref{thm:disjoint-unions}, where
for any $j \in [n]$ and $\sigma \subseteq [n]$ the value $s_j^\sigma$ factors as:
$$s_j^\sigma = \prod_{\ell \in [N]} s_j^{\sigma_\ell}.$$
Note that here the product is necessarily over all $\ell \in [N],$ since each $\sigma_\ell$ is nonempty.  Moreover, for any $\ell \in [N]$ such that $j \notin \tau_\ell$,  $j$ is a target of $\sigma_\ell,$ and so $\sigma_\ell \notin \FP(G|_{\sigma_\ell \cup j})$. This implies, by Theorem~\ref{thm:sgn-condition}, that $\sgn s_j^{\sigma_\ell} = \idx(\sigma_\ell)$ for all $j \notin \tau_\ell$.

To show that $\sigma \in \FP(G)$, we must show that $\sgn s_j^\sigma = \idx(\sigma)$ for each $j \in \sigma$, and has opposite sign if $j \notin \sigma$ (see Theorem~\ref{thm:sgn-condition}). Consider $j \in \sigma$, and observe that $j \in \sigma_i$ for some $i$. Clearly, $\sgn s_j^{\sigma_i} = \idx(\sigma_i)$. By the argument above, we also have $\sgn s_j^{\sigma_\ell} = \idx(\sigma_\ell)$ for each $\ell \in [N]\setminus i$.
Therefore, using the above product formula for $s_j^\sigma$, 
 we have $\sgn s_j^\sigma = \prod_{\ell=1}^N \idx(\sigma_\ell)$.  Since the sign matches for all $j \in \sigma$, we have that $\sigma$ is permitted and also that $\idx(\sigma) = \prod_{\ell=1}^N \idx(\sigma_\ell),$ as desired. 
 
 On the other hand, for any $j \notin \sigma$, there exists an $i$ such that $j \in \tau_i \setminus \sigma_i$.  Since $\sigma_i \in \FP(G_i)$, $\sgn s_j^{\sigma_i} = - \idx(\sigma_i)$, while for all other $\ell \in [N]\setminus i$, we have $j \notin \tau_\ell$ and so $\sgn s_j^{\sigma_\ell} = \idx(\sigma_\ell)$.  Thus, 
$$\sgn s_j^\sigma = -\idx(\sigma_i) \prod_{\ell \in [N] \setminus i} \idx(\sigma_\ell) = - \prod_{\ell=1}^N \idx(\sigma_\ell)= - \idx(\sigma).$$
By Theorem~\ref{thm:sgn-condition}, we conclude that $\sigma \in \FP(G)$.
\end{proof}

We conclude this section with Theorem~\ref{thm:cyclic-union} and its proof, characterizing $\FP(G)$ when $G$ is a cyclic union.

\begin{theorem}[cyclic union]\label{thm:cyclic-union}
Let $G$ be a cyclic union of components $G_{1},\ldots,G_{N}.$ Then
$$\sigma \in \FP(G) \;\; \Leftrightarrow \;\; \sigma_i \in \FP(G_{i}) \;\; \text{for all}\; i \in [N].$$
Moreover, if $\sigma \in \FP(G)$, then $\idx(\sigma) = \prod_{i=1}^N \idx(\sigma_i)$.
\end{theorem}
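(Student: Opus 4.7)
The plan is to mirror the two-part structure of the disjoint and clique union proofs (Theorems~\ref{thm:disjoint-unions} and~\ref{thm:clique-unions}): show each direction separately, handling the forward implication via Lemma~\ref{lemma:composite1} plus a skeleton-domination argument, and the backward implication via an explicit construction of the fixed point using Lemma~\ref{lemma:composite2}.

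\emph{Forward direction.} By Lemma~\ref{lemma:composite1}, every nonempty $\sigma_i$ already lies in $\FP(G_i)$, so it suffices to rule out $\sigma_i = \emptyset$. Suppose $\hat\sigma := \{i \in [N] : \sigma_i \neq \emptyset\}$ is a proper subset of $[N]$. Working in the cyclic skeleton $\widehat G$, I would pick any ``arc-start'' $\hat\jmath \in \hat\sigma$ whose predecessor $\hat\jmath - 1$ is not in $\hat\sigma$ (such a $\hat\jmath$ exists since $\hat\sigma$ is proper) and let $\hat k := \hat\jmath + 1$. A direct check of Definition~\ref{def:graph-domination} shows $\hat k$ graphically dominates $\hat\jmath$ with respect to $\hat\sigma$: the only in-neighbor of $\hat\jmath$ in the cycle is $\hat\jmath - 1$, which is absent from $\hat\sigma$, so condition~(1) is vacuous; $\hat\jmath \to \hat k$ by the cycle edge; and $\hat k \not\to \hat\jmath$ since $\widehat G$ is simple. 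By Lemma~\ref{lemma:composite-domination}, this domination lifts to $G$, and Theorem~\ref{thm:graph-domination}(a) (if $\hat k \in \hat\sigma$) or (b) (if $\hat k \notin \hat\sigma$) forces $\sigma \notin \FP(G)$, a contradiction.

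\emph{Backward direction.} Assume each $\sigma_i \in \FP(G_i)$. By Lemma~\ref{lemma:composite2}, it is enough to verify that $\sigma$ is a permitted motif, i.e., that $x^\sigma := (I - W_\sigma)^{-1}\theta\one > 0$. The key observation is that the off-diagonal blocks of $I - W_\sigma$ connecting distinct components are rank-one with constant entries ($(1 - \varepsilon) J$ for the predecessor block, $(1 + \delta) J$ otherwise). Let $y^{(i)} := (I - W_{\sigma_i})^{-1} \theta\one > 0$ be the local fixed point of component $G_i$ and $S_i := \one^\top y^{(i)} > 0$. Using the ansatz $x^\sigma|_{\sigma_i} = c_i\, y^{(i)}$ and the identity $J y^{(j)} = S_j \one$, the block equations reduce to the $N \times N$ cyclic linear system
\begin{equation*}
c_i + (1-\varepsilon)\, a_{i-1}\, c_{i-1} + (1+\delta)\!\!\!\sum_{j \neq i,\, i-1}\!\!\! a_j\, c_j \;=\; 1, \qquad i \in [N],
\end{equation*}
with $a_j := S_j / \theta > 0$. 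The plan is to establish that this system has a unique strictly positive solution $c$; then $x^\sigma = \bigoplus_i c_i y^{(i)} > 0$ gives the desired fixed point, and Lemma~\ref{lemma:composite2} closes the argument. Invertibility of the coefficient matrix $M$ is automatic from nondegeneracy of $I - W_\sigma$ (any $c$ with $Mc = 0$ would produce a nonzero kernel vector for $I - W_\sigma$). For the index, applying a block Schur-complement identity to $I - W_\sigma$ and using the rank-one structure of the off-diagonal blocks yields $\det(I - W_\sigma) = \bigl(\prod_i \det(I - W_{\sigma_i})\bigr)\cdot \det M$, and $\det M > 0$ (since $c = M^{-1}\one > 0$), so $\idx(\sigma) = \prod_i \idx(\sigma_i)$.

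\emph{Main obstacle.} Unlike the disjoint and clique cases, cyclic unions with $N \geq 3$ do \emph{not} admit the pairwise simply-added factorization $s_j^\sigma = \theta^{1-N}\prod_k s_j^{\sigma_k}$, because $\tau_2 \cup \cdots \cup \tau_N$ is not simply-added to $\tau_1$ (only the predecessor $\tau_N$ projects into $\tau_1$, while the other components do not). Hence the clean sign-product argument that drove those proofs is unavailable, and the global block structure must be analyzed directly. The hardest step will be verifying that the cyclic system $Mc = \one$ has a positive solution: this requires combining the bounds $\tfrac{1}{1+\delta} < a_j < \tfrac{1}{1-\varepsilon}$ (Lemma~\ref{lemma:total-bounds}) with the specific alternating coefficient pattern $(1-\varepsilon, 1+\delta, \ldots, 1+\delta)$ inherited from the cycle to deduce $M^{-1}\one > 0$, likely by exhibiting a suitably scaled cycle-symmetric comparison solution and exploiting a monotonicity/Perron-type argument.
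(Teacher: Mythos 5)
Your forward direction is correct and essentially reproduces the paper's Lemma~\ref{lemma:cyclic-union}: the skeleton of a ``partial'' $\hat\sigma \subsetneq [N]$ always admits graphical domination, which lifts to $G$ via Lemma~\ref{lemma:composite-domination}. (Your concrete choice of $\hat\jmath,\hat k$ at an arc boundary is a slightly tidier way to handle the proper-source and independent-set subcases simultaneously.)

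The backward direction is where you genuinely diverge from the paper, and where the gaps are. The paper does \emph{not} try to solve the block linear system directly. Instead it uses complete induction on $n$ (for fixed $N$): for $|\sigma|<n$ it applies the inductive hypothesis to the smaller cyclic union $G|_\sigma$; for $|\sigma|=n$ it invokes Theorem~\ref{thm:parity} — if $[n]\notin\FP(G)$ then $|\FP(G)|=\prod_i|\FP(G_i)|-1$ would be even (each factor is odd), contradicting parity. The index formula then falls out of the parity sum via $1=\sum_{\sigma\in\FP(G)}\idx(\sigma)=\prod_i\bigl(\sum_\omega\idx(\omega)\bigr)-\prod_i\idx(\tau_i)+\idx([n])$. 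This bypasses both of the hard steps you flag.

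Two concrete gaps in your version. (1) Positivity of $c=M^{-1}\one$ is asserted, not shown, and is not routine: $M=I_N+B\,\diag(a)$ has strictly \emph{positive} off-diagonal entries $\beta_{ij}a_j$, so $M$ is not an M-matrix, and inverse-positivity is exactly the wrong kind of claim for such a matrix. The suggested ``Perron-type'' comparison is plausible for the cycle-symmetric case ($\one$ is then an eigenvector with positive eigenvalue) but you have not made it work for arbitrary $a_j\in(\tfrac{1}{1+\delta},\tfrac{1}{1-\varepsilon})$. (2) The step ``$\det M>0$ since $c=M^{-1}\one>0$'' is a non sequitur: a matrix can have $M^{-1}\one>0$ while $\det M<0$ (e.g. $M=\bigl(\begin{smallmatrix}-1&2\\1&0\end{smallmatrix}\bigr)$ has $M^{-1}\one=(1,1)^T$ but $\det M=-2$). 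So even granting the positivity of $c$, your index argument does not close. Your Sylvester-type factorization $\det(I-W_\sigma)=\bigl(\prod_i\det(I-W_{\sigma_i})\bigr)\det M$ is correct and potentially useful, but you need a separate argument that $\det M>0$; the paper instead derives the index formula from parity and never needs this.

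Net assessment: the route you chose is legitimately different (and the block-reduction ansatz is forced by the rank-one off-diagonal structure, so it is a sound starting point), but the two steps you identify as ``the plan'' and ``the hardest step'' are precisely the ones the paper's induction-plus-parity argument was designed to avoid, and neither is carried out.
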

Note that, just as with the clique union, the theorem implies that for cyclic unions:
$$|\FP(G)| = \prod_{i=1}^N |\FP(G_i)|.$$
To prove Theorem~\ref{thm:cyclic-union}, we'll need the following lemmas.

\begin{lemma}\label{lemma:cyclic-union}
Let $G$ be a cyclic union of components $G_{1},\ldots,G_{N}$.   If $\sigma \in \FP(G)$, then $\sigma_i \neq \emptyset$ for all $i \in [N]$.
\end{lemma}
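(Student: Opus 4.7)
The plan is to argue by contradiction using graphical domination (Theorem~\ref{thm:graph-domination}). Assume $\sigma \in \FP(G)$ but some $\sigma_j = \emptyset$. Since $\sigma$ is a nonempty fixed-point support, walking around the cyclic skeleton $1 \to 2 \to \cdots \to N \to 1$ produces an index $j$ (taken mod $N$) with $\sigma_j = \emptyset$ and $\sigma_{j+1} \neq \emptyset$; fix any $a \in \sigma_{j+1}$. The crucial structural fact I would isolate is that in the cyclic-union construction, $\tau_j$ is the only component whose vertices send edges to $\tau_{j+1}$, so once $\sigma_j = \emptyset$, every in-neighbor of $a$ in $G|_\sigma$ must already lie in $\sigma_{j+1}$.

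Next I would exhibit a node $k$ that graphically dominates $a$ with respect to $\sigma$, splitting into two cases according to whether $\sigma_{j+2}$ is empty. If $\sigma_{j+2} \neq \emptyset$, pick $k \in \sigma_{j+2}$ (an inside-in dominator); otherwise $\tau_{j+2}$ is still nonempty (each $G_i$ is a nonempty component), so pick any $k \in \tau_{j+2}$ (an outside-in dominator). In both cases I verify the three conditions of Definition~\ref{def:graph-domination}: (1) every in-neighbor of $a$ in $\sigma \setminus \{a,k\}$ lies in $\sigma_{j+1}$, and is therefore sent to $k$ by the skeleton edge $j+1 \to j+2$; (2) $a \to k$ along the same skeleton edge; (3) if $k \in \sigma$ then the cyclic structure forbids $j+2 \to j+1$ (for $N \geq 3$), so $k \not\to a$. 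For $N = 2$, we necessarily have $\sigma_{j+2} = \sigma_j = \emptyset$, so we land in the outside-in case where condition (3) is vacuous, and the argument still goes through.

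Finally I invoke Theorem~\ref{thm:graph-domination}(a) in the inside-in case to conclude $\sigma \notin \FP(G|_\sigma)$ and hence $\sigma \notin \FP(G)$, and Theorem~\ref{thm:graph-domination}(b) in the outside-in case to conclude $\sigma \notin \FP(G|_{\sigma \cup \{k\}})$, hence $\sigma \notin \FP(G)$ by Corollary~\ref{cor:on-off-conds}. Either way, we contradict $\sigma \in \FP(G)$, which forces every $\sigma_i$ to be nonempty.

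The main obstacle I anticipate is simply keeping the cyclic index arithmetic and the three graphical-domination conditions consistent, particularly at the boundary case $N = 2$ where $j+2 \equiv j \pmod{N}$ and the usual ``$j+2 \not\to j+1$'' reasoning breaks down; the case split above sidesteps this cleanly by forcing $N=2$ into the outside-in branch. Once the right transition index $j$ is identified and the skeleton's asymmetry is exploited, no genuinely new computation beyond the structural properties of cyclic unions is required.
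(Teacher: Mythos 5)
Your proof is correct and takes essentially the same route as the paper's: both locate the empty-to-nonempty transition around the cycle, produce a graphical domination, and contradict $\sigma \in \FP(G)$ via Theorem~\ref{thm:graph-domination} (with Corollary~\ref{cor:on-off-conds} in the outside-in case). The only difference is one of packaging: the paper reasons at the skeleton level, observing that $\widehat{G}|_{\widehat{\sigma}}$ must contain a proper source or be an independent set, exhibits domination in $\widehat{G}$, and lifts it to $G$ via Lemma~\ref{lemma:composite-domination}, whereas you verify the three domination conditions directly in $G$, effectively inlining the content of that lifting lemma.
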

\begin{proof}
Let $\sigma \in \FP(G)$ and $\widehat{\sigma} = \{i \in [N]~|~ \sigma_i \neq \emptyset\}$.  Suppose for the sake of contradiction that $\widehat \sigma \neq [N]$.  Then the skeleton $\widehat G|_{\widehat\sigma}$ of $\sigma$ is a proper subset of a cycle, and so the skeleton must either contain a proper source or be an independent set.  If $\widehat G|_{\widehat\sigma}$ contains a proper source, then $\widehat \sigma$ has inside-in graphical domination, and so by Lemma~\ref{lemma:composite-domination}, that same domination lifts to inside-in graphical domination in $\sigma$.  On the other hand, if $\widehat \sigma$ is an independent set, then there is outside-in graphical domination of $\widehat \sigma$ by any external node that receives an edge from $\widehat \sigma$.  Thus again by Lemma~\ref{lemma:composite-domination} that domination relationship lifts to outside-in graphical domination of $\sigma$.  But then $\sigma \notin \FP(G)$ by Theorem~\ref{thm:graph-domination}, yielding a contradiction.  Thus, we must have $\widehat \sigma = [N]$, i.e. $\sigma_i \neq \emptyset$ for all $i \in [N]$.
\end{proof}

The next lemma is an immediate consequence of Lemmas~\ref{lemma:composite1} and~\ref{lemma:composite2}.
\begin{lemma}\label{lemma:component-survival-iff}
Let $\sigma \subseteq [n]$ have nontrivial overlap with each component of $G$, so that $\sigma_i \neq \emptyset$ for each $i \in [N]$. Then 
$$\sigma \in \FP(G) \Leftrightarrow \sigma \in \FP(G|_\sigma) \text{ and } \sigma_i \in \FP(G_i) \text{ for each } i \in [N].$$
\end{lemma}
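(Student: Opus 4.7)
The plan is to prove this biconditional by directly chaining together the two preceding lemmas, treating each direction separately. The hypothesis that $\sigma_i \neq \emptyset$ for every $i \in [N]$ is what allows the implication in Lemma~\ref{lemma:composite1} to apply uniformly across all components, avoiding the ``or $\sigma_i = \emptyset$'' caveat that complicated earlier arguments.

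For the forward direction, I would assume $\sigma \in \FP(G)$. Lemma~\ref{lemma:composite1} tells me that $\sigma_i \in \FP(G_i)$ for every $i \in [N]$ with $\sigma_i \neq \emptyset$, and by hypothesis this covers all $i \in [N]$. To extract the statement $\sigma \in \FP(G|_\sigma)$, I would cite Corollary~\ref{cor:on-off-conds}, whose equivalence (1) $\Leftrightarrow$ (2) immediately yields $\sigma \in \FP(G|_\tau)$ for every $\tau \supseteq \sigma$; taking $\tau = \sigma$ gives the needed ``$\sigma$ is a permitted motif'' condition.

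For the reverse direction, I would simply invoke Lemma~\ref{lemma:composite2} verbatim: its hypotheses are exactly $\sigma \in \FP(G|_\sigma)$ together with $\sigma_i \in \FP(G_i)$ for every $i \in [N]$, and its conclusion is $\sigma \in \FP(G)$.

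There is no real obstacle here; the substantive content has already been established in Lemmas~\ref{lemma:composite1} and~\ref{lemma:composite2}, and the lemma just repackages them as a single clean biconditional under the convenient assumption that $\sigma$ meets every component. The only thing to be careful about is making explicit that ``$\sigma \in \FP(G)$ implies $\sigma \in \FP(G|_\sigma)$'' uses the permitted-motif restriction step from Corollary~\ref{cor:on-off-conds}, rather than being part of Lemma~\ref{lemma:composite1} itself.
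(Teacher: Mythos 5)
Your proof is correct and matches the paper's approach exactly: the paper simply observes that this lemma is an immediate consequence of Lemmas~\ref{lemma:composite1} and~\ref{lemma:composite2}, which is precisely the chaining you carry out. Your extra care in attributing the step ``$\sigma \in \FP(G)$ implies $\sigma \in \FP(G|_\sigma)$'' to Corollary~\ref{cor:on-off-conds} is a correct and slightly more explicit account of what the paper leaves implicit.
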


We are now ready to prove the cyclic union theorem.

\begin{proof}[Proof of Theorem~\ref{thm:cyclic-union} (cyclic union)]
We prove this by complete induction on $n$, the total number of vertices of $G$, and for fixed number of components $N$. The base case is $n = N$, so that $G$ is an $n$-cycle. In this case there is a unique fixed point of full support (by the remarks following Rule~\ref{rule:cycle}), and every component is a single vertex with index $1$. The result trivially holds. 

Now assume the inductive hypothesis that the theorem statement holds for all cyclic unions with $N$ components and $m$ vertices, where $N \leq m < n$. By Lemma~\ref{lemma:cyclic-union} and Lemma~\ref{lemma:component-survival-iff}, 
$$\sigma \in \FP(G) \Leftrightarrow \sigma \in \FP(G|_\sigma) \text{ and } \sigma_i \in \FP(G_{i}) \text{ for all } i \in [N].$$
So to prove the first theorem statement for a graph of $n$ vertices, it suffices to show that if $\sigma_i \in \FP(G_{i})$ for all $i \in [N],$ then $\sigma \in \FP(G|_\sigma)$. 

Suppose $|\sigma| < n$, and suppose $\sigma_i \in \FP(G_{i})$ for all $i \in [N]$. Recall that if $\sigma_i \in \FP(G_{i})$, then $\sigma_i \in \FP(G|_{\sigma_i})$. Applying the inductive hypothesis to $G|_\sigma$, we see that this in turn implies $\sigma \in \FP(G|_\sigma)$, as desired. Moreover, the index formula holds because $\idx(\sigma)$ is the same as what it was in the smaller graph $G|_\sigma$, and is thus given immediately by the inductive hypothesis.

Now assume $|\sigma| = n$. This means $G|_\sigma = G$, and $\sigma_i=\tau_i$ for each $i \in [N]$.
By Lemma~\ref{lemma:component-survival-iff}, $\sigma \in \FP(G)$ implies $\sigma_i \in \FP(G_{i})$ for all $i \in [N]$. So what remains is to show the converse direction, and that the index formula holds. Suppose $\tau_i \in \FP(G_{i})$ for each $i \in [N]$. If $\sigma = [n] \notin \FP(G)$, then $|\FP(G)| = \prod_{i \in [N]}|\FP(G_{i})| -1$, since all the smaller elements of $\FP(G)$ are indeed given by picking a fixed point support from each component graph $G_{i}$. By parity, each $|\FP(G_{i})|$ is odd, and hence the product of these terms is odd. It follows that $|\FP(G)|$ is even, contradicting parity for $G$. We conclude, then, that we must have $\sigma \in \FP(G)$.

Finally, we show that the index formula holds for $\sigma = [n]$ (by assumption, it holds for $|\sigma|<n$). Using Theorem~\ref{thm:parity} (the index theorem) for $\FP(G)$, we compute
\begin{eqnarray*}
1 &=& \sum_{\sigma \in \FP(G)} \idx(\sigma) = \sum_{\sigma \in \FP(G)\setminus [n]} \idx(\sigma) + \idx([n])\\
&=& \sum_{\sigma \in \FP(G)\setminus [n]} \left(\prod_{i \in [N]} \idx(\sigma_i)\right) + \idx([n])\\
&=&  \prod_{i \in [N]}\left(\sum_{\omega \in \FP(G_{i})} \idx(\omega)\right) - \prod_{i \in [N]} \idx(\tau_i) + \idx([n])\\
&=& 1- \prod_{i \in [N]} \idx(\tau_i) + \idx([n]).
\end{eqnarray*}
Note that in the last equality we have used the index theorem to see that the indices sum to $1$ for each $\FP(G_{i})$. It follows that $\idx([n]) = \prod_{i \in [N]} \idx(\tau_i)$, as desired.
\end{proof}

It is worth thinking about what properties of cycles made the induction argument in Theorem~\ref{thm:cyclic-union} work. Namely, we needed $\widehat G$ to be a permitted motif and for every proper subset of its vertices to be either strongly forbidden or guaranteed not to survive as a fixed point support due to outside-in graphical domination, as in Lemma~\ref{lemma:cyclic-union}. In other words, $\FP(\widehat G)$ must have a unique fixed point that has full support, with all proper subsets excluded via graphical domination. Note that cliques also have this property, so Theorem~\ref{thm:clique-unions} could have been proven with a similar inductive argument. 

The following lemma provides an example of how the inductive argument can generalize beyond clique unions and cyclic unions. 

\begin{figure}[!ht]
\begin{center}
\includegraphics[width=3in]{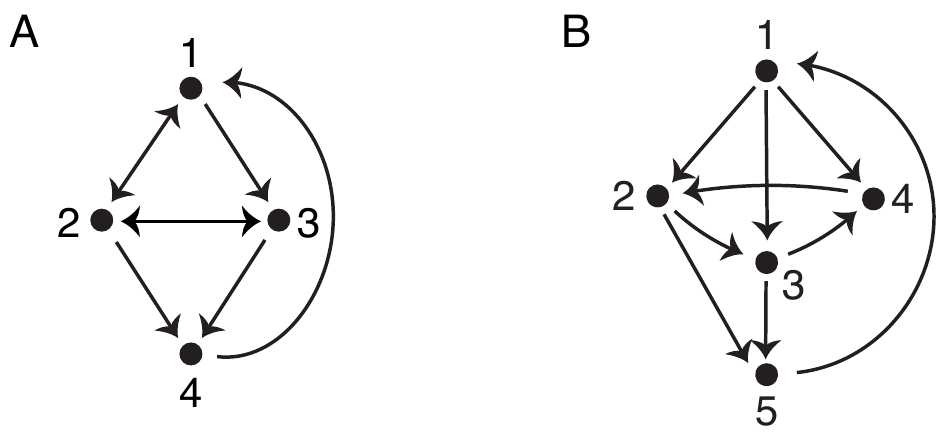}
\caption{The two skeletons $\widehat G$ considered in Lemma~\ref{lemma:N4-skeleton}.}
\label{fig:N4-skeleton}
\end{center}
\vspace{-.15in}
\end{figure}

\begin{lemma}\label{lemma:N4-skeleton}
Let $G$ be a composite graph with one of the skeletons $\widehat G$ depicted in Figure~\ref{fig:N4-skeleton}, with $N = 4$ or $N = 5$. Then 
$$\sigma \in \FP(G) \;\; \Leftrightarrow \;\; \sigma_i \in \FP(G_{i}) \;\; \text{for all } i \in [N].$$
Moreover, if $\sigma \in \FP(G)$, then $\idx(\sigma) = \prod_{i=1}^N \idx(\sigma_i)$.
\end{lemma}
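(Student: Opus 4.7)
The plan is to mirror the inductive strategy of Theorem~\ref{thm:cyclic-union} (cyclic union) almost verbatim. As observed in the remarks following that proof, the argument works not because $\widehat{G}$ is specifically a cycle, but because it has two structural features: (i) $\FP(\widehat{G})$ consists of the unique fixed point of full support $[N]$, and (ii) every proper subset $\widehat{\sigma} \subsetneq [N]$ fails to lie in $\FP(\widehat{G})$ either because it is strongly forbidden (inside-in graphical domination) or because some external vertex outside-in dominates a vertex inside. If (i) and (ii) hold for a skeleton, the cyclic-union proof transfers with essentially no change.

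The first step, which is really the only graph-specific step, is to verify (i) and (ii) for each of the two skeletons in Figure~\ref{fig:N4-skeleton}. This is a finite enumeration: for each proper subset $\widehat{\sigma} \subsetneq [N]$ with $N \in \{4,5\}$, I would either exhibit $\hat{\jmath}, \hat{k} \in \widehat{\sigma}$ so that $\hat{k}$ graphically dominates $\hat{\jmath}$ with respect to $\widehat{\sigma}$, or identify a vertex $\hat{k} \notin \widehat{\sigma}$ that outside-in dominates some $\hat{\jmath} \in \widehat{\sigma}$. Together with parity (Rule~\ref{rule:parity}), ruling out all proper subsets forces the full support to be the unique element of $\FP(\widehat{G})$, giving both (i) and (ii).

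With (i) and (ii) in hand, I would proceed by complete induction on $n = |V(G)|$ with $N$ fixed. The base case $n = N$ reduces to property (i). For the inductive step, I would establish the analog of Lemma~\ref{lemma:cyclic-union}: if $\sigma \in \FP(G)$, then $\sigma_i \neq \emptyset$ for every $i \in [N]$. Indeed, otherwise $\widehat{\sigma} = \{i : \sigma_i \neq \emptyset\}$ is a proper subset of $[N]$, and property (ii) together with Lemma~\ref{lemma:composite-domination} would lift the graphical domination from $\widehat{G}$ up to $G$, contradicting $\sigma \in \FP(G)$ via Theorem~\ref{thm:graph-domination}. Once all $\sigma_i$ are nonempty, Lemma~\ref{lemma:component-survival-iff} gives $\sigma \in \FP(G) \Leftrightarrow \sigma \in \FP(G|_\sigma)$ and $\sigma_i \in \FP(G_i)$ for all $i$. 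For $|\sigma| < n$, applying the inductive hypothesis to $G|_\sigma$ (which is itself a composite graph with the same skeleton) supplies the conclusion and the product index formula. For $|\sigma| = n$, the forward direction is immediate, and the converse plus the index formula follow from the parity computation carried out identically to the cyclic union case: expanding $\sum_{\sigma \in \FP(G)} \idx(\sigma) = 1$ and using the inductive index formula on proper fixed point supports collapses to $1 - \prod_i \idx(\tau_i) + \idx([n]) = 1$, forcing $[n] \in \FP(G)$ with the claimed index.

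The main obstacle is step one: establishing property (ii) for all $2^N - 2$ proper nonempty subsets of each skeleton. For $N = 5$ in particular this is a nontrivial bookkeeping task, and care must be taken for subsets that look like independent sets in the skeleton (handled by outside-in domination from any vertex receiving two edges from the subset) versus subsets containing a source relative to $\widehat{\sigma}$ (handled by inside-in domination). Once this verification is complete, everything else is a direct transcription of the cyclic union proof.
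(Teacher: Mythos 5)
Your proposal is correct and follows essentially the same route as the paper's proof sketch: verify that the skeleton is a permitted motif whose proper subsets are all excluded via graphical domination (inside-in or outside-in), then transcribe the cyclic-union induction using Lemma~\ref{lemma:composite-domination}, Lemma~\ref{lemma:component-survival-iff}, and the index/parity computation. The only cosmetic divergence is that for the $N=4$ skeleton the paper observes directly that $\widehat G$ has uniform in-degree $2$ (hence is permitted) rather than invoking parity, but your parity-based argument works equally well for both skeletons.
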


\begin{proof}[Proof sketch]
First, consider the $N=4$ skeleton in Figure~\ref{fig:N4-skeleton}A. In this case, $\widehat G$ has uniform in-degree $2$, and is thus a permitted motif. It is straightforward to check that all proper subsets of $\widehat G$ are either strongly forbidden or do not survive as fixed point supports due to outside-in graphical domination.  Thus, an analogue of Lemma~\ref{lemma:cyclic-union} holds, and Lemma~\ref{lemma:component-survival-iff} applies. The induction proof of Theorem~\ref{thm:cyclic-union} can therefore be easily adapted to this case, yielding the desired result.

For the $N=5$ skeleton (Figure~\ref{fig:N4-skeleton}B), we can again check that all proper subsets of $\widehat G$ are either strongly forbidden or do not survive as fixed point supports due to outside-in graphical domination. By Rule~\ref{rule:parity} (parity), the full graph must be a permitted motif. The remaining arguments then follow exactly as in the previous case.
\end{proof}

We are now ready to prove Theorem~\ref{thm:composite-permitted}.
Putting together Theorems~\ref{thm:disjoint-unions}, \ref{thm:clique-unions} and \ref{thm:cyclic-union}, we immediately obtain the proof.

\begin{proof}[Proof of Theorem~\ref{thm:composite-permitted}]
To see the first statement, apply Theorems~\ref{thm:disjoint-unions}, \ref{thm:clique-unions} and \ref{thm:cyclic-union} for $G = G|_\sigma$ a disjoint union, clique union, and cyclic union, respectively. Since each $\sigma_i$ is nonempty, it follows in every case that $\sigma \in \FP(G|_\sigma)$ if and only if $\sigma_i \in \FP(G|_{\sigma_i})$ for all $i \in [N]$. Hence, $\sigma$ is a permitted motif if and only if each $\sigma_i$ is a permitted motif. The index formulas also follow directly from Theorems~\ref{thm:disjoint-unions}, \ref{thm:clique-unions} and \ref{thm:cyclic-union}; note that in the case of the disjoint union, $\widehat{\sigma} = [N]$ and $|\widehat{\sigma}| = N$ because we assume here that each $\sigma_i$ is nonempty.
\end{proof}

Observe that Theorems~\ref{thm:disjoint-unions}, \ref{thm:clique-unions} and \ref{thm:cyclic-union} show that for any disjoint, clique, or cyclic union, $\FP(G)$ can only depend on the parameters $\varepsilon$ and $\delta$ via the $\FP(G_i)$.  Thus, if $\FP(G_i)$ is parameter independent for all $i$, then $\FP(G)$ is also parameter independent.  

\begin{corollary}\label{cor:param-independent}
 Let $G$ be a disjoint, clique, or cyclic union of components $G_1,\ldots,G_N$ with vertex sets $\tau_1, \ldots, \tau_N$.  Then
 $$\FP(G) \text{ is parameter independent}\; \;  \Leftrightarrow \; \; \FP(G_i) \text{ is parameter independent for all }i \in [N].$$
 In particular, if $|\tau_i| \leq 4$ for all $i \in [N]$, then $\FP(G)$ is parameter independent.
\end{corollary}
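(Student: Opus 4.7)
The plan is to derive both implications directly from Theorems~\ref{thm:disjoint-unions}, \ref{thm:clique-unions}, and~\ref{thm:cyclic-union}, which together characterize $\FP(G)$ as a set-theoretic function of the collections $\FP(G_i)$, and then to invoke Theorem~\ref{thm:n4-param-independent} for the ``in particular'' clause.

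For the reverse direction ($\Leftarrow$), I would observe that in each of the three cases the characterization theorems express $\FP(G)$ entirely in terms of the $\FP(G_i)$: in the disjoint case, $\sigma \in \FP(G)$ iff each $\sigma_i \in \FP(G_i) \cup \{\emptyset\}$ and at least one $\sigma_i$ is nonempty; in the clique or cyclic case, $\sigma \in \FP(G)$ iff $\sigma_i \in \FP(G_i)$ for every $i \in [N]$. Since the skeleton $\widehat G$ and the partition $\{\tau_i\}$ are fixed data that do not depend on $\varepsilon$ or $\delta$, if each $\FP(G_i)$ is constant across the legal range, then $\FP(G)$ is determined as the same set-valued function of these inputs for every $(\varepsilon,\delta)$, and so is parameter independent.

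For the forward direction ($\Rightarrow$), the key point is that each $\FP(G_i)$ can be recovered from $\FP(G)$ by a parameter-free procedure. In the disjoint case, Theorem~\ref{thm:disjoint-unions} gives $\FP(G_i) = \{\sigma \in \FP(G) : \sigma \subseteq \tau_i\}$. In the clique and cyclic cases, every element of $\FP(G)$ has nontrivial overlap with every component (from Theorem~\ref{thm:clique-unions} and Lemma~\ref{lemma:cyclic-union}), so I would instead show $\FP(G_i) = \{\sigma \cap \tau_i : \sigma \in \FP(G)\}$: given any $\omega \in \FP(G_i)$, parity (Rule~\ref{rule:parity}) ensures that each $\FP(G_j)$ is nonempty for $j \neq i$, so one can pick some $\omega_j \in \FP(G_j)$ for each $j \neq i$ and assemble $\sigma = \omega \cup \bigcup_{j \neq i} \omega_j$, which belongs to $\FP(G)$ by the reverse direction applied to the component fixed points $\omega_j$ and $\omega$. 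Thus $\FP(G_i)$ is a fixed function of $\FP(G)$ and the combinatorial data, so parameter independence of $\FP(G)$ forces parameter independence of each $\FP(G_i)$.

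For the last sentence, if $|\tau_i| \leq 4$ for all $i$, then Theorem~\ref{thm:n4-param-independent} gives that each $\FP(G_i)$ is parameter independent, whence the reverse direction just established yields parameter independence of $\FP(G)$. I expect no serious obstacle: the only point requiring care is the realizability step in the forward direction for clique and cyclic unions, and this is immediately handled by Rule~\ref{rule:parity} together with the reverse direction applied to the chosen component fixed points.
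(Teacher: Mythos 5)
Your proof is correct, and it is actually more complete than the paper's own treatment. The paper's surrounding text only argues the easy direction: it observes that Theorems~\ref{thm:disjoint-unions}, \ref{thm:clique-unions}, and~\ref{thm:cyclic-union} express $\FP(G)$ purely as a function of the $\FP(G_i)$, so parameter-independence of the components transfers to the union, and then the ``in particular'' clause is read off from Theorem~\ref{thm:n4-param-independent}. Your $\Leftarrow$ direction and ``in particular'' clause follow that route exactly. The paper never explicitly proves the forward direction of the stated equivalence; you supply it with a genuine recovery argument: $\FP(G_i)$ is a parameter-free function of $\FP(G)$ and the partition, namely $\{\sigma \in \FP(G) : \sigma \subseteq \tau_i\}$ in the disjoint case and $\{\sigma \cap \tau_i : \sigma \in \FP(G)\}$ in the clique and cyclic cases, with the reverse inclusion for the latter two secured by parity (which guarantees each $\FP(G_j)$ is nonempty, so a full $\sigma$ realizing any given $\omega \in \FP(G_i)$ can be assembled). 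All the details check out: the nonemptiness of fixed point supports means the disjoint-case restriction needs no $\emptyset$ exception, and the clique/cyclic characterization theorems already force $\sigma_j \neq \emptyset$ for all $j$, so the projection map lands in $\FP(G_i)$. What the extra direction buys is that the corollary's biconditional is actually justified rather than left implicit, at the cost of a paragraph-length argument the paper elides.
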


Note that the last statement follows from Theorem~\ref{thm:n4-param-independent}.

\subsection{Survival rules for disjoint unions and clique unions}\label{sec:composite-survival}

Next we consider disjoint, clique, and cyclic unions that are embedded in a larger graph, $G$, which is not assumed to have any special composite structure. Proposition~\ref{prop:disjoint-union-survival} shows that a {\it necessary} condition for a disjoint union $\sigma$ to survive in the larger graph is that {\it every} component $\sigma_i$ is itself a surviving fixed point support.  In contrast, Proposition~\ref{prop:clique-union-survival} shows that a {\it sufficient} condition for a clique union fixed point to survive in $G$ is that at least {\it one} component $\sigma_i$ survives the addition of the other vertices in $G$.
The proofs of these results rely on general domination, and so we save them for Section~\ref{sec:domination-simply-added}.

\begin{proposition}[survival of disjoint union]\label{prop:disjoint-union-survival}
Let $G|_\sigma$ be a subgraph of $G$ that is a disjoint union of components $G|_{\sigma_1}, \ldots, G|_{\sigma_N}$.  If $\sigma_i \notin \FP(G)$ for any $i \in [N]$, then $\sigma \notin \FP(G)$. 
\end{proposition}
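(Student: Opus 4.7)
The plan is to split into two cases, depending on why $\sigma_i \notin \FP(G)$. In Case~A, $\sigma_i$ is already a forbidden motif, i.e.\ $\sigma_i \notin \FP(G|_{\sigma_i})$. Since $G|_\sigma$ is a composite graph with components $G|_{\sigma_1},\dots,G|_{\sigma_N}$, Theorem~\ref{thm:one-bad-apple} immediately gives that $\sigma$ is a forbidden motif, hence $\sigma \notin \FP(G|_\sigma)$, and therefore $\sigma \notin \FP(G)$ by Corollary~\ref{cor:on-off-conds}.

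In Case~B, $\sigma_i \in \FP(G|_{\sigma_i})$ but $\sigma_i \notin \FP(G)$. By Corollary~\ref{cor:on-off-conds} I would extract some $\ell \in [n]\setminus\sigma_i$ with $\sigma_i \notin \FP(G|_{\sigma_i \cup \{\ell\}})$, and first show that $\ell$ must lie in $[n]\setminus\sigma$. Indeed, if $\ell \in \sigma_j$ for some $j \neq i$, the disjoint-union hypothesis rules out edges in either direction between $\sigma_i$ and $\{\ell\}$, so that $\ell$ is an isolated node in $G|_{\sigma_i \cup \{\ell\}}$; Rule~\ref{rule:isolated-node} then forces $\sigma_i \in \FP(G|_{\sigma_i \cup \{\ell\}})$, contradicting our choice of $\ell$. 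It will then suffice to show $\sigma \notin \FP(G|_{\sigma \cup \{\ell\}})$, since another invocation of Corollary~\ref{cor:on-off-conds} completes the proof.

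For this last step I would appeal to the general-domination characterization of $\FP(W,b)$ promised in Theorem~\ref{thm:domination}. In that language, $\sigma_i \notin \FP(G|_{\sigma_i \cup \{\ell\}})$ translates into the statement that $\ell$ dominates some $j \in \sigma_i$ with respect to $\sigma_i$, and the plan is to lift this domination from $\sigma_i$ to $\sigma$. Because $G|_\sigma$ is a disjoint union, $\sigma \setminus \sigma_i$ is simply-added to $\sigma_i$ (every vertex in another component is a non-projector onto $\sigma_i$), so Theorem~\ref{thm:simply-added} yields the factorization $s_m^\sigma = \tfrac{1}{\theta}\,\alpha\, s_m^{\sigma_i}$ for every $m \in \sigma_i$, with a common scalar $\alpha = s_m^{\sigma \setminus \sigma_i}$. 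Combined with the absence of cross-component edges, this should let the sign relation that certifies $\ell$ killing $\sigma_i$ in $G|_{\sigma_i \cup \{\ell\}}$ propagate up to $G|_{\sigma \cup \{\ell\}}$, in the same spirit that Lemma~\ref{lemma:composite-domination} lifts graphical domination from a skeleton to its composite.

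The hard part will be making this lifting step precise. General domination is weight-based rather than purely combinatorial, so the essentially automatic inheritance argument used for graphical domination in Lemma~\ref{lemma:composite-domination} does not directly apply. I expect the full argument to require the explicit form of Theorem~\ref{thm:domination} together with a careful simply-added computation relating $s_\ell^\sigma$ to $s_\ell^{\sigma_i}$ — or, alternatively, a direct reinterpretation of the domination inequalities in terms of the block structure of $I - W_{\sigma \cup \{\ell\}}$, whose off-diagonal blocks between distinct components are uniformly filled with $1+\delta$.
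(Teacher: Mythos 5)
Your overall strategy matches the paper's proof closely, and the first two-thirds are sound. Case~A (one forbidden component) is handled correctly; the paper cites Lemma~\ref{lemma:composite1} where you cite Theorem~\ref{thm:one-bad-apple}, but these are equivalent for this purpose. In Case~B, your argument that $\ell$ must lie outside $\sigma$ — via Rule~\ref{rule:isolated-node}, since any $\ell$ in another component would be isolated in $G|_{\sigma_i\cup\{\ell\}}$ and therefore could not kill $\sigma_i$ — is a valid alternative to what the paper does. (The paper instead notes that for any $k\in\sigma\setminus\sigma_i$ one has $j>_{\sigma_i}k$ because there are no edges from $\sigma_i$ to $k$, so the dominating $\ell$ cannot be such a $k$; both routes work.)

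The piece you flagged as the hard part — lifting $\ell\geq_{\sigma_i}j$ to $\ell\geq_\sigma j$ — is not something you need to reinvent: it is exactly part~(iii) of Lemma~\ref{lemma:simply-added}, which is stated and proved in the same subsection (Section~\ref{sec:domination-simply-added}) and which the paper's proof invokes directly. Setting $\tau=\sigma_i$ and $\omega=\sigma\setminus\sigma_i$, the hypothesis of part~(iii) (that every $i\in\omega$ sending an edge to $j$ also sends one to $\ell$) is vacuously satisfied because the disjoint union forbids any edge from $\omega$ into $\sigma_i$ at all. The proof of that lemma carries out precisely the weight computation you were sketching: $w_\ell^\sigma-w_j^\sigma = |\alpha|(w_\ell^\tau-w_j^\tau) + \sum_{i\in\omega}(\Wtil_{\ell i}-\Wtil_{ji})|s_i^\sigma|$, and in the disjoint union every term of the final sum is nonnegative since $\Wtil_{ji}=-1-\delta$ is minimal. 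Once you have $\ell\geq_\sigma j$ (in fact $\ell>_\sigma j$ by Lemma~\ref{lem:either-or} when $\sigma$ is permitted), Theorem~\ref{thm:domination} (or Lemma~\ref{lem:domination}(b)) gives $\sigma\notin\FP(G)$. So the gap you left open is genuinely filled by a lemma already in hand, not by a fresh computation; the rest of your argument is correct.
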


\begin{proposition}[survival of clique union]\label{prop:clique-union-survival}
Let $\sigma$ be a permitted motif that is a clique union of components
 $G|_{\sigma_1}, \ldots, G|_{\sigma_N}$ inside a larger graph $G$.  If for each $k \in [n] \setminus \sigma$, there
exists an $i \in [N]$ such that $\sigma_{i} \in \FP(G|_{\sigma_{i} \cup k})$, then $\sigma \in \FP(G)$. 
\end{proposition}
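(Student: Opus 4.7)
The plan is to apply Corollary~\ref{cor:on-off-conds}, reducing the claim to verifying that $\sigma \in \FP(G|_{\sigma \cup \{k\}})$ for every external vertex $k \in [n] \setminus \sigma$. Since $\sigma$ is already assumed to be a permitted motif, the second part of Theorem~\ref{thm:sgn-condition} shows that, with $k$ fixed, it suffices to establish the single sign relation $\sgn s_k^\sigma = -\idx(\sigma)$. By hypothesis one may then fix an index $i = i(k) \in [N]$ such that $\sigma_i \in \FP(G|_{\sigma_i \cup \{k\}})$; equivalently, $\sgn s_k^{\sigma_i} = -\idx(\sigma_i)$ (and $\sgn s_j^{\sigma_i} = \idx(\sigma_i)$ for $j \in \sigma_i$).

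First I would dispense with the on-neurons $j \in \sigma$. The clique union structure means each component $\sigma_\ell$ is simply-added (as a set of mutual projectors) to $\sigma \setminus \sigma_\ell$, so iterating Theorem~\ref{thm:simply-added} yields a factorization $s_j^\sigma = \theta^{-(N-1)} \prod_{m=1}^N s_j^{\sigma_m}$ for each $j \in \sigma$. Each factor has a determined sign: if $j \in \sigma_\ell$ then $\sgn s_j^{\sigma_\ell} = \idx(\sigma_\ell)$ because $\sigma_\ell$ is permitted; for $m \neq \ell$, the vertex $j \in \sigma_\ell$ is bidirectionally connected to all of $\sigma_m$, so $j$ is a target of $\sigma_m$, Rule~\ref{rule:target} gives $\sigma_m \notin \FP(G|_{\sigma_m \cup \{j\}})$, and one more invocation of Theorem~\ref{thm:sgn-condition} forces $\sgn s_j^{\sigma_m} = \idx(\sigma_m)$. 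Multiplying signs gives $\sgn s_j^\sigma = \prod_m \idx(\sigma_m) = \idx(\sigma)$, consistent with $\sigma$ being permitted and reducing the goal to showing $\sgn s_k^\sigma = -\idx(\sigma)$.

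The hard part is controlling $\sgn s_k^\sigma$. Because $k$ can have mixed edges to the various $\sigma_\ell$, in general $k$ is \emph{not} simply-added to any single $\sigma_\ell$, so Theorem~\ref{thm:simply-added} does not produce a product factorization of $s_k^\sigma$. This is precisely why the proof is deferred to Section~\ref{sec:domination}: the natural tool is the general notion of domination together with the complete characterization of $\FP(W,b)$ provided by Theorem~\ref{thm:domination}. The hypothesis $\sigma_i \in \FP(G|_{\sigma_i \cup \{k\}})$ is to be re-read as a certificate that $\sigma_i$ generally dominates $k$ inside $G|_{\sigma_i \cup \{k\}}$. Using that every vertex of $\sigma \setminus \sigma_i$ is bidirectionally connected to all of $\sigma_i$—and in particular is a projector onto $\sigma_i$—I would lift this certificate, in direct analogy with Lemma~\ref{lemma:composite-domination}, to a domination of $k$ by $\sigma$ inside $G|_{\sigma \cup \{k\}}$. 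Applying Theorem~\ref{thm:domination} then yields $\sigma \in \FP(G|_{\sigma \cup \{k\}})$, and summing over all external $k$ completes the proof.

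The principal obstacle is precisely this lifting step: translating the restricted domination of $k$ by $\sigma_i$ (inside $G|_{\sigma_i \cup \{k\}}$) into a domination by the full $\sigma$ (inside $G|_{\sigma \cup \{k\}}$), in the absence of any clean simply-added factorization for $s_k^\sigma$. The general domination framework of Section~\ref{sec:domination}, combined with the observation that the other components $\sigma_m$ ($m \neq i$) send every possible edge into $\sigma_i$, is what makes the lift go through without any assumption about how $k$ itself interacts with $\sigma \setminus \sigma_i$.
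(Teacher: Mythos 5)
Your plan matches the paper's own proof in all essential respects: reduce via Theorem~\ref{thm:domination} to showing that each external $k$ is dominated by some $j \in \sigma$, use the hypothesis $\sigma_i \in \FP(G|_{\sigma_i \cup \{k\}})$ to obtain $j >_{\sigma_i} k$ for $j \in \sigma_i$, then lift that relation to $j >_\sigma k$. The precise tool the paper uses for the lift is Lemma~\ref{lemma:simply-added} (domination under simply-added splits), part (iv), taking $\tau = \sigma_i$ and $\omega = \sigma \setminus \sigma_i$; the hypothesis of part (iv) holds vacuously because $j \in \sigma_i$ receives all possible edges from $\omega$, which is exactly the observation you single out at the end. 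The lemma you invoke by ``analogy,'' Lemma~\ref{lemma:composite-domination}, concerns \emph{graphical} domination, whereas the lift you need is for the \emph{general} domination relation $>_\sigma$; Lemma~\ref{lemma:simply-added} is the statement the paper proves precisely for this purpose, so cite that instead of appealing to an analogy. One small streamlining: your second paragraph on the on-neurons is unnecessary. Once $\sigma$ is assumed to be a permitted motif, Theorem~\ref{thm:sgn-condition} (equivalently, domination-freeness via Theorem~\ref{thm:domination}) already guarantees that all the $\sgn s_j^\sigma$ agree for $j \in \sigma$; the only thing to control is the sign at external $k$, which is what the domination lift handles.
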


\begin{figure}[!ht]
\begin{center}
\includegraphics[width=.7\textwidth]{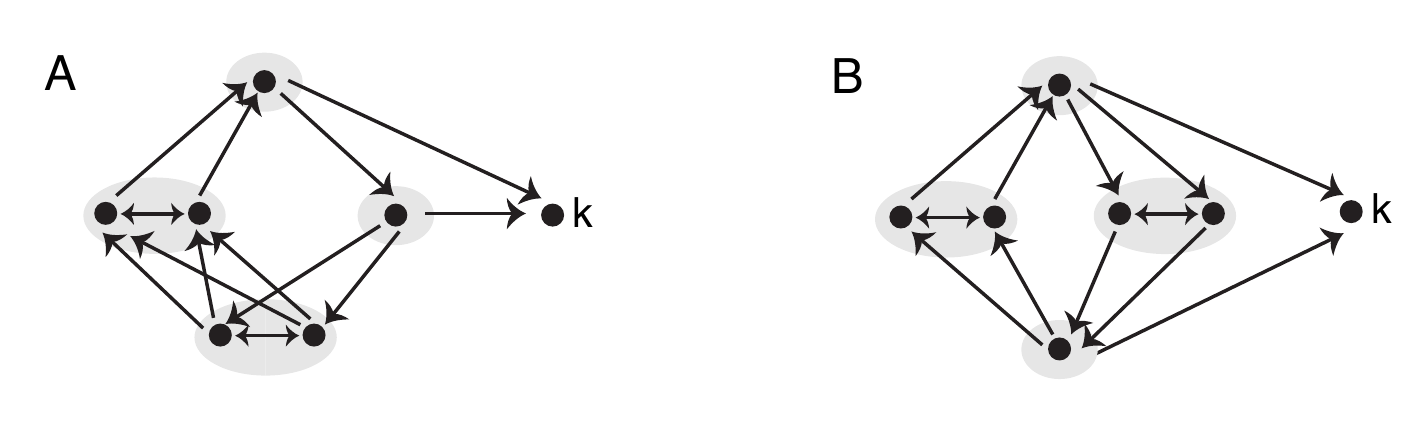}
\caption{(A) A cyclic union that does not survive the addition of node $k$, by graphical domination. (B) A cyclic union that does survive the addition of node $k$, by uniform in-degree.
}
\label{fig:cyclic-union-survival}
\end{center}
\vspace{-.1in}
\end{figure}

The previous two propositions gave simple survival conditions for the composite graph solely in terms of the survival of the individual components.  Unfortunately there is no analogous result for cyclic unions because the order of the surviving/non-surviving components within the cyclic union can impact the survival of the composite graph.  Specifically consider the graphs in Figure~\ref{fig:cyclic-union-survival}, which are cyclic unions with outgoing edges to an external node $k$.  For both of these graphs, the components of the cyclic union are two 2-cliques and two singletons, and in both, the singleton components are permitted motifs that do not survive in the full graph since they have outgoing edges to node $k$.  The only difference between the two graphs is the order in which those components were inserted in the cyclic union.  However, the cyclic union does \underline{not} survive in the full graph in A, while it does survive in the full graph in B.  To see this, note that in graph A, node $k$ outside-in graphically dominates the node to its left, and thus the cyclic union does not survive.  In B, the cyclic union has uniform in-degree 2 and node $k$ receives only 2 edges from it; thus by Rule~\ref{rule:uniform-in-deg}, the cyclic union survives.  This contrast shows that we cannot hope for a survival rule for cyclic unions that relies only on knowing the survival of individual components.

\subsection{Bidirectional simply-added splits}

A key feature of composite graphs that allowed us to prove the previous results is that for each component, $G_i$, the rest of the graph is simply-added onto it. Furthermore, in the case of disjoint unions and clique unions, every subset of components has the property that the rest of the graph is simply-added onto it, and vice versa. This is also true of all composite graphs with only two components, but is not a general feature of composite graphs. For example, cyclic unions do not have this property: no single component is simply-added onto the rest of the graph.

Here we consider graphs $G$ that have a {\it bidirectional simply-added split}, meaning that $[n] = \tau \cup \omega$, where $\omega$ is simply-added onto $\tau$ and $\tau$ is simply-added onto $\omega$. Note that although every composite graph with $N = 2$ components satisfies this property, $G$ need not be such a graph (see Figure~\ref{fig:doubly-simply-added}). For example, $\tau$ could send both projectors and non-projectors to $\omega$, and vice versa, so that $G|_\tau$ and $G|_\omega$ are not valid components of a composite graph.

\begin{figure}[!h]
\begin{center}
\includegraphics[height=1.15in]{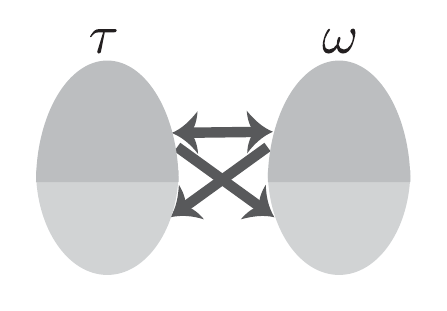}
\vspace{-.15in}
\end{center}
\caption{\textbf{Bidirectional simply-added split.} In this graph $\tau$ is simply-added to $\omega$ and vice versa.  Thus $\tau$ is composed of two classes of nodes: projectors onto $\omega$ (top dark gray region) and non-projectors onto $\omega$ (bottom light gray region).  Similarly, $\omega$ can be decomposed into projectors and non-projectors onto $\tau$.  The thick arrows indicate that every node of a given region sends an edge to every node in the other region. The edges within $\tau$ and $\omega$ can be arbitrary. }
\label{fig:doubly-simply-added}
\vspace{-.15in}
\end{figure}

The following theorem will use some new notation. Let $G$ be a graph on $n$ nodes. For any $\omega \subseteq [n]$, let $S_\omega$ denote the fixed point supports of $G|_\omega$ that survive to be fixed points of $G$, and let $D_\omega$ denote the non-surviving (dying) fixed points:
$$S_\omega \od \FP(G|_\omega) \cap \FP(G), \quad \text{and} \quad D_\omega \od \FP(G|_\omega)\setminus S_\omega.$$
Recall that by Corollary~\ref{cor:on-off-conds}, in order to check if $\sigma \in  \FP(G|_\omega)$ survives to $\FP(G)$ we need only to check that $\sigma \in \FP(G|_{\omega \cup k})$ for each $k \notin \omega$.
In the case where $G$ has a simply-added split, $[n] = \tau \cup \omega$, with $\omega$ simply-added onto $\tau$, each $k \notin \omega$ receives precisely the same edges from $\omega$, and so we need only check that $\sigma \in \FP(G|_{\omega \cup k})$ for a single $k \in \tau$. In this case, 
$$S_\omega = \{\sigma \subseteq \omega \mid \sigma \in \FP(G|_{\omega \cup k})\},$$
for any choice of $k \in \tau$.

\begin{theorem}\label{thm:doubly-simply-added}
Let $G$ be a graph with bidirectional simply-added split $[n] = \tau \cup \omega$. For any nonempty $\sigma \subseteq [n]$, let $\sigma = \sigma_\tau \cup \sigma_\omega$ where
$\sigma_\tau = \sigma \cap \tau$ and $\sigma_\omega = \sigma \cap \omega$. Then,
$\sigma \in \FP(G)$ if and only if one of the following holds:
\begin{itemize}
\item[(i)]$ \sigma_\tau \in S_\tau \cup \emptyset\;\; \text{and}\;\; \sigma_\omega \in S_\omega \cup \emptyset, \;\; \text{or}$
\item[(ii)] $\sigma_\tau \in D_\tau \;\; \text{and}\;\; \sigma_\omega \in D_\omega.$
\end{itemize}
Moreover, if $\sigma \in \FP(G)$ and both $\sigma_\tau$ and $\sigma_\omega$ are nonempty, then its index is given by
$$\idx(\sigma) = \left\{\begin{array}{rl} 
-\idx(\sigma_\tau)\idx(\sigma_\omega), & \text{if }\; \sigma_\tau \in S_\tau, \;\sigma_\omega \in S_\omega \\
\idx(\sigma_\tau)\idx(\sigma_\omega), & \text{if }\; \sigma_\tau \in D_\tau, \;\sigma_\omega \in D_\omega \end{array}\right.$$
Otherwise, $\sigma = \sigma_\tau$ or $\sigma_\omega$, and has the same index.
\end{theorem}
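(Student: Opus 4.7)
The plan is to apply Theorem~\ref{thm:simply-added} in both directions to factor every $s_i^\sigma$ value, and then read off the sign conditions of Theorem~\ref{thm:sgn-condition}. Since $\omega$ is simply-added to $\tau$, for each $i \in \tau$ and each $\sigma \subseteq [n]$ we get $s_i^\sigma = \alpha_\omega\, s_i^{\sigma_\tau}$ with $\alpha_\omega \od \tfrac{1}{\theta} s_i^{\sigma_\omega}$ independent of $i \in \tau$. By the bidirectional hypothesis, the symmetric statement gives $s_j^\sigma = \alpha_\tau\, s_j^{\sigma_\omega}$ for each $j \in \omega$, where $\alpha_\tau \od \tfrac{1}{\theta} s_j^{\sigma_\tau}$ is independent of $j \in \omega$. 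Both $\alpha_\omega$ and $\alpha_\tau$ are nonzero by nondegeneracy, and when a component is empty we have $s_i^{\emptyset} = \theta$, so $\alpha = 1 > 0$.

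Next I would build a sign dictionary for $\alpha_\omega$ and $\alpha_\tau$. Note that $\sgn \alpha_\omega = \sgn s_i^{\sigma_\omega}$ for any $i \in \tau$, which by Theorem~\ref{thm:sgn-condition} records whether $\sigma_\omega$ survives the addition of $i$. Because $\tau$ is simply-added to $\omega$, this sign is the same for every $i \in \tau$, so assuming $\sigma_\omega \in \FP(G|_\omega)$, either all such additions preserve the fixed point (so $\sigma_\omega \in S_\omega$ and $\sgn \alpha_\omega = -\idx(\sigma_\omega)$) or all of them destroy it (so $\sigma_\omega \in D_\omega$ and $\sgn \alpha_\omega = +\idx(\sigma_\omega)$). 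The symmetric statement holds for $\alpha_\tau$ and $\sigma_\tau$.

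With the factorization in hand, assume both $\sigma_\tau$ and $\sigma_\omega$ are nonempty. The sign conditions of Theorem~\ref{thm:sgn-condition} split into three families: (a) equal signs among $s_i^\sigma$ for $i,j$ both in $\sigma_\tau$, and analogously within $\sigma_\omega$; (b) opposite signs at each $k \in \tau \setminus \sigma_\tau$ and at each $k \in \omega \setminus \sigma_\omega$; and (c) the cross-agreement between the two common signs. After cancelling the common factors $\alpha_\omega$ and $\alpha_\tau$, families (a) and (b) collapse exactly to $\sigma_\tau \in \FP(G|_\tau)$ and $\sigma_\omega \in \FP(G|_\omega)$ (using Corollary~\ref{cor:on-off-conds} together with the fact that a single representative $k$ suffices per simply-added class). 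Family (c) becomes
$$\sgn \alpha_\omega \cdot \idx(\sigma_\tau) \;=\; \sgn \alpha_\tau \cdot \idx(\sigma_\omega).$$
Substituting the four sign assignments from the dictionary, this equation holds precisely in the $(S_\tau,S_\omega)$ and $(D_\tau,D_\omega)$ cases and fails in the mixed cases, establishing the main equivalence.

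Finally, in the case where $\sigma_\omega = \emptyset$ (symmetrically $\sigma_\tau = \emptyset$), the factorization collapses to $s_k^\sigma = s_k^{\sigma_\tau}$ for every $k \in [n]$, so $\sigma \in \FP(G)$ reduces directly to $\sigma_\tau \in S_\tau$, consistent with case~(i); the index is inherited as $\idx(\sigma) = \idx(\sigma_\tau)$. For the nonempty case the index formulas drop out by computing $\idx(\sigma) = \sgn s_i^\sigma = \sgn \alpha_\omega \cdot \idx(\sigma_\tau)$ for $i \in \sigma_\tau$ and substituting the two surviving entries of the dictionary. The main obstacle is resisting the natural conflation of ``$\sigma_\omega \in \FP(G|_\omega)$'' with ``$\sigma_\omega$ surviving to $G$'': the subtle point that distinguishes cases (i) and (ii) is precisely that the mismatch between these two notions flips the sign of $\alpha_\omega$, and one must track that sign carefully through the cross-agreement equation.
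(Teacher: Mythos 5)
Your proof is correct and follows essentially the same route as the paper's: factor every $s_i^\sigma$ via the bidirectional simply-added split (Theorem~\ref{thm:simply-added}), use the sign conditions of Theorem~\ref{thm:sgn-condition}, encode survival/death in a sign indicator (your sign dictionary for $\sgn\alpha_\omega$, $\sgn\alpha_\tau$ is exactly the paper's $\chi(\sigma_\omega)$, $\chi(\sigma_\tau)$), and reduce everything to the cross-agreement equation. The organization into families (a), (b), (c) is a slightly more explicit bookkeeping, but the key steps and their justifications match the paper's proof.
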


\begin{proof}
First we consider the case where $\sigma_\tau$ or $\sigma_\omega$ is empty. Without loss of generality, suppose $\sigma_\tau = \emptyset$. Then $\sigma = \sigma_\omega$, and clearly $\sigma \in \FP(G)$ if and only if $\sigma_\omega \in S_\omega$.

Now suppose $\sigma \subseteq [n]$ with both $\sigma_\tau$ and $\sigma_\omega$ nonempty. Since $G$ has a bidirectional simply-added split along $(\tau,\omega)$, using Theorem~\ref{thm:simply-added} we see that for any $i \in [n]$, we have
$$s_i^\sigma = \dfrac{1}{\theta} s_i^{\sigma_\tau}s_i^{\sigma_\omega}.$$
Furthermore, for any $i \in \tau$, all $s_i^{\sigma_\omega}$ have the same value and so
$s_i^\sigma = \alpha s_i^{\sigma_\tau}$. Similarly, for any $j \in \omega$, we have $s_j^\sigma = \beta s_j^{\sigma_\omega}$. Thus, the relative signs of $s_i^{\sigma_\tau}$ across $i \in \tau$ are preserved in the $s_i^{\sigma}$, and the same for the relative signs of $s_j^{\sigma_\omega}$ across $j \in \omega$. Hence, $\sigma \in \FP(G)$ if and only if $\sigma_\tau \in \FP(G|_\tau)$, $\sigma_\omega \in \FP(G|_\omega)$, and $\sgn(s_i^\sigma) = \sgn(s_j^\sigma)$ for any $i \in \sigma_\tau$ and $j \in \sigma_\omega$.

To see when the above signs agree, observe that $\sigma_\tau \in \FP(G|_\tau)$ implies
$\sgn(s_i^{\sigma_\tau}) = \idx(\sigma_\tau)$ for all $i \in \sigma_\tau$, and similarly 
$\sgn(s_j^{\sigma_\omega}) = \idx(\sigma_\omega)$ for all $j \in \sigma_\omega$.
Therefore, $\sgn(s_i^\sigma) = \sgn(s_j^\sigma)$ if and only if
\begin{equation}\label{eq:sgn-eqn}
\sgn(s_i^{\sigma_\omega})\idx(\sigma_\tau) = \sgn(s_j^{\sigma_\tau})\idx(\sigma_\omega).
\end{equation}
However, $\sgn(s_i^{\sigma_\omega})$ and $\sgn(s_j^{\sigma_\tau})$ depend on whether or not $\sigma_\omega$ and $\sigma_\tau$, respectively, survive to fixed points of $G$. To track this, we define
$\chi(\sigma_\tau) = 1$ if $\sigma_\tau \in S_\tau$, and $\chi(\sigma_\tau) = -1$ if $\sigma_\tau \in D_\tau$. (Note that $\sigma_\tau \in \FP(G|_\tau)$ implies $\sigma_\tau \in S_\tau \dot\cup D_\tau$.) In particular, since $j \notin \sigma_\tau$, we see that $\sgn(s_j^{\sigma_\tau})$ agrees with $\idx(\sigma_\tau)$ if and only if $\sigma_\tau \in D_\tau$.
Thus, we can write $\sgn(s_j^{\sigma_\tau}) = -\chi(\sigma_\tau)\idx(\sigma_\tau)$ and
$\sgn(s_i^{\sigma_\omega}) = -\chi(\sigma_\omega)\idx(\sigma_\omega)$. Plugging this into equation~\eqref{eq:sgn-eqn} we see that $\sgn(s_i^\sigma) = \sgn(s_j^\sigma)$ if and only if
$$-\chi(\sigma_\omega)\idx(\sigma_\omega)\idx(\sigma_\tau) = 
-\chi(\sigma_\tau)\idx(\sigma_\tau)\idx(\sigma_\omega),$$
which holds if and only if $\chi(\sigma_\tau) = \chi(\sigma_\omega).$ In other words, we can conclude that $\sigma \in \FP(G)$ if and only if both $\sigma_\tau \in S_\tau$ and $\sigma_\omega \in S_\omega$, or both $\sigma_\tau \in D_\tau$ and $\sigma_\omega \in D_\omega$, as desired. Finally, the index formulas follow from observing that  
if $\sigma \in \FP(G)$, then $\idx(\sigma) = -\chi(\sigma_\tau)\idx(\sigma_\tau)\idx(\sigma_\omega).$
\end{proof}

It is worth noting that Theorem~\ref{thm:doubly-simply-added} allows us to recover the results on disjoint unions and clique unions as a special case. If $\tau$ and $\omega$ only contain non-projectors, then $\tau \cup \omega$ is a disjoint union, where every fixed point of each subset survives, i.e.\ $S_\tau = \FP(G|_\tau)$, $S_\omega = \FP(G|_\omega)$, and $D_\tau=D_\omega=\emptyset$.   Thus, Theorem~\ref{thm:doubly-simply-added} shows that the fixed points of a disjoint union are all the fixed points of the individual components and unions of these.  On the other hand, if $\tau$ and $\omega$ only contain projectors, then $\tau \cup \omega$ is a clique union.  In this case, every fixed point of each subset does not survive (dies) because it has a target, and so $D_\tau = \FP(G|_\tau)$, $D_\omega = \FP(G|_\omega)$, and $S_\tau = S_\omega = \emptyset$.  Therefore, Theorem~\ref{thm:doubly-simply-added} shows that the fixed points of a clique union are solely the unions of fixed points from every component.

\section{Domination}\label{sec:domination}

In this section, we introduce a more general form of {\it domination}, which is broader than the concept of graphical domination first introduced in Section~\ref{sec:graph-domination}. Domination applies to all competitive and nondegenerate TLNs with uniform external inputs, so that $b_i = b_j = \theta$ for all $i,j \in [n]$.  Furthermore, while graphical domination is insufficient to determine all permitted and forbidden motifs of a CTLN (see Appendix Section~\ref{appendixB}), general domination precisely characterizes all fixed point supports not only for CTLNs but also for TLNs. In particular, our main result on domination, Theorem~\ref{thm:domination}, gives an alternative to Theorem~\ref{thm:sgn-condition} (sign conditions).

As we will see later in this section, this broader form of domination is not practical for explicit computations, but provides a useful technical tool for proving results about fixed points {\it without} appealing to the signs of the $s_i^\sigma$. In particular, domination will allow us to prove Theorems~\ref{thm:graph-domination} (graphical domination) and~\ref{thm:uniform-in-degree} (uniform in-degree), as well as Propositions~\ref{prop:disjoint-union-survival} and~\ref{prop:clique-union-survival} (survival rules for disjoint and clique unions).

\subsection{General domination}\label{sec:gen-domination}

Let $(W,\theta)$ be a TLN with uniform inputs $\theta$, and define $\Wtil = -I + W$.
The quantities of interest for general domination are the sums:
\begin{equation}\label{eq:w_i}
w_j^\sigma \od \sum_{i \in \sigma} \Wtil_{ji} |s_i^\sigma|.
\end{equation}
In contrast to Theorem~\ref{thm:sgn-condition} (sign conditions), where the signs of the $s_i^\sigma$ were essential to determining fixed point supports, here we completely discard the signs and use only the absolute values, $|s_i^\sigma|$. 

The definition of the $w_j^\sigma$ may seem mysterious at first. Before defining domination or stating our main theorem about it, we will work through an example that shows how the $w_j^\sigma$ values encode information about the fixed points supports of a TLN. There are two main observations that will emerge from this example. First, if $\sigma \in \FP(W,\theta)$ then $w_j^\sigma = w_k^\sigma$ for all $j,k \in \sigma$. In other words, the $w_j^\sigma$ precisely match for all nodes inside a permitted motif. Second,  if $\sigma \in \FP(W,\theta)$ and there is some $k \notin \sigma$, then for $j \in \sigma$ we have $w_j^\sigma > w_k^\sigma$. So the values of $w_j^\sigma$ {\it inside} the fixed point support are all equal to each other and greater than the values of $w_k^\sigma$ for nodes {\it outside}.

\begin{figure}[!ht]
\vspace{-.1in}
\begin{center}
\includegraphics[height=1.4in]{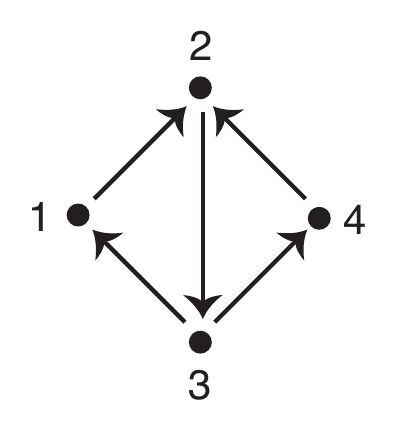}
\caption{The butterfly graph for Example~\ref{ex:butterfly-graph}}
\label{fig:butterfly-graph}
\end{center}
\vspace{-.1in}
\end{figure}

\begin{example}\label{ex:butterfly-graph}
Consider a CTLN whose graph $G$ is the butterfly graph in Figure~\ref{fig:butterfly-graph}, which is studied in detail in Appendix Section~\ref{appendixB}.  By appealing to earlier graph rules and parity, it is straightforward to see that $\FP(G) = \{123, 234, 1234\}$.  In this example, we explore the values of $w_j^\sigma$ for different subsets of vertices in this graph.  

First consider $\sigma = \{1,2,3,4\}$, which we already know is a permitted motif.  In Appendix Section~\ref{appendixB}, we computed the following values for the butterfly graph:
$$s_1^\sigma = s_4^\sigma = -\delta\theta(\varepsilon^2 + \varepsilon\delta + \delta^2), \quad
s_2^\sigma = -\delta\theta(2\varepsilon^2+2\varepsilon\delta + \delta^2), \quad
s_3^\sigma = -\delta\theta(2\varepsilon^2 + 3\varepsilon\delta + 2\delta^2)$$
Using these calculations, we can compute the $w_i^\sigma$ values. For example,
\begin{eqnarray*}
w_1^\sigma &=& \sum_{i \in \sigma} \Wtil_{1i} |s_i^\sigma| = (-1)|s_1^\sigma| + (-1-\delta)|s_2^\sigma| + (-1+\varepsilon)|s_3^\sigma|+(-1-\delta)|s_4^\sigma|\\
&=& \delta \theta (2\varepsilon^3 - 6\varepsilon^2 -7\varepsilon\delta -\varepsilon\delta^2 -5\delta^2 -2\delta^3).
\end{eqnarray*}
By symmetry, it is clear that $w_4^\sigma = w_1^\sigma$, but in fact, all the $w_i^\sigma$ values are equal across $i \in \sigma$.  Although it is not obvious that $w_2^\sigma$ should match $w_1^\sigma$, it does:
\begin{eqnarray*}
w_2^\sigma &=& \sum_{i \in \sigma} \Wtil_{2i} |s_i^\sigma| = (-1)|s_2^\sigma| + (-1+\varepsilon)(|s_1^\sigma|+|s_4^\sigma|)+ (-1-\delta)|s_3^\sigma|\\
&=& \delta \theta (2\varepsilon^3 - 6\varepsilon^2 -7\varepsilon\delta -\varepsilon\delta^2 -5\delta^2 -2\delta^3).
\end{eqnarray*}
Theorem~\ref{thm:domination} will show that this is the hallmark of permitted motifs.

Next consider $\tau = \{1,2,3\}$.  Since $\tau$ is a 3-cycle, the $s_i^\tau$ values can be obtained from graph 8 in Figure~\ref{fig:n3-graphs-s_i}, giving us $s_i^\tau = \theta(\varepsilon^2 + \varepsilon\delta + \delta^2)$ for all $i \in \tau$.  Then, for all $j \in \tau$, 
\begin{eqnarray*}
w_j^\tau &=&  \sum_{i \in \sigma} \Wtil_{ji} |s_i^\sigma| 
= \theta(\varepsilon^2 + \varepsilon\delta + \delta^2)(-1 + (-1+\varepsilon) + (-1 -\delta))\\
&=& \theta(\varepsilon^2 + \varepsilon\delta + \delta^2)(-3+\varepsilon-\delta).
\end{eqnarray*}
Additionally,
\begin{eqnarray*}
w_4^\tau &=&  \sum_{i \in \sigma} \Wtil_{4i} |s_i^\sigma| 
= \theta(\varepsilon^2 + \varepsilon\delta + \delta^2)((-1+\varepsilon) + 2(-1 -\delta))\\
&=& \theta(\varepsilon^2 + \varepsilon\delta + \delta^2)(-3+\varepsilon-2\delta).
\end{eqnarray*}
Notice that $w_j^\tau > w_4^\tau$ for all $j \in \tau$. In fact, the inequality that $w_j^\tau > w_k^\tau$ for all $j \in \tau$ and $k \notin \tau$ must be satisfied for any permitted motif $\tau$ to survive as a fixed point of a larger graph $G$.  This observation, together with the fact that within a permitted motif all the values $w_i^\sigma$ match, is captured below in Theorem~\ref{thm:domination}.
\end{example}

In order to compute the values $w_j^\sigma$ in the above example, we used our pre-computed values for $s_i^\sigma$.  However, once we know the $s_i^\sigma$, we can simply apply Theorem~\ref{thm:sgn-condition} (sign conditions) and be done.  In particular, it is not practical to explicitly compute the $w_j^\sigma$ in order to check whether or not $\sigma$ is a fixed point support. The true value of introducing the $w_j^\sigma$ is that arguments can be made about their relative values {\it without} knowing the signs of the $s_i^\sigma$, and this can in turn be used to determine whether or not $\sigma \in \FP(W, \theta)$.
With this motivation, we now define {\it domination}, which is a generalization of graphical domination (first introduced in Section~\ref{sec:graph-domination}).

\begin{definition}[domination]\label{def:general-domination}
Consider a TLN $(W,\theta)$ on $n$ neurons, let $\sigma \subseteq [n]$ be nonempty, and let $w_j^\sigma$ be defined as in~\eqref{eq:w_i}.  For any $j,k \in [n]$, we say that
\begin{itemize}
\item $k$ {\em dominates} $j$ with respect to $\sigma$, and write $k >_{\sigma} j$, whenever
$w_k^\sigma > w_j^\sigma$;
\item $k$ is {\em equivalent} to $j$ with respect to $\sigma$, and write
$k \sim_\sigma j$, whenever $w_k^\sigma = w_j^\sigma$;
\item $k$ is {\em not dominated} by $j$ with respect to $\sigma$, and write
$k \geq_\sigma j$, whenever $w_k^\sigma \geq w_j^\sigma$.
\end{itemize}
Moreover, if $j \sim_\sigma k$ for all $j,k \in \sigma$, then we say that $\sigma$ is {\em domination-free}.
\end{definition}

The domination relation satisfies nice properties.  Clearly, if $\ell >_\sigma k$ and $k >_\sigma j$, then $\ell >_\sigma j$.  Furthermore, if $k >_\sigma j$ then we cannot also have $j >_\sigma k$.  We thus see that $>_\sigma$ is transitive and antisymmetric, but not reflexive; while $\geq_\sigma$ is transitive, antisymmetric, and reflexive. This makes $>_\sigma$ a strict partial order and $\geq_\sigma$ a partial order.  Incomparable elements under $>_\sigma$ belong to equivalence classes of the equivalence relation $\sim_\sigma$.  It is easy to see that $>_\sigma, \geq_\sigma$ and $\sim_\sigma$ interact just as the usual ``$>, \geq$'' and ``$=$'' do. In particular,
if $\ell>_\sigma k$ and $k \sim_\sigma j$, then $\ell >_\sigma j$.

We can now give our second characterization of $\FP(W, \theta)$, for the case of uniform inputs $b_i = \theta > 0$ for each $i \in [n]$. Recall from Definition~\ref{def:permitted-forbidden} that $\sigma$ is a permitted motif if $\sigma \in \FP(W_\sigma,\theta)$, and $\sigma$ is a forbidden motif otherwise. In order to have $\sigma \in \FP(W,\theta)$, we must have that $\sigma$ is a permitted motif that survives as a fixed point support in the full network.

\begin{theorem}[general domination] \label{thm:domination}
Let $(W,\theta)$ be a TLN, and let $\sigma \subseteq [n]$. Then
$$\sigma \text{ is a permitted motif } \Leftrightarrow \sigma \text{ is domination-free.}$$
If $\sigma$ is a permitted motif, then $\sigma \in \FP(W,\theta)$ if and only if
for each $k \not\in \sigma$ there exists $j \in \sigma$ such that $j >_\sigma k$.
\end{theorem}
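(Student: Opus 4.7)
The plan is to reduce everything to the identity of equation~\eqref{eq:s_k-identity}. Writing $\Wtil = -I + W$ and using $W_{jj}=0$ for a competitive TLN, that identity rearranges into the two clean formulas
$$\sum_{i \in \sigma} \Wtil_{ji} s_i^\sigma = -\theta \det(I-W_\sigma) \;\; \text{for } j \in \sigma, \qquad \sum_{i \in \sigma} \Wtil_{ki} s_i^\sigma = s_k^\sigma - \theta \det(I-W_\sigma) \;\; \text{for } k \notin \sigma.$$
The first will force $w_j^\sigma$ to take a common value on any permitted motif, and the second will let me read off the survival sign condition on $s_k^\sigma$ from a comparison of $w_j^\sigma$ with $w_k^\sigma$.

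For the forward direction of the permitted-motif characterization, assume $\sigma$ is permitted. Theorem~\ref{thm:sgn-condition} gives $\sgn s_i^\sigma = \idx(\sigma)$ for every $i \in \sigma$, so $|s_i^\sigma| = \idx(\sigma)\, s_i^\sigma$. Substituting into the definition of $w_j^\sigma$ and invoking the first displayed identity, I get $w_j^\sigma = \idx(\sigma)\cdot(-\theta\det(I-W_\sigma)) = -\theta\,|\det(I-W_\sigma)|$ for every $j \in \sigma$; this is a single strictly negative constant, so $\sigma$ is domination-free.

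For the reverse direction, suppose $\sigma$ is domination-free with common value $c = w_j^\sigma$, and let $v \in \IR^\sigma$ be the entrywise-positive vector with $v_i = |s_i^\sigma|$. The hypothesis says the $\sigma$-restricted system $\sum_{i \in \sigma}\Wtil_{ji} v_i = c$ holds for every $j \in \sigma$, i.e.\ $v$ solves $-(I-W_\sigma)v = c\one$, so $v = -\frac{c}{\theta}\, x^\sigma$ by the defining relation $(I-W_\sigma)x^\sigma = \theta\one$. Since $\Wtil_{jj} = -1$ and all off-diagonal entries of $\Wtil_\sigma$ are nonpositive while $|s_i^\sigma| > 0$, one has $c \le -|s_j^\sigma| < 0$, and hence $-c/\theta > 0$, forcing $x_i^\sigma > 0$ for every $i \in \sigma$. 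That is exactly the definition of $\sigma \in \FP(W_\sigma,\theta)$. I expect this step to be the main obstacle: the absolute values in the definition of $w_j^\sigma$ destroy direct sign information about the $s_i^\sigma$, so one cannot simply invoke Theorem~\ref{thm:sgn-condition}; the trick is that inverting the linear system on the \emph{positive} vector $v$ recovers positivity of $x^\sigma$ by uniqueness of solutions.

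The survival statement is then immediate. With $\sigma$ permitted, the forward computation already gives $w_j^\sigma \equiv -\theta\,|\det(I-W_\sigma)|$ for all $j \in \sigma$, and the second displayed identity together with $|s_i^\sigma| = \idx(\sigma)\, s_i^\sigma$ yields $w_k^\sigma = \idx(\sigma)\, s_k^\sigma - \theta\,|\det(I-W_\sigma)|$ for $k \notin \sigma$. Since all $w_j^\sigma$ with $j \in \sigma$ coincide, ``there exists $j \in \sigma$ with $j >_\sigma k$'' is the same as $w_k^\sigma < -\theta\,|\det(I-W_\sigma)|$, which is equivalent to $\idx(\sigma)\, s_k^\sigma < 0$, i.e.\ $\sgn s_k^\sigma = -\idx(\sigma)$. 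Requiring this for every $k \notin \sigma$ is exactly the sign condition of Theorem~\ref{thm:sgn-condition} for $\sigma \in \FP(W,\theta)$, giving both implications of the survival part.
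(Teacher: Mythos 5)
Your proof is correct and follows essentially the same route as the paper's: both rearrange equation~\eqref{eq:s_k-identity} into $\Wtil$-form, substitute $|s_i^\sigma|$ when all interior signs agree, invert the linear system $(I-W_\sigma)v = -c\one_\sigma$ for the converse, and reduce survival to the $\sgn s_k^\sigma$ condition of Theorem~\ref{thm:sgn-condition}. The only cosmetic difference is that the paper splits the argument into Lemmas~\ref{lem:equality}--\ref{lem:survival}, whereas you present it as one direct computation.
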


In addition to telling us that permitted motifs are domination-free, Theorem~\ref{thm:domination} states that permitted motifs survive precisely when every node outside the motif is (inside-out) dominated by some  node inside. This is precisely what we saw in Example~\ref{ex:butterfly-graph}. Namely, the $w_j^\tau$ and $w_j^\sigma$ all matched for $j \in \tau$ or $j \in \sigma$, respectively. Moreover, for $j \in \tau$ and $k \notin \tau$, we saw that $w_j^\tau > w_k^\tau$, and thus $j >_\tau k$, consistent with the fact that $\tau \in \FP(G)$.

Before presenting the proof of Theorem~\ref{thm:domination}, in the next section, we illustrate it with another example. Let $G$ be the graph in Figure~\ref{fig:counterexample-graph}, consider $\tau = \{1, 2, 3, 4\}$.  Since $\tau$ is uniform in-degree (with $d=1$) and node 5 receives more than $d$ edges from $\tau$, Theorem~\ref{thm:uniform-in-degree} guarantees that $\tau$ is a permitted motif that does not survive in the full graph, and so we know that $\tau \in \FP(G|_\tau)$ but $\tau \notin \FP(G)$. However, this result cannot be obtained from graphical domination. In fact, the proof of Theorem~\ref{thm:uniform-in-degree} relies on general domination. 
In the following example, we will show that $\tau \notin \FP(G)$ directly, using Theorem~\ref{thm:domination} (domination).  This previews the proof of Theorem~\ref{thm:uniform-in-degree} in Section~\ref{sec:proofs-uniform-in-degree}. 

\begin{figure}[!ht]
\begin{center}
\includegraphics[width=1.5in]{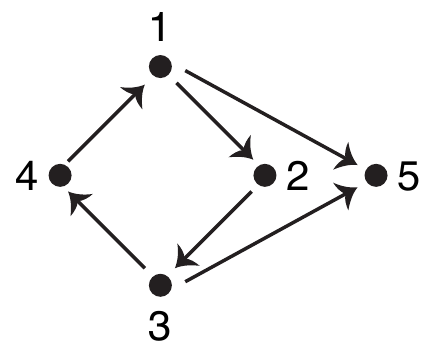}
\caption{Graph for Example~\ref{ex:counterexample-graph}.
}
\label{fig:counterexample-graph}
\end{center}
\vspace{-.2in}
\end{figure}

\begin{example}\label{ex:counterexample-graph}

Let $G$ be the graph in Figure~\ref{fig:counterexample-graph}, and let $\tau = \{1, 2, 3, 4\}$.
Observe that for each $i \in \tau$, we can compute $s_i^\tau = \det((I-W_\tau)_i; \theta \one) = \varepsilon \theta ( \varepsilon^2 + 2 \varepsilon\delta + 2 \delta^2)$.  Then for all $j \in \tau$, we have
\begin{eqnarray*}
 w_j^\tau &=& \sum_{i \in \tau} \Wtil_{ji}|s_i^\tau| = \varepsilon \theta ( \varepsilon^2 + 2 \varepsilon\delta + 2 \delta^2) \sum_{i \in \tau} \Wtil_{ji}\\
 &=& \varepsilon \theta ( \varepsilon^2 + 2 \varepsilon\delta + 2 \delta^2) (-4+\varepsilon - 2\delta).
 \end{eqnarray*}
Since the $w_j^\tau$ have the same value for all $j \in \tau$, we see that $\tau$ is domination-free, and hence is permitted.  Next, 
\begin{eqnarray*}
 w_5^\tau &=& \sum_{i \in \tau} \Wtil_{5i}|s_i^\tau| = \varepsilon \theta ( \varepsilon^2 + 2 \varepsilon\delta + 2 \delta^2) \sum_{i \in \tau} \Wtil_{5i}\\
 &=& \varepsilon \theta ( \varepsilon^2 + 2 \varepsilon\delta + 2 \delta^2) (-4+2\varepsilon - 2\delta) > w_j^\tau,
 \end{eqnarray*}
for $j \in \tau$. Thus, $5 >_\tau j$, and so $\tau \notin \FP(G)$ by Theorem~\ref{thm:domination}.  

Now consider $\sigma = \{1,2,3,4,5\}$.  Observe that node 5 is a non-projector onto $\tau$, and thus by Corollary~\ref{cor:simply-added}, $s_i^\sigma = -\delta s_i^\tau =  -\varepsilon \delta \theta ( \varepsilon^2 + 2 \varepsilon\delta + 2 \delta^2)$ for all $i \in \tau$.  Computing $s_5^\sigma = \det((I-W_\sigma)_5; \theta \one)$, we obtain $s_5^\sigma = \varepsilon^2 \theta ( \varepsilon^2 + 2 \varepsilon\delta + 2 \delta^2)$.  Then for all $j \in \sigma \setminus 5$, we have
$$w_j^\sigma = \sum_{i \in \sigma \setminus 5} \Wtil_{ji}|s_i^\sigma| + (-1-\delta)|s_5^\sigma| = \varepsilon \theta ( \varepsilon^2 + 2 \varepsilon\delta + 2 \delta^2) (-\varepsilon-4\delta -2\delta^2),$$
while
$$w_5^\sigma = \sum_{i \in \sigma \setminus 5} \Wtil_{5i}|s_i^\sigma| + (-1)|s_5^\sigma| = \varepsilon \theta ( \varepsilon^2 + 2 \varepsilon\delta + 2 \delta^2) (-\varepsilon + 2\varepsilon\delta-4\delta -2\delta^2).$$
Since $w_5^\sigma \neq w_j^\sigma$ for $j, 5 \in \sigma$, we see that $\sigma$ is \underline{not} domination-free, and hence $\sigma$ is a forbidden motif.\footnote{Note that we also could have concluded that $\sigma$ was forbidden by Rule~\ref{rule:added-sink} since it is the union of sink with a non-surviving fixed point.}
\end{example}

\subsection{Proof of Theorem~\ref{thm:domination} (general domination)}\label{sec:dom-proofs}

Here we assume $b_i = \theta $ for all $i \in [n]$.  For constant $\theta>0$, the collection of fixed point supports is independent of the value of $\theta$, and so we drop $\theta$ from the notation and denote the set of fixed point supports simply as $\FP(W)$.  

Note that 
$s_i^\sigma = \det((I-W_\sigma)_i;\theta \one)$ for each $i \in \sigma$,
and equation~\eqref{eq:s_k-identity} implies:
\begin{equation*}
- s_k^\sigma + \sum_{i \in \sigma\setminus\{k\}} W_{ki}s_i^\sigma  =  - s_j^\sigma  + \sum_{i \in \sigma\setminus\{j\}} W_{ji}s_i^\sigma \;\; \text{ for all } \; j,k \in [n],
\end{equation*}
where we have used the fact that $b_k = b_j$ and $W_{kk} = W_{jj} = 0$.
If we denote $\Wtil = -I + W,$ then the terms $s_j^\sigma, s_k^\sigma$ can be absorbed into the sum when $j,k \in \sigma$.  This allows us to write:
\begin{equation}\label{eq:sk-sj-3}
(\chi_\sigma(k)-1)s_k^\sigma + \sum_{i \in \sigma} \Wtil_{ki}s_i^\sigma  =  (\chi_\sigma(j)-1) s_j^\sigma  + \sum_{i \in \sigma} \Wtil_{ji}s_i^\sigma \;\; \text{ for all } \; j,k \in [n],
\end{equation}
where $\chi_\sigma(k) = 1$ if $k \in\sigma$ and $\chi_\sigma(k) = 0$ if $k \notin \sigma$.  In particular, if $j,k \in \sigma$ the above expression reduces to 
\begin{equation}\label{eq:sk-sj-4}
\sum_{i \in \sigma} \Wtil_{ki}s_i^\sigma = \sum_{i \in \sigma} \Wtil_{ji}s_i^\sigma,  \;\; \text{ for } \; j,k \in \sigma.
\end{equation}  
Equations~\eqref{eq:sk-sj-3} and~\eqref{eq:sk-sj-4} are true in general, irrespective of whether or not $\sigma$ is a fixed point support.

Now assume $\sigma$ is a permitted motif, so that $\sigma \in \FP(W_\sigma)$. In this case, all the $s_i^\sigma$ for $i \in \sigma$ must have the same sign, and so \eqref{eq:sk-sj-3} and \eqref{eq:sk-sj-4} 
continue to hold if each $s_i^\sigma$ is replaced by $|s_i^\sigma|$.  In fact, in the case of  \eqref{eq:sk-sj-4} the converse is also true: if equality holds after replacing $s_i^\sigma$ with $|s_i^\sigma|$, then $\sigma \in \FP(W_\sigma)$. Using the notation 
$w_j^\sigma = \sum_{i \in \sigma} \Wtil_{ji} |s_i^\sigma|,$ as in~\eqref{eq:w_i}, we have the following lemma.

\begin{lemma}\label{lem:equality}
$\sigma \in \FP(W_\sigma)\; \Leftrightarrow \;  w_j^\sigma = w_k^\sigma$  
for all $j,k\in\sigma$. (I.e., $\sigma \in \FP(W_\sigma)\; \Leftrightarrow j \sim_\sigma k$ for all $j,k\in\sigma$.)
\end{lemma}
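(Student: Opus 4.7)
The forward direction is essentially repackaging what the paragraph above the lemma already established. Assuming $\sigma \in \FP(W_\sigma)$, Theorem~\ref{thm:sgn-condition} gives a common sign $\epsilon \in \{+1,-1\}$ with $s_i^\sigma = \epsilon |s_i^\sigma|$ for every $i \in \sigma$. Plugging this factorization into equation~\eqref{eq:sk-sj-4} and cancelling $\epsilon$ yields $w_j^\sigma = w_k^\sigma$ for all $j,k \in \sigma$, which is exactly $j \sim_\sigma k$.

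The reverse direction is the substantive half. Suppose $w_j^\sigma$ equals some common value $w^\star$ for every $j \in \sigma$. Write $\Wtil_\sigma = -(I - W_\sigma)$ and let $a \in \mathbb{R}^\sigma$ be the vector with entries $a_i = |s_i^\sigma|$. The hypothesis then reads $\Wtil_\sigma\, a = w^\star \one_\sigma$, equivalently $(I - W_\sigma)\, a = -w^\star \one_\sigma$. Meanwhile the fixed-point vector $x^\sigma = (I-W_\sigma)^{-1}(\theta \one_\sigma)$ satisfies $(I-W_\sigma)\, x^\sigma = \theta \one_\sigma$. Since $(W,\theta)$ is nondegenerate, $\det(I - W_\sigma) \ne 0$, so $I - W_\sigma$ is invertible and we may compare the two solutions to conclude
\[
a \;=\; -\frac{w^\star}{\theta}\, x^\sigma.
\]
Using $x_i^\sigma = s_i^\sigma / \det(I - W_\sigma)$ from Cramer's rule, this gives
\[
|s_i^\sigma| \;=\; -\,\frac{w^\star}{\theta \det(I - W_\sigma)}\, s_i^\sigma \qquad \text{for every } i \in \sigma.
\]
The prefactor on the right is a single scalar independent of $i$, so all the $s_i^\sigma$ with $i \in \sigma$ carry the same sign; Theorem~\ref{thm:sgn-condition} then gives $\sigma \in \FP(W_\sigma)$.

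One small sanity check must be dispatched: $w^\star \ne 0$. If $w^\star = 0$, then $\Wtil_\sigma a = 0$ and invertibility of $\Wtil_\sigma$ forces $a = 0$, contradicting $a_i = |s_i^\sigma| > 0$ (nondegeneracy ensures each $s_i^\sigma \ne 0$). So $w^\star \ne 0$ and the prefactor above is a well-defined nonzero scalar.

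The main conceptual obstacle is spotting the right linear-algebraic bridge between the two sides: namely, that the hypothesis $w_j^\sigma \equiv w^\star$ is itself a square linear system for the absolute-value vector $a$, whose coefficient matrix $\Wtil_\sigma$ is (up to a sign) the same $I - W_\sigma$ that determines $x^\sigma$. Once this parallelism is noticed, uniqueness of solutions to $(I - W_\sigma)y = \text{const}\cdot\one_\sigma$ does all the work, and no further computation with signs or with $s_k^\sigma$ for $k \notin \sigma$ is needed.
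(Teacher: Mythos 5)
Your proof is correct and the core idea is the same as the paper's: treat the equal-weights hypothesis as the linear system $(I-W_\sigma)\,a = (\text{scalar})\cdot\one_\sigma$ in the vector $a$ of absolute values $|s_i^\sigma|$, the same system (up to scaling) that defines $x^\sigma$. The only cosmetic difference is in how you close: the paper directly observes that $\alpha = -w_j^\sigma > 0$ (from competitiveness and $|s_i^\sigma|>0$) and verifies that $\frac{\theta}{\alpha}a$ is a positive fixed point with support $\sigma$, whereas you compare $a$ to $x^\sigma$ via invertibility, pass through Cramer's rule, and then re-invoke Theorem~\ref{thm:sgn-condition}; both routes are fine, the paper's being marginally more self-contained.
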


\begin{proof}  ($\Rightarrow$) This direction was already established in the arguments above.  ($\Leftarrow$) To see the converse,
suppose $w_j^\sigma = w_k^\sigma$ for all $j,k \in \sigma$. Let 
$\alpha = - w_j^\sigma =  - \sum_{i \in \sigma} \Wtil_{ji} |s_i^\sigma| = \sum_{i \in \sigma} (I-W)_{ji} |s_i^\sigma|$.  Note that $\alpha > 0$, and let $v = (|s_i^\sigma|)_{i \in \sigma}$ be the column vector whose entries are $|s_i^\sigma|$ for each $i \in \sigma$.  It follows that $(I-W_\sigma)v = \alpha \one_\sigma$, and so for $x_\sigma = \frac{\theta}{\alpha} v$ we have $(I-W_\sigma)x_\sigma = \theta \one_\sigma$.  Since $x_\sigma$ has strictly positive entries, it is a fixed point of the network $(W_\sigma, \theta)$ with support $\sigma$.  Thus $\sigma \in \FP(W_\sigma)$.
\end{proof}

Lemma~\ref{lem:equality} can be restated as saying that $\sigma \in \FP(W_\sigma)$ if and only if $\sigma$ is domination-free.  Interestingly, the nondegeneracy condition on TLNs guarantees that if $\sigma$ is a permitted motif, then a node outside of $\sigma$ can never have the same $w_j^\sigma$ value as one inside $\sigma$, and so one must dominate the other.

\begin{lemma}\label{lem:either-or} Let $\sigma \in \FP(W_\sigma)$.
If $j \in \sigma$ and $k \notin \sigma$, then either $k >_\sigma j$ or $j >_\sigma k$. (I.e., we cannot have $j \sim_\sigma k$.)
\end{lemma}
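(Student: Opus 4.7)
The plan is to exploit the identity in equation~\eqref{eq:sk-sj-3} with the specific choice $j \in \sigma$ and $k \notin \sigma$. For these choices, $\chi_\sigma(j)=1$ and $\chi_\sigma(k)=0$, so \eqref{eq:sk-sj-3} reduces to
$$-s_k^\sigma + \sum_{i \in \sigma} \Wtil_{ki} s_i^\sigma \;=\; \sum_{i \in \sigma} \Wtil_{ji} s_i^\sigma.$$

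Since $\sigma$ is a permitted motif, Theorem~\ref{thm:sgn-condition} (sign conditions) tells us that all the values $s_i^\sigma$ for $i \in \sigma$ share a common sign $\epsilon_\sigma \in \{\pm 1\}$, so $s_i^\sigma = \epsilon_\sigma |s_i^\sigma|$ for every $i \in \sigma$. Substituting this into the two sums and factoring out $\epsilon_\sigma$ from each, the right-hand side becomes $\epsilon_\sigma w_j^\sigma$ and the summation on the left becomes $\epsilon_\sigma w_k^\sigma$. Multiplying through by $\epsilon_\sigma$ (using $\epsilon_\sigma^2 = 1$) yields the clean identity
$$w_j^\sigma - w_k^\sigma \;=\; -\epsilon_\sigma\, s_k^\sigma.$$

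The conclusion is then immediate: by the nondegeneracy hypothesis on $(W,\theta)$ (Definition~\ref{def:nondegenerate} and the remarks following equation~\eqref{eq:s_i}), we have $s_k^\sigma \neq 0$, so the right-hand side is nonzero. Hence $w_j^\sigma \neq w_k^\sigma$, ruling out the case $j \sim_\sigma k$ and forcing exactly one of $k >_\sigma j$ or $j >_\sigma k$ to hold, as desired.

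I do not expect any real obstacle here; the only subtlety is the sign bookkeeping when replacing $s_i^\sigma$ with $\epsilon_\sigma |s_i^\sigma|$ uniformly across $i \in \sigma$, which is exactly the step that requires $\sigma$ to be a permitted motif (without which the $s_i^\sigma$ could have mixed signs and the identity would collapse only into \eqref{eq:sk-sj-3} without a sign-coherent interpretation in terms of $w_j^\sigma$ and $w_k^\sigma$). The argument will run in just a few lines.
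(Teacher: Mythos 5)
Your proposal is correct and follows essentially the same route as the paper's proof: specialize equation~\eqref{eq:sk-sj-3} to $j\in\sigma$, $k\notin\sigma$, use the sign coherence of $\{s_i^\sigma\}_{i\in\sigma}$ (from $\sigma$ being permitted) to convert the two sums into $\epsilon_\sigma w_j^\sigma$ and $\epsilon_\sigma w_k^\sigma$, and invoke nondegeneracy to get $s_k^\sigma\neq 0$. The only difference is cosmetic: you keep the explicit identity $w_j^\sigma - w_k^\sigma = -\epsilon_\sigma s_k^\sigma$ rather than the paper's looser $\sgn s_k^\sigma = \pm\sgn(w_k^\sigma - w_j^\sigma)$, which is a slightly tidier piece of bookkeeping but not a different argument.
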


\begin{proof}
It follows from equation~\eqref{eq:sk-sj-3} that if $j \in \sigma$ and $k \notin \sigma$, then 
$$s_k^\sigma = \sum_{i \in \sigma} \Wtil_{ki} s_i^\sigma - \sum_{i \in \sigma} \Wtil_{ji} s_i^\sigma,$$
and so
$\sgn s_k^\sigma = \sgn \left( \sum_{i \in \sigma} \Wtil_{ki} s_i^\sigma - \sum_{i \in \sigma} \Wtil_{ji} s_i^\sigma \right)$. If $\sigma \in \FP(W_\sigma)$, then all $s_i^\sigma$ for $i \in \sigma$ have the same sign. Replacing $s_i^\sigma$ with $|s_i^\sigma|$ yields
$$\sgn s_k^\sigma = \pm \sgn(w_k^\sigma - w_j^\sigma).$$
By the assumption of nondegeneracy of the TLN, we know that $\sgn s_k^\sigma \neq 0$, and so either $w_k^\sigma > w_j^\sigma$ or $w_j^\sigma > w_k^\sigma$. 
\end{proof}

The next lemma tells us when fixed points survive the addition of a single node.

\begin{lemma}\label{lem:survival}
Suppose $\sigma \in \FP(W_\sigma)$, and $k \notin \sigma$.  Then $\sigma \in \FP(W_{\sigma \cup \{k\}})$ if and only if $j >_\sigma k$ for some $j \in \sigma$.  If $\sigma \notin \FP(W_{\sigma \cup \{k\}})$, then $k>_\sigma j$ for all $j \in \sigma$.
\end{lemma}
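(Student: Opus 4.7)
The plan is to reduce this survival question to the sign condition of Theorem~\ref{thm:sgn-condition}, using the algebraic identity relating $s_k^\sigma$ to the $w$-values that was already extracted in the proof of Lemma~\ref{lem:either-or}. The key observation is that nothing new needs to be computed: everything follows from the identity and from the ``inside values are equal'' content of Lemma~\ref{lem:equality}.

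First, I would specialize equation~\eqref{eq:sk-sj-3} to $j \in \sigma$ and $k \notin \sigma$ (so $\chi_\sigma(j)=1$ and $\chi_\sigma(k)=0$), giving
$$s_k^\sigma \;=\; \sum_{i\in\sigma}\Wtil_{ki} s_i^\sigma \;-\; \sum_{i\in\sigma}\Wtil_{ji}s_i^\sigma.$$
Since $\sigma \in \FP(W_\sigma)$, Theorem~\ref{thm:sgn-condition} tells us that all $s_i^\sigma$ with $i \in \sigma$ share the common sign $\idx(\sigma)$, so I may write $s_i^\sigma = \idx(\sigma)|s_i^\sigma|$ and pull the common sign outside to obtain
$$s_k^\sigma \;=\; \idx(\sigma)\bigl(w_k^\sigma - w_j^\sigma\bigr).$$
This identity holds for \emph{every} $j \in \sigma$, which is already consistent with Lemma~\ref{lem:equality} (the $w_j^\sigma$ all agree on $\sigma$).

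Second, I would invoke Theorem~\ref{thm:sgn-condition} one more time, now for the extended support inside the subnetwork $W_{\sigma\cup\{k\}}$: we have $\sigma \in \FP(W_{\sigma\cup\{k\}})$ if and only if $\sgn s_k^\sigma = -\idx(\sigma)$. Plugging in the identity from the previous paragraph, this condition is equivalent to $\sgn(w_k^\sigma - w_j^\sigma) = -1$, i.e.\ $w_j^\sigma > w_k^\sigma$, which by definition is $j>_\sigma k$. Because $w_j^\sigma$ is the same value for every $j \in \sigma$, ``for some $j \in \sigma$'' and ``for all $j \in \sigma$'' are interchangeable here, giving the first assertion of the lemma.

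Finally, for the second assertion: if $\sigma \notin \FP(W_{\sigma\cup\{k\}})$, then by nondegeneracy $s_k^\sigma \ne 0$, so Lemma~\ref{lem:either-or} rules out $j\sim_\sigma k$ and the previous step rules out $j>_\sigma k$; hence $k>_\sigma j$ for that (and therefore every) $j \in \sigma$. There is no real obstacle in this proof --- all the work is already done by the identity~\eqref{eq:sk-sj-3}, Lemma~\ref{lem:equality}, Lemma~\ref{lem:either-or}, and the sign conditions; the only thing to watch is making sure the common-sign assumption on $\{s_i^\sigma\}_{i\in\sigma}$ is genuinely available, which is guaranteed by the hypothesis $\sigma \in \FP(W_\sigma)$.
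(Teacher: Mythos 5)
Your proof is correct and follows essentially the same route as the paper: specialize equation~\eqref{eq:sk-sj-3}, use the common sign of $\{s_i^\sigma\}_{i\in\sigma}$ from $\sigma\in\FP(W_\sigma)$ to pass to the $w$-values, and then read off survival from the sign condition $\sgn s_k^\sigma=-\idx(\sigma)$. Your packaging of the key step as the single identity $s_k^\sigma=\idx(\sigma)\bigl(w_k^\sigma-w_j^\sigma\bigr)$ is slightly cleaner than the paper's, which splits into the two cases $\sgn s_i^\sigma=+1$ and $\sgn s_i^\sigma=-1$ before drawing the same conclusion, but the content is identical.
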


\begin{proof} Let $\sigma \in \FP(W_\sigma)$, $k \notin \sigma$, and $j \in \sigma$. Recall from the proof of Lemma~\ref{lem:either-or} that 
$\sgn s_k^\sigma = \sgn \left( \sum_{i \in \sigma} \Wtil_{ki} s_i^\sigma - \sum_{i \in \sigma} \Wtil_{ji} s_i^\sigma \right)$. By Theorem~\ref{thm:sgn-condition} and Corollary~\ref{cor:on-off-conds}, $\sigma \in \FP(W_{\sigma\cup\{k\}})$ if and only if
 $\sgn s_k^\sigma = - \sgn s_j^\sigma$. If $\sgn s_i^\sigma = +1$ for each $i \in \sigma$, then replacing $s_i^\sigma$ with $|s_i^\sigma|$ in the sums reveals that $\sigma \in \FP(W_{\sigma \cup \{k\}})$ if and only if $j >_\sigma k$.  Similarly, if $\sgn s_i^\sigma = -1,$ we also have that $\sigma \in \FP(W_{\sigma \cup \{k\}})$ if and only if $j >_\sigma k$.  On the other hand, if $\sigma \not\in \FP(W_{\sigma \cup \{k\}})$, then we must have $k >_\sigma j$ by Lemma~\ref{lem:either-or}.
\end{proof}

Combining these results with Corollary~\ref{cor:on-off-conds}, we obtain the proof of Theorem~\ref{thm:domination}.

\begin{proof}[Proof of Theorem~\ref{thm:domination} (domination)]
The first statement, that $\sigma \in \FP(W_\sigma)$ if and only if $\sigma$ is domination-free, follows directly from Lemma~\ref{lem:equality}. Next, recall that by Corollary~\ref{cor:on-off-conds} we have $\sigma \in \FP(W)$ if and only if $\sigma \in \FP(W_\sigma)$ and $\sigma \in \FP(W_{\sigma \cup \{k\}})$ for all $k \notin \sigma$.  Applying Lemma~\ref{lem:survival}, we can conclude that for any permitted motif $\sigma$, we have $\sigma \in \FP(W)$ if and only if for each $k \notin \sigma$ there exists a $j \in \sigma$ such that $j>_\sigma k$.
\end{proof}

We end this section with a lemma that collects some key facts relating domination to fixed point supports, and parallels Theorem~\ref{thm:graph-domination} (graphical domination). In fact, we will use this in the next section to prove Theorem~\ref{thm:graph-domination}. 

\begin{lemma}[domination]\label{lem:domination}  Let $(W,\theta)$ be a TLN.  Suppose $k >_\sigma j$ for some $j,k \in [n]$.
\begin{itemize}
\item[(a)] If $j,k \in \sigma$, then $\sigma \notin \FP(W_\sigma)$, and thus $\sigma \notin \FP(W)$.
\item[(b)] If $j \in \sigma$, $k \not\in \sigma$, then $\sigma \notin \FP(W_{\sigma\cup\{k\}})$, and thus $\sigma \notin \FP(W)$. 
\item[(c)] If $j \notin \sigma$, $k \in \sigma$, and $\sigma \in  \FP(W_\sigma)$, then 
$\sigma \in \FP(W_{\sigma\cup\{j\}})$.
\end{itemize}
\end{lemma}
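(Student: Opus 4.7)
The plan is to prove each of the three parts as a short consequence of the lemmas already established in this section, namely Lemma~\ref{lem:equality}, Lemma~\ref{lem:either-or}, and Lemma~\ref{lem:survival}, together with Corollary~\ref{cor:on-off-conds}. Throughout, I will use the hypothesis $k >_\sigma j$ in the form $w_k^\sigma > w_j^\sigma$, and the fact that any permitted motif $\sigma$ satisfies $j' \sim_\sigma j''$ for every pair $j',j'' \in \sigma$.

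For part (a), with $j,k \in \sigma$ and $k >_\sigma j$, we have $w_k^\sigma \neq w_j^\sigma$, so $\sigma$ is not domination-free. By Lemma~\ref{lem:equality} this forces $\sigma \notin \FP(W_\sigma)$, and then Corollary~\ref{cor:on-off-conds} gives $\sigma \notin \FP(W)$.

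For part (b), I would split on whether $\sigma$ is a permitted motif. If $\sigma \notin \FP(W_\sigma)$, then Corollary~\ref{cor:on-off-conds} immediately yields $\sigma \notin \FP(W_{\sigma \cup \{k\}})$ and hence $\sigma \notin \FP(W)$. If instead $\sigma \in \FP(W_\sigma)$, then by Lemma~\ref{lem:equality} we have $j \sim_\sigma j'$ for all $j' \in \sigma$, so $w_{j'}^\sigma = w_j^\sigma < w_k^\sigma$, meaning $k >_\sigma j'$ for every $j' \in \sigma$. In particular no element of $\sigma$ dominates $k$, so the criterion of Lemma~\ref{lem:survival} fails and $\sigma \notin \FP(W_{\sigma \cup \{k\}})$; again Corollary~\ref{cor:on-off-conds} gives $\sigma \notin \FP(W)$.

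Part (c) is an immediate application of Lemma~\ref{lem:survival}: the hypothesis provides a permitted motif $\sigma$, an external node $j \notin \sigma$, and an internal node $k \in \sigma$ with $k >_\sigma j$, which is exactly the condition guaranteeing survival of $\sigma$ in $\FP(W_{\sigma \cup \{j\}})$. The only subtle point in the whole proof is the observation used in part (b) that, once $\sigma$ is known to be permitted, domination of a single element of $\sigma$ by $k$ propagates to domination of every element of $\sigma$ by $k$; I do not anticipate any genuine obstacle beyond this bookkeeping.
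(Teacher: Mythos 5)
Your proof is correct and follows essentially the same route as the paper's: part (a) via Lemma~\ref{lem:equality} (the paper cites Theorem~\ref{thm:domination}, but that is itself a restatement of Lemma~\ref{lem:equality}), and parts (b) and (c) via Lemma~\ref{lem:survival}. You are slightly more careful than the paper's one-line proof in part (b), where you explicitly handle the case $\sigma \notin \FP(W_\sigma)$ before invoking Lemma~\ref{lem:survival} (which requires $\sigma$ permitted as a hypothesis), and you spell out how domination of $j$ by $k$ propagates to all of $\sigma$ via $\sim_\sigma$; this bookkeeping is warranted and correct.
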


\begin{proof}
Part (a) is a direct corollary of Theorem~\ref{thm:domination}, while parts (b) and (c) are direct consequences of Lemma~\ref{lem:survival}.
\end{proof}

\subsection{Proof of Theorem~\ref{thm:graph-domination} (graphical domination)}\label{sec:graph-dom-proof}

Using Lemma~\ref{lem:domination}, it is now straightforward to prove Theorem~\ref{thm:graph-domination}. But first, we need to show that graphical domination (as defined in Section~\ref{sec:graph-domination}) is indeed a special case of the more general domination. 

\begin{lemma}\label{lem:comb-dom}
Consider a CTLN on $n$ nodes, and let $j,k \in [n]$ and $\sigma \subseteq [n]$. Suppose $k$ graphically dominates $j$ with respect to $\sigma$. Then $k >_\sigma j$.
\end{lemma}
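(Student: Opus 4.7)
The plan is to directly compute
$$w_k^\sigma - w_j^\sigma = \sum_{i \in \sigma}(\Wtil_{ki} - \Wtil_{ji})|s_i^\sigma|$$
and show that every coefficient is nonnegative, with at least one strictly positive. Combined with $|s_i^\sigma| > 0$ for all $i \in \sigma$ (by nondegeneracy), this forces $w_k^\sigma > w_j^\sigma$, that is, $k >_\sigma j$ as required. The whole argument uses nothing beyond the explicit form of the CTLN entries: $\Wtil_{\ell i} = -1$ if $i = \ell$, $\Wtil_{\ell i} = -1+\varepsilon$ if $i \neq \ell$ and $i \to \ell$ in $G$, and $\Wtil_{\ell i} = -1-\delta$ if $i \neq \ell$ and $i \not\to \ell$.

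I will split the sum according to the position of $i$ relative to $\{j,k\}$. For $i \in \sigma \setminus \{j,k\}$, both $\Wtil_{ji}$ and $\Wtil_{ki}$ are off-diagonal, and condition~(1) of Definition~\ref{def:graph-domination} says $i \to j$ implies $i \to k$. Hence if $i \to j$ then both entries equal $-1+\varepsilon$ and the difference is $0$; while if $i \not\to j$ then $\Wtil_{ji} = -1-\delta$ and $\Wtil_{ki} \in \{-1+\varepsilon,\,-1-\delta\}$, giving a difference of $\varepsilon+\delta$ or $0$. For the diagonal-involving terms, when $j \in \sigma$ condition~(2) gives $j \to k$, so $\Wtil_{kj} - \Wtil_{jj} = (-1+\varepsilon) - (-1) = \varepsilon > 0$; when $k \in \sigma$, condition~(3) gives $k \not\to j$, so $\Wtil_{kk} - \Wtil_{jk} = (-1) - (-1-\delta) = \delta > 0$. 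Since the hypothesis $\sigma \cap \{j,k\} \neq \emptyset$ forces at least one of $j$ or $k$ to lie in $\sigma$, at least one of these strictly positive contributions appears, yielding the strict inequality.

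This analysis handles the three cases of Theorem~\ref{thm:graph-domination} uniformly: inside-in ($j,k \in \sigma$, both positive terms appear), outside-in ($j \in \sigma$, $k \notin \sigma$, only the $\varepsilon$ term), and inside-out ($j \notin \sigma$, $k \in \sigma$, only the $\delta$ term). I do not expect a substantive obstacle; the only subtlety to track is the convention that $\Wtil_{\ell i}$ for $\ell \neq i$ corresponds to the edge $i \to \ell$ (an \emph{input} to $\ell$), so that conditions (1)--(3) of graphical domination, which constrain inputs to $j$ and to $k$, translate directly into termwise comparisons of the two rows of $\Wtil$ that are summed in $w_j^\sigma$ and $w_k^\sigma$. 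Notably, no auxiliary identity such as~\eqref{eq:sk-sj-4} is needed for this particular argument; the inequality is manifest coefficient by coefficient.
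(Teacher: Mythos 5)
Your proof is correct and follows essentially the same approach as the paper: compare $\Wtil_{ki}$ and $\Wtil_{ji}$ termwise over $i \in \sigma$, using conditions (1)--(3) of graphical domination to show each difference is nonnegative, with strict positivity from the diagonal term(s) guaranteed by $\sigma \cap \{j,k\} \neq \emptyset$ and $|s_i^\sigma| > 0$. The only difference is that you spell out the explicit values of the differences ($0$, $\varepsilon+\delta$, $\varepsilon$, $\delta$) where the paper just records the inequalities, but the logic is identical.
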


\begin{proof}
Recall that if $k$ graphically dominates $j$ with respect to $\sigma$, then $\sigma \cap \{j,k\} \neq \emptyset$. Moreover, three conditions hold: (1) for each $i \in \sigma \setminus\{j,k\},$ if $i \to j$ then $i \to k$; (2) if $j \in \sigma$ then $j \to k$; and (3) if $k \in \sigma$, then $k \not\to j$.

Next, recall that in a CTLN we have $\Wtil_{ji} = -1+\varepsilon$ if $i \to j$, $\Wtil_{ji} = -1-\delta$ if $i \not\to j$, and $\Wtil_{ii} = -1$.  Condition 1 thus implies that $\Wtil_{ji} \leq \Wtil_{ki}$ for each $i \in \sigma\setminus\{j,k\}$.  If $j \in \sigma$, condition 2 gives $\Wtil_{jj} < \Wtil_{kj}$, while if $k \in \sigma$ condition 3 implies $\Wtil_{jk} < \Wtil_{kk}.$  Putting these together we see that 
$$w_j^\sigma = \sum_{i \in \sigma} \Wtil_{ji}|s_i^\sigma| < \sum_{i \in \sigma} \Wtil_{ki}|s_i^\sigma| = w_k^\sigma,$$ 
where the inequality is strict because at least one of $j$ or $k$ is in $\sigma$, and $\Wtil_{ji} < \Wtil_{ki}$ for $i=j$ or $i=k$. Since $w_k^\sigma> w_j^\sigma$, it follows from the definition that $k >_\sigma j$.
\end{proof}

We can now prove Theorem~\ref{thm:graph-domination}, which tells us how to use graphical domination in order to rule in and rule out various fixed point supports.

\begin{proof}[Proof of Theorem~\ref{thm:graph-domination}]
First, observe using Lemma~\ref{lem:comb-dom} that $k>_\sigma j$. Now the statements (a), (b), and (c) all follow immediately from parts (a), (b), and (c) of Lemma~\ref{lem:domination}.
\end{proof}

Note that the converse of Lemma~\ref{lem:comb-dom} is not true: there could still be a domination relationship even if there is no graphical domination. For example, Appendix Section~\ref{appendixB} shows six graphs in Figure~\ref{fig:domination-free-forbidden} that are forbidden, and thus are \underline{not} domination-free by Theorem~\ref{thm:domination}. So there must be general domination in each of these graphs, despite the absence of graphical domination. 

Note also that, unlike graphical domination, general domination may be parameter dependent, even within the legal range. For example, all the graphs in Figure~\ref{fig:param-depend-min-perm} of Appendix Section~\ref{appendixC} have parameter-dependent domination relationships, and this is reflected by the fact that $\FP(G)$ depends on $\varepsilon$ and $\delta$.

\subsection{Proof of Theorem~\ref{thm:uniform-in-degree} (uniform in-degree)}\label{sec:proofs-uniform-in-degree}

In this section, we use general domination to prove Theorem~\ref{thm:uniform-in-degree}, giving conditions for when a uniform in-degree subset supports a fixed point.  We begin by showing that a uniform in-degree subset always supports a fixed point in its restricted subgraph and that fixed point has uniform firing rate values.  

\begin{lemma}\label{lemma:uniform}
If $\sigma$ has uniform in-degree $d$, then $\sigma \in \FP(G|_\sigma)$ and the corresponding fixed point $x^*$ is uniform, with values
$$x_i^* = \dfrac{\theta}{|\sigma|+\delta(|\sigma|-d-1) - \varepsilon d} \;\;\; \text{for each} \;\;\; i \in \sigma.$$
\end{lemma}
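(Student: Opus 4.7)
The plan is to produce the claimed fixed point explicitly, by exploiting the fact that uniform in-degree makes every row of $I - W_\sigma$ sum to the same value. I will then observe that a constant vector solves $(I-W_\sigma)x_\sigma = \theta\one_\sigma$, and verify that it is positive (hence actually a fixed point of $G|_\sigma$).

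First I would write down the entries of $I - W_\sigma$: the diagonal entries are $1$, the off-diagonal entry $(i,j)$ is $1 - \varepsilon$ whenever $j \to i$ in $G|_\sigma$, and $1 + \delta$ whenever $j \not\to i$. Now fix $i \in \sigma$ and compute the $i$th row sum. Since $\sigma$ has uniform in-degree $d$, node $i$ receives exactly $d$ edges from within $\sigma$ and fails to receive $|\sigma|-1-d$ edges, so the row sum is
\[
R \;\od\; 1 + d(1-\varepsilon) + (|\sigma|-1-d)(1+\delta) \;=\; |\sigma| + \delta(|\sigma|-d-1) - \varepsilon d,
\]
independent of $i$. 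In particular $(I-W_\sigma)\one_\sigma = R\,\one_\sigma$.

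Next I would check that $R>0$ under the legal range $\varepsilon < \delta/(\delta+1)$, $\delta>0$, $\theta>0$. Since $d \leq |\sigma|-1$, we have
\[
R \;\geq\; |\sigma|(1+\delta) - \delta - (|\sigma|-1)(\varepsilon+\delta) \;=\; |\sigma|(1-\varepsilon) + \varepsilon \;>\; 0,
\]
using $\varepsilon < 1$. Consequently the vector $x^\sigma \od (\theta/R)\one_\sigma$ has strictly positive entries and satisfies $(I-W_\sigma)x^\sigma = \theta\one_\sigma$. By the discussion following Definition~\ref{def:nondegenerate}, this is exactly the condition for $\sigma \in \FP(G|_\sigma)$, and reading off the entries gives the claimed value $x_i^* = \theta/R$ for each $i \in \sigma$.

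There is no real obstacle here; the only nontrivial thing to check is that $R$ does not accidentally vanish or become negative within the legal parameter range, which follows from the bound above. The argument makes no reference to signs of Cramer's determinants or to domination, so it serves as a clean starting point for the subsequent proof of Theorem~\ref{thm:uniform-in-degree}, where general domination will be invoked to characterize survival in terms of $d_k$.
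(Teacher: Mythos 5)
Your proof is correct and follows essentially the same route as the paper's: observe that uniform in-degree makes every row of $I - W_\sigma$ sum to the same value $R$, so $\one_\sigma$ is an eigenvector and $x^\sigma = (\theta/R)\one_\sigma$ solves $(I-W_\sigma)x^\sigma = \theta\one_\sigma$. The one small improvement is that you explicitly verify $R \geq |\sigma|(1-\varepsilon)+\varepsilon > 0$ across the legal range, whereas the paper simply asserts $x_i^* > 0$ without showing the bound.
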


\begin{proof} 
Because $G|_\sigma$ has uniform in-degree, the row sums of $I-W_\sigma$ are all equal.  This implies that the all-ones vector $\one_\sigma$ is an eigenvector of $I-W_\sigma$, with eigenvalue $R$ equal to the row sum.  Now consider the vector $x^*$ satisfying $x_i^* = \theta/R$ for each $i \in \sigma$, and $x_k^* = 0$ for each $k \notin \sigma$.  Clearly, $(I-W_\sigma) x_\sigma^* = \theta \one_\sigma,$ and so $x_\sigma^*$ is a fixed point of the network restricted to $\sigma$.  Moreover, since all vertices in $G|_\sigma$ have in-degree $d$, then each row of $W_\sigma$ has $d$ terms with value $-1+\varepsilon$ and $|\sigma|-d-1$ terms with value $-1-\delta$.  This allows us to compute the row sum as
$$R = 1 - \sum_{i \in \sigma}W_{1i} = |\sigma| + \delta(|\sigma|-d-1) - \varepsilon d ,$$ 
yielding $x_i^* = \theta/(|\sigma| + \delta(|\sigma|-d-1) - \varepsilon d)$ for $i \in \sigma$, as desired.  Note that $x_i^*>0$ for the full range of $d$ values, so this fixed point always satisfies fixed-point condition (i).  \end{proof}

Next, we give the survival rule of a uniform in-degree fixed point in terms of domination.

\begin{lemma}[uniform in-degree domination]\label{lemma:uniform-domination}
Suppose $\sigma$ has uniform in-degree $d$, and suppose $j \in \sigma$ and $k \notin \sigma$.  Let $d_k = |\{i \in \sigma \mid i \to k\}|$ be the number of edges $k$ receives from $\sigma$.  Then $\sigma \in \FP(G|_\sigma)$ and
\begin{itemize}
\item[(i)] $k>_\sigma j$ if $d_k > d$, and
\item[(ii)] $j >_\sigma k$, if $d_k \leq d.$
\end{itemize}
\end{lemma}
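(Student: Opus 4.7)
The plan is to exploit the uniform in-degree structure to reduce the quantities $w_j^\sigma$ and $w_k^\sigma$ to explicit linear expressions in $d$ and $d_k$, and then compare them directly.

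First, I would apply Lemma~\ref{lemma:uniform}, which guarantees that $\sigma \in \FP(G|_\sigma)$ and that the corresponding fixed point $x^*$ takes a common positive value on every $i \in \sigma$. Since $x_i^\sigma = s_i^\sigma / \det(I - W_\sigma)$ by Cramer's rule, this uniformity lifts to the Cramer's determinants: there is a single number $s$ (with $|s| > 0$ by nondegeneracy) such that $s_i^\sigma = s$ for every $i \in \sigma$. Consequently, $w_\ell^\sigma = |s| \sum_{i \in \sigma} \Wtil_{\ell i}$ for any $\ell \in [n]$, so the problem collapses to computing two row sums.

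Next, I would evaluate these row sums using the CTLN weights $\Wtil_{\ell i} \in \{-1,\,-1+\varepsilon,\,-1-\delta\}$. For $j \in \sigma$, the uniform in-degree hypothesis gives exactly $d$ off-diagonal entries equal to $-1+\varepsilon$, exactly $|\sigma|-1-d$ entries equal to $-1-\delta$, and one diagonal entry equal to $-1$. For $k \notin \sigma$, all $|\sigma|$ entries are off-diagonal, with $d_k$ equal to $-1+\varepsilon$ and $|\sigma|-d_k$ equal to $-1-\delta$. A short routine computation then yields
\[
w_k^\sigma - w_j^\sigma \;=\; |s|\,\bigl[(d_k - d)(\varepsilon + \delta) \;-\; \delta\bigr].
\]

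Finally, since $|s|>0$, the sign of $w_k^\sigma - w_j^\sigma$ matches that of $(d_k - d)(\varepsilon + \delta) - \delta$. In the legal range $\varepsilon, \delta > 0$: if $d_k \leq d$, the bracketed quantity is at most $-\delta < 0$, giving $j >_\sigma k$; if $d_k \geq d+1$, it is at least $(\varepsilon+\delta)-\delta = \varepsilon > 0$, giving $k >_\sigma j$. This establishes (i) and (ii). The only substantive point, rather than pure bookkeeping, is the observation in the first step that the $s_i^\sigma$ themselves (not merely their absolute values) are constant on $\sigma$, so that $|s|$ factors cleanly out of the comparison; everything else is elementary counting of edges.
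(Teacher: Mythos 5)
Your proof is correct and follows essentially the same route as the paper: invoke Lemma~\ref{lemma:uniform} and Cramer's rule to see that the $s_i^\sigma$ are constant on $\sigma$, factor $|s_i^\sigma|$ out of $w_j^\sigma$ and $w_k^\sigma$, and reduce the comparison to counting edge types in the row sums $\sum_{i\in\sigma}\Wtil_{ji}$ and $\sum_{i\in\sigma}\Wtil_{ki}$. Your closed-form difference $w_k^\sigma - w_j^\sigma = |s|\bigl[(d_k-d)(\varepsilon+\delta)-\delta\bigr]$ agrees with the paper's computation, and the sign analysis is the same.
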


\begin{proof}
By Cramer's rule, $s_i^\sigma = \det(I-W_\sigma)x_i^*$ (see Lemma~\ref{lemma:cramer} and equation~\eqref{eq:x_i-s_i}).  Thus, by Lemma~\ref{lemma:uniform}, $s_i^\sigma = s_j^\sigma$ for all $i,j \in \sigma$, when $\sigma$ is uniform in-degree.  This implies $\sigma \in \FP(G|_\sigma),$ and also allows us to factor $|s_i^\sigma|$ out of the sums for checking domination, so that $k >_\sigma j$ if and only if $\sum_{i \in \sigma} \Wtil_{ki} > \sum_{i \in \sigma} \Wtil_{ji}$.
Now observe that $\sum_{i \in \sigma} \Wtil_{ki} = d_k(-1+\varepsilon) + (|\sigma|-d_k)(-1-\delta)$, while 
$\sum_{i \in \sigma} \Wtil_{ji} = d(-1+\varepsilon) + (|\sigma|-d-1)(-1-\delta) -1 = 
d(-1+\varepsilon) + (|\sigma|-d)(-1-\delta) + \delta,$ since $j \in \sigma$ and $\Wtil_{jj} = -1$. 
In particular, if $d_k = d$ then we have  $\sum_{i \in \sigma} \Wtil_{ki} < \sum_{i \in \sigma} \Wtil_{ji}$, so that $j >_\sigma k$.  It is now easy to check that if $d_k > d$, then $k >_\sigma j$, while if $d_k \leq d$, then $j >_\sigma k$.
\end{proof}

Finally, we combine these results to prove Theorem~\ref{thm:uniform-in-degree} and prove the stability conditions.

\begin{proof}[Proof of Theorem~\ref{thm:uniform-in-degree}]
By Lemma~\ref{lemma:uniform}, if $\sigma$ has uniform in-degree, then it supports a fixed point in $G|_\sigma$, and thus $i \sim_\sigma j$ for all $i, j \in \sigma$.  By Lemma~\ref{lemma:uniform-domination}, for each $k \notin \sigma$, we have $j >_\sigma k$ precisely when $d_k \leq d$.  Thus by Theorem~\ref{thm:domination} (domination), $\sigma$ supports a fixed point in $G|_{\sigma \cup k}$ if and only if $d_k \leq d$.

For the stability conditions, recall that the fixed point is stable precisely when all the eigenvalues of $-I+W_\sigma$ have negative real part, or equivalently all the eigenvalues of $I-W_\sigma$ have positive real part.  First consider $d < |\sigma|/2$.  Observe that the uniform in-degree implies that the all-ones vector $1$ is an eigenvector of $I-W_\sigma,$ with eigenvalue $\lambda$ equal to the row sum:
$$\lambda = |\sigma| + (|\sigma|-d-1)\delta - d\varepsilon = |\sigma| + (|\sigma|-1)\delta - d(\delta+\varepsilon).$$
When $d < |\sigma|/2$, we have $d \leq (|\sigma|-1)/2$, and 
thus $\lambda > |\sigma|$ whenever $|\sigma| >1$ because $\varepsilon < \delta.$ 
On the other hand, since the sum of the eigenvalues equals the trace, $\operatorname{Tr}(I-W_\sigma) = |\sigma|$, we see that $I-W_\sigma$ must have a negative eigenvalue.  This implies that the fixed point is unstable.

Next consider when $d=|\sigma|-1$, so that $\sigma$ is a clique.  In this case, $I-W_\sigma = (1-\varepsilon)11^T + \varepsilon I_\sigma,$ and so the eigenvalues are $|\sigma|(1-\varepsilon)+\varepsilon$ and $\varepsilon$.  Clearly, these are positive
for $0<\varepsilon<\frac{\delta}{\delta+1}<1$, so we can conclude that the fixed point is stable. 
\end{proof}

\subsection{Domination and simply-added splits}\label{sec:domination-simply-added}

When $\sigma = \tau \: {\cup} \:\omega$, where $\omega$ is simply-added to $\tau$, we find that domination relationships with respect to $\tau$ are preserved with respect to $\sigma$.  

\begin{lemma}\label{lemma:simply-added}
Let $\sigma = \tau \:  {\cup} \:\omega$, where $\omega$ is simply-added to $\tau$.  Then for any $j,k \in \tau$,
$$(i) \;\; k >_\sigma j \; \Leftrightarrow \; k >_\tau j, \;\; \text{and} \;\;
(ii) \;\; k \sim_\sigma j  \; \Leftrightarrow \; k\sim_\tau j.$$

\noindent Furthermore, for any $j \in \tau$ and $\ell \notin \sigma$:

(iii) If for all $i \in \omega$ such that $i \to j$ we also have $i \to \ell$, then 
$$\ell >_\tau j \;\Rightarrow \;\ell >_\sigma j , \;\; \text{and } \;\;\ell \geq_\tau j \;\Rightarrow \;\ell \geq_\sigma j.$$ 

(iv) If for all $i \in \omega$ such that $i \to \ell$ we also have $i \to j$, then 
$$j >_\tau \ell \;\Rightarrow \;j >_\sigma \ell , \;\; \text{and } \;\; j \geq_\tau \ell \;\Rightarrow \;j \geq_\sigma \ell.$$ 

\end{lemma}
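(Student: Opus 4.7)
The plan is to expand the definition of $w_j^\sigma$ by splitting the sum over $\sigma = \tau \cup \omega$ and exploit the fact that simply-added structure makes the $\omega$-portion essentially independent of the choice of $j \in \tau$. Concretely, for any $j \in \tau$,
\[
w_j^\sigma = \sum_{i \in \tau}\Wtil_{ji}|s_i^\sigma| \;+\; \sum_{i \in \omega}\Wtil_{ji}|s_i^\sigma|.
\]
By Theorem~\ref{thm:simply-added}, since $\omega$ is simply-added to $\tau$, we have $s_i^\sigma = \alpha s_i^\tau$ for every $i \in \tau$, where $\alpha = \tfrac{1}{\theta}s_i^\omega$ takes a common (nonzero, by nondegeneracy) value for all $i \in \tau$. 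Hence the first sum equals $|\alpha|\, w_j^\tau$. Next, because $\omega$ is simply-added to $\tau$, each $i \in \omega$ is either a projector or a non-projector onto $\tau$, so $\Wtil_{ji}$ depends only on $i$ and not on the choice of $j \in \tau$. The second sum is therefore a constant $\beta$ independent of $j$. Putting these together,
\[
w_j^\sigma \;=\; |\alpha|\, w_j^\tau + \beta \qquad \text{for all } j \in \tau.
\]

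For (i) and (ii), subtracting this identity for $k$ and $j$ in $\tau$ gives $w_k^\sigma - w_j^\sigma = |\alpha|(w_k^\tau - w_j^\tau)$. Since $|\alpha|>0$, the sign of the left-hand side agrees with the sign of $w_k^\tau - w_j^\tau$, so strict domination and equivalence carry over in both directions.

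For (iii) and (iv) I repeat the decomposition with $\ell \notin \sigma$, obtaining $w_\ell^\sigma = |\alpha|\, w_\ell^\tau + \gamma_\ell$ where $\gamma_\ell = \sum_{i \in \omega}\Wtil_{\ell i}|s_i^\sigma|$. Subtracting from $w_j^\sigma = |\alpha|\, w_j^\tau + \beta$ yields
\[
w_\ell^\sigma - w_j^\sigma \;=\; |\alpha|\bigl(w_\ell^\tau - w_j^\tau\bigr) \;+\; \sum_{i \in \omega}\bigl(\Wtil_{\ell i} - \Wtil_{ji}\bigr)|s_i^\sigma|.
\]
Under the hypothesis of (iii), every $i \in \omega$ falls in one of three cases: $i \to j$ and $i \to \ell$ (difference zero), $i \not\to j$ and $i \not\to \ell$ (difference zero), or $i \not\to j$ and $i \to \ell$ (difference $\varepsilon+\delta>0$); the remaining case $i \to j$, $i \not\to \ell$ is ruled out by assumption. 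Thus the correction term is $\geq 0$, so if $w_\ell^\tau > w_j^\tau$ (resp.\ $\geq$) then $w_\ell^\sigma > w_j^\sigma$ (resp.\ $\geq$). Case (iv) is symmetric: the hypothesis excludes the case $i \to \ell$, $i \not\to j$, which forces every summand in the correction to be $\leq 0$, and subtracting $w_j^\sigma - w_\ell^\sigma$ proves the claim.

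The work is essentially routine once the decomposition $w_j^\sigma = |\alpha|w_j^\tau + \beta$ is in hand; the main thing to be careful about is the edge-type case analysis controlling the sign of $\sum_{i \in \omega}(\Wtil_{\ell i} - \Wtil_{ji})|s_i^\sigma|$, and verifying that $|\alpha|$ is strictly positive so that strict inequalities are preserved. Everything else is just bookkeeping.
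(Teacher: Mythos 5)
Your proposal is correct and follows essentially the same route as the paper: decompose $w_j^\sigma$ over $\tau$ and $\omega$, apply Theorem~\ref{thm:simply-added} to pull out the common factor $|\alpha|$, observe that the $\omega$-sum is independent of $j \in \tau$ because every $i \in \omega$ is a projector or non-projector onto all of $\tau$, and then compare the correction term $\sum_{i\in\omega}(\Wtil_{\ell i}-\Wtil_{ji})|s_i^\sigma|$ signwise. Your explicit case analysis of edge types in (iii)/(iv) and the remark that $|\alpha|>0$ by nondegeneracy are just slightly more spelled-out versions of the paper's observation that $\Wtil_{\ell i}\geq\Wtil_{ji}$ (resp.\ $\leq$) for all $i\in\omega$.
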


\noindent In particular, the lemma tells us that if $\tau \notin \FP(G|_\tau)$ then $\sigma \notin \FP(G)$.

\begin{proof}
Using Theorem~\ref{thm:simply-added} we see that for each $i \in \tau$, $|s_i^\sigma| = |\alpha||s_i^\tau|$, where $\alpha = s_i^\omega$, which is constant across $i \in \tau$.  In particular, for any fixed $j \in \tau$ we have
$$w_j^\sigma = \sum_{i \in \sigma} \Wtil_{ji}|s_i^\sigma| = \sum_{i \in \tau} \Wtil_{ji}|s_i^\sigma| + \sum_{i \in \omega} \Wtil_{ji}|s_i^\sigma| =  \sum_{i \in \tau} \Wtil_{ji}|\alpha||s_i^\tau| + \sum_{i \in \omega} \Wtil_{ji}|s_i^\sigma|= |\alpha|w_j^\tau +  \sum_{i \in \omega} \Wtil_{ji}|s_i^\sigma|,$$
where the last sum is identical for all $j \in \tau$ because $\omega$ is simply-added to $\tau$.  This means that
for any $j,k \in \tau$, 
$$w_k^\sigma - w_j^\sigma = |\alpha|(w_k^\tau - w_j^\tau).$$
Thus, (i) and (ii) both hold.

By the same logic as above, for any $j \in \tau$ and $\ell  \notin \sigma$, we have $w_\ell^\sigma = |\alpha|w_\ell^\tau +  \sum_{i \in \omega} \Wtil_{\ell i}|s_i^\sigma|$ and thus,
$$w_\ell^\sigma - w_j^\sigma = |\alpha|(w_\ell^\tau - w_j^\tau) + \sum_{i \in \omega} (\Wtil_{\ell i} - \Wtil_{ji})|s_i^\sigma|.$$
In (iii), we have $\Wtil_{\ell i} \geq \Wtil_{ji}$ for all $i \in \omega$. Thus if $\ell >_\tau j$, so that $w_\ell^\tau > w_j^\tau$, then $w_\ell^\sigma - w_j^\sigma >0$ and hence $\ell >_\sigma j$.  Similarly, we see that $\ell \geq_\tau j \;\Rightarrow \;\ell \geq_\sigma j$. 
Finally, in (iv) we have 
$\Wtil_{\ell i} \leq \Wtil_{ji}$ for all $i \in \omega$. Thus if $j >_\tau \ell$, so that  $w_j^\tau > w_\ell^\tau$, then $w_\ell^\sigma - w_j^\sigma<0$ and thus $j >_\sigma \ell$.  Similarly, $j \geq_\tau \ell \;\Rightarrow \;j \geq_\sigma \ell$. 
\end{proof}

Lemma~\ref{lemma:simply-added} shows that identifying a simply-added split of $\sigma$ can be useful for identifying domination relationships with respect to $\tau$ that lift to domination relationships with respect to $\sigma$.  This is particularly useful in composite graphs, since for any component of a composite graph, the rest of the graph is simply-added to that component.  In particular, we use Lemma~\ref{lemma:simply-added} to prove the partial survival rules for disjoint unions and clique unions embedded in a larger graph.  

\begin{proof}[Proof of Proposition~\ref{prop:disjoint-union-survival} (survival of disjoint union)]
Suppose there exists $i \in [N]$ such that $\sigma_i \notin \FP(G)$. 
If $\sigma_i \notin \FP(G|_{\sigma_i})$, then by Lemma~\ref{lemma:composite1} $\sigma \notin \FP(G)$. Assuming $\sigma_i \in \FP(G|_{\sigma_i})$, then the fact that $\sigma_i \notin \FP(G)$ implies there exists $j \in \sigma_i$ and $\ell \notin \sigma_i$ such that $\ell \geq_{\sigma_i} j$ (by Theorem~\ref{thm:domination}).  Note, however, that for all $k \in \sigma \setminus \sigma_i$,  we have $j >_{\sigma_i} k$ because (by the disjoint union) there are no edges from $\sigma_i$ to $k$. Hence $\ell \notin \sigma$, and it suffices by Theorem~\ref{thm:domination} to show that $\ell \geq_\sigma j$ in order to conclude that $\sigma \notin \FP(G)$. To do this, we will use Lemma~\ref{lemma:simply-added} for 
$\sigma = \tau \cup \omega$, where $\tau = \sigma_i$ and $\omega = \sigma \setminus \sigma_i$. Since there are no edges from $\omega$ to $\sigma_i$, the condition in part (iii) of Lemma~\ref{lemma:simply-added} is trivially satisfied, and thus $\ell \geq_{\sigma_i} j$ implies $\ell \geq_{\sigma} j$, as desired.
\end{proof}

\begin{proof}[Proof of Proposition~\ref{prop:clique-union-survival} (survival of clique union)]
Since $\sigma$ is a permitted motif, for all $j, \ell \in \sigma$ we have $j \sim_\sigma \ell$, by Theorem~\ref{thm:domination}.  By Theorem~\ref{thm:domination}, to show $\sigma \in \FP(G)$, we must show that for each 
$k\in [n]\setminus\sigma$, there exists a $j \in \sigma$ such that $j >_\sigma k$. 
Fix $k\in [n]\setminus\sigma$. By hypothesis, there exists a $\sigma_i$ such that $\sigma_i \in \FP(G|_{\sigma_i \cup k})$. Thus for any $j \in \sigma_i$, we have $j >_{\sigma_i} k$.  We will now use Lemma~\ref{lemma:simply-added} to show that this domination relationship lifts to $\sigma$. Following the notation of the lemma, let $\sigma = \tau \cup \omega$ for $\tau = \sigma_i$ and $\omega = \sigma \setminus \sigma_i$, and observe that because $\sigma$ is a clique union, $\omega$ is simply-added to $\tau$. Moreover, since $j \in \sigma_i = \tau$ receives edges from all nodes in $\omega$, the condition in part (iv) of Lemma~\ref{lemma:simply-added} is trivially satisfied. Thus, $j >_{\sigma_i} k$ implies $j >_{\sigma} k$, as desired. We conclude that $\sigma \in \FP(G|_{\sigma \cup k})$ for each $k \in [n] \setminus \sigma$, and thus $\sigma \in \FP(G)$. 
\end{proof}
\medskip

It turns out that even when $\tau$ is domination-free, it can still be useful to have a simply-added split in order to identify domination relationships in $\sigma$.  Example~\ref{ex:forbidden-simply-added} illustrates how one can use a simply-added split where $\tau$ has uniform in-degree to infer domination relationships in $\sigma$ from graphical domination that occurs in some ``equivalent" graph.  

\begin{figure}[!ht]
\begin{center}
\includegraphics[width=.7\textwidth]{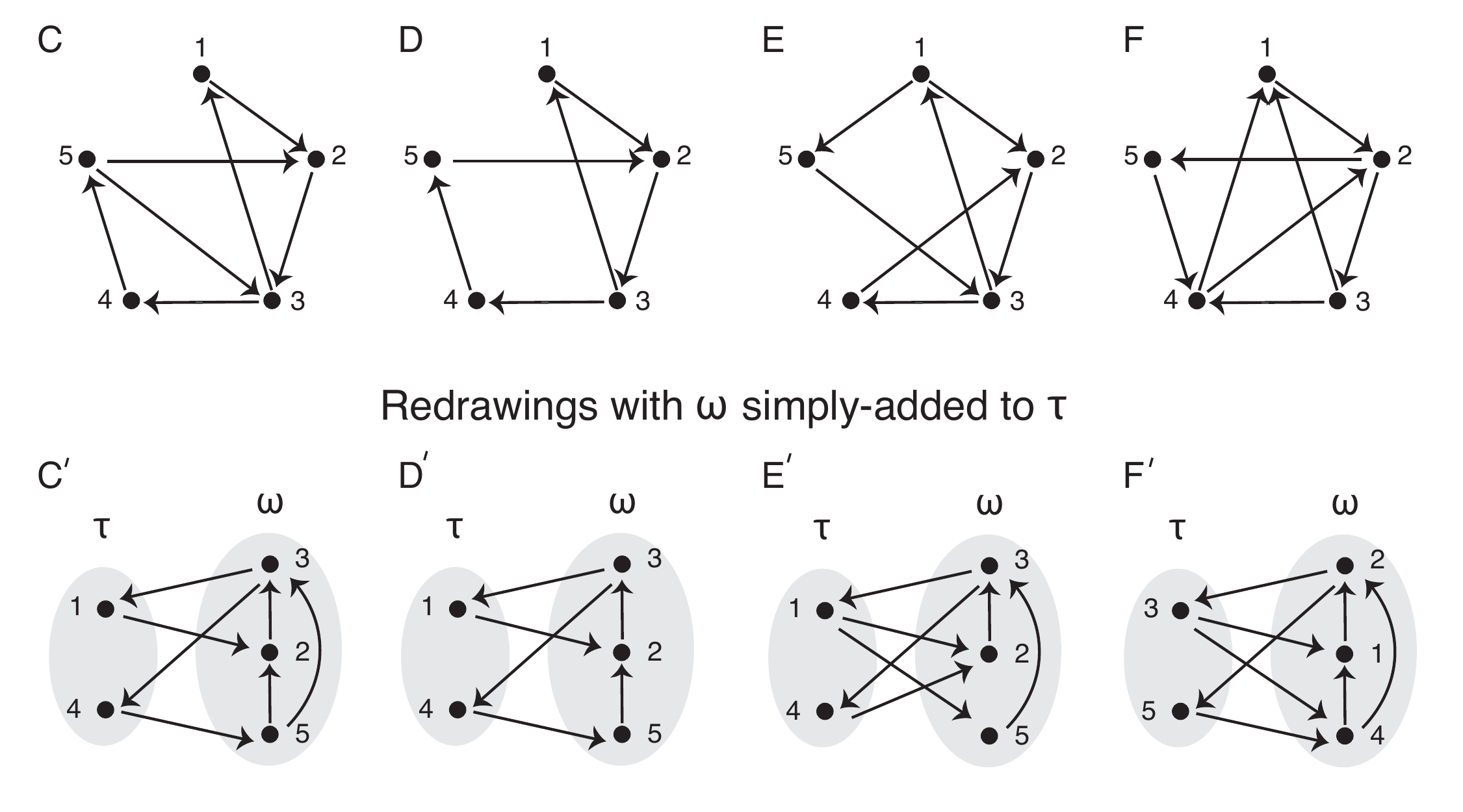}
\caption{\textbf{Inferring domination from a simply-added split.} (First row) Oriented graphs C--F from Figure~\ref{fig:domination-free-forbidden}, which are forbidden without graphical domination.  (Second row) Redrawings of the graphs in terms of components $\tau$ and $\omega$, where $\omega$ is simply-added to $\tau$.  
}
\label{fig:domination-free-forbidden-simply-added}
\end{center}
\vspace{-.1in}
\end{figure}

\begin{example}\label{ex:forbidden-simply-added}
Consider the graphs in Figure~\ref{fig:domination-free-forbidden-simply-added}.  In Appendix Section~\ref{appendixB}, these graphs are all shown to be forbidden motifs via parity arguments (these are graphs C--F in Figure~\ref{fig:domination-free-forbidden}).  Here we explore how/when simply-added splits can be used to infer domination relationships to directly show that a graph is forbidden.

Panel C$'$ in Figure~\ref{fig:domination-free-forbidden-simply-added} shows a redrawing of graph C in terms of a simply-added split, where $\tau= \{1,4\}$, $\omega = \{2,3,5\}$, and $\omega$ is simply-added to $\tau$.  Since $\tau$ is an independent set, it has uniform in-degree, and thus $s_1^\tau = s_4^\tau$; furthermore, since $\sigma = \{1,2,3,4,5\}$ is $\tau \cup \omega$, we have that $s_1^\sigma = s_4^\sigma$ by Theorem~\ref{thm:simply-added} (simply-added).  Thus any inputs from node 1 to the rest of the graph are equivalent to inputs from node 4 in terms of the contribution of the $s_i^\sigma$ values (the relevant quantity for computing $w_j^\sigma$ and identifying domination). Hence, in terms of domination relationships with respect to $\sigma$, graph C is equivalent to a graph where the $1 \to 2$ edge is replaced with a $4 \to 2$ edge.  In this equivalent graph, we see that node $2$ inside-in graphically dominates $5$.  Thus in graph C, we have $2 >_\sigma 5$, and so $\sigma$ (i.e.\ the full graph C) is a forbidden motif since it is not domination-free.  

Since graph D only differs from C by the $5 \to 3$ edge, an identical argument shows that graph D is equivalent (in terms of $s_i^\sigma$ values) to a graph where node $2$ graphically dominates $5$, and so in graph D, we have $2 >_\sigma 5$.  Thus, graph D is a forbidden motif.

Observe that graphs E and F can also be decomposed as $\omega$ simply-added to $\tau$, where $\tau$ has uniform in-degree, and thus the values of $s_i^\sigma$ for $i \in \tau$ are all equal (see Figure~\ref{fig:domination-free-forbidden-simply-added}E$'$ and F$'$).  However, there is no equivalent graph with inside-in graphical domination for either of these graphs.  To see this in graph E$'$, note that for each pair of nodes $j, k \in \omega$ with $j \to k$ and $k \not\to j$ (conditions 2 and 3 of the definition of graphical domination), we have that node $j$ receives more inputs from $\tau$ than node $k$ does; thus condition 1 can never be satisfied in any equivalent graph, and so $k \not>_\sigma j$ for any such pair.  The same argument can be made for graph F$'$ to show that graphical domination cannot be used to show the motif is forbidden.  Thus to show that these graphs are forbidden motifs, one must explicitly compute the $s_i^\sigma$, or rely on parity arguments as in Appendix Section~\ref{appendixB}.
\end{example}

\section{Discussion}

In this work, we have introduced two new characterizations for the fixed points of competitive TLNs: first in terms of sign conditions (Theorem~\ref{thm:sgn-condition}), and later in terms of domination (Theorem~\ref{thm:domination}). Specializing to CTLNs, we used these tools to prove key theorems on graphical domination (Theorem~\ref{thm:graph-domination}) and simply-added splits (Theorem~\ref{thm:simply-added}), as well as to derive survival rules for uniform in-degree motifs (Theorem~\ref{thm:uniform-in-degree}).  
These methods then enabled us to prove a series of graph rules in Section~\ref{sec:graph-rules}, which allow one to determine elements of $\FP(G)$ by direct analysis of the graph $G$. Finally, in Section~\ref{sec:building-blocks}, we have shown how this ``graphical calculus'' can be extended to larger networks comprised of simpler building blocks.

Any conclusions about a network derived from graph rules are automatically parameter independent. Since 
some CTLNs do have parameter-dependent permitted motifs (see Appendix Section~\ref{appendixC}), we know that 
graph rules cannot fully determine $\FP(G)$ in all cases. Nevertheless, it is likely that there are many more graph rules we have yet to discover, covering additional cases where motifs are permitted or forbidden in a parameter-independent manner. In particular, the style of graphical analysis illustrated in Example~\ref{ex:forbidden-simply-added} seems to hint at a missing graph rule. Moreover, if Conjecture~\ref{conjecture} is true then there are additional building block graph rules of the form given in Theorem~\ref{thm:composite-permitted} that apply to composite graphs with more complicated skeletons.

To what extent do graph rules extend to more general TLNs? One thing we can immediately say, based on the determinant form of the sign conditions in Theorem~\ref{thm:sgn-condition}, is that for any CTLN there must be an open neighborhood in the $(W,b)$ parameter space in which $\FP(W,b) = \FP(G,\varepsilon,\delta)$ -- that is, the fixed point supports of all TLNs in this neighborhood match those of the CTLN. This follows from the fact that the $s_i^\sigma$ are all polynomials in the entries of $W$ and $b$, and thus vary continuously as a function of these parameters. In particular, for purposes of determining fixed points, the constraints on $(W,b)$ imposed by CTLNs are not fine-tuned, and so the inferences we make from graph rules are robust to at least small perturbations of the network parameters. That said, we currently have no theoretical handle on how big these neighborhoods are, beyond the fact that they are open sets (and thus full-dimensional).

There are also non-perturbative approaches to generalizing graph rules.
In Appendix Section~\ref{appendixD}, we show how to associate a graph to any TLN $(W,\theta)$ having uniform input. For $n=1,2$, this graph fully characterizes the permitted motifs, just as it does for CTLNs. However, various CTLN graph rules break down at $n=3$, as illustrated in Example~\ref{ex:cautionary-example}. Currently, we are developing weaker versions of the graph rules that do extend to these more general settings, by applying the theory of oriented matroids to hyperplane arrangements associated to TLNs.

Finally, a comment on the nonlinearity. The proofs in this paper all rely on the precise form of the threshold-nonlinearity in~\eqref{eq:network2}, as well as on the assumption that $W$ is competitive. Based on our own computational observations, however, we expect that many qualitative aspects of these results should continue to hold for other nonlinearities, and/or less strict assumptions on $W$ (and $b$). But these questions are beyond the scope of the current paper, and so we leave them for future work.

\bigskip

\noindent{\bf Acknowledgments.} This work was supported by NIH R01 EB022862 and NSF DMS-1516881. The authors would also like to thank Anda Degeratu for proving an earlier version of Lemma~\ref{lemma:alternation}.

\bibliographystyle{unsrt}
\bibliography{CTLN-refs}
\pagebreak

\section{Appendix}

\subsection{Permitted and forbidden motifs of size $n \leq 3$}\label{appendixA}

\begin{figure}[!ht]
\begin{center}
\includegraphics[width=\textwidth]{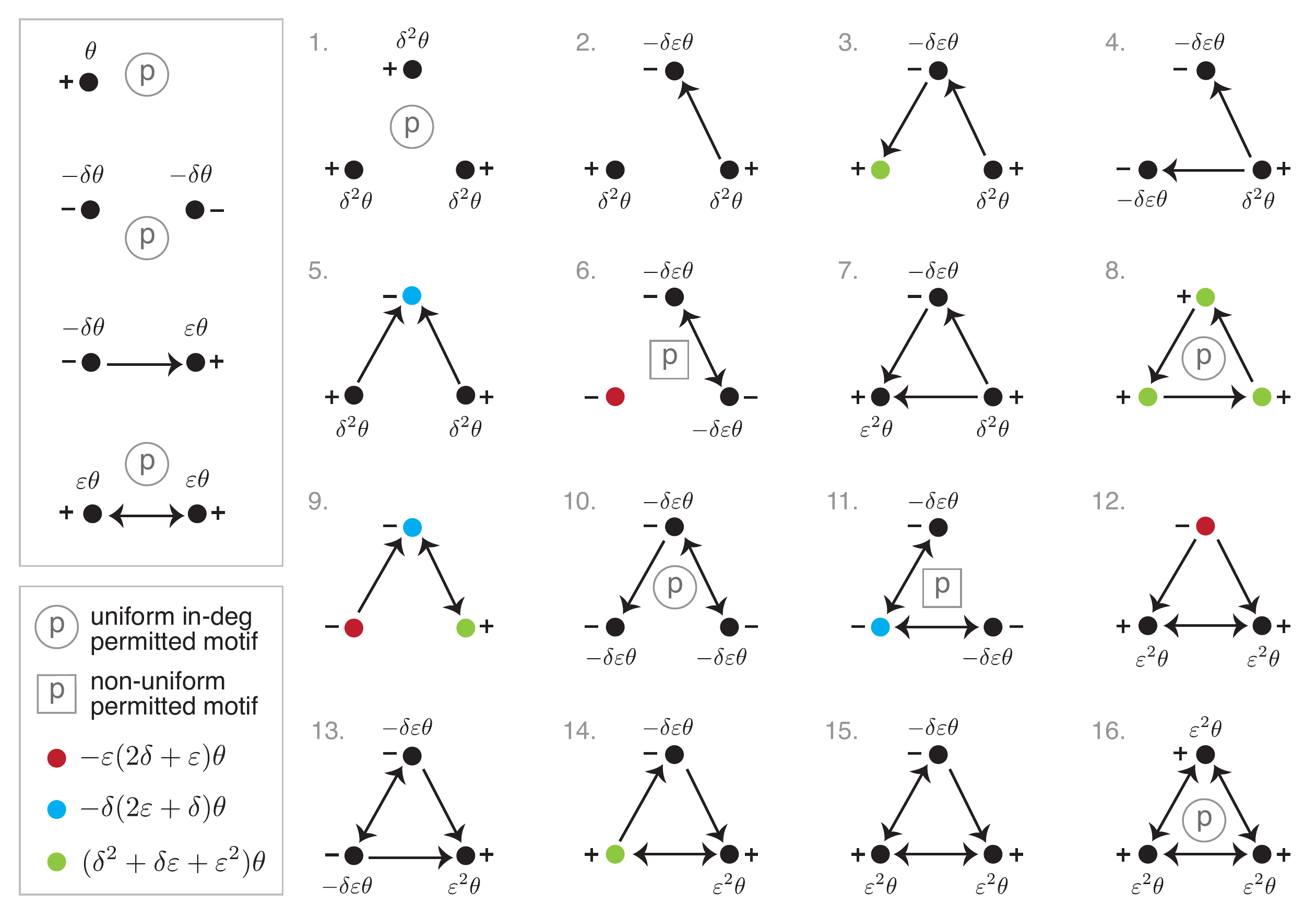}
\vspace{-.1in}
\end{center}
\caption{\textbf{All directed graphs on $n\leq 3$ nodes.} (Left) The 4 non-isomorphic directed graphs with $n\leq2$.  (Right) The 16 non-isomorphic directed graphs with $n=3$.   Each node is labeled with its value of $s_i^{\sigma}$, and its sign, for the full support $\sigma = \{1, \ldots, n\}$.  Note that even though the $s_i^\sigma$ values for the colored nodes are non-monomial (bottom left), their signs are also constant throughout the legal parameter range. Thus, using the sign conditions (Theorem~\ref{thm:sgn-condition}), we see that whether a graph is a permitted or forbidden motif is parameter independent for $n \leq 3$. Permitted motifs are labeled with a `p' that lies inside a circle (if the motif is uniform in-degree) or a square (if not). All other graphs are, by definition, forbidden.}
\label{fig:n3-graphs-s_i}
\end{figure}

\subsection{Parameter-dependent $\FP(G)$ for $n=5$}\label{appendixC}
In Section~\ref{sec:parameter-independence}, we saw that $\FP(G)$ is independent of parameters $\varepsilon$ and $\delta$ when $n \leq 4$; however, there are three permitted motifs of size 4 whose survival in a larger graph is parameter dependent.  These permitted motifs are reprinted here in Figure~\ref{fig:param-depend-min-perm}, panels A -- C.  The rest of the figure shows example graphs of size $n=5$ whose $\FP(G)= \FP(G, \varepsilon, \delta)$ depends on the choice of $\varepsilon$ and $\delta$. This parameter dependence is a result of these graphs containing one or more permitted motifs whose survival is parameter dependent.  Note that none of these $n=5$ graphs is oriented, and therefore, is not covered by Theorem~\ref{thm:oriented-param-independent}.  

\begin{figure}[!h]
\begin{center}
\includegraphics[width=\textwidth]{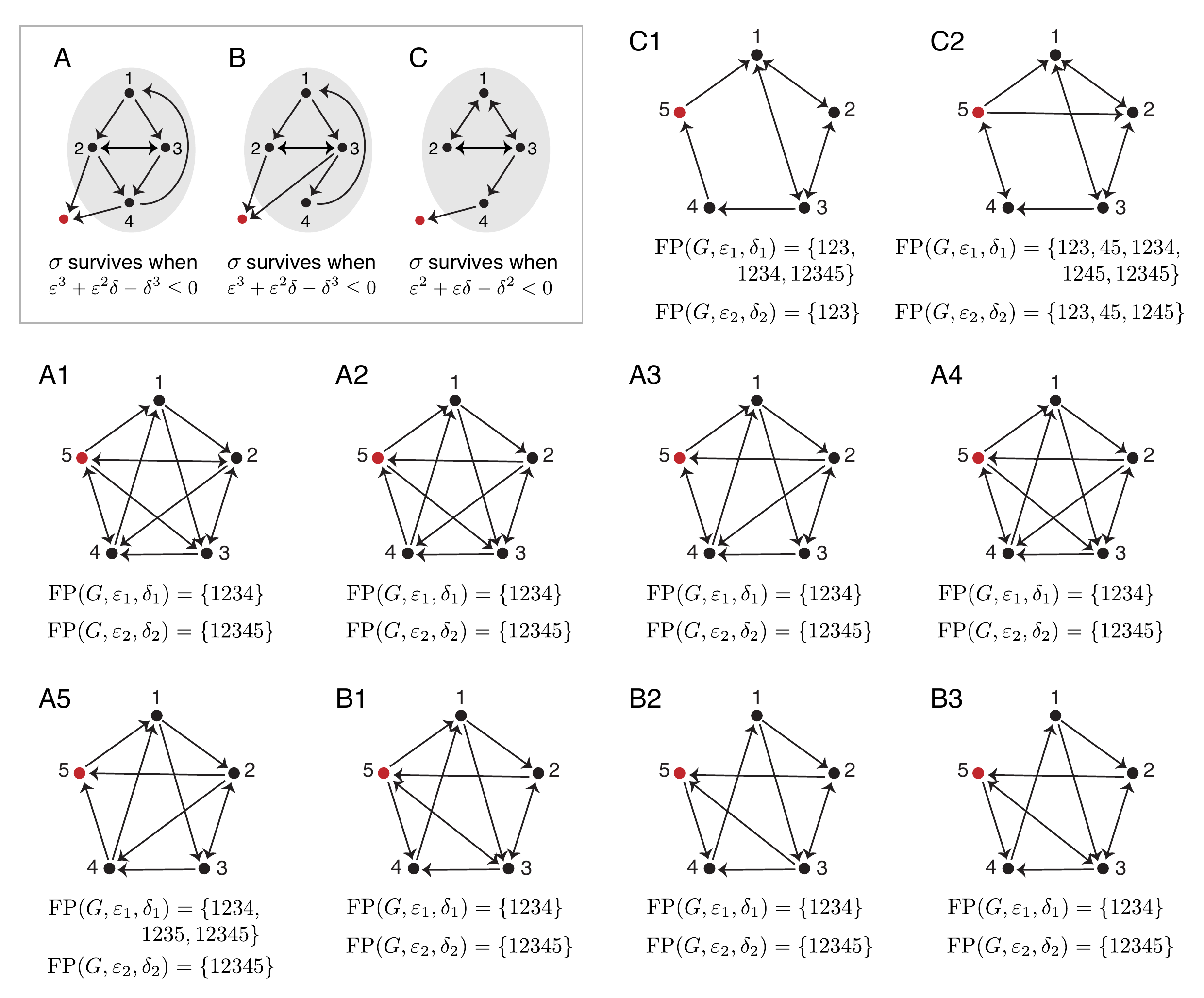}
\vspace{-.1in}
\end{center}
\caption{(A-C) The size 4 permitted motifs with parameter dependent survival for the embeddings shown. (A1-A5) Some example graphs on 5 nodes with the graph A embedding as a subgraph.  As a result $\FP(G, \varepsilon, \delta)$ is parameter dependent for these graphs.  Below each graph is $\FP(G, \varepsilon_1, \delta_1)$ for $(\varepsilon_1, \delta_1) = (0.25, 0.5)$ together with $\FP(G, \varepsilon_2, \delta_2)$ for $(\varepsilon_2, \delta_2) = (0.1, 0.12)$.  (B1-B3) Some example graphs with the graph B embedding as a subgraph.  (C1-C2) Some example graphs with the graph C embedding as a subgraph.  }
\label{fig:param-depend-min-perm}
\end{figure}

\FloatBarrier

\subsection{Using parity to detect forbidden motifs without graphical domination}\label{appendixB}

Although inside-in graphical domination tells us that a motif in a CTLN is forbidden, the absence of graphical domination is {\it not} sufficient to guarantee that a motif is permitted.  
Nevertheless, in many cases graph rules can still be used to determine whether such a motif is permitted or forbidden. In particular, Rule~\ref{rule:parity} (parity) can be used to determine if a motif is permitted/forbidden provided we know what happens for all proper subsets.

To illustrate this idea, we consider the family of \emph{oriented}\footnote{Recall a directed graph is \emph{oriented} if it has no bidirectional edges.} graphs with no sinks on $n~\leq~5$ vertices. Precisely six of these graphs are forbidden motifs despite the absence of graphical domination (see Figure~\ref{fig:domination-free-forbidden}). Fortunately, using graph rules we can identify
 all permitted motifs with $|\sigma| \leq 4$ that can arise as proper subgraphs in this family, together with their survival rules. Combining this with parity, we will show that the motifs in Figure~\ref{fig:domination-free-forbidden} are all forbidden.

\begin{figure}[!ht]
\includegraphics[width=\textwidth]{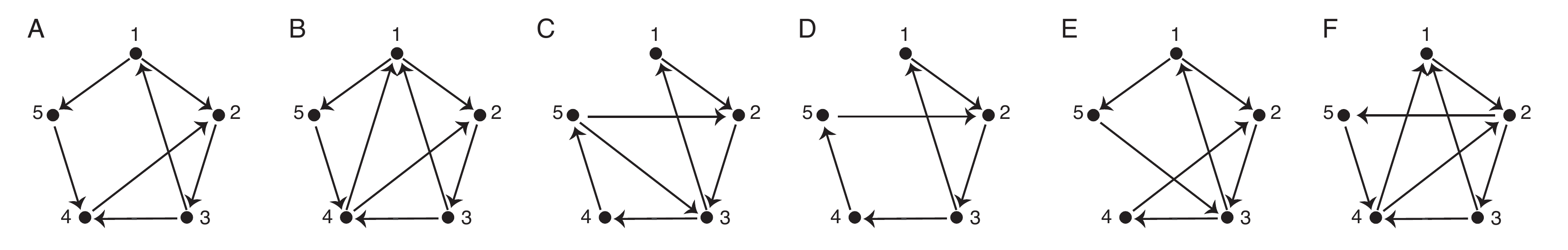}
\vspace{-.1in}
\caption{Oriented graphs with no sinks on $n=5$ that are forbidden motifs despite the absence of inside-in graphical domination. }
\label{fig:domination-free-forbidden}
\vspace{-.1in}
\end{figure}

\paragraph{Survival rules for oriented permitted motifs with $\boldsymbol{|\sigma|\leq 4}$.} Figure~\ref{fig:permitted-motifs} shows the five permitted motifs with $|\sigma| \leq 4$ that can potentially support fixed points within an oriented graph with no sinks.  Note that independent sets of arbitrary size are also permitted motifs, but by Rule~\ref{rule:indep-set} an independent set can only support a fixed point if it is a union of sinks, and so no independent set will ever survive in a graph with no sinks.  

\begin{figure}[!ht]
\vspace{-.1in}
\begin{center}
\includegraphics[width=.85\textwidth]{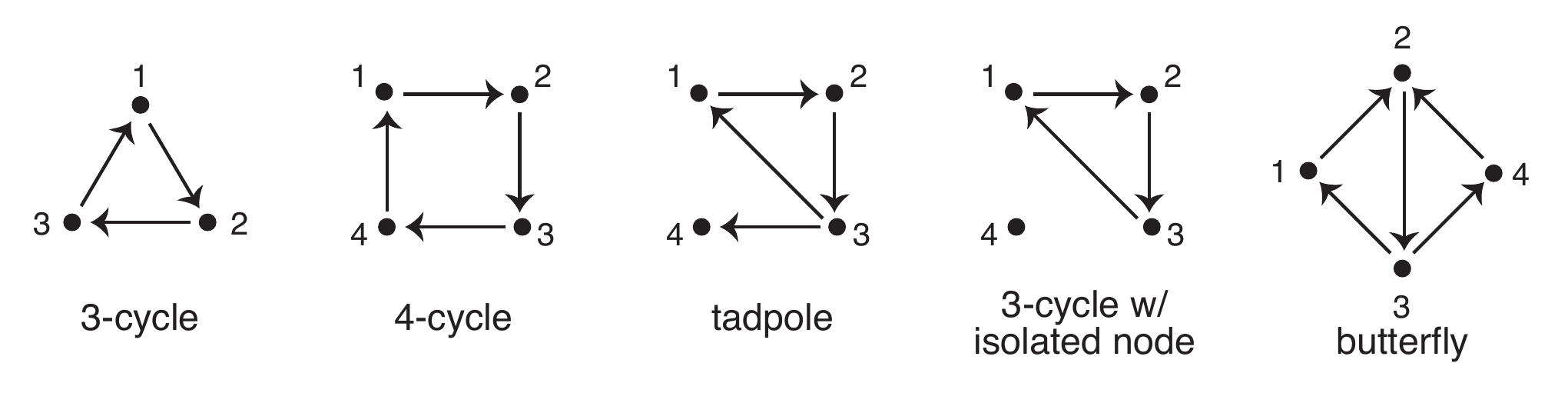}
\vspace{-.1in}
\caption{Permitted motifs of size $|\sigma| \leq 4$ (excluding independent sets) that can arise as subgraphs of an oriented graph with no sinks.
}
\label{fig:permitted-motifs}
\end{center}
\vspace{-.2in}
\end{figure}

The first three permitted motifs in Figure~\ref{fig:permitted-motifs} are all uniform in-degree 1.  Thus the survival rule for these motifs is given by Rule~\ref{rule:uniform-in-deg}: $\sigma \in \FP(G)$ if and only if no external node $k \notin \sigma$ receives 2 or more edges from the nodes in $\sigma$.  

The fourth motif is the disjoint union of a 3-cycle with an isolated node. Recall that a necessary condition for a disjoint union to survive is that each of its components survive (Proposition~\ref{prop:disjoint-union-survival}). Since the isolated node cannot survive in a graph with no sinks, we see that this motif will never survive in our family of $n=5$ graphs.

Finally, the butterfly graph is permitted by a parity argument: since the graph has no sinks and no bidirectional edges, there are no fixed point supports of size less than or equal to 2.  The only permitted motifs of size three are the two 3-cycles -- $(123)$ and $(234)$ -- which both survive by Rule~\ref{rule:uniform-in-deg}.  Thus, by parity, the full graph must be a fixed point support.
The survival rules for the butterfly graph are summarized in the following lemma.

\begin{lemma}\label{lemma:butterfly-survival}
Let $\sigma = \{1,2,3,4\}$, and let $G$ be any graph such that $G|_\sigma$ is the butterfly graph in Figure~\ref{fig:permitted-motifs}.  Then $\sigma \in \FP(G)$ if and only if every $k \notin \sigma$ either receives at most one edge from $\sigma$, or receives two edges from among nodes $1, 2$ and $4$.  
\end{lemma}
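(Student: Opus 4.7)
The plan is to apply Corollary~\ref{cor:on-off-conds} to reduce $\sigma\in\FP(G)$ to the separate checks $\sigma\in\FP(G|_{\sigma\cup\{k\}})$ for each $k\notin\sigma$, and to decide each check via general domination (Theorem~\ref{thm:domination}). The butterfly $G|_\sigma$ is already known to be a permitted motif from the parity argument in the paragraph preceding the lemma, so Theorem~\ref{thm:domination} guarantees $j\sim_\sigma\ell$ for all $j,\ell\in\sigma$, i.e.\ $w_j^\sigma$ takes a common value $w_\sigma$ over $j\in\sigma$. Survival of $\sigma$ against $k$ is therefore equivalent to the single inequality $w_k^\sigma<w_\sigma$, and the lemma reduces to classifying the incoming-edge set $T_k\od\{i\in\sigma\mid i\to k\}$ according to whether this inequality holds.

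Expanding $w$ using $\Wtil_{ki}=-1+\varepsilon$ for $i\in T_k$, $\Wtil_{ki}=-1-\delta$ for $i\in\sigma\setminus T_k$, and separating the diagonal entry $\Wtil_{jj}=-1$ in the computation of $w_j^\sigma$, one obtains
\begin{align*}
w_k^\sigma &= -(1+\delta){\textstyle\sum_{i\in\sigma}}|s_i^\sigma|+(\varepsilon+\delta){\textstyle\sum_{i\in T_k}}|s_i^\sigma|,\\
w_\sigma &= -(1+\delta){\textstyle\sum_{i\in\sigma}}|s_i^\sigma|+\delta|s_j^\sigma|+(\varepsilon+\delta){\textstyle\sum_{i\in T_j}}|s_i^\sigma|
\end{align*}
for any $j\in\sigma$. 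Taking $j=1$, for which $T_1=\{3\}$ in the butterfly, and substituting the values $|s_i^\sigma|$ already computed in Example~\ref{ex:butterfly-graph}, the inequality $w_k^\sigma<w_\sigma$ becomes
\begin{equation*}
(\varepsilon+\delta){\textstyle\sum_{i\in T_k}}p_i \;<\; (\varepsilon+\delta)p_3+\delta p_1,
\end{equation*}
where $p_i\od|s_i^\sigma|/(\delta\theta)$ so that $p_1=p_4=\varepsilon^2+\varepsilon\delta+\delta^2$, $p_2=2\varepsilon^2+2\varepsilon\delta+\delta^2$, and $p_3=2\varepsilon^2+3\varepsilon\delta+2\delta^2$.

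I would then finish by short casework on $|T_k|$, exploiting the symmetry $p_1=p_4$ and the ordering $p_1\le p_2\le p_3$. If $|T_k|\le 1$ the left side is at most $(\varepsilon+\delta)p_3$, so the inequality is strict because $\delta p_1>0$. If $|T_k|=2$ and $3\in T_k$, the left side already contains $(\varepsilon+\delta)p_3$ plus $(\varepsilon+\delta)p_i\ge(\varepsilon+\delta)p_1>\delta p_1$, so the inequality fails; if instead $T_k\subseteq\{1,2,4\}$, the worst case (by $1\leftrightarrow 4$ symmetry) is $T_k=\{1,2\}$, and using $p_1+p_2-p_3=\varepsilon^2$ the inequality reduces to $\varepsilon^3<\delta^2(\varepsilon+\delta)$, which holds throughout the legal range since $\varepsilon<\delta$. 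If $|T_k|\ge 3$, the most favourable choice is $T_k=\{1,2,4\}$, giving $2p_1+p_2-p_3=2\varepsilon^2+\varepsilon\delta+\delta^2$ and reducing the inequality to $\varepsilon(2\varepsilon^2+2\varepsilon\delta+\delta^2)<0$, which is impossible. Collecting cases yields precisely the condition in the statement. The main obstacle is bookkeeping; conceptually, the clean identities $p_1+p_2-p_3=\varepsilon^2$ and $2p_1+p_2-p_3=2\varepsilon^2+\varepsilon\delta+\delta^2$ are what make the threshold fall exactly at ``two edges among $\{1,2,4\}$'' rather than a cruder bound.
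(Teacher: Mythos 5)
Your proposal is correct and establishes the lemma by a route that is genuinely different in character from the paper's, even though it leans on the same precomputed $s_i^\sigma$ values. The paper argues each case separately by graph-reading: when $|T_k|\le 1$ or $|T_k|\ge 3$ it invokes inside-out or outside-in \emph{graphical} domination (Theorem~\ref{thm:graph-domination}), handles $\{1,3\},\{3,4\},\{2,3\}$ and $\{1,4\}$ again by graphical domination, and is then forced to fall back on an explicit sign-condition computation of $s_5^\sigma$ only for the last configuration $T_k=\{1,2\}$ (equivalently $\{2,4\}$), which has no graphical domination. You instead work entirely within \emph{general} domination (Theorem~\ref{thm:domination}): since the butterfly is permitted, Lemma~\ref{lem:equality} makes $w_j^\sigma$ constant over $j\in\sigma$, and survival against a single added node $k$ collapses to the one scalar inequality $w_k^\sigma<w_\sigma$, which after cancellation is $(\varepsilon+\delta)\sum_{i\in T_k}p_i<\delta p_1+(\varepsilon+\delta)p_3$. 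All five of the paper's subcases then become elementary comparisons driven by the ordering $p_1=p_4\le p_2\le p_3$ and the two identities $p_1+p_2-p_3=\varepsilon^2$ and $2p_1+p_2-p_3=2\varepsilon^2+\varepsilon\delta+\delta^2$. The gain is uniformity and transparency about why the cutoff sits exactly at ``two edges from $\{1,2,4\}$'': it is the borderline where $(\varepsilon+\delta)\varepsilon^2<\delta p_1$, i.e.\ $\varepsilon^3<\delta^2(\varepsilon+\delta)$, holds throughout the legal range because $\varepsilon<\delta$. The cost is that one gives up the quick, parameter-free reading of graphical domination in the easy cases in favor of a little more algebra. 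One small point worth making explicit if you write this up: in the step reducing $\sigma\in\FP(G)$ to checking each $\sigma\cup\{k\}$ one at a time, you should cite Corollary~\ref{cor:on-off-conds}, as you do; and in the $|T_k|\ge 3$ case you should note that $T_k=\{1,2,4\}$ minimizes $\sum_{i\in T_k}p_i$ precisely because it omits the largest $p_3$, so its failure implies failure for all larger or heavier $T_k$.
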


\begin{proof}
To derive the butterfly graph's survival rules, let $\sigma = \{1,2,3,4\}$ and consider all possible configurations of outgoing edges from the $\sigma$ to an added node 5.  It is easy to check that if node 5 receives less than two edges from $\sigma$, there will always be inside-out graphical domination, and so $\sigma$ will survive.  In contrast, if node 5 receives three or more edges from $\sigma$, then 5 will outside-in dominate some node in $\sigma$, and so $\sigma$ will not survive.  

When node 5 receives exactly two edges from $\sigma$, survival is more complicated and depends on which nodes 5 receives from. There are four cases.
(i) $\ 1, 3 \to 5$ (equivalently $3, 4 \to 5$ by symmetry):  $5$ outside-in dominates $1$, and so $\sigma$ does not survive.
(ii) $\ 2, 3 \to 5$:  $5$ outside-in dominates $3$, and so  $\sigma$ does not survive.  
(iii) $\ 1, 4 \to 5$: $2$ inside-out dominates $5$, and so $\sigma$ survives.
(iv) $\, 1, 2\to 5$ (equivalently $2, 4 \to 5$): In this case, there are no graphical domination relationships, and so we explicitly compute the values of $s_i^\sigma$ and check the sign conditions (Theorem~\ref{thm:sgn-condition}).  We find that 
$$s_1^\sigma = s_4^\sigma = -\delta\theta(\varepsilon^2 + \varepsilon\delta + \delta^2), \quad
s_2^\sigma = -\delta\theta(2\varepsilon^2+2\varepsilon\delta + \delta^2), \quad
s_3^\sigma = -\delta\theta(2\varepsilon^2 + 3\varepsilon\delta + 2\delta^2),$$
while $s_5^\sigma = \delta\theta(5\varepsilon^3 + 6\varepsilon^2\delta + 5\varepsilon\delta^2 + \delta^3)$.  Thus, $\sgn s_5^\sigma = - \sgn s_i^\sigma$ for all $i \in \sigma$, and so $\sigma$ survives.  
\end{proof}

\paragraph{Proof that the six graphs in Figure~\ref{fig:domination-free-forbidden} are forbidden.}

Recall that in an oriented graph with no sinks, no independent set can ever survive as a fixed point support, nor can any subgraph that is a 3-cycle with an isolated node. We can thus restrict attention to the remaining permitted motifs shown in Figure~\ref{fig:permitted-motifs}.  

Graph A  in Figure~\ref{fig:domination-free-forbidden} has 3-cycles 123 and 234, both surviving.
The tadpole 1235 does not survive, as node 4 receives two edges from it. There
are no restricted subgraphs that are 4-cycles, but we see that 1234 is the butterfly graph, and it survives by Lemma~\ref{lemma:butterfly-survival}.  Thus, $\FP(G) \supseteq \{123, 234, 1234\}$ and the only other possible fixed point support, $12345$, is ruled out by parity. Hence graph A is a forbidden motif.  

Graph B is identical to graph A except for the addition of the $4 \to 1$ edge.  As a result, the 3-cycle $234$ no longer survives.  Additionally, the subgraph on $1234$ is no longer the butterfly graph, and is not a permitted motif by inside-in domination (node 1 graphically dominates 4).  Thus, $\FP(G) \supseteq \{123\}$, and the only other candidate support, $12345$, is ruled out by parity.  Graph B is thus also a forbidden motif.

Similar parity arguments based on subgraph survival can be used for each of the remaining graphs in Figure~\ref{fig:domination-free-forbidden}.  We leave the remaining graphs as an exercise to the reader.

\subsection{Associating a graph to a general competitive TLN}\label{appendixD}
We have seen that the graph of a CTLN determines many properties of its collection of fixed point supports.  Thus it is natural to consider associating a graph to a general competitive TLN with constant external input $\theta>0$, and ask if properties of this graph also shape the fixed point supports. Recall that for such networks both Theorem~\ref{thm:sgn-condition} (sign conditions) and Theorem~\ref{thm:domination} (general domination) apply and either can be used to characterize the fixed point supports of the network.

We associate to each network $(W,\theta)$ a directed graph $G_{W}$ as follows:
$$G_{W} \text{ has an edge from } j \rightarrow i \;\;(\text{for } i \neq j)  \;\Leftrightarrow \; W_{ij}> -1.$$ 
For ease of notation, we will write $W_{ij}$ as $-1 + c_{ij}$ with $c_{kj}<1$ (to ensure $W$ is competitive).  In this case, $j \to i$ in $G_{W, \theta}$ if and only if $c_{ij} >0$.  Note that $G_{W}$ is a {\em simple} directed graph (no self loops).  This generalizes the graph used in CTLNs.  Indeed, if $W = W(G,\varepsilon,\delta)$ is a CTLN network with graph $G$, then $G_{W} = G$.

The graph $G_{W}$ can be used to describe certain aspects of the fixed point supports of the network~\eqref{eq:network2}.  First, we see that sinks of the graph correspond to WTA (stable) fixed points.  Recall that a vertex in a graph is a {\it sink} if it has out-degree 0.

\begin{lemma}\label{lemma:TLN-sinks}
A TLN $(W, \theta)$ has a fixed point supported on a single neuron $j$ if and only if $j$ is a sink of $G_{W}$.  Moreover, when $\{j\} \in \FP(W, \theta)$, the corresponding fixed point is stable.  
\end{lemma}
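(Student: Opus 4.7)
The plan is to apply Theorem~\ref{thm:sgn-condition} (sign conditions) directly to the singleton support $\sigma = \{j\}$. The sign conditions for a singleton are especially clean: there is only one ``internal'' value to consider ($s_j^{\{j\}}$), and each ``external'' value $s_k^{\{j\}}$ is the determinant of a $2\times 2$ Cramer matrix, which I can compute in closed form.

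First I would observe that, by the convention noted after equation~\eqref{eq:s_k}, $s_j^{\{j\}} = b_j = \theta > 0$, so the internal sign condition is trivially satisfied and $\{j\}$ is always a permitted motif. Next, for any $k \neq j$, I would compute
\[
s_k^{\{j\}} \;=\; \det\bigl((I - W_{\{j,k\}})_k;\; \theta\,\one_{\{j,k\}}\bigr)
\;=\; \det\begin{pmatrix} 1 & \theta \\ -W_{kj} & \theta \end{pmatrix}
\;=\; \theta(1 + W_{kj}).
\]
Writing $W_{kj} = -1 + c_{kj}$, this simplifies to $s_k^{\{j\}} = \theta\, c_{kj}$. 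By Theorem~\ref{thm:sgn-condition}, $\{j\} \in \FP(W,\theta)$ precisely when $\sgn s_k^{\{j\}} = -\sgn s_j^{\{j\}} = -1$ for every $k \neq j$, i.e.\ when $c_{kj} < 0$ for all $k \neq j$. By the definition of $G_W$, the condition $c_{kj} < 0$ is exactly the statement that $j \not\to k$ in $G_W$. Hence $\{j\} \in \FP(W,\theta)$ if and only if $j$ has no outgoing edges in $G_W$, i.e.\ $j$ is a sink.

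For the stability claim, I would invoke the standard fact (noted in Section~\ref{sec:general-background}) that stability of a TLN fixed point with support $\sigma$ is determined by the eigenvalues of $-I + W_\sigma$ on the active directions; the silent directions contribute eigenvalue $-1$, since for $k \notin \sigma$ the argument of $[\cdot]_+$ is strictly negative at the fixed point under nondegeneracy. For $\sigma = \{j\}$, the matrix $-I + W_{\{j\}}$ is the $1\times 1$ scalar $-1 + W_{jj} = -1$, whose sole eigenvalue is negative. Therefore every fixed point with singleton support is automatically stable.

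The calculations here are all elementary, so there is no real obstacle; the only subtlety worth flagging is justifying that the silent neurons really do contribute the trivial $-1$ eigenvalues to the Jacobian, which requires the strict inequality from nondegeneracy. Because this is already assumed throughout the paper (Definition~\ref{def:nondegenerate}), the argument goes through without additional hypotheses.
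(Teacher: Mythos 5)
Your proof is correct and follows essentially the same route as the paper's: both compute $s_j^{\{j\}} = \theta$ and $s_k^{\{j\}} = \theta\, c_{kj}$ from the $2\times 2$ Cramer determinant and apply Theorem~\ref{thm:sgn-condition} to read off the sink condition, then deduce stability from the $1\times 1$ restricted matrix $-I + W_{\{j\}} = -1$. Your extra remark about the silent directions contributing $-1$ eigenvalues (justified by the strict inequality from nondegeneracy) is a slightly more careful version of the paper's statement, which only asserts stability in the restricted subnetwork, but the substance is the same.
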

\begin{proof}
We prove this by computing the values of $s_j^{\{j\}}$ and $s_k^{\{j\}}$ for any $k \neq j$, then using Theorem~\ref{thm:sgn-condition} (sign conditions) to determine when $\{j\}$ supports a fixed point.  Note that $s_j^{\{j\}} = \det((I-W_{\{j\}})_j; \theta) = \theta$ and so $\{j\}$ is a permitted motif with positive index.  Furthermore, the eigenvalue of $I-W_{\{j\}}$ is $\theta$, which is positive, and so $\{j\}$ supports a stable fixed point in its restricted subnetwork.  This fixed point survives as a fixed point of $(W, \theta)$ precisely when $\sgn s_k^{\{j\}} =-\sgn s_j^{\{j\}} =-1$.  Computing $s_k^{\{j\}}$, we find
$$
s_k^{\{j\}} = \det((I-W_{\{j,k\}})_k; \theta)= \det(I-W_{\{j,k\}}) = \det {\begin{bmatrix} 
1 & \theta \\ 
1-c_{kj} & \theta
\end{bmatrix}} = \theta c_{kj}$$
where we have written $W_{kj}$ as $1-c_{kj}$.  Thus $\sgn s_k^{\{j\}} =-1$ if and only if $c_{kj}<0$, which by definition of $G_{W, \theta}$ occurs precisely when $j \not\to k$ in $G_{W}$.  By Theorem~\ref{thm:sgn-condition} (sign conditions), $\{j\} \in \FP(W, \theta)$ precisely when this condition is satisfied for all $k \neq j$, i.e.\ when $j$ has no outgoing edges, and so $j$ is a sink in $G_{W, b}$.

\end{proof}

We also find necessary conditions for a fixed point to be supported on exactly two neurons; specifically, we show which subsets of size two are permitted motifs.  

\begin{lemma}\label{lemma:TLN-size-2}
For a TLN $(W, \theta)$, a pair of neurons $\sigma =\{i,j\}$ is a permitted motif if and only if \\$\sigma$ is an independent set or a clique in $G_{W, \theta}$. 
Moreover, when $\sigma =\{i,j\} \in \FP(W,\theta)$, the corresponding fixed point is stable if and only if $\sigma$ is a clique.  
\end{lemma}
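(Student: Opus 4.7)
The plan is to follow exactly the template of the proof of Lemma~\ref{lemma:TLN-sinks}: compute the relevant Cramer determinants $s_i^\sigma$ and $s_j^\sigma$ explicitly for $\sigma = \{i,j\}$, then read off when $\sigma$ is permitted using Theorem~\ref{thm:sgn-condition} (sign conditions). Writing $W_{ij} = -1 + c_{ij}$ as in the excerpt, so that $j \to i$ in $G_W$ exactly when $c_{ij} > 0$, I would compute
\[
s_i^\sigma = \det\begin{pmatrix} \theta & 1-c_{ij} \\ \theta & 1 \end{pmatrix} = \theta\, c_{ij}, \qquad s_j^\sigma = \theta\, c_{ji}.
\]
By Theorem~\ref{thm:sgn-condition}, $\sigma$ is a permitted motif iff $\sgn s_i^\sigma = \sgn s_j^\sigma$, i.e.\ iff $c_{ij}$ and $c_{ji}$ have the same sign. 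The two possibilities are $c_{ij},c_{ji}>0$ (both edges present, i.e.\ $\sigma$ is a clique) and $c_{ij},c_{ji}<0$ (neither edge present, i.e.\ $\sigma$ is an independent set). This handles the first half of the lemma.

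For the stability claim, I would compute the eigenvalues of the $2 \times 2$ matrix
\[
I - W_\sigma = \begin{pmatrix} 1 & 1-c_{ij} \\ 1-c_{ji} & 1\end{pmatrix},
\]
which are $\lambda_{\pm} = 1 \pm \sqrt{(1-c_{ij})(1-c_{ji})}$. Here I would invoke competitiveness: $W_{ij},W_{ji} \leq 0$ forces $c_{ij},c_{ji} \leq 1$, so $1-c_{ij}$ and $1-c_{ji}$ are nonnegative and the eigenvalues are real. The fixed point is stable iff both eigenvalues of $I - W_\sigma$ are positive, which is equivalent to $(1-c_{ij})(1-c_{ji}) < 1$. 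In the clique case, $0 < c_{ij},c_{ji} < 1$ gives both factors in $[0,1)$, so the product is strictly less than $1$ and the fixed point is stable. In the independent set case, $c_{ij},c_{ji} < 0$ gives both factors strictly greater than $1$, so the product exceeds $1$ and $I-W_\sigma$ has a negative eigenvalue, yielding an unstable fixed point.

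There is no real obstacle here: both assertions reduce to elementary $2 \times 2$ determinant and eigenvalue computations, and the sign conditions from Theorem~\ref{thm:sgn-condition} do all the conceptual work. The only mildly delicate point is making sure the competitiveness hypothesis is correctly used to conclude $1-c_{ij},\,1-c_{ji} \geq 0$, which is what forces the discriminant to be nonnegative and lets one cleanly separate the stable (clique) and unstable (independent set) regimes. Note also that the lemma statement implicitly assumes $\sigma \in \FP(W,\theta)$ when discussing stability, so once the permitted-motif characterization is established, the survival question (which would require examining signs of $s_k^\sigma$ for $k \notin \sigma$) need not be revisited in the stability argument.
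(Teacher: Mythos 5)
Your proof is correct and follows essentially the same route as the paper: both compute $s_i^\sigma = \theta c_{ij}$ and $s_j^\sigma = \theta c_{ji}$ via Cramer and invoke Theorem~\ref{thm:sgn-condition} to get the clique/independent-set dichotomy. For stability the paper uses the trace--determinant criterion for the $2\times 2$ matrix $I-W_\sigma$ while you compute the eigenvalues $1\pm\sqrt{(1-c_{ij})(1-c_{ji})}$ explicitly, but both use competitiveness in the same way and arrive at the same conclusion.
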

\begin{proof}
Let $W_{ij} = 1-c_{ij}$, $W_{ji} = 1-c_{ji}$, and $\sigma = \{i,j\}$.  With this notation, we see $s_i^{\sigma} = \det((I-W_{\{i,j\}})_i; \theta) = \theta c_{ij}$ and $s_j^{\sigma} = \theta c_{ji}$.  By Theorem~\ref{thm:sgn-condition} (sign conditions), $\sigma$ is a permitted motif if and only if $\sgn s_i^{\sigma} = \sgn s_j^{\sigma}$, i.e.\ $\sgn c_{ij} = \sgn c_{ji}$.  These signs agree precisely when (1) $c_{ij}, c_{ji} >0$ so that $i \leftrightarrow j$ and $\sigma$ is a clique in $G_W$, or (2) $c_{ij}, c_{ji} <0$ and $\sigma$ is an independent set.

Since $|\sigma|=2$ and $I-W_\sigma$ has positive trace, we see that the eigenvalues of $I-W_\sigma$ are positive (ensuring $\sigma$ supports a stable fixed point) precisely when $\det(I-W_\sigma) = c_{ij} + c_{ji} + c_{ij}c_{ji}>0$.  Since $W$ is competitive, $c_{ij}, c_{ji} <1$, and so $\det(I-W_\sigma)$ is positive precisely when $c_{ij}, c_{ji} >0$, so that $\sigma$ is a clique.
\end{proof}

Lemmas~\ref{lemma:TLN-sinks} and \ref{lemma:TLN-size-2} show that the fixed points of size $|\sigma|\leq 2$ of a general competitive TLN are completely determined by the graph, matching the case of CTLNs.  We might then hope that the CTLN survival rules of these fixed points would hold in the general case or that other graph rules would also go through.  Example~\ref{ex:cautionary-example} shows that unfortunately this is not the case.  
\bigskip

\begin{figure}[!h]
\begin{center}
\includegraphics[width=.65\textwidth]{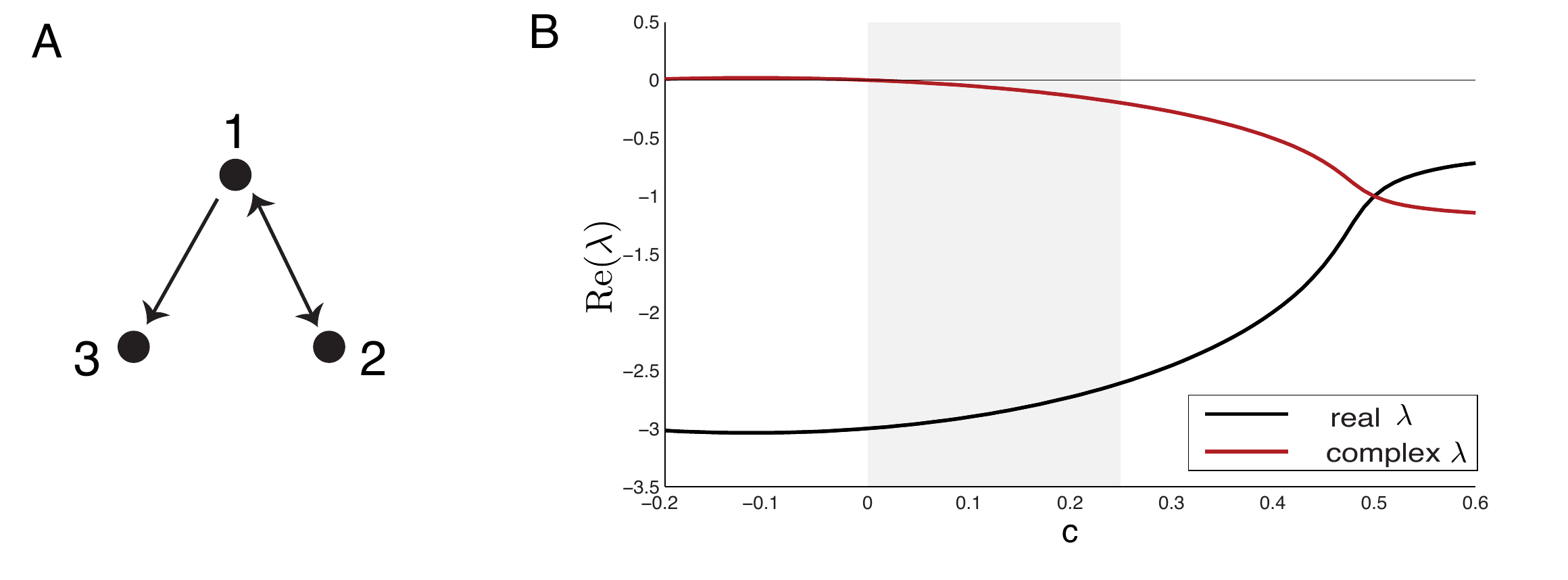}
\caption{(A) The graph $G_{W}$ corresponding to the network $W$ in Example~\ref{ex:cautionary-example}. (B) Plot of the real part of the eigenvalues of $-I+W$ as a function of $c$, for $W$ in Example~\ref{ex:cautionary-example}.  The matrix (and hence the corresponding full support fixed point) is stable throughout the range $0 < c < 0.25$.  At $c = 0$, the real and complex eigenvalues have real parts $-3$ and $0$, respectively.}
\label{fig:cautionary-example}
\end{center}
\vspace{-.2in}
\end{figure}

\begin{example}[cautionary example] \label{ex:cautionary-example}

Consider the competitive TLN $(W, \theta)$ with $\theta =1$ and
\[
W=\small{\left(\begin{array}{ccc} 
0 & -1+2c & -1-4c\\
-1+2c & 0 & -1-c\\
-1+4c & -1-c & 0
\end{array}\right)}
\]
for any $c$ with $0 < c < 0.25$, to ensure $W$ is competitive.  This network has corresponding graph $G=G_{W}$  with $1 \leftrightarrow 2$, $1 \to 3$, and no other edges (see Figure~\ref{fig:cautionary-example}A).   In this example, we compute $\FP(W, \theta)$ using Lemmas~\ref{lemma:TLN-sinks} and \ref{lemma:TLN-size-2} and Theorem~\ref{thm:sgn-condition} (sign conditions).

By Lemma~\ref{lemma:TLN-sinks}, the only singleton fixed point support is $\{3\}$ since it is the only sink in the graph.  By Lemma~\ref{lemma:TLN-size-2}, $\{1,2\}$ and $\{2,3\}$ are the only permitted motifs of size 2, but we must check if these survive as fixed points of $(W, \theta)$.  For $\sigma = \{1,2\}$, we have $s_1^\sigma = s_2^\sigma = 2 c$ and $s_3^\sigma = 2 c^2$; since $\sgn s_3^\sigma = \sgn s_i^\sigma$ for $i \in \sigma$, we see that $\{1,2\}$ is \underline{not} a fixed point support, despite being a target-free clique in $G_{W}$.  In contrast, for $\sigma = \{2,3\}$, we have $s_2^\sigma = s_3^\sigma = - c$ and $s_1^\sigma = c^2$, satisfying the sign conditions of Theorem~\ref{thm:sgn-condition}.  Thus, $\{2,3\}$ \underline{is} a fixed point support, despite being an independent set that is not a union of sinks of $G_{W, \theta}$.  This shows that both Rule~\ref{rule:indep-set} and Rule~\ref{rule:clique} for CTLNs do not extend to general TLNs.

Lastly, consider $\sigma = \{1,2,3\}$.  We have $s_1^\sigma = c^2$, $s_2^\sigma = 4c^2$, and $s_3^\sigma = 2c^2$.  Since the signs all agree, $\sigma \in \FP(W, \theta)$.  Since $\sigma$ has uniform in-degree $d=1$, in any CTLN, $\sigma$ would also be a fixed point support, but this fixed point would be unstable by Theorem~\ref{thm:uniform-in-degree} since $d < \frac{|\sigma|}{2}$.  However, for this general TLN $(W,\theta)$, we see that $\sigma$ supports a \underline{stable} fixed point for all the allowable values of $0 < c < 0.25$ (see Figure~\ref{fig:cautionary-example}B).

\end{example}

\end{document}